\definecolor{light-gray}{gray}{0.95}
\definecolor{light-gray-table}{gray}{0.9}
\definecolor{light-grayI}{gray}{0.5}
\definecolor{light-grayII}{gray}{0.85}
\titleformat{\chapter}[display]
        {\normalfont\Large\bf\centering}{\chaptertitlename\ \thechapter}{10pt}{\LARGE\bf}
\newcommand{\half}{\frac{\scriptstyle 1}{\scriptstyle 2}}
\newcommand{\A}{\mathbb{A}}
\newcommand{\C}{\mathbb{C}}
\newcommand{\HH}{\mathbb{H}}
\newcommand{\CP}{\mathbb{CP}}
\newcommand{\R}{\mathbb{R}}
\renewcommand{\P}{\mathbb{P}}
\newcommand{\cH}{\mathcal{H}}
\newcommand{\scri}{\mathscr{I}}
\newcommand{\T}{\mathbb{T}}
\newcommand{\Z}{\mathbb{Z}}
\newcommand{\p}{\partial}
\newcommand{\dbar}{\bar\partial}
\newcommand{\e}{\mathrm{e}}
\newcommand{\g}{\mathfrak{g}}
\newcommand{\cG}{\mathcal{G}}
\newcommand{\cA}{\mathcal{A}}
\newcommand{\cC}{\mathcal{C}}
\newcommand{\cI}{\mathcal{I}}
\newcommand{\cM}{\mathcal{M}}
\newcommand{\cN}{\mathcal{N}}
\newcommand{\cO}{\mathcal{O}}
\newcommand{\cX}{\mathcal{X}}
\newcommand{\cV}{\mathcal{V}}
\renewcommand{\P}{\mathbb{P}}
\newcommand{\SL}{\mathrm{SL}}
\newcommand{\tr}{\mathrm{tr}}
\newcommand{\SU}{\, \mathrm{SU}}
\newcommand{\GL}{\mathrm{GL}}
\newcommand{\rd}{\, \mathrm{d}}
\newcommand{\Pf}{\, \mathrm{Pf}}
\newcommand{\pf}{\text{Pf} \,}
\newcommand{\pfr}{\text{Pf}^\prime \,}
\newcommand{\vol}{\mathrm{Vol}}
\newcommand{\be}{\begin{equation}\label}
\newcommand{\ee}{\end{equation}}
\newcommand{\bea}{\begin{eqnarray}\label}
\newcommand{\eea}{\end{eqnarray}}
\newcommand{\la}{\langle}
\newcommand{\ra}{\rangle}
\newtheorem{thm}{Theorem}
\newtheorem{propn}{Proposition}[section]
\newtheorem{lemma}{Lemma}[section]
\newcommand{\bd}{\bar{\delta}}
\newcommand{\wt}{\widetilde}
\newcommand{\bigslant}[2]{{\raisebox{.2em}{$#1$}\left/\raisebox{-.2em}{$#2$}\right.}}
\DeclareRobustCommand{\abinom}{\genfrac{\langle}{\rangle}{0pt}{}}
\def\d{\text{d}}
\title{{\huge \bf \sc Ambitwistor Strings:}\\[1ex]     
       {\LARGE \bf \sc Worldsheet Approaches to perturbative Quantum Field Theories}}   
\author{Yvonne Jasmin Geyer}             
\begin{document}
\baselineskip=20pt plus1pt

\setcounter{secnumdepth}{3}
\setcounter{tocdepth}{2}

\hypersetup{pageanchor=false}
\pagestyle{plain}
\maketitle                  


\begin{abstract}
Tree-level scattering amplitudes in massless theories not only exhibit a simplicity entirely unexpected from Feynman diagrams, but also an underlying structure remarkably reminiscent of worldsheet theory correlators, yet essentially algebraic. These features can be explained by ambitwistor strings - two-dimensional chiral conformal field theories in an auxiliary target space, the complexified phase space of null geodesics, known as ambitwistor space. The aim of this thesis is to explore the ambitwistor string approach to understand these structures in 
amplitudes, and thereby provide a new angle  on quantum field theories.

In the first part of this thesis, the wide-ranging impact of ambitwistor strings on the study of tree-level amplitudes is highlighted in three developments: an extension of ambitwistor string worldsheet models to an extensive family of massless theories, emphasising the universality of ambitwistor strings for massless theories; a beautiful proof of the duality between asymptotic symmetries  and the low energy behaviour of a theory, relying on the contact geometry of the ambitwistor target space; and finally a twistorial representation of ambitwistor strings in four dimensions, leading to remarkably simple formulae for scattering amplitudes in Yang-Mills and gravity with any degree of supersymmetry.
 
The second part of this thesis focusses on proving a conjectured ambitwistor string formula for loop amplitudes, and extending the formalism to more general theories. Remarkably, residue theorems reduce the computationally challenging ambitwistor higher-genus expressions to simple formulae on nodal Riemann spheres. This idea is developed into a widely applicable framework for loop integrands, that 
is shown to be applicable to both supersymmetric and non-supersymmetric theories. In the case of supergravity, this provides strong evidence for the validity of the ambitwistor string at loop level, and explicit proofs are given for non-supersymmetric theories. Notably, this leads to a proposal for an all-loop integrand for gravity and Yang-Mills.

This thesis is based on \cite{Casali:2015vta,Geyer:2014lca,Geyer:2014fka,Geyer:2015bja,Geyer:2015jch} and has considerable overlap with these papers.
\end{abstract}
 

\hypersetup{pageanchor=true}
\begin{romanpages}          
\tableofcontents            
\end{romanpages}            

\chapter{Introduction}\label{chapter1}
Quantum field theory (QFT) has proven to be one of most successful mathematical frameworks in physics to date. Describing such diverse phenomena as particle physics, condensed matter and astrophysics, it also links to many areas of mathematics, most notably to topology and algebraic geometry. 
Its numerous successes have been both conceptual and relating to applications in physics; giving insight into the underlying mathematical structure as well as providing an indispensable method of probing fundamental physics. Among these advances, a few are particularly worth highlighting: Renormalisation not only provided a systematic approach to the infinities arising in quantum field theory calculations, but also revolutionised the understanding of quantum field theories via the concept of renormalisation group flow, explaining the omnipresence of renormalisable QFTs as low energy effective field theories. Moreover, Yang-Mills theory and non-abelian gauge theories bridge the gap from theory to experiment by tying into both mathematics and experimental physics via the standard model for particle physics, tested in remarkable precision calculations. Other remarkable feats include the theory of critical exponents in condensed matter physics, and the many applications of topological field theories to mathematics.\\

Quantum field theory unites the principles of quantum mechanics and relativity, and is traditionally described from a Lagrangian. In this approach, observables are calculated perturbatively around a classical vacuum solution via Feynman rules derived from the path integral.
However, there are strong hints that this traditional approach to quantum field theories is missing some fundamental underlying principles, and that another revolution in our understanding of quantum field theories is about to take place.

A strong indication in this direction is coming from exactly solvable quantum field theories: Relying on non-perturbative methods based on additional symmetries and basic properties of the theory, techniques such as the conformal bootstrap, the integrability program and localisation in supersymmetric field theories provide insights not apparent from the Feynman diagram route. 

Another hint is coming from the multitude of dualities relating quantum field theories. Stressing the aspect of a moduli space for quantum field theories, dualities demonstrate that the semi-classical description of a QFT given by its Feynman diagram expansion is not unique; and indeed two alternative `quantisations' can give rise to the same quantum field theory, as illustrated by the example of the Sine-Gordon model and the massive Thirring model. Of particular interest in this context are the dualities relating strongly coupled theories, inaccessible from Feynman diagrams, to a quantum field theory at weak coupling. Approaching quantum field theories from the point of view of the QFT moduli space also indicates that there are potentially parts of this moduli space inaccessible from a Lagrangian description, as illustrated by the six dimensional superymmetric $\cN=(0,2)$ models.

However, possibly most remarkable in this context are the AdS/CFT dualities, relating $d$ dimensional quantum field theories with conformal symmetry to quantum gravity on asymptotically Anti-de Sitter space-times in $d+1$ dimensions. This holographic relation, between quantum gravity in the full `bulk' of space-time and a quantum field theory on the boundary, can be motivated by a discussion of the asymptotic symmetry group SO$(2,d)$ of Anti-de Sitter spaces, coinciding exactly with the conformal group in $d$ dimensions. \\

None of the developments described here are obtainable from a Lagrangian point of view, and they strongly indicate that a different understanding of quantum field theories is needed.  Another piece of evidence for some underlying, as yet undiscovered fundamental principle missing in the Feynamn diagram approach is the striking simplicity and remarkably rich structure of {\it scattering amplitudes}. In any quantum field theory, scattering amplitudes are natural and fundamental observables. However, in particular for the physically interesting case of elementary particles of spin $s\geqslant 1$, calculations using the Feynman diagram approach become intractable even for simple scattering processes, having been characterised\footnote{A comment made by Zvi Bern on the complexity of scattering amplitudes in supergravity.} as containing ``more terms than there are cells in an average research student's brain''. This suggests that these most fundamental observables are highly complicated objects.\\

In the same spirit as in the examples given above however, a different route can be taken to understand scattering amplitudes from fundamental principles and symmetries instead of Feynman diagram expansions. Recall in this context that all one-particle states in a quantum field theory can be characterised by irreducible representations of the Lorentz group \cite{Weinberg-QTF-1}. In particular, they can be classified by their mass-eigenvalue $m^2=k^\mu k_\mu$ (Wigner classification) and the representation of the little group, a subgroup of the little group leaving the momentum $k_\mu$ invariant. For massless particles, the little group is isomorphic to ISO$(2)$, the abelian group of Euclidean motions in two dimensions. Two of its generator commute, and can thus be diagonalised simultaneously. Massless states of momentum $k_\mu$ are therefore characterised by their eigenvalues under the remaining generator, the angular momentum $J^3=J^{12}$, where $J^{\mu\nu}$ denote the Lorentz generators. Representations are thus classified by the helicity, which can only take integer and half-integer values due to the topology of the Lorentz group. In particular, scalars are in the trivial representation, while gluons and gravitons transform non-trivially. They are described locally (for gluons) by the polarisation data $\epsilon_\mu$ and $\epsilon_{\mu\nu}$ respectively, subject to 
\begin{equation}\label{eq1:polarisation}
 \epsilon_\mu\sim\epsilon_\mu+\alpha k_\mu\,,\qquad \epsilon_\mu\, k^\mu=0\,,
\end{equation}
where $\alpha$ is an arbitrary constant. Importantly, the scattering amplitude transforms according to its external particles under the Lorentz group, and gluon and graviton amplitudes are thus linear in the polarisation of the scattered particles, and gauge (diffeomorphism) invariant due to \cref{eq1:polarisation}.

In four dimensions, the {\it spinor helicity formalism} provides a particularly elegant and powerful tool to parametrise massless particles. This uses the isomorphism between the restricted Lorentz group on complexified space-time and the special linear group, SO$(4,\C)\cong\text{PSL}(2,\C)$, given by $p_\mu=\sigma_\mu^{\alpha\dot\alpha}p_{\alpha\dot\alpha}$. For massless particles, this can always be decomposed into the outer product of two complex two-dimensional Weyl spinors, \begin{equation}
  p_{\alpha\dot\alpha}=\lambda_\alpha\tilde{\lambda}_{\dot\alpha}\,.   
\end{equation}
The little group acts one these spinors via $(\lambda,\,\tilde{\lambda})\rightarrow(t\lambda,\,t^{-1}\tilde{\lambda})$, and scattering amplitudes have to transform according to 
\begin{equation}
 \cA_{h_1,\dots,h_n}\rightarrow \prod_it_i^{-2h_i}\cA_{h_1,\dots,h_n}\,,
\end{equation}
where $h_i$ denotes the helicity of the $i$th particle. These considerations lead to surprisingly simple formulae for tree-level MHV amplitudes\footnote{Where MHV abbreviates `maximally helicity violating', implying two particles of negative helicity and $n-2$ of positive helicity} \cite{ParkeTaylor:1986}, 
\begin{equation}
 \cA_{\text{MHV}}(1,2,\dots,n)=\frac{\delta^{4|8}\left(\sum_{i=1}^n\lambda_i\tilde{\lambda}_i\right)}{\prod_{i=1}^n \langle i,i+1\rangle}\,,
\end{equation}
where $\langle i,j\rangle=\epsilon^{\alpha\beta}\lambda_{i,\alpha}\lambda_{j,\beta}$ denotes the spinor product. Remarkably, the complicated expressions obtained from Feynman diagrams thus simplify to give beautifully compact formulae. The MHV formula provided the basis for rapid progress in the study of scattering amplitudes, fuelled particularly by the formulation of a new fundamental mathematical description underpinning scattering amplitudes in $\mathcal{N}=4$ super Yang-Mills: the {\it twistor string} \cite{Witten:2003nn,Berkovits:2004hg}, a string theory in twistor space, an auxiliary target space originally introduced as an approach to quantise gravity and field theory \cite{Penrose:1972ia,Huggett,WardWells}. Twistor string theories have proven crucial to the understanding and discovery of the simplicity and structures underlying scattering amplitudes in both $\mathcal{N}=4$ super Yang-Mills and $\cN=8$ supergravity in four dimensions. The simplifications in this framework are due to the non-locality introduced by twistor space as the target space for the worldsheet models; heuristically, a point in space-time corresponds to a line in twistor space, and a point in twistor space corresponds to a null ray\footnote{or more accurately, an $\alpha$ plane in complexified Minkowski space.} in space-time. The twistor strings for $\mathcal{N}=4$ super Yang-Mills \cite{Witten:2003nn,Berkovits:2004hg,Berkovits:2004jj,Mason:2007zv} and $\cN=8$ supergravity  \cite{Skinner:2013xp} manifest the simplicity of tree-level amplitudes \cite{ParkeTaylor:1986,Roiban:2004yf,Hodges:2012ym,Cachazo:2012kg,Cachazo:2012pz}, expressing them as integrals over the moduli space of degree $d$ maps into twistor space. These results sparked an entire new research area and led to a multitude of remarkable insights into scattering amplitudes in $d=4$, both computational and conceptual, see \cite{Elvang:2013cua} for a recent review. While twistor strings only describe maximally supersymmetric Yang-Mills theory and gravity at tree-level, they provide a tantalising paradigm both for how twistor theory could make contact with physics and how frameworks based on worldsheet methods can provide, at least perturbatively, a better understanding of quantum field theories.\\

In a spectacular recent breakthrough, Cachazo, He and Yuan \cite{Cachazo:2013gna,Cachazo:2013hca,Cachazo:2013iea} extended these results to scattering amplitudes in arbitrary space-time dimension $d$. Remarkably, the full tree-level S-matrix for massless theories localises on a set of constraints, known as the {\it scattering equations}. These equations underpin not only massless scattering in quantum field theories, but also tie in with the twistor string models described above. The expressions for tree-level amplitudes are stunningly simple -- realised as integrals over the moduli space of marked Riemann surfaces localised on the algebraic scattering equations, and describe a long list of massless theories \cite{Cachazo:2014nsa,Cachazo:2014xea}. Moreover, they provide one of the most concise representations of the `colour-kinemtaic duality' and Kawai-Lewellen-Tye relations between gauge theory and gravity amplitudes \cite{Bern:2008qj,Bern:2010ue,Kawai:1985xq}, see \cite{Cachazo:2013gna,Ellis:2015}.

The only conceivable reason for the existence of these remarkable structures in scattering amplitudes is an underlying mathematical framework similar to the worldsheet theories giving rise to string theory amplitudes, but adapted to field theory in the same manner as the twistor string. The {\it ambitwistor string theories}, proposed in \cite{Mason:2013sva} and further studied in \cite{Ohmori:2015sha}, provide this fundamental theory. These models are two-dimensional chiral conformal field theories, with an an auxiliary target space, the complexified phase space of null geodesics, known as ambitwistor space $\A$. Strongly resembling complexified worldline formulations of quantum field theory, their action encodes the contact geometry of the ambitwistor target space. This provides a very intuitive insight into the reason for their wide-reaching impact for massless scattering - the phase space of null geodesics gives a target space adapted perfectly to the problem.

Beyond providing a primal derivation of the simple representations for the full tree-level S-Matrix of Yang-Mills and gravity in arbitrary dimensions, ambitwistor strings have accomplished a number of impressive feats. Maybe most manifestly, as chiral analogues of the RNS string they provide a new angle \cite{Ohmori:2015sha,Bjerrum-Bohr:2014qwa,Reid-Edwards:2015stz,Siegel:2015axg,Berkovits:2014aia,Berkovits:2015yra} on the connection to string theory while containing only field theory degrees of freedom. Moreover, they encode the full non-linear structure of supergravity \cite{Adamo:2014wea}, albeit perturbatively, and the ambitwistor string for type II supergravity is anomaly free, and thus leads to representations of field theory loop amplitudes, again as integrals over the moduli space of marked Riemann surfaces \cite{Adamo:2013tsa}.\\


Ambitwistor strings and their applications in field theories are the main object of this thesis, aiming to understand and exploit the simplicity and structure of scattering amplitudes. Its goal is to provide a few small but significant steps towards extending the underlying theory, and to further thereby our understanding of quantum field theories. While restricted to the subject of scattering amplitudes, this provides one piece of the puzzle of the different structures unobtainable from Feynman diagrams, and a small contribution towards a modern understanding of quantum field theories.

\paragraph{Outline of this thesis.}
The main body of this thesis is divided into two parts. While the first chapters demonstrate the powerful tools different representations of ambitwistor strings provide for tree-level massless scattering, \cref{chapter6} focuses on the extension beyond the classical limit to loop amplitudes.

\Cref{chapter2} serves as an introduction to the Cachazo-He-Yuan (CHY) representation of tree-level scattering amplitudes in massless theories and the underlying ambitwistor strings for Yang-Mills theory and gravity. The contents are intended as a review, laying the base for the original work in the subsequent chapters.

In \cref{chapter3}, I extend the original ambitwistor string construction to a list of ambitwistor worldsheet models for an extensive family of massless physical theories, whose correlators give rise to the corresponding CHY formulae \cite{Cachazo:2014nsa,Cachazo:2014xea}. These models thus provide a fundamental derivation of these formulae, and effectively explain their existence from an underlying theory. This proves the universality of ambitwistor strings, with the bosonic part of the action providing the backbone for massless scattering amplitudes. The work presented here is based on \cite{Casali:2015vta}, obtained in collaboration with Eduardo Casali, Lionel Mason, Ricardo Monteiro and Kai Roehrig.

While almost algorithmically defined as the chiral complexification of a massless spinning particle, the ambitwistor string action can be constructed geometrically from the pull-back of the contact structure of its target space. Alternatively, the action can be understood as a chiral degenerate high tension limit $\alpha'\rightarrow 0$ of string theory, or a supersymmetric curved $\beta\gamma$ system. The following two chapters explore the former two perspectives in different contexts.

In \cref{chapter4}, I discuss a representation of ambitwistor strings that ties naturally into the structure of null infinity of an asymptotically flat space-time. This manifests the connections between asymptotic symmetries, known as BMS symmetries, and infrared properties of the amplitudes. More specifically, I show how Strominger's realisation of the soft theorem as a Ward identity on Ashtekar's Fock space of radiative modes \cite{Ashtekar:1981sf,ashtekar1987asymptotic} is implemented in the worldsheet conformal field theory of the ambitwistor string. These results are joint work with Arthur Lipstein and Lionel Mason, originally published in \cite{Geyer:2014lca}.

\Cref{chapter5} on the other hand focuses on four-dimensional space-times. Using the spinorial representation of ambitwistor space, I developed new ambitwistor string models, giving rise to remarkably simple formulae for Yang-Mills and gravity. They provide a particularly elegant framework realizing the relation between symmetries at null infinity and soft theorems, and highlight the connection to the original twistor strings. However, they are more flexible, allowing for any amount of supersymmetry, and contain fewer moduli integrals. Moreover, they localise completely on a four-dimensional version of the scattering equations refined by MHV degree.  The work presented in this chapter is partially based on \cite{Geyer:2014fka} and \cite{Geyer:2014lca} in collaboration with Arthur Lipstein and Lionel Mason. Full proofs are included here that were omitted in \cite{Geyer:2014fka}.\\

\Cref{chapter6} constitutes the latter part of the thesis, and explores the ambitwistor string at loop level. Its goal is two-fold, proving the conjectured ambitwistor string formula for loop amplitudes, as well as deriving a framework applicable to more general theories. More specifically, I will use a contour integral argument to reduce the computationally challenging ambitwistor higher-genus expressions to simple formulae on nodal Riemann spheres. This relies crucially on the support of the moduli integral on the one-loop scattering equations. The general idea is then developed into a framework widely applicable to loop integrands of massless scattering amplitudes, leading to simple expressions of similar complexity to tree amplitudes for both supersymmetric and non-supersymmetric Yang-Mills theory and gravity. These formulae are proven in the non-supersymmetric case from factorisation arguments, and I give strong evidence for the validity of the supersymmetric integrands. Perhaps most remarkably, this leads to a proposal for an all-loop integrand for gravity and Yang-Mills. These results are joint work with Lionel Mason, Ricardo Monteiro and Piotr Tourkine, published in \cite{Geyer:2015bja,Geyer:2015jch}.

\chapter{Review}\label{chapter2}
This chapter provides a review of the aspects of scattering amplitudes and ambitwistor strings relevant for the remainder of this thesis. As indicated in \cref{chapter1}, ambitwistor strings provide the underlying mathematical framework for an extremely simple representation of tree-level scattering amplitudes in $d$ dimensions. Therefore, we begin in \cref{sec2:SA} with a discussion of these Cachazo-He-Yuan (CHY) formulae for the full tree-level S-matrix of a family of massless theories in $d$ dimensions. The rest of the chapter focuses on ambitwistor strings, beginning with a brief introduction to ambitwistor theory in \cref{sec2:review_ambi}. However, it is not the purpose of this section to provide an in-depth review, but rather a presentation geared towards applications in scattering amplitudes of massless particles and worldsheet theories in particular. The interested reader is referred to the original papers \cite{LeBrun:1983,Baston:1987av,LeBrun:1991jh,LeBrun:1985,LeBrun:1986,Witten:1978xx,Isenberg:1978kk,Witten:1985nt} for a more detailed exposition and \cite{Mason:2013sva} for a modern review in the context of ambitwistor strings. Finally, we conclude with a review of ambitwistor strings, including a discussion of the correlator at genus zero. \\

The aim of this chapter is to review the background material for the thesis as a whole. Thus, topics relevant only for certain chapters will pedagogically be reviewed when needed. Particularly worth highlighting in this context are the introduction to BMS symmetries and soft theorems in \cref{sec4:review-scri}, the review of ambitwistor space in four dimensions in \cref{sec5:ambispace4d}, and the discussion of ambitwistor strings at genus one in \cref{sec6:review_loop}.   

\section{Scattering amplitudes}\label{sec2:SA}
To motivate the following discussion, let us briefly review the general structure of scattering amplitudes in field and string theory. 
\begin{itemize}
 \item {\bf Feynman diagrams}: In field theory, scattering amplitudes are formulated as a combinatoric problem. The S-matrix derived perturbatively from the path integral is given by an expansion in the loop order, $\cA=\sum_g \cA_g\hbar^g$, where each $\cA_g$ is computed as a sum over Feynman diagrams with $g$ loops,
 \begin{equation}
  \cA_g=\sum_{\text{graphs }\Gamma}\frac{\omega_\Gamma(k_i,\epsilon_i)}{\text{ord}(\text{Aut}(\Gamma))}\,.
 \end{equation}
 Here, $\omega_\Gamma$ denotes the weight of the Feynman diagram determined by Feynman rules derived from the path integral, and we divide this weight by the order of the symmetry group of the diagram. $\omega_\Gamma$ depends only on the momenta $k_i$ and the polarisation $\epsilon_i$ of the external particles, and at $g$ loops involves integrals over the $g$ loop momenta $\ell_a$. The graph theoretical nature of the problem gives an intuitive understanding of field theories describing point particles. However, the traditional strategy of calculating amplitudes using Feynman diagrams becomes intractable even for simple scattering processes - suggesting that scattering amplitudes are highly complicated objects.
 \item {\bf Worldsheet models}: Worldsheet theories such as string theory beautifully reformulate this into a geometric problem. Instead of an expansion in loop order, a term proportional to the Euler characteristic in the action gives different weights to different worldsheet topologies, and thus allows for an expansion in the genus $g$ of the worldsheet. 
 \begin{center}\begin{tikzpicture}[scale=1.5]
 \draw (1.1,1) circle [x radius=0.4, y radius=0.3];
  \draw [fill] (0.95,0.82) circle [radius=.3pt];
  \draw [fill] (1.1,1) circle [radius=.3pt];
  \draw [fill] (1.03,1.22) circle [radius=.3pt];
  \draw [fill] (1.37,1.05) circle [radius=.3pt];
 \draw (2.25,1) circle [x radius=0.4, y radius=0.2];
  \draw (2.05,1.045) arc [radius=0.4, start angle=240, end angle=300];
  \draw (2.4,1.015) arc [radius=0.4, start angle=70, end angle=110];
  \draw [fill] (2.15,1.1) circle [radius=.3pt];
  \draw [fill] (2,0.95) circle [radius=.3pt];
  \draw [fill] (2.55,0.98) circle [radius=.3pt];
  \draw [fill] (2.5,1.01) circle [radius=.3pt];
  \draw (3,1) to [out=90, in=180] (3.15,1.15) to [out=0, in=180] (3.35,1.07) to [out=0, in=180] (3.8,1.18) to [out=0,in=90] (4.1,0.89) to [out=270,in=0] (3.8,0.6) to [out=180,in=0] (3.3,0.9) to [out=180,in=0] (3.15,0.85) to [out=180, in=270] (3,1);
  \draw (3.09,1.01) arc [radius=0.17, start angle=240, end angle=300];
  \draw (3.23,1) arc [radius=0.17, start angle=70, end angle=110];
  \draw (3.75,0.76) arc [radius=0.17, start angle=240, end angle=300];
  \draw (3.89,0.75) arc [radius=0.17, start angle=70, end angle=110];
 \draw [fill] (3.11,0.93) circle [radius=.3pt];
 \draw [fill] (3.42,1.01) circle [radius=.3pt];
 \draw [fill] (3.75,0.87) circle [radius=.3pt];
 \draw [fill] (3.93,1.08) circle [radius=.3pt];
 \node at (0.2,1) {$\displaystyle \cA=$};
 \node at (1.68,1) {$\displaystyle +$};
 \node at (2.83,1) {$\displaystyle +$};
 \node at (4.5,1) {$\displaystyle + \;\;...$};
\end{tikzpicture}
\end{center}
This provides an excellent example of the wide-reaching impact of worldsheet theories: in contrast to the combinatoric problem posed by the expansion in Feynman diagrams, the formulation on a Riemann surface ensures the conceptual simplicity of a unique object at every order in perturbation theory, and rephrases the amplitude as a geometrical problem - an integral over the moduli space of Riemann surfaces.  However, the weight of each contribution is given by an integral over the moduli space $\mathscr{M}_{g,n}$ of the Riemann surface, which is in general difficult to perform. 

In contrast to field theory, string amplitudes depend explicitly on the string length $\sqrt{\alpha'}$, with an infinite tower of states contributing (c.f. for example the renowned Virasoro-Shaprio amplitude). Moreover, they are well-defined at high energies, whereas quantum gravity is plagued by intractable UV divergences. The two approaches make contact at the boundary of the moduli space, where the surfaces degenerate, and field theory amplitudes emerge in the high tension limit $\alpha'\rightarrow0$.
\item {\bf CHY formulae and ambitwistor strings}: Considering the seeming complexity of field theory amplitudes when represented as sums over Feynman diagrams, it is highly remarkable and surprising that a simple, compact formulation exists for massless particles in arbitrary space-time dimensions: the CHY formulae \cite{Cachazo:2013gna,Cachazo:2013hca,Cachazo:2013iea}. These represent  field theory tree-level amplitudes as the sum over a rational function evaluated at solutions $\sigma_i\in\CP^1$ to the so-called {\it scattering equations} described in \cref{sec2:review_SE},
\begin{equation}\label{eq2:CHY_sum}
 \cA_n^{(0)}=\sum_{\substack{\text{solutions}\\k_i\cdot P(\sigma_i)=0}}\frac{\cI_n(\sigma_i,k_i,\epsilon_i)}{J(\sigma_i,k_i)}\,.
\end{equation}
Here, $\cI_n$ is a theory-dependent function of the momenta $k_i$ and the polarisation data $\epsilon_i$, while $J$ is the theory-independent Jacobian obtained from solving the scattering equations $k_i\cdot P(\sigma_i)$.
\end{itemize}
In contrast to the graph combinatoric problem posed by the sum over Feynman diagrams for field theory amplitudes, and the moduli theoretic problem in worldsheet theories, the CHY formulae express tree-level scattering amplitudes in terms of solutions to an algebraic problem. From this point of view, the appeal and impact of the CHY formulae is easy to understand; algebraic problems are comparatively easy to solve. Moreover, the CHY representation for field theory amplitudes has the additional advantage of constituting structurally a middle point between field theory and string theory. To see this, note that \cref{eq2:CHY_sum} can be re-cast as an integral over the moduli space of a Riemann sphere with marked points $\sigma_i$ associated to the external particles,
\begin{equation}
 \cA_n^{(0)}=\int\frac{\prod_{i=1}^n\d\sigma_i}{\vol \,G}{\prod_i}\,\bar\delta\left(k_i\cdot P(\sigma_i)\right)\,\cI_n(\sigma_i,k_i,\epsilon_i)\,.
\end{equation}
The moduli integral over the marked sphere localises completely on the support of the scattering equations, enforced by holomorphic $\delta$-functions, and thereby avoid both the combinatoric and geometrical difficulties of the Feynman diagram expansion and string theory amplitudes. While only encoding the degrees of freedom of a field theory, the CHY formulae thus inherit the simplicity and the benefits of worldsheet amplitudes, with a unique object at every order in perturbation theory. \\

The only conceivable reason for the existence of these remarkable structures in scattering amplitudes is an underlying mathematical framework similar to the worldsheet theories giving rise to string theory amplitudes, but encoding the localisation on the scattering equations and thus describing field theories. This is indeed realised by the {\it ambitwistor string theories} proposed in \cite{Mason:2013sva}, which we will review in  \cref{sec2:review_ambistrings}. Leading to the CHY formulae, representations of field theory very reminiscent of string theory, these models thus have the potential to give a new understanding of quantum field theories. Several directions will be explored in the main body of the thesis.

\subsection{Scattering equations}\label{sec2:review_SE}
Let us begin with a more detailed discussion of the constraints on which the moduli integrals localise - the scattering equations \cite{Cachazo:2013gna,Cachazo:2013hca,Cachazo:2013iea}. Forming the backbone of the CHY formulae, they are constructed from a meromorphic map $P:\CP^1\rightarrow\C^d$ from the Riemann sphere into momentum space,
\begin{equation}\label{eq2:P0}
 P(\sigma)=\sum_{i=1}^n \frac{k_i}{\sigma-\sigma_i}\,.
\end{equation}
The scattering equations then take the form
\begin{equation}\label{eq2:SE_Ei}
  E_i\equiv k_i\cdot P(\sigma_i)=\sum_{j\neq i}\frac{k_i\cdot k_j}{\sigma_{i}-\sigma_j}=0\,.
\end{equation}
Using momentum conservation, $\sum_i k_i^\mu=0$, they respect the SL$(2,\C)$ invariance of the moduli space of Riemann spheres with $n$ marked points.\footnote{See \cref{sec6:moduli_space} for more details on the moduli space of Riemann surfaces. However, we will only need that the dimension of the moduli space $\mathscr{M}_{0,n}$ is given by $n-3$, due to the non-trivial group of automorphisms Aut$(\CP^1)=\text{PGL}(2,\C)$, acting as M\"{o}bius transformations on the coordinates.} Therefore, there are only $n-3$ linearly independent scattering equations, with $(n-3)!$ solutions. 

To see this, write the scattering equations in their polynomial form \cite{Dolan:2014ega},
\begin{equation}\label{eq2:SE_Fi}
 F_j\equiv\sum_i \sigma_i^{j+2} E_i=0\,.
\end{equation}
The powers of $\sigma_i$ in the sum are chosen such deg$F_j=j$. We then find $F_{-2}=0$ as an algebraic identity, and $F_{-1}=F_0=0$ due to momentum conservation. Since the Jacobian for the transformation from $F_j$ to $E_i$ is always non-zero for generic insertion points (so for $\sigma_{ij}\equiv\sigma_i-\sigma_j\neq0$), the set of constraints $\{E_i\}$ and $\{F_j\}$ are equivalent and have the same solutions $\{\sigma_i\}$. Since the $F_j$ are polynomials with deg$(F_j)=j$, there are $(n-3)!$ solutions to \cref{eq2:SE_Fi}, and hence to \cref{eq2:SE_Ei}. 

For a low number of marked points, explicit solutions are known, and numerical algorithms have been developed for higher numbers of external particles directly from \cref{eq2:CHY_sum}. There has also been much recent progress on evaluating the scattering amplitudes without explicitly solving \cref{eq2:SE_Ei}, see  \cite{Cachazo:2015nwa,Baadsgaard:2015voa,Baadsgaard:2015ifa,Kalousios:2013eca,Kalousios:2015fya,Cardona:2015eba,Cardona:2015ouc,Sogaard:2015dba,Huang:2015yka,Lam:2014tga,Lam:2015sqb,Dolan:2015iln}.\\

In the following, we will take a more geometric approach to the scattering equations. This point of view is very much motivated from the ambitwistor string reviewed in \cref{sec2:review_ambistrings}, and has proven extremely fruitful for extensions and generalisations of the scattering equations, for example to higher loop order. In this context, we understand $P_\mu$ as a meromorphic section of the canonical bundle $K_\Sigma$ of the worldsheet. $P_\mu\in\Omega^0(\Sigma,K_\Sigma)$ is required to associate the null momenta of the external particles $k_i$ to the marked points $\sigma_i$ on the Riemann surface via the residue Res$_{\sigma_i}P(\sigma)=k_i$, and thus satisfies the equation
\begin{equation}\label{eq2:DE_SE}
    \dbar P=2\pi i \d \sigma \sum_i k_i \,\bar \delta(\sigma-\sigma_i)\,.
\end{equation}
This is solved by the 1-form equivalent of \cref{eq2:P0},
\begin{equation}\label{eq2:P}
  P(\sigma)=\sum_{i=1}^n \frac{k_i}{\sigma-\sigma_i}\,d\sigma\,.
\end{equation}
Geometrically, the scattering equations \cref{eq2:SE_Ei} then encode the vanishing of the quadratic differential $P^2$.

To see this, first note that $P^2$ is a meromorphic quadratic differential, with vanishing residues at the double poles for on-shell null momenta. Since meromorphic quadratic differentials have at least four poles, setting $P^2=0$ is thus equivalent to requiring the residues at $n-3$ of its $n$ simple poles to vanish. In particular, these residues are given by
\begin{equation}
  \text{Res}_{\sigma_i} P^2(\sigma)=k_i\cdot P(\sigma_i)=\sum_{j\neq i}\frac{k_i\cdot k_j}{\sigma_{i}-\sigma_j}=0\,,
\end{equation}
which we identify exactly as the scattering equations $E_i$. We will see in \cref{sec2:review_ambistrings} how this prescription arises from the ambitwistor string, and in \cref{chapter6} how the construction generalises to higher genus worldsheets. \\

A crucial property of the scattering equations is that they manifest the factorisation properties of scattering amplitudes \cite{Cachazo:2013gna,Ellis:2015,Ellis:2013,Dolan:2013isa,Adamo:2013tsa}. Due to the unitarity of the S-matrix and the locality of the interactions, the amplitudes develops a pole corresponding to a propagator for a particle with momentum $k_I$ if a partial sum of the external momenta  $k_I=\sum_{i\in I}k_i$ becomes null, where $I\subset \{1,\ldots, n\}$. The residue is the product\footnote{Usually, the correct pole structure is referred to as locality, and the residue as unitarity.} of two tree amplitudes, depending on the particles in $I$ and its complement $\bar I$ respectively, as well as the propagating particle of momentum $\pm k_I$,
\begin{equation}\label{eq2:factorisation}
 \cM_n\rightarrow\sum_{\epsilon_I}\cM_{n_I+1}(\epsilon_I)\frac{1}{k_I^2}\cM_{n_{\bar I}+1}(\epsilon_I) \qquad\text{as }k_I^2\rightarrow0\,.
\end{equation}
The scattering equations relate these factorisation channels to the boundary of the moduli space of Riemann surfaces of genus $g$ with $n$ marked points, $\mathscr{M}_{g,n}$. The Deligne-Mumford compactification $\overline{\mathscr{M}}_{g,n}$  \cite{DeligneMumford,Witten:2012bh} includes these boundary configurations. They correspond to degenerations of the Riemann surface where marked points collide, or the surface develops a long, thin neck; either separating the Riemann surface into two surfaces joint at the node or shrinking a non-trivial $a$-cycle of a higher genus curve. In particular, in the Deligne-Mumford compactification, marked points are always separate $\sigma_{ij}\neq 0$, and these configurations correspond to the degeneration of the original Riemann surface into two Riemann surfaces joint at a node. As a subset $I$ of marked points come together,
\begin{equation}
 \sigma_i=\sigma_I+\varepsilon x_i +\cO(\varepsilon^2)\qquad\text{for }i\in I\,,
\end{equation}
the scattering equations imply that $k_i\cdot P_I(x_i)=\cO(\varepsilon)$, where $P_I$ is defined as above, but restricted to the $I$ component of the factorised Riemann surface. Therefore, the kinematic invariant $k_I^2$ has to behave as
\begin{equation}
 k_I^2=\frac{1}{2}\sum_{i,j\in I}k_i\cdot k_j=\sum_{\substack{i,j\in I\\i\neq j}} \frac{x_ik_i\cdot  k_j}{ x_i-x_j}= \sum_{i\in I} x_i k_i \cdot P_I(x_i)=\cO(\varepsilon)\,.
\end{equation}
This explicitly relates the boundary of the moduli space corresponding to collisions of marked points to the factorisation channels of the amplitude. The scattering equations factorise into two sets of constraints, corresponding to the components of the degenerate Riemann sphere, with an additional marked point on each corresponding to the node, with residue $\pm k_I$ ensuring momentum conservation. We will see the power of this statement in \cref{chapter6}, where we will use an extension of this argument to prove new, simple formulae for one-loop amplitudes.\\

While the discussion given here has been targeted towards scattering equations in the context of the CHY formulae and the ambitwistor string, it is highly remarkable that the scattering equations already appeared much earlier in the study of dual models \cite{Fairlie:1972zz,robertsphd,Fairlie:2008dg}, and, in the different context of the high energy limit of string scattering, in the work by Gross and Mende \cite{Gross:1987kza,Gross:1987ar} (see also \cite{Caputa:2011zk}). Interestingly, scattering equations thus make an appearance in both the high energy and low energy limit of string theory, which can be intuitively understood from the decoupling of massless modes in both limits. In a different guise, the scattering equations also play a crucial role in the four-dimensional Berkovits-Witten twistor string \cite{Witten:2003nn,Berkovits:2004hg,Witten:2004cp}, which we will explore further in \cref{chapter5}.

\subsection{CHY formulae}\label{sec2:CHY}
As motivated above, the scattering equations provide the backbone of the CHY representation for the full tree-level S-matrix of massless theories \cite{Cachazo:2013gna,Cachazo:2013hca,Cachazo:2013iea}, 
\begin{equation}\label{eq2:CHYgeneral_sum}
 \cA_n^{(0)}=\sum_{\substack{\text{solutions}\\k_i\cdot P(\sigma_i)=0}}\frac{\cI_n(\sigma_i,k_i,\epsilon_i)}{J(\sigma_i,k_i)}\,.
\end{equation}
Remarkably, these simple and compact formulae are valid in any space-time dimension $d$, and for a large class of massless theories that determine the form of the integrand $\cI_n$. The Jacobian $J$ of solving the scattering equations $k_i\cdot P(\sigma_i)=0$ for $\sigma_i$ emerges most naturally when the expression is converted into an integral over the moduli space $\mathscr{M}_{0,n}$ of an $n$-marked sphere, localised on the scattering equations,
\begin{equation}\label{eq2:CHYgeneral_int}
 \cA_n^{(0)}=\int_{(\CP^1)^n}\frac{\prod_{i=1}^n\d\sigma_i}{\vol \,G}{\prod_i}\,\bar\delta\left(k_i\cdot P(\sigma_i)\right)\,\cI_n(\sigma_i,k_i,\epsilon_i)\,.
\end{equation}
Here, the M\"{o}bius invariance of the marked Riemann sphere has been exploited to write the measure and scattering equations in a manifestly permutation invariant way, which introduces the quotient by the volume of the symmetry group $G=\SL(2,\C)^2$. Fixing this redundancy introduces the usual Faddeev-Popov Jacobian $\frac{\sigma_{ab}\sigma_{bc}\sigma_{ca}}{\d\sigma_a\d\sigma_b\d\sigma_c}$ for the measure and $\sigma_{ab}\sigma_{bc}\sigma_{ca}$ for the removed delta-functions imposing the scattering equations $E_a$, $E_b$ and $E_c$. As discussed in the previous section, there are $n-3$ linearly independent scattering equations, enforced in the CHY formula by holomorphic delta-functions that are to be understood as 
\begin{equation}
  \label{eq:deltabar-def}
\bar \delta (f(z)):= \dbar \frac{1}{2\pi i f(z)} =\delta(\Re f)\delta(\Im f) d\overline{f(z)}\, .
\end{equation}
Thus the integrand is indeed a (top,top) form. Since the dimension of the moduli space of an $n$-punctured Riemann sphere is given by dim$(\mathscr{M}_{0,n})=n-3$, the integral fully localises on the support of the scattering equations \cref{eq2:SE_Ei}. This establishes the equivalence of \cref{eq2:CHYgeneral_int} and \cref{eq2:CHYgeneral_sum}.\\

While very simple expressions were known for scattering amplitudes in maximally supersymmetric theories in four dimensions \cite{ParkeTaylor:1986,Witten:2003nn,Berkovits:2004hg,Berkovits:2004jj,Mason:2007zv,Roiban:2004yf,Hodges:2012ym,Cachazo:2012kg,Cachazo:2012pz,Cachazo:2012da,Skinner:2013xp,Adamo:2015gia}, the CHY formulae represent the first compact expressions of the full tree-level S-matrix in arbitrary dimensions for non-supersymmetric theories. In particular, these different theories are distinguished only by the form of the integrand $\cI_n$, which is a rational function of the external momenta $k_i$, the polarisation data $\epsilon_i$ for non-trivial representations of the little group, and the marked points $\sigma_i$ associated to the external particles via the residues of the meromorphic one-form $P$. To ensure SL$(2,\C)$ invariance of the integrand, $\cI_n$ has to have homogeneity $-4$ in each marked point. The scattering equations underpinning all these theories are universal for all massless theories described below.

We will focus here on the integrands for the arguably physically most interesting theories; Einstein gravity coupled to a B-field and a dilaton, and Yang-Mills theory. The amplitudes of the bi-adjoint scalar, a massless coloured cubic scalar flavoured in $U(N)\times U(\tilde{N})$, naturally tie into the discussion, as we will see below. More general theories will be discussed in \cref{sec2:masslessmodels}.

While originally formulated for the theories mentioned above, the CHY formulae have been extended to a sizeable list of  massless theories \cite{Cachazo:2014nsa,Cachazo:2014xea} that share an intrinsic structure of the integrand $\cI_n$:  it is composed of two factors, $\cI_n=\cI^L\,\cI^R$, each of homogeneity $-2$ in all marked points. We can thus characterise the theories in terms of these more fundamental building blocks.
\begin{itemize}
 \item {\it Parke-Taylor colour factors}. These are fundamental building blocks for the leading trace contribution of any gauge theory with a gauge group U$(N)$ or SU$(N)$. The integrand for the colour-ordered amplitude\footnote{An extensive review of modern techniques and approaches in the study of quantum field theories and scattering amplitudes can be found in \cite{Elvang:2013cua}.}, depending on the planar ordering $\alpha$, is given by 
 \begin{equation}\label{eq2:defPT}
  \mathcal{C}_n(\alpha)=\frac{1}{\sigma_{\alpha(i_1)\,\alpha(i_2)}\sigma_{\alpha(i_2)\,\alpha(i_3)}\dots\sigma_{\alpha(i_n)\,\alpha(i_1)}}\,.
 \end{equation}
 The full amplitude can be recovered from these partial amplitudes as a sum over all colour orderings of tree-level Yang-Mills theory, or one colour structure of the bi-adjoint scalar theory respectively,
 \begin{equation}
  \mathcal{C}_n=\sum_{\alpha\in S_n/\Z_n}\text{Tr}\left(t^{A_{\alpha(i_1)}}t^{A_{\alpha(i_2)}}\dots t^{A_{\alpha(i_n)}}\right)\, \mathcal{C}_n(\alpha)\,,
 \end{equation}
 for $t\in \g$ for some Lie algebra $\g$.
 \item {\it Pfaffians}. Integrands for particles transforming in non-trivial representations of the little group with polarisation vectors $\epsilon_i$ are build from the reduced Pfaffian\footnote{Or modifications of the Pfaffian, see \cref{sec2:masslessmodels}.} of a $2n\times 2n$ antisymmetric matrix $M$, depending on the momenta and polarisation vectors,
 \begin{equation}
  M=\begin{pmatrix} A & -C^T \\ C & B \end{pmatrix}\,,
 \end{equation}
 with entries defined by
 \begin{subequations}
 \begin{align} \label{eq2:def_M_CHY}
  &A_{ij}=\frac{k_i\cdot k_j}{\sigma_{ij}}\,,&&B_{ij}=\frac{\epsilon_i\cdot \epsilon_j}{\sigma_{ij}}\,, &&C_{ij}=\frac{\epsilon_i\cdot k_j}{\sigma_{ij}}\,,\\
  &A_{ii}=0\,, && B_{ii}=0\,, && C_{ii}=-\sum_{j\neq i}C_{ij}\,.
 \end{align}
 \end{subequations}
 The matrix $M$ has co-rank 2, its kernel being spanned by the vectors $(1,\dots,1,0,\dots,0)$ and $(\sigma_1,\dots,\sigma_n,0,\dots,0)$, and thus $\pf(M)=0$. However, an invariant quantity, the {\it reduced Pfaffian}, can be defined by removing two rows $i$ and $j$ and the corresponding columns,
 \begin{equation}
  \pf'(M)=\frac{(-1)^{i+j}}{\sigma_{ij}}\pf(M^{ij}_{ij})\,.
 \end{equation}
 This reduced Pfaffian is invariant under the choice of removed rows and columns
\end{itemize}
The scattering amplitudes for Einstein gravity, Yang-Mills and the bi-adjoint scalar theory are then given by
\begin{center}
\begingroup
\renewcommand*{\arraystretch}{1.8}
\begin{tabular}{ll}
  Einstein gravity:& $\cI_n=\pf'(M)\pf'(\wt{M})\,,$\\
  Yang-Mills (colour-ordered):& $\cI_n=\mathcal{C}(i_1,\dots,i_n)\,\pf'(M)\,,$\\
  Bi-adjoint scalar (colour-ordered): &$\cI_n=\mathcal{C}(i_1,\dots,i_n)\,\mathcal{C}(j_1,\dots,j_n)\,,$
\end{tabular}
\endgroup
\end{center}
where $\wt{M}$ is defined in analogy to $M$, but with polarisation vectors $\tilde\epsilon$. Here, we have parametrised the polarisation tensor of the graviton by $\epsilon^{\mu\nu}=\epsilon^\mu\tilde\epsilon^\nu$.\\

From the mathematical structure \cref{eq2:CHYgeneral_int} of the scattering amplitude as an integral over the moduli space localised on the scattering equations, it is not immediately obvious that the resulting expressions will always be rational, as required for tree-level scattering amplitudes. This is however ensured by taking a global residue \cite{Sogaard:2015dba}, since the scattering equations can be re-cast in polynomial form \cref{eq2:SE_Fi}, and the integrand is a rational function.

Several consistency checks, including factorisation, soft limits and sub-leading soft limits, and relations among different partial amplitudes, were checked in the original papers \cite{Cachazo:2013gna,Cachazo:2013hca,Cachazo:2013iea} and supplementary notes \cite{Ellis:2013,Ellis:2014}, and nicely summarised in \cite{Ellis:2015}. A full proof of the formulae using the BCFW recursion relation \cite{Britto:2004ap,Britto:2005fq} was given in \cite{Dolan:2013isa}.

A further remarkable property of the CHY formulae is that they provide the most concise formulation to date of the Kawai-Lewellen-Tye (KLT) relations \cite{Kawai:1985xq}, expressing gravity as the `square' of gauge theory  \cite{Cachazo:2012da,Cachazo:2013gna,Ellis:2015}. Moreover, the representation of the integrand is strongly reminiscent of colour-kinematics double copy relations between gauge theory and gravity as well \cite{Bern:2010ue}; the colour factor $\mathcal{C}_n$ in the Yang-Mills integrand gets replaced by another copy of the Pfaffian to obtain gravity. These amplitudes relations originating in string theory\cite{BjerrumBohr:2010hn,Kawai:1985xq} are manifested nicely in the CHY and ambitwistorial framework, see also \cite{Monteiro:2013rya}.

\subsection{Scattering amplitudes in massless theories}\label{sec2:masslessmodels}
Beyond the Einstein gravity, Yang-Mills and the bi-adjoint scalar, the CHY formalism can also be extended to an comprehensive family of massless theories \cite{Cachazo:2014nsa,Cachazo:2014xea}, including theories with complicated Lagrangian descriptions, like Einstein-Yang-Mills theory (with or without additional scalars), Dirac-Born-Infeld or the non-linear sigma model.\\

As indicated in the last section, all these theories share the fundamental property that their integrand $\cI_n$ factorises into $\cI_n=\cI^L\,\cI^R$. Since we have exhausted all possible combinations of the two fundamental building blocks above to construct Einstein gravity, Yang-Mills theory and the bi-adjoint scalar theory, it is evident that new building blocks will be needed:
\begin{itemize}
 \item Pfaffians of matrices $A$ independent of the polarisation data,
 \begin{equation}
 \pf'(A)\,,
\end{equation}
where $A$ is defined in \cref{eq2:def_M_CHY}. Note that these have to occur as building blocks $\cI_{L,R}=(\pf'(A))^2$ to ensure that the integrand has homogeneity $-4$ in each variable.
\item Pfaffians of matrices independent of both momentum and polarisation,
\begin{equation}
 \pf(\cX)\,,\qquad \pf(X)\,,
\end{equation}
encoding Maxwell fields. Here, $\cX$ and $X$ are given by
\begin{equation}
 \cX_{ij}=\begin{cases}\frac{\delta^{a_ia_j}}{\sigma_{ij}} &i\neq j\,\\ 0 & i=j\,\end{cases} \qquad \text{and} \qquad X_{ij}=\begin{cases}\frac{1}{\sigma_{ij}} &i\neq j\,\\ 0 & i=j\,.\end{cases}
\end{equation}
\item Generalisations of the Pfaffians of $M$ that can encode both gravity and gauge fields,
 \begin{equation}
   \prod_{a=1}^m \mathcal{C}_{\text{tr}_a}\pf'(\Pi)\,.
 \end{equation}
 These factors will appear in general multi-trace mixed graviton-gluon amplitudes. The Parke-Taylor factors $\mathcal{C}_{\mathrm{tr}_a}$ are defined in analogy to \cref{eq2:defPT} and encode the $m$ trace contributions arising from the gauge theory. For $m$ traces and $n_g$ gravitons, $\Pi$ is therefore an $2(n_g+m)\times2(n_g+m)$ matrix. For more details, see \cite{Cachazo:2014nsa,Cachazo:2014xea} and \cref{chapter3}.
\end{itemize}
The integrands for all massless theories that can be constructed  from these building blocks are summarised in the following table.
\begin{center}
\begingroup
\renewcommand*{\arraystretch}{1.8}
\begin{tabular}{l!{\color{light-gray-table}\vrule}c!{\color{light-gray-table}\vrule}c}\hline
 Theory & Integrand $\cI^L$ & Integrand $\cI^R$\\ \hline
 Einstein gravity & $\pf'(M)$ & $\pf'(\wt{M})$\\
 Yang-Mills & $\mathcal{C}_n$ & $\pf'(M)$\\
 Bi-adjoint scalar & $\mathcal{C}_n$ & $\mathcal{C}_n$ \\
 Einstein-Maxwell & $\pf(\cX)_\gamma\pf'(M)_{\hat{\gamma}}$ & $\pf'(\wt{M})$\\
 Einstein-Yang-Mills & $\prod_a \mathcal{C}_{\text{tr}_a}\pf'(\Pi)$ & $\pf'(\wt{M})$\\
 Yang-Mills-Scalar & $\mathcal{C}_n$ & $\pf(\cX)_s\pf'(M)_{\hat{s}}$\\
 generalised Yang-Mills-Scalar & $\mathcal{C}_n$ & $\prod_a \mathcal{C}_{\text{tr}_a}\pf'(\Pi)$\\
 Born-Infeld & $\pf'(M)$ & $(\pf'(A))^2$\\
 Dirac-Born-Infeld & $\pf(\cX)_s\pf'(M)_{\hat{s}}$ & $(\pf'(A))^2$\\
 extended Dirac-Born-Infeld & $\prod_a \mathcal{C}_{\text{tr}_a}\pf'(\Pi)$ & $(\pf'(A))^2$\\
 non-linear $\sigma$-model & $\mathcal{C}_n$ & $(\pf'(A))^2$\\
 special Galileon & $(\pf'(A))^2$ & $(\pf'(A))^2$ \\\hline
\end{tabular}
\endgroup
\end{center}\vspace{20pt}
In this table, lower case indices indicate that the building blocks only contain particles in a certain representation of the little group (e.g. gluons denoted by $\gamma$), while a `hat' denotes that the particles were omitted (e.g. $\hat{s}$ for scalars).\smallskip

The Lagrangian descriptions for the more exotic theories are reviewed in \cref{sec3:NewModels}, for details and extensive checks see \cite{Cachazo:2014xea}. All building blocks listed above can be obtained from the original ones via certain operations including compactifications from higher dimensions, see \cref{fig2:CHY}. However, the physical interpretation of some of these operations still remains unclear. One further comment at this stage concerns the KLT relations: similar to the gauge-gravity relations described above, generalised KLT relations express for example the special Galileon theory as the square of the non-linear $\sigma$-model \cite{Cachazo:2014xea}.\\

The existence of CHY formulae for this vast family of massless theories is a compelling evidence for the universality of the CHY formalism for massless scattering amplitudes. Beyond Einstein gravity and Yang-Mills, these formulae capture tree-level scattering amplitudes for more generic field theories in arbitrary dimensions, in a compact and simple form completely obscured by traditional approaches.

\begin{figure}[ht]
\begin{center}

\tikzstyle{block} = [rectangle, draw, fill=light-gray,
    text width=3em, text centered, rounded corners, minimum height=3em]
\tikzstyle{line} = [draw, thick, -latex']
\tikzstyle{cloud} = [draw, ellipse,fill=light-grayII, node distance=1.6cm,
    minimum height=3em]

\begin{tikzpicture}[node distance = 2cm, auto]
    \node [cloud] (gr) {Gravity: 
    };
    \node [block, below of=gr] (em) {EM
    };
     \node [block, below of=em] (eym) {EYM
     };
      \node [block, right of=eym, node distance=4.5cm] (ym) {YM
      };
       \node [block, below of=ym] (yms) {YMS
       };
     \node [block, below of=yms] (gyms) {gen.\ YMS 
     };
     \node [block, right of=gr, node distance=4.5cm] (bi) {BI};
     \node [block, below of=bi] (dbi) {DBI};
     \node[block, left of=yms, node distance=4.5cm] (phi4) {$\phi^4$};
     \node[block, right of=ym, node distance=4.6cm] (new) {NLSM};
    \path [line] (gr) --node {compactify} (em);
    \path [line, dashed] (em) --node {generalize}   (eym);
    \path [line] (ym) --node {compactify}   (yms);
     \path [line, dashed] (yms) --node {generalize}   (gyms);
    \path [line, densely dotted] (gr) --node {``compactify''}   (bi);
        \path [line, densely dotted] (em) --node {``compactify''}   (dbi);
      \path [line] (bi) --node {compactify}   (dbi);
       \path [line] (eym) --node {single trace} (ym);
        \path [line] (yms) --node {corollary}   (phi4);
        \path[line, densely dotted] (ym) --node [above] {``compactify''} (new);
 \path [line, dashed] (gr)--++ (-1.2cm, 0cm) |- node [near start, left]{squeeze} (eym);
  \path [line, dashed] (ym)--++ (2.2cm, 0cm) |- node [near start, right]{squeeze} (gyms);
\end{tikzpicture}
\caption{Theories studied by CHY and operations relating them.\label{fig2:CHY}}
\label{fig:intro}
\end{center}
\end{figure}
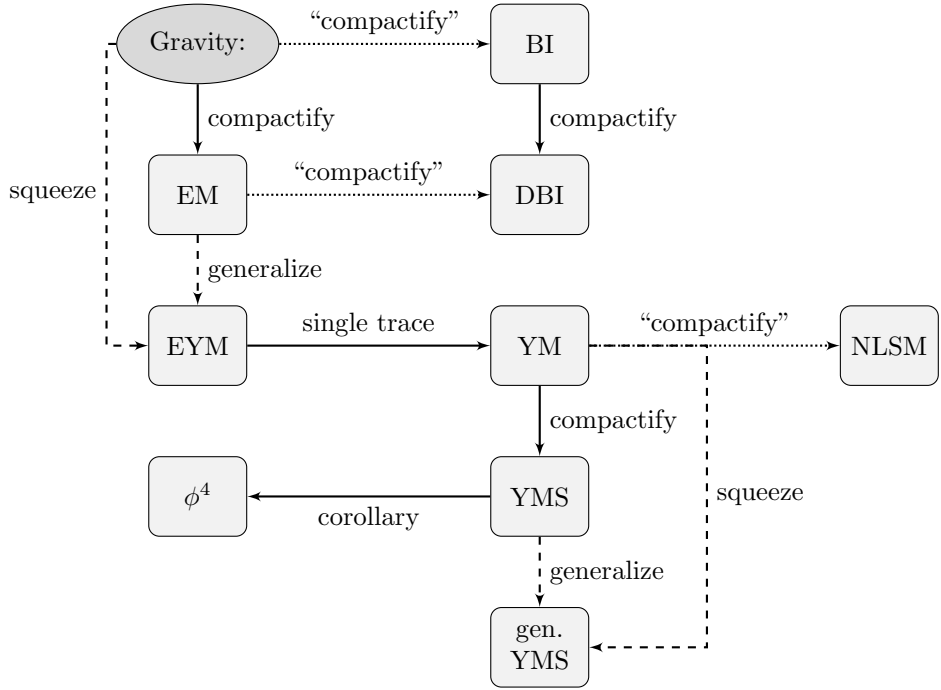

\section{The ambitwistor string} \label{sec2:review_ambistrings}
The most striking property of the CHY formulae is unquestionably that such compact and simple formulae for the full tree-level S-matrices of massless theories exist at all. This riddle is only exacerbated by the similarities of the CHY formulae to string theory amplitudes. Where are these simple formulae coming from? The only conceivable answer is that there exists a mathematical framework underlying these field theory amplitudes, similar enough in nature to conventional worldsheet theories to reproduce the similarities with string theory amplitudes, but sufficiently different to restrict to only massless degrees of freedom.\\

This task is indeed realised by the {\it ambitwistor string theories} proposed in \cite{Mason:2013sva}. Ambitwistor strings are chiral worldsheet models with an auxiliary target space, the phase space of complex null geodesics, known as (projective) ambitwistor space $P\A$. This target space gives an intuitive interpretation for the universality of the scattering equations for massless theories: Geometrically, the scattering equations ensure that the worldsheet is indeed mapped into ambitwistor space, as appropriate for propagating massless particles.

Correlators in the ambitwistor string reproduce, as claimed above, the CHY formulae, and thus the ambitwistor string provides a fundamental derivation of the results reviewed in the last section. Since it represents the mathematical theory underlying these amplitudes, it not only resolves the question about the origin of the CHY formulae, but also instigated progress in new directions. The different presentations of ambitwistor space, with different properties and advantages, and the different representations of the string provide an excellent example emphasising the flexibility of the ambitwistor string approach. Particularly worth highlighting are the RNS representation \cite{Mason:2013sva,Ohmori:2015sha} discussed in \cref{sec2:ambi_string}, the pure spinor string \cite{Berkovits:2013xba,Gomez:2013wza,Adamo:2015hoa} and the string in its four-dimensional twistorial representation \cite{Geyer:2014fka}, discussed in \cref{chapter5}. These different presentations are complemented by alternative points of view on the twistor string: formulated in the language of two-dimensional conformal field theories, its action can be understood alternatively from symplectic/contact geometry, as a complexification of the worldline action of a massless spinning particle, as a curved $\beta\gamma$ system \cite{Witten:2005px,Nekrasov:2005wg}, or as a degenerate chiral limit of string theory \cite{Mason:2013sva,Siegel:2015axg}. We will see throughout this thesis that these different aspects of the ambitwistor string lead to progress in widely different areas.\\

In the remainder of this chapter, we will give an introduction to the ambitwistor string in the RNS representation \cite{Mason:2013sva,Ohmori:2015sha}. This will first entail a brief review of ambitwistor space in \cref{sec2:review_ambi}, which provides the auxiliary target space the string is propagating in, followed by a discussion of the chiral worldsheet theory, the vertex operators and correlators.

\subsection{Ambitwistor space}\label{sec2:review_ambi}
In this section, we provide a brief review  of ambitwistor space, targeted towards the later applications in ambitwistor strings. As such, the interested reader is referred to the original papers \cite{Witten:1978xx,Isenberg:1978kk,Witten:1985nt,LeBrun:1983,LeBrun:1985,LeBrun:1986,Baston:1987av,LeBrun:1991jh} for more details and \cite{Mason:2013sva} for a modern, more extensive review. The aspects of ambitwistor space specific to four dimensions will be discussed in \cref{sec5:ambispace4d}. Moreover, rather than defining both a bosonic and a supersymmetric version of ambitwistor space, we will focus on the supersymmetric case; the bosonic analogue should always be clear from the context.\\

Ambitwistor space, the space of complex null geodesics, derives its name from its twistor representation in four-dimensional space-time, where it can be identified with a quadric in the product of twistor and dual twistor space. The name has been kept in higher dimensions, justified by the twistor-like correspondences relating it to space-time. First introduced in \cite{Witten:1978xx,Isenberg:1978kk,Witten:1985nt} in the context of gauge fields, it was extended to gravity in arbitrary dimensions in \cite{LeBrun:1983}, where it extends Penrose's non-linear graviton construction \cite{Penrose:1976jq,Penrose:1976js} to general gravitational fields. As we will discuss below in more detail, fields are encoded by deformations of the complex structure of ambitwistor space \cite{Witten:1978xx,Isenberg:1978kk,LeBrun:1983,Baston:1987av,LeBrun:1991jh}, while preserving the contact structure.\\

Ambitwistor space $\A$ is the phase space of scaled complex null geodesics, its projectivisation, $P\A$, the space of unscaled null geodesics. In $d$ dimensions, consider the complexification $(M,g)$ of space-time, with a holomorphic metric $g$. Complex null directions are then determined by cotangent vectors $p\in T_x^*M$ with 
\begin{equation}
 \mathcal{H}=\frac{1}{2}g^{-1}(p,p)=0\,. 
\end{equation}
The bundle of complex null directions is thus contained in the holomorphic cotangent bundle, $T_N^*M\subset T^*M$. For supersymmetric space-times, we consider instead\footnote{$\Pi$ is the parity reversing functor defining the fibres to be anticommuting: for a bundle $E$, $\Pi\Omega^0(E)$ denotes the space of fermion-valued sections of $E$.} $T_S^*M=(T^*\oplus \Pi T\oplus \Pi T)M$, and restrict additionally to the massless Dirac equation 
\begin{equation}
 \mathcal{G}_r=-\psi_r\cdot p=0\,,
\end{equation}
to obtain the bundle of supersymmetric null directions,
\begin{equation}
 T^*_{SN}M=\left\{(x^\mu,p_\mu,\psi_r^\mu)\in T^*_SM|g^{-1}(p,p)=0,\,p\cdot\psi_r=0\right\}\,.
\end{equation}
The cotangent bundle $T^*_SM$ is a holomorphic symplectic manifold, with symplectic potential 
\begin{equation}\label{eq2:contact-structure}
 \theta=p_\mu\d x^\mu+\frac{1}{2}\sum_{r=1}^2 g_{\mu\nu}\psi_r^\mu\d\psi_r^\nu\,,
\end{equation}
and symplectic form $\omega=\d\theta$. The symplectic form associates to the Hamiltonians $\mathcal{H}$ and $\mathcal{G}_r$ Hamiltonian vector fields\footnote{via $D_0\lrcorner\, \omega+\d(\mathcal{H})=0$ and $D_r\lrcorner\,\omega +d(\mathcal{G}_r)=0$.}
\begin{subequations}
\begin{align}
 &D_0=p^\mu\nabla_\mu=p^\mu \left(\frac{\partial}{\partial x^\mu}+\Gamma^\rho_{\mu\nu}p_\rho\frac{\partial}{\partial p^\nu}\right)\,,\\
 &D_r=\psi_r^\mu \frac{\partial}{\partial x^\mu}+p^\mu\frac{\partial}{\partial\psi_r^\mu}\,.
\end{align}
\end{subequations}
The flow along $D=\{D_0,D_r\}$ generates super null geodesics on space-time, and thus points connected by this flow should be identified in the space of null geodesics. Since $\mathcal{L}_{D_0}D_r=D_0$, the algebra closes, providing an $\mathcal{N}=2$ supersymmetry algebra along super null geodesics.

Restricting to zero energy surfaces of the Hamiltonians and quotienting by the associated Hamiltonian vector fields therefore lands us on ambitwistor space, 
\begin{equation}
 \A=\bigslant{\big\{(x^\mu,p_\mu,\psi_r^\mu)\in T^*_SM\,\big|\,g^{-1}(p,p)=0,\,p\cdot\psi_r=0\big\}}{\{D_0,D_r\}}\,.
\end{equation}
This is equivalent to taking the symplectic quotient by the Hamiltonian vector fields, and thus $\A$ is a holomorphic symplectic manifold\footnote{Since the Lie derivative of $\omega$ along a null geodesics vanishes, $\mathcal{L}_{D}\omega=0$, the symplectic form is invariant along null geodesics, and we will, in a mild abuse of notation, refer to $\omega$ as the symplectic form on $\A$. Similar statements hold below, in particular, the Euler vector field $\Upsilon$ and the symplectic potential $\theta$ descends to $\A$ as well, since $\mathcal{L}_{D}\Upsilon=\Upsilon$ and $\mathcal{L}_{D}\theta+\d(\mathcal{H})=0$.}  of dimension $2d-2|2d-2$.

To obtain projective ambitwistor space as the space of unscaled null geodesics, we quotient by the rescaling generated by the Euler vector field $\Upsilon=2p_\mu\partial_{p_\mu}+\sum_r \psi^\mu_r\frac{\partial}{\partial\psi_r^\mu}$. $P\A=\A/\Upsilon$ is thus of dimension $2d-3|2d-2$. $\A$ can then be realised as the space of the line bundle $\cO(-1)\rightarrow P\A$, where $p$ takes values in $\cO(2)$ and $\psi_r$ in $\cO(1)$. The symplectic potential descends to $P\A$, where $\theta\in\Omega^1( P\A,\cO(2))$, and thus it defines a contact structure on projective ambitwistor space.\\

The correspondence between $P\A$ and space-time is expressed as a double fibration
\begin{center}
\begin{tikzpicture} [scale=1]
 \node at (0,0) {$PT^*_{SN}M$};
 \draw[>=latex,->] (-0.4,-0.2) -- (-1.4,-1.2);
 \node at (-1.1,-0.5) {\scriptsize{$\pi_{P\A}$}};
 \node at (-1.7,-1.4) {$P\A$};
 \draw[>=latex,->] (0.4,-0.2) -- (1.4,-1.2);
 \node at (1.1,-0.5) {\scriptsize{$\pi_{M}$}};
 \node at (1.7,-1.4) {$M$};
\end{tikzpicture}
\end{center}
By construction, a point in $\P\A$ corresponds to a complex null geodesic in $M$. Conversely, a point in $M$ corresponds to a quadric $Q_x\in P\A$, which can be given the interpretation of the space of complex null rays through $x$. Similar to twistor theory, this non-locality is responsible for the simplification occurring for ambitwistor strings.

LeBrun \cite{LeBrun:1983} developed this correspondence further: $P\A$ and its contact structure are fully equivalent to the space-time $M$ with its conformal structure. In particular, small deformations of complex structure\footnote{The contact structure actually fully determines the complex structure of $P\A$ via the top-form $\theta\wedge\d\theta^{d-2}$: a vector $V$ is antiholomorphic if it obeys $V\lrcorner(\theta\wedge\d\theta^{d-2})=0$.\label{footnote:theta}} of $P\A$ preserving $\theta$ are equivalent to small deformations of the conformal structure on $M$.\\

In the context of ambitwistor strings on a flat background, we will only need the linear Penrose transform to generate amplitudes, see \cite{Baston:1987av,Mason:2013sva}. Fluctuation in the space-time metric are, by LeBrun \cite{LeBrun:1983}, determined by perturbations $\delta\theta$ of the contact structure since the complex structure is determined by the contact structure $\theta$ (see \cref{footnote:theta}). To obtain non-trivial deformations, we have to choose elements $[\delta\theta]\in H^{0,1}(P\A,\cO(2))$ of the Dolbeault cohomology class, because antiholomorphic perturbations describe diffeomorphisms along Hamiltonian vector fields. This is a specific example of the Penrose transform, relating cohomology classes of $P\A$ and fields on space-time. Whilst we are mainly interested in supersymmetric ambitwistor space, let us briefly develop the bosonic version here, the supersymmetric case is more involved, and will only be motivated below.
\begin{thm}
 The Penrose transform relates cohomology classes on projective bosonic ambitwistor space and fields on space-time: 
 \begin{equation}
  H^{0,1}(P\A,\cO(n))=\begin{cases}
                   0&n<-1\\
                   H^0(PT_{N}^*M,\cO(n+1))\,/\,D_0 (H^0(PT_{N}^*M,\cO(n))) & n\geqslant-1\,.
                  \end{cases}
 \end{equation}
 Moreover, elements of $H^0(PT_{N}^*M,\cO(m))$ are polynomials of weight $m$ in $p$, and the quotient ensures that we are considering fields modulo gauge transformations or diffeomorphisms. In particular, the case $n=1$ describes linearised trace-free metrics modulo diffeomorphisms. 
\end{thm}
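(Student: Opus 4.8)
The plan is to exploit the geodesic fibration $\pi_{P\A}\colon PT_N^*M\to P\A$, whose fibres are the complexified null geodesics: one-dimensional curves, each an affine line $\C$ swept out by the flow of the geodesic spray $D_0$. Since $P\A=PT_N^*M/D_0$ is precisely the quotient by this flow, a weight-$n$ function on $PT_N^*M$ descends to a section of $\cO(n)$ on $P\A$ exactly when it is annihilated by $D_0$. As $D_0=p^\mu\nabla_\mu$ carries a single power of $p$, it raises the homogeneity weight by one, so it defines a sheaf map $D_0\colon\cO(n)\to\cO(n+1)$ on $PT_N^*M$ whose kernel is the pullback of $\cO(n)$ from $P\A$ and which is locally surjective, since local antiderivatives exist along each geodesic. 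This yields a short exact sequence of sheaves on $PT_N^*M$,
\begin{equation}
 0\to \pi_{P\A}^{-1}\cO_{P\A}(n)\to \cO(n)\xrightarrow{\;D_0\;}\cO(n+1)\to 0\,,
\end{equation}
from which the whole theorem should follow by a cohomology long exact sequence.

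Next I would pass to the associated long exact sequence in cohomology. Because the fibres of $\pi_{P\A}$ are contractible, a Leray argument identifies $H^q(PT_N^*M,\pi_{P\A}^{-1}\cO_{P\A}(n))$ with $H^q(P\A,\cO(n))$, turning the sequence into
\begin{equation}
 \cdots\to H^q(P\A,\cO(n))\to H^q(PT_N^*M,\cO(n))\xrightarrow{\;D_0\;}H^q(PT_N^*M,\cO(n+1))\to H^{q+1}(P\A,\cO(n))\to\cdots
\end{equation}
In the Dolbeault picture this is the statement that $H^{0,1}(P\A,\cO(n))$ is the cokernel of $D_0$ acting on sections, \emph{provided} the next term $H^1(PT_N^*M,\cO(n))$ vanishes.

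The crux is therefore the vanishing of $H^1(PT_N^*M,\cO(m))$. Here I would use the second leg $\pi_M\colon PT_N^*M\to M$, whose fibre over $x$ is the projective null cone $Q_x$, a smooth quadric of dimension $d-2$ in $\CP^{d-1}$. Since $M=\C^d$ is Stein, all higher cohomology descends from the fibrewise cohomology $R^\bullet\pi_{M*}\cO(m)=H^\bullet(Q_x,\cO(m)|_{Q_x})$. By the standard Bott computation for line bundles on a quadric, the intermediate cohomology $H^i(Q_{d-2},\cO(m))$ vanishes for $0<i<d-2$ and all $m$; in particular $H^1$ vanishes once $d\geqslant 4$, killing $H^1(PT_N^*M,\cO(m))$ and collapsing the long exact sequence to the asserted cokernel. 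The same computation gives $H^0(Q_{d-2},\cO(m))=0$ for $m<0$, so $H^0(PT_N^*M,\cO(m))=0$ whenever $m<0$; taking $m=n+1$ then yields $H^{0,1}(P\A,\cO(n))=0$ for $n<-1$.

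Finally, for the `Moreover' statement I would note that $H^0(Q_{d-2},\cO(m))$ consists of degree-$m$ polynomials on the null cone, so $H^0(PT_N^*M,\cO(m))$ is precisely the space of weight-$m$ polynomials in $p$ with holomorphic coefficients in $x$, and quotienting by the image of $D_0$ removes exactly the pure-gauge data. For $n=1$ this identifies $H^{0,1}(P\A,\cO(1))$ with symmetric tensors $h_{\mu\nu}(x)\,p^\mu p^\nu$ --- trace-free because $g^{-1}(p,p)=0$ on the cone --- modulo $D_0(p^\mu v_\mu)=p^\mu p^\nu\partial_\mu v_\nu$, i.e. linearised trace-free metrics modulo diffeomorphisms, as claimed. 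The main obstacle I anticipate is the careful justification of the two Leray reductions: the contractible but non-compact geodesic fibres require care in identifying $H^q(PT_N^*M,\pi_{P\A}^{-1}\cO_{P\A}(n))$ with $H^q(P\A,\cO(n))$, and one must confirm that the surjectivity of $D_0$ and the quadric vanishing theorem survive the low-dimensional edge cases (notably $d=3$, where $Q_{d-2}$ degenerates to a conic and $H^1$ no longer vanishes for all $m$).
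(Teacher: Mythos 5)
Your proposal is correct and follows essentially the same route as the paper: the short exact sequence of sheaves induced by $D_0$ on $PT^*_NM$, the associated long exact sequence in cohomology, and the sparseness of the cohomology of the projective null cone (non-trivial only in degrees $0$ and $d-2$), concluding via the first isomorphism theorem. The only difference is one of detail — you spell out the Leray reductions along the geodesic fibres and over the Stein base $M$, and flag the low-dimensional edge cases, which the paper delegates to the cited references.
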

Note that gravitons encoded by the Penrose transform are off-shell, the field equations arise in the context of ambitwistor strings from the quantum consistency of the worldsheet model. Similar results hold for the `heterotic' model with just one supersymmetry, used in the ambitwistor string context to describe gauge fields\footnote{However, this model contains unphysical gravity degrees of freedom, highlighted on ambitwistor space by the emergence of a two-tensor and a 3-form.} ($n=0$).  Details can be found in \cite{Baston:1987av,Mason:2013sva}, here we just give a short outline of the proof. Using the short exact sequence induced by $D_0$, 
\begin{equation}
 0\rightarrow\cO_{P\A}(n)\overset{D_0}{\rightarrow}\cO_{PT^*_NM}(n)\overset{\delta}{\rightarrow}\cO_{PT^*_NM}(n+1) \rightarrow 0\,,
\end{equation}
we can deduce the corresponding long exact sequence in cohomology. However, this long exact sequence is sparse, since the cohomology for the projective light-cone is only non-trivial for $H_N^0$, $H^{d-2}_N$. The Penrose transform then follows from the first isomorphism theorem.\footnote{In the supersymmetric case, more care is needed due to the additional two Hamiltonian vector fields $D_r$.} 
\smallskip

In \cite{Mason:2013sva}, an explicit construction was presented for gravity in super ambitwistor space. Using this, we can deduce that a variation in contact structure $\delta\theta$ corresponds to a two-tensor $H^{\mu\nu}(x)$ modulo diffeomorphisms, $\delta H^{\mu\nu}=\partial^{(\mu}v^{\nu)}$. In particular, metric fluctuations given by momentum eigenstates $H^{\mu\nu}=\epsilon^\mu_1\epsilon^\nu_2 e^{ik\cdot x}$ correspond to the deformation of the contact structure
\begin{equation}\label{eq2:ambi_momES}
 \delta\theta =\bar\delta(k\cdot p) \prod_{r=1}^2\left(\epsilon_r \cdot p +k\cdot\psi_r\epsilon_r\cdot\psi_r\right)e^{ik\cdot x}\,.
\end{equation}

\subsection{Ambitwistor strings}\label{sec2:ambi_string}
\paragraph{Action.}
Using this as a starting point, let us now construct a worldsheet action with the above described ambitwistor space as a target space. As a motivation, consider first a chiral holomorphic analogue of the worldline action for a massless spinning particle. This is indeed closely related to the integral over the pull-back of the contact structure $\theta$ of ambitwistor space \cref{eq2:contact-structure} to the worldsheet, and thus provides an ideal starting point for a theory describing maps into the space of complex null geodesics. This leads to the action of the ambitwistor string \cite{Mason:2013sva} in conformal gauge\footnote{In a general gauge, the action is written in terms of the operator $\dbar_{e}=\dbar+e \partial$ (instead of $\dbar$), parametrising the usual worldsheet diffeomorphism freedom.}, whose geometry was  studied in \cite{Ohmori:2015sha},
\begin{equation}\label{eq2:action}
 S=\frac{1}{2\pi}\int_{\Sigma}P_\mu \bar\partial X^\mu+\frac{1}{2}\sum_{r=1}^2 {\Psi_r}_\mu\bar\partial \Psi_r^\mu-\frac{1}{2} \tilde{e}P^2-\sum_{r=1}^2\chi_r \Psi_r^\mu P_\mu\,.
\end{equation}
Before we discuss this action in more detail let us make a few preliminary observations. The main feature we notice is the strong similarity to the RNS string action, which not only suggests that indeed correlation functions will bear a close resemblance to string amplitudes, but also allows us to adapt many known string techniques to the ambitwistor string. However, and perhaps most importantly, in contrast to the RNS string \cref{eq2:action} is a genuine first order action, describing after gauge fixing a free CFT, which considerably simplifies calculations. We will see below that correlation functions indeed reproduce the CHY formulae, and thus the ambitwistor string provides the underlying mathematical theory and describes gravity perturbatively around Minkowski space.\\

Let us now take a closer look at the structure of the action \cref{eq2:action}. As indicated above, the action is geometrically constructed as the chiral pull-back of the contact structure $\theta$ to the worldsheet $\Sigma$, with fields
\begin{subequations}
\begin{align}
 &X^\mu\in\Omega^0(\Sigma)\,,\\
 &P_\mu\in\Omega^0(\Sigma,K_\Sigma)\,,\\
 &\Psi_r^\mu\in\Pi\Omega^0(\Sigma,K_\Sigma^{1/2})\,.
\end{align}
\end{subequations}
As usual, we will refer to fields taking values in the line bundle $K^h\otimes\bar K^{\bar h}$ as having conformal weight $(h,\bar h)$, where $K$ and $\bar K$ denote the canonical and anti-canonical bundle of the worldsheet $\Sigma$. The field $X^\mu$ hence has zero conformal weight, and provides target space coordinates for the map of the worldsheet. Its conjugate, $P_\mu$, with conformal weight $(1,0)$, is a meromorphic 1-form on the worldsheet, and thus ties in directly to the discussion of the scattering equation in \cref{sec2:review_SE}. The Majorana fermions $\Psi_r^\mu$ are of the same chirality and weight $(1/2,0)$, and therefore the action is chiral, with all fields left-moving.\footnote{In particular, the kinetic operator is $\bar\partial$ instead of the full exterior derivative of the first order string action, 
\begin{equation}\label{eq2:Polyakov}
 S_{\text{string}}^{\text{bosonic}}=\frac{1}{2\pi}\int_{\Sigma}P_\mu \d X^\mu-\frac{1}{2}P_\mu\wedge *P^\mu\,,
\end{equation} and is thus chiral as well. Note that this action is equivalent to the bosonic part of the Polyakov string action, with  $P_\mu\in\Omega^1\cong\Omega^0(\Sigma,K_\Sigma\oplus\bar K_\Sigma)$.} 

We can thus read off that $\tilde{e}\in\Omega^{0,1}(\Sigma,T_\Sigma)$ and $\chi_r\in\Pi\Omega^{0,1}(\Sigma,T_\Sigma^{1/2})$. The field $\tilde{e}$, with conformal weight $(-1,1)$, acts as a Lagrange multiplier for the Hamiltonian constraint $\mathcal{H}=-\frac{1}{2}P^2=0$. Its weight implies in particular that it behaves like a Beltrami differential. Similarly, the fermionic fields $\chi_r$ enforce the massless Dirac equation $\mathcal{G}_r=-\Psi_r\cdot P=0$. This gives a first indication that the fields $X^\mu$, $P_\mu$ and $\Psi_r^\mu$ can be given an interpretation as a parametrisation of ambitwistor space, with the symplectic quotient implemented in the CFT via the gauge fields $\tilde{e}$ and $\chi_r$. \\

Note moreover that in contrast to the RNS superstring, the ambitwistor string action contains no dimensionful parameters, so there is no analogue of the $\alpha'$ expansion. This is consistent with the first order action implying a vanishing $XX$ OPE. The model therefore contains only the massless degrees of freedom, as appropriate for a (massless) field theory.\footnote{In particular, the bosonic action can be derived from a first-order string theory action by taking a chiral $\alpha'\rightarrow 0$ limit, yielding the interpretation of the ambitwistor string as the chiral infinite tension limit, where $T=\frac{1}{2\pi\alpha'}$. The emergence of the ambitwistor model from string theory has recently been investigated further \cite{Siegel:2015axg}; there, a degenerate (HSZ) gauge choice in conjunction with a change of boundary conditions and the zero tension limit ($\alpha'\rightarrow\infty$) gives rise to the ambitwistor string.}

\paragraph{Gauge symmetry.}
The action \cref{eq2:action} is invariant under the gauge transformations
\begin{align}\label{eq2:gauge-trafo}
 &\delta X^\mu = \alpha P^\mu +\sum_r \eta_r\Psi_r\,, &&\nonumber\\ 
 &\delta P_\mu =0\,, && \delta \tilde{e}=\bar\partial\alpha\,,\\
 &\delta \Psi_r^\mu=\eta_r P^\mu\,, && \delta \chi_r =\bar\partial\eta_r\,,\nonumber
\end{align}
where the gauge parameter $\alpha\in\Omega^0(\Sigma,T_\Sigma)$ is bosonic and $\eta_r\in\Omega^0(\Sigma, T_\Sigma^{1/2})$ are fermionic. This gauge symmetry has the geometric interpretation of generating translations along null geodesics on $T^*_S M$, the complexified cotangent bundle of Minkowski space.\footnote{Here, we have taken the alternative point of view of $(X,P,\Psi_r)$ as holomorphic coordinates on $T^*_S M$.} Then the constraints $\mathcal{H}$ and $\mathcal{G}_r$, imposed by the gauge fields, restrict the target space to the null cotangent bundle, and gauge invariance quotients by translations along null geodesics. In other words, the vanishing of $\mathcal{H}$ forces $P$ to be null, and the associated gauge freedom tells us to identify field configurations differing by a translation along a null geodesic. Therefore, the fields $X^\mu$, $P_\mu$ and $\Psi_r^\mu$ can be interpreted as a parametrisation of ambitwistor space, in direct analogy to the discussion in \cref{sec2:review_ambi}, where the symplectic quotient is implemented in the CFT via the gauge fields $\tilde{e}$ and $\chi_r$ and the gauge transformations given above. This justifies the claim stated above that \cref{eq2:action} describes a chiral two-dimensional CFT whose target space is ambitwistor space $\A$. 

\paragraph{Quantisation.}
Gauge fixing worldsheet gravity and the gravitinos via the standard BRST procedure introduces the reparametrisation $(b,c)$ ghost system, the superconformal $(\beta,\gamma)$ ghosts and a $(\tilde{b},\tilde{c})$ ghost system associated to the $\alpha$ gauge symmetry,
\begin{subequations}
\begin{align}
 & b,\tilde{b}\in \Pi\Omega^0(\Sigma,K_\Sigma^2)\,, && \beta_r\in \Omega^0(\Sigma,K_\Sigma^{3/2})\,,\\
 & c,\tilde{c}\in \Pi\Omega^0(\Sigma,T_\Sigma)\,, && \gamma_r\in \Omega^0(\Sigma,T_\Sigma^{1/2})\,.
\end{align}
\end{subequations}
Note in particular that in contrast to the RNS string, both sets of ghost fields are left-moving, in line with chirality of the model. The BRST operator is then given by
\begin{equation}
 Q=\oint c \left(T_m+\frac{1}{2}T_{\text{gh}}\right)+\tilde{c}\left(\mathcal{H}_m+\frac{1}{2}\mathcal{H}_{\text{gh}}\right)+\sum_r \gamma_r\left(\mathcal{G}_{m,r}+\frac{1}{2}\mathcal{G}_{\text{gh},r}\right)\,,
\end{equation}
where the matter currents associated to the gauge freedom \cref{eq2:gauge-trafo} are readily obtained from the action \cref{eq2:action},
\begin{align}
 &T_m=-P_\mu\partial X^\mu -\frac{1}{2}\sum_r \Psi_r \cdot\partial\Psi_r\,,\nonumber\\
 &\mathcal{H}_m=-\frac{1}{2}P^2\,,\\
 &\mathcal{G}_{m,r}=-\Psi_r\cdot P\,.\nonumber
\end{align}
and\footnote{The conventions chosen here preserve the standard action of the BRST operator on the antighost fields, e.g. $Q\cdot b=T=T_m+T_{\text{gh}}$ \cite{Adamo:2014wea}. In practical calculations, the ghost currents can be ignored when using vertex operators of canonical ghost and picture number, since the current algebra remains unchanged:
\begin{equation*}
 \mathcal{G}_r(z)\mathcal{G}_r(w)\sim\frac{-2H}{z-w}\,,\qquad \mathcal{G}_1(z)\mathcal{G}_2(w)\sim0\,.
\end{equation*}}
\begin{align}
 &T_{\text{gh}}=c\partial b-2b\partial c+\tilde{c}\partial \tilde{b}-2\tilde{b}\partial \tilde{c}-\sum_r \left(\frac{3}{2}\beta_r \partial\gamma_r+\frac{1}{2}\gamma_r\partial\beta_r\right)\,,\nonumber\\
 &\mathcal{H}_{\text{gh}}=c\partial \tilde{b}-2\tilde{b}\partial c\,,\\
 &\mathcal{G}_{\text{gh},r}=c\partial \beta_r +\frac{3}{2}\beta_r\partial c -2\tilde{b}\gamma_r\,.\nonumber
\end{align}
The central charge counting\footnote{Recall that the fields contribute to the central charge as follows: $\mathfrak{c}_{XP}=2d$ from $(X,P)$ system, $\mathfrak{c}_{\Psi_r}=2\times\frac{d}{2}$ from the $\Psi_r$ fields, $\mathfrak{c}_{\text{gh}_1}=2\times11$ from the superconformal ghosts and $\mathfrak{c}_{\text{gh}_2}=-2\times26$ from the $bc$ and $\tilde{b}\tilde{c}$ ghost systems.} proceeds exactly as for the RNS string, $\mathfrak{c}=3(d-10)$, and therefore the ambitwistor string  is critical and $Q$ nilpotent for $d=10$.\\

When gauge fixing the gauge symmetry associated to the parameter $\alpha$, an additional subtlety comes into play for correlation functions including vertex operators \cite{Adamo:2013tsa}: since the gauge parameter is required to vanish at marked points of the Riemann surface, there exists a potential obstruction to setting $\tilde{e}=0$, which we would like to do in order to obtain a free action. In fixing the gauge redundancy, we add a gauge fixing term to Lagrangian,
\begin{equation}
\{Q,F(\tilde{e})\}\,,
\end{equation}
where $F$ is a gauge fixing functional. While, as pointed out above, we aim to choose $F$ such that $\tilde{e}=0$, the gauge freedom only varies $\tilde{e}$ within a fixed Dolbeault cohomology class. Hence for genus $g$ Riemann surface, $F(\tilde{e})$ can only be chosen as
\begin{equation}
 F(\tilde{e})=\tilde{e}-\sum_{r=1}^{3g-3+n}s_r\mu_r\,,
\end{equation}
where $\mu_r\in H^{0,1}(\Sigma,T_\Sigma(-\sigma_1\dots-\sigma_n))$ is a basis\footnote{Recall that the dimension of the moduli space of an $n$-punctured Riemann surface is $3g-3+n$, see \cref{sec6:review_loop}.} of Beltrami differentials, with coefficients $s_r$. Integrating out the parameters introduced by the action of $Q$ on the gauge fixing term introduces a term \cite{Adamo:2013tsa}
\begin{equation}\label{eq2:gaugefixinginPI}
 \prod_{r=1}^{3g-3+n}\bar\delta\left(\int_\Sigma \mu_r P^2\right)\,\int_\Sigma\tilde{b}\,\mu_r\,,
\end{equation}
in the path integral. Most importantly though, the gauge fixed action is chiral and free,
\begin{equation}
 S_{\text{gauge-fixed}}=\frac{1}{2\pi} \int_{\Sigma}P_\mu \bar\partial X^\mu+\frac{1}{2}\sum_r {\Psi_r}_\mu\bar\partial \Psi_r^\mu +b\dbar c+\tilde{b}\dbar\tilde{c} +\sum_r\beta_r\dbar\gamma_r\,.
\end{equation}

\subsection{Vertex operators and tree-level amplitudes} 
Physical states in the worldsheet conformal field theory are in one-to-one correspondence with cohomology classes of the BRST operator $Q$. Vertex operators give infinitesimal deformations of the worldsheet action, and thus describe infinitesimal fluctuations of the background geometry. In particular, the constraints imposed by $\{Q, V\}\sim 0$ determine the linearised target space field equations.

The unique\footnote{The most general fixed vertex operator is given by $U=c\tilde{c}\,\delta(\gamma_1)\delta(\gamma_2)u$. Imposing a $\mathbb{Z}_2\times\mathbb{Z}_2$ symmetry on the fermions $\Psi_r$, motivated by the involutions on supersymmetric ambitwistor space and the standard GSO projection in string theory, uniquely determines $u$. Note that, while this breaks the O$(2)\times \text{O}(2)$ symmetry of the action, it does impose a further symmetry to be respected by the vertex operators.} NS-NS fixed vertex operator is given by
\begin{equation}
 U=c\tilde{c}\,\prod_r\delta(\gamma_r)\,\epsilon_r\cdot\Psi_r \,e^{ik\cdot X}\,.
\end{equation}
As above, $\epsilon^\mu_r$ are polarisation tensors, and $k_\mu$ denotes external momentum of the particle. Requiring the vertex operators to be in the cohomology of $Q$ requires $\epsilon_r\cdot k=0$ and $k^2=0$, and thus the field equations arise from the quantum consistency of the descent to ambitwistor space.\footnote{Compare this to string theory: as in the ambitwistor string, the field equations in string theory arise from requiring that the vertex operators are in the $Q$-cohomology. However, the $XX$ OPE is non-trivial in string theory, and hence the exponential carries conformal weight, allowing for massive states.  This observation corroborates the claim made above that the triviality of the $XX$ OPE precludes massive states in the ambitwistor string.}

The vertex operator $U$ clearly encodes a non-trivial representation of the little group, representing the graviton, B-field, dilaton in its symmetric, antisymmetric and traceless part respectively. From the uniqueness of the vertex operator, it is evident that these are indeed the only states in this sector. 

Via the standard descent procedure (see e.g. \cite{Friedan:1985ge,Polchinski:1998rr,Witten:2012bh}), we obtain the vertex operator  
\begin{equation}\label{eq2:VO_V}
 c\tilde{c}\,V=c\tilde{c}\prod_r\left(\epsilon_r\cdot P+k\cdot\Psi_r\,\epsilon_r\cdot \Psi_r\right)e^{ik\cdot X}\,.
\end{equation}
The integrated vertex operator can be derived by pairing $c\tilde{c}V$ with the moduli insertion \cref{eq2:gaugefixinginPI} in the path integral, obtained from the gauge fixing procedure. This yields
\begin{equation}
 \mathcal{V}_i=\bar\delta\left(\int_\Sigma \mu_i P^2\right)\,\int_\Sigma\tilde{b}\,\mu_r\,\int_\Sigma b\,\mu_i\,c \tilde{c} V\,.
\end{equation}
It is particularly convenient at this stage to choose a basis for the Beltrami differentials $\mu_i$ that extract the residues at the marked points. Then the antighost insertions $b_i$ and $\tilde{b}_i$ remove the ghost prefactor of \cref{eq2:VO_V}, and similarly the delta-function forces the residue of quadratic differential $P^2$ to vanish. The integrated vertex operator then takes the form
\begin{equation}
 \mathcal{V}_i=\int_\Sigma \bar\delta\left(\text{Res}_i P^2\right)\, V\,.
\end{equation}
To calculate the residue of $P^2$ at the marked points, note that the fields $X$ only occur in correlation functions in the exponents $e^{ik_i\cdot X}$ of the vertex operators. For a correlator including $n$ operators, there is therefore an elegant trick to avoid the complicated OPEs with the exponentials: formulate an effective action by including the factors $e^{ik\cdot X}$, and then integrate out the field $X$. The path integral over the zero modes then yields an overall momentum conserving delta function, while the non-zero modes enforce the equation 
\begin{equation}
    \dbar P=2\pi i \sum_i k_i \,\bar \delta(\sigma-\sigma_i) d\sigma\,,
\end{equation}
on which the $\mathcal{D}P$ path integral localises. This equation however is already familiar from our motivation for the scattering equations in \cref{sec2:review_SE}; it is the defining equation for $P_\mu$, used to construct the scattering equations \cref{eq2:DE_SE}. On the Riemann sphere this is solved by \cref{eq2:P},
\begin{equation}
  P(\sigma)=\sum_{i=1}^n \frac{k_i}{\sigma-\sigma_i}\,d\sigma\,.
\end{equation}
Therefore, $P_\mu$ is indeed a meromorphic section of the canonical bundle $K_\Sigma$, and holomorphic everywhere except at insertion points. We can calculate the residue explicitly to identify the argument of the delta-functions as the scattering equations \cref{eq2:SE_Ei}, and the integrated vertex operators become
\begin{equation}\label{eq2:int-VO}
 \mathcal{V}_i=\int_\Sigma \bar\delta\left(k_i\cdot P(\sigma_i)\right)\, V\,.
\end{equation}
Since at genus zero a quadratic meromorphic differential has at least four poles, the moduli integral localised on the scattering equations now forces $P^2$ to vanish everywhere; and thus the scattering equations ensure that the worldsheet is indeed mapped into ambitwistor space. Recall in this context that the localisation on the scattering equations is due to the gauge fixing of the $\alpha$ gauge redundancy. This is consistent with the argument above that the quotient by the Hamiltonian vector fields is associated to the gauge fields $\tilde{e}$ and $\chi_r$.\\

The integrated vertex operators should moreover look familiar from the review of ambitwistor space \cref{sec2:review_ambi}: they describe deformations of complex structure preserving the contact structure \cite{LeBrun:1983,Baston:1987av}. We have encountered them explicitly in \cref{eq2:ambi_momES} as deformations arising from deformations of the space-time metric by momentum eigenstates. The integrated vertex operators are hence supersymmetric extension of the Penrose ambitwistor representative of space-time plane waves, corresponding to deformations $\delta \eta^{\mu\nu}=\epsilon_1^\mu\epsilon_2^\nu e^{ik\cdot X}$. This geometric correspondence ties in beautifully with the conformal field theory perspective of vertex operators as infinitesimal deformations of the action and thus the background geometry of the target space.\\

To prove that the ambitwistor string underpins the CHY formulae \cref{eq2:CHYgeneral_int}, as claimed above, we are interested in computing $n$-point correlation functions, 
\begin{equation}\label{eq2:correlator}
 \cM^{(0)}_n=\left\langle U_1U_2c_3\tilde{c}_3V_3\prod_{i=4}^n\mathcal{V}_i\right\rangle\,.
\end{equation}
Note that BRST invariance of the vertex operators ensures that the amplitude is independent of the choice of fixed vertex operators. Moreover, we have included two fixed vertex operators and one descended vertex operator to absorb the zero modes of the ghosts, all other contributions vanish at genus zero. To see this, recall that for a chiral $\beta\gamma$ system, with $\gamma$ a holomorphic section of a vector bundle $E$, the Riemann-Roch theorem implies that the number of zero modes $n_\gamma$ and $n_\beta$ is given by
\begin{equation}\label{eq2:Riemann-Roch}
 n_{\gamma}-n_{\beta}=h^0(\Sigma,E)-h^1(\Sigma,E)=\text{deg}(E)+1-g\,.
\end{equation}
Then for $g\geqslant2$, $n_\gamma=0$, $n_\beta\neq0$, since for a generic choice of moduli the number of zero modes is minimised. By the same reasoning, we have $n_\gamma\neq0$ and $n_\beta=0$ at genus zero. However, for genus one, we have $n_\beta=n_\gamma=1$. In particular, the Riemann-Roch theorem \cref{eq2:Riemann-Roch} implies\footnote{Recall in this context that the bundle $K_{\Sigma}$ is defined as the dual of the tangent bundle $T$ with deg$(T_\Sigma)=2-2g$, and therefore deg$(K_\Sigma)=2g-2$, see e.g. \cite{Griffiths}.} for the $bc$ and $\tilde{b}\tilde{c}$ systems $n_c=n_{\tilde{c}}=3$, and for the $\beta_r\gamma_r$ systems $n_{\gamma_r}=2$.

Using the elegant trick of including the exponentials into an effective action, we can now compute the correlator \cref{eq2:correlator}. The two fermion system $\Psi_r$ give rise to the Pfaffians, with the removed rows and columns determined by the choice of fixed vertex operators. The worldsheet perspective thus manifests that the amplitude is invariant under the choice of removed rows and columns, since this corresponds directly to a different choice of basis for the Beltrami differentials $\mu_r$. The term $\epsilon_r\cdot P$ in the descended and integrated vertex operators \cref{eq2:VO_V} and \cref{eq2:int-VO} contributes the diagonal entries in the $C$ minors of the matrix $M$. The correlator \cref{eq2:correlator} thus reproduces the CHY formulae for the full tree-level S-matrix of Einstein gravity \cref{eq2:CHYgeneral_int}, and therefore not only provides a worldsheet model underpinning these expressions, but also a satisfying explanation for their seemingly miraculous existence.\\ 

It is worth pointing out that, while the discussion here focused on NS-NS vertex operators, Ramond sector vertex operators are also known \cite{Adamo:2013tsa}, however, no closed-form expressions are known for Ramond scattering amplitudes, see \cite{Adamo:2013tsa,Weinzierl:2014ava}. Moreover, the ambitwistor string has been shown to contain the full non-linear structure of classical supergravity \cite{Adamo:2014wea}, and to extend to loop amplitudes
\cite{Adamo:2013tsa,Casali:2014hfa,Adamo:2015hoa}. We review the ambitwistor string at genus one in \cref{sec6:review_loop}, and the remainder of \cref{chapter6} gives compelling evidence that the one-loop amplitudes derived from the ambitwistor string indeed correspond to type II supergravity.\\

Instead of starting with the ambitwistor action \cref{eq2:action} corresponding to the type II RNS string, we could have constructed a bosonic or heterotic model. However, the bosonic model is not invariant under space-time diffeomorphisms for non-trivial backgrounds, which is reflected by gravitational degrees of freedom not corresponding to Einstein gravity. The fermions introduced in the RNS superstring cancel this anomaly, and render the model well-defined on a curved background \cite{Adamo:2014wea}. The heterotic ambitwistor string can be obtained by replacing one set of fermions $\Psi_2$ by a worldsheet current algebra $j_a\in\Omega^0(\Sigma, K_\Sigma\otimes\mathfrak{g})$, see also \cref{sec3:freeferm} for more details. This model is critical for a current algebra central charge $c_j=41-\frac{5}{2}d$. Correlation functions in the heterotic model indeed reproduce all leading-trace Yang-Mills amplitudes in the CHY representation, but the model contains unphysical gravity states that corrupt multiple trace and loop amplitudes. Replacing the second set of fermions with another current algebra yields a model for the bi-adjoint scalar theory. This is critical if the central charges of the current algebras $c_{1,2}$ relate to the space-time dimension $d$ as $c_1+c_2=2(26-d)$. We will explore ambitwistor string models for the full family of massless theories discussed in \cref{sec2:masslessmodels} in the following chapter.

\chapter{Ambitwistor String Models for Massless Theories} \label{chapter3}
The ambitwistor string clearly resolves the riddle posed by the existence of the CHY formulae for Einstein gravity, Yang-Mills theory and the bi-adjoint scalar discussed in \cref{sec2:CHY}, giving a new perspective on scattering amplitudes in these theories. However, for the remaining massless theories like Born-Infeld and Einstein-Yang-Mills reviewed in \cref{sec2:masslessmodels}, we are still essentially in the same place as before the ambitwistor string: the origin of these formulae still remains puzzling, and the remarkable insights of the CHY formulae for a wider family of massless theories still pose the same questions regarding an underlying theory. Moreover, in the light of the ambitwistor string, the existence of these formulae would be even more miraculous if they were not based on a mathematical framework similar to the ambitwistor string.

The aim of this chapter is to provide an answer to this question by constructing ambitwistor worldsheet models giving rise to the extensive family of CHY formulae.\footnote{Indeed, Ohmori had in parallel work already provided a partial answer by constructing ambitwistor strings for Born-Infeld theory and the Galileon \cite{Ohmori:2015sha}.} In particular, while modifying the field content of the theory, these models will preserve the (bosonic) structure of ambitwistor space. While the quantum consistency of these models is in some cases not as well understood as for the type II ambitwistor string, they represent compelling evidence that both the CHY formulae {\it and } the ambitwistor strings are universal for massless scattering amplitudes.

\paragraph{Motivation.}
Recall from \cref{sec2:masslessmodels} that the tree-level amplitudes of an extensive family of massless theories can be expressed in the CHY representation as an integral over the moduli space of an $n$-marked Riemann sphere, localised on the scattering equations \cref{eq2:CHYgeneral_int},
\begin{equation}
\cM(1,\ldots,n)=\delta^d\left(\sum_i k_i\right)\int_{(\CP^1)^n}\frac{\prod_{i=1}^n\d\sigma_i}{\vol \,G}{\prod_i}\,\bar\delta\left(k_i\cdot P(\sigma_i)\right)\,\cI^L(\sigma,k,\epsilon)\,\cI^R(\sigma,k,\epsilon)\,,
\end{equation}
where the integrands were specified in \cref{sec2:masslessmodels}. Notably, the integrand naturally decomposes into factors $\cI^L$ and $\cI^R$ that depend on the null momenta $k_i$, their associated marked points $\sigma_i$, and the polarization and/or colour data of the particles scattered. These factors are the only elements of \cref{eq2:CHYgeneral_int} specifying the theory, and can be chosen from  five different choices, see \cref{chapter2} for a full list.\\

In the discussion above, we have placed the emphasis on the common structure of the CHY formulae, given by the integration over the moduli space and the localisation on the scattering equation, versus the integrand $\cI=\cI^L\,\cI^R$, which specifies the theory in question. The ambitwistor string mirrors this distinction exactly; both the type II, the heterotic model and the one describing the bi-adjoint scalar theory are built out of a bosonic model with action $S_B$, to which worldsheet matter of the form $S^L+S^R$ is added. This bosonic model is responsible for the common structures observed in the CHY amplitudes; giving rise both to the integration over the moduli space and the localisation on the scattering equations in the vertex operators. However, the {\it matter content} $S^L+S^R$ of the action determines the form of the vertex operators, and thereby the form of the integrands. The factorisation of the integrand is moreover mirrored in a factorisation of the vertex operators into two currents $v^l$ and $v^r$.

Having re-framed the ambitwistor string in this format, the task is clear: to obtain the ambitwistor models underlying the remaining CHY formulae, we have to find new matter content $S^{L,R}$ to add to the action. There will be five choices of matter corresponding to the five choices for $\cI^{L,R}$ as integrands in the CHY formulae, see table \ref{models0}. The main body of this chapter will be dedicated to constructing these models and investigating their scattering amplitudes. 

\renewcommand{\arraystretch}{2}
\begin{table}[t]{\small
\begin{tabular}{|c||l|l|l|l|l|}
  \hline
  \diagbox{$S^L$}{$S^R$}& $S_\Psi$ & $S_{\Psi_1,\Psi_2}$ & $S_{\rho,\Psi}^{(\widetilde m)}$ & $S_{YM,\Psi}^{(\widetilde N)}$ & $S_{YM}^{(\widetilde N)}$\\ \hline \hline
  $S_\Psi$ & E &  & &  & \\ \hline 
  $S_{\Psi_1,\Psi_2}$ & BI & Galileon &  &  & \\ \hline  
  $S_{\rho,\Psi}^{(m)}$ & $\stackrel[\text{U}(1)^{m}]{}{\text{EM}}$ & DBI & $\stackrel[\text{U}(1)^{m}\times \text{U}(1)^{\widetilde m}]{}{\text{EMS}}$
   &  & \\ \hline 
  $S_{YM,\Psi}^{(N)}$ & EYM & ext.\ DBI & $\stackrel[\text{SU}(N)\times \text{U}(1)^{\widetilde m}]{}{\text{EYMS}}$ & $\stackrel[\text{SU}(N)\times \text{SU}(\widetilde N)]{}{\text{EYMS}}$ & \\ \hline  
  $S_{YM}^{(N)}$ & YM &  Nonlinear $\sigma$ & $\stackrel[\text{SU}(N)\times \text{U}(1)^{\widetilde m}]{}{\text{EYMS}}$ & $\stackrel[\text{SU}(N)\times \text{SU}(\widetilde N)]{}{\text{gen YMS}}$ & $\stackrel[\text{SU}(N)\times \text{SU}(\widetilde N)]{}{\text{Biadjoint Scalar}}$\\ \hline 
 \end{tabular}}
 \caption{Theories arising from the different choices of matter models.} \label{models0}
\end{table}

\paragraph{Outline of the chapter.}
We will start by constructing all basic ingredients in \cref{sec3:matter}. As reviewed in \cref{sec2:ambi_string}, just two ingredients were used to construct $\cI^L$ and $\cI^R$ in the original models of \cite{Mason:2013sva} - a worldsheet supersymmetry $S_\Psi$, and a current algebra $S_j$.  Einstein, Yang-Mills and bi-adjoint scalar theories were obtained from  the choices $(S^L,S^R)= (S_\Psi,S_\Psi), (S_\Psi,S_j)$ and $(S_j,S_j)$ respectively.  The current algebra $S_j$ however has the defect that it also leads to multi-trace terms in its correlators that were ignored by hand.  

In this chapter, we will introduce a different worldsheet CFT in \cref{sec-CS}, the comb system\footnote{This was originally introduced \cite{Casali:2013} in the context of twistor-strings, but never published.}, $S_{CS}$.  This gives a new way to obtain colour factors and their associated Parke-Taylor factors without multi-trace terms. Furthermore, the colour factors are presented not as cyclic single trace terms, but as strings of structure constants arranged in a `comb', hence the name.  However, the number of gauge particles in this system is doubled. To remedy this issue, a reduced system $S_{YM}$ with the correct number of gauge particles is constructed in \cref{sec-YM}, but this system is always anomalous. Nevertheless, it is sufficient to produce the correct tree amplitudes and so we use this system instead of the current algebra in the table \ref{models0}. It can be replaced by $S_{CS}$ if we are seeking an anomaly-free theory, but then we must accept the doubling of gauge particles.\\  

The remaining ambitwistor models underpinning the CHY formulae for combined models are constructed from combinations of these basic ingredient in \cref{sec3:combined-matter}. This leads to a total of five different choices of worldsheet matter $S^{L,R}$, which are in a one-to-one correspondence to the choices for the integrands $\cI^{L,R}$ of the CHY formulae discussed in \cref{sec3:NewModels}, see \cref{models0}. The ambitwistor string models discussed in this chapter therefore represent the underlying worldsheet theories for the CHY formulae, and provide an explanation for their existence.

Potentially the most interesting of these models is that for Einstein-Yang-Mills, which we obtain again in two forms, using $S_{YM}$ and $S_{CS}$. Again, as in the pure case, the former reproduces the correct tree-level amplitudes, but is anomalous, while the latter is anomaly free, but contains too many gluons, see \cref{sec3:S_YM-Psi} and \cref{sec3:S_CS-Psi} respectively. In particular, the gauge theory part of the space-time action corresponding to the comb system is given by
\begin{equation}
\label{eq:tYM0}
S_{T^*\text{YM}} = \int d^D x \;\tr( a_\mu\, D_\nu F^{\mu\nu}) \, , 
\end{equation}
and we refer to it as $T^*$YM as it describes a linearised Yang-Mills field $a$ propagating on a full Yang-Mills background for the field $A$ with curvature $F$.

Of particular interest are the models that are consistent at the quantum level, and hence critical and anomaly free, since in these cases, the ambitwistor string could be used to calculate loop amplitudes along the lines of \cite{Adamo:2013tsa}.
We will explore this and other directions briefly in \cref{sec3:Discussion}.

\section{Ambitwistor worldsheet models - general aspects}
As motivated above, the general idea is remarkably simple: in analogy to the CHY formulae, the bosonic part of the ambitwistor string action forms the backbone in common to all models, giving rise to both the integration over the moduli space of $n$-marked Riemann spheres and the delta-functions responsible for the localisation on the scattering equations. The bosonic part of the action thus ensures that we preserve the geometry of bosonic ambitwistor space. Moreover, the integrands $\cI^{L,R}$ are in one-to-one correspondence with matter theories $S^L$ and $S^R$ on $\Sigma$. We thus want to consider actions of the form
\begin{equation}
 S_B+S^L+S^R\,,
\end{equation}
where the bosonic action is given by 
\begin{equation}\label{boson-str}
S_B=S_B[X,P]=\frac1{2\pi}\int_\Sigma  P\cdot \dbar X +  \tilde e P\cdot  P\, ,
\end{equation}
see \cref{sec2:review_ambistrings} for details. The worldsheet matter $S^L$ and $S^R$ determines the form of the vertex operators $V\in\Omega^0(\Sigma, K_\Sigma^2)$.  More specifically, the vertex operators all contain a factor of $\e^{ik\cdot X}$, with the remainder $V$ factorizing into two independent currents,
$$
V=v^lv^r \, , \qquad v^l, v^r \in\Omega^0(\Sigma, K_\Sigma)\,.
$$
The $v^l$, $v^r$ will be  constructed from the matter models $S^L$ and $S^R$ respectively, and are constrained by quantum consistency, BRST invariance, and possibly further discrete symmetries. Invariance under $Q_B$ for example implies $k^2=0$ because the $P^2$ term in the BRST operator brings down $k^2$ in its double contraction with $\e^{ik\cdot X}$, and thus we recover the on-shellness condition for our external fields.  We will also use the notation $u^l$, $u^r$  for such currents when they are fixed with respect to fermionic symmetries, see below.  \\

Essentially the only candidate for $v^l$ and $v^r$ in the purely bosonic model is $\epsilon\cdot P$ for some polarization vector $\epsilon^\mu$ defined up to multiples of $k^\mu$ under $Q_B$ equivalence, leading to unphysical formulae for gravity amplitudes (see also \cref{sec2:review_ambistrings} and \cite{Mason:2013sva}).  In order to obtain more interesting models,  we introduce different worldsheet matter models $S^L$ and $S^R$ that generate the currents $v^l$ and $v^r$ in the vertex operators.  In general, we will take the models $S^L$ and $S^R$ to be distinct matter theories so that the correlator factorizes into a product of correlators for the left and right currents, and we will be able to calculate them separately.  In order to ensure that the only allowed vertex operators do indeed factorize in this way, we impose discrete symmetries that are analogues of the GSO symmetries of conventional string theories, and in a slight abuse of notation, we refer to them as GSO symmetries as well.

\section{Worldsheet matter models and their correlators}\label{sec3:matter}
In \cite{Mason:2013sva}, two matter models were considered: (1) $S_\rho$, a current algebra which we will take to be generated by free fermions, and (2) $S_\Psi$, which introduces a degenerate worldsheet supersymmetry.  These led to three models with $(S^L,S^R)$ given by $(S_{\Psi_l},S_{\Psi_r})$ for type II supergravity, $(S_\Psi,S_\rho)$ for Yang-Mills amplitudes and $(S_{\rho_l},S_{\rho_r})$ for amplitudes of a bi-adjoint scalar theory.  In this section we will consider a third type of matter that we call the `comb system'  $S_{CS}$ \cite{Casali:2013} - a worldsheet conformal field theory that will be important for Yang-Mills amplitudes. As the name suggests, correlators in this model give colour invariants in the form of comb structures built out of structure constants rather than colour traces.  In the rest of this section, we describe these matter systems, and the natural currents to which they give rise as candidates for $v^l$ and $v^r$ and their correlation functions.  In the next section we see how these are altered when these systems are combined.\newpage

\subsection{Free fermions \texorpdfstring{$S_\rho$}{S-rho} and current algebras \texorpdfstring{$S_j$}{S-j}}  \label{sec3:freeferm}
 The standard action for `real' free fermions $\rho^a\in \Omega^0(\Sigma,K_\Sigma^{1/2})$, $a=1,\ldots m$, is
\begin{equation}
S_\rho=\int \rho^a\dbar \rho^a\, ,
\end{equation}
(the summation convention is assumed).  The term `real' is used to distinguish them from the complex fermion system given by 
\begin{equation}
S_{\rho,\tilde\rho}=\frac1{2\pi i} \int \tilde\rho_a \dbar\rho^a\,.
\end{equation}
The simplest currents in the real case are $j^{ab}=\rho^a\rho^b$ that form an elementary example of a current algebra for $SO(m)$ (in the complex case $j^a_b=\tilde{\rho}_b\rho^a$ generate a current algebra for $\SU(m)$).

More generally, we can consider an arbitrary current algebra $j\in \Omega^0(\Sigma,K_\Sigma\otimes \g)$ of level $k$, where $\g$ is some Lie algebra, satisfying the usual current algebra OPE,
\begin{equation}\label{current-alg}
j^a(\sigma)j^b(0)\sim \frac{k\delta^{ab}}{\sigma^2} + \frac{if^{abc}j^c(\sigma)}{\sigma}+\ldots\, ,
\end{equation}
where $f^{abc}$ are the structure coefficients, $[t^a,t^b]=f^{abc} t^c$, $\delta^{ab}$ is the Killing form ($a=1,\ldots,\dim \g$). This could be constructed from free fermions, Wess-Zumino-Witten models or some other construction and we will generally represent such matter by $S_j$.

Given choices of $t\in \g$, the current algebra can contribute 
\begin{equation}
 v=t\cdot j\,,
\end{equation}
to one or both  factors $v_l$ and $v_r$ of the vertex operators $V$. The current correlators $\la t_1\cdot  j(\sigma_1) \ldots t_n\cdot  j(\sigma_n)\ra$ lead to Parke-Taylor factors:
\begin{equation}
\mathcal{C}_n=\frac{\text{Tr}(t_1\ldots t_n)}{\sigma_{12}\sigma_{23}\ldots \sigma_{n1}} \,,
\end{equation}
where $\sigma_{ij}=\sigma_i-\sigma_j$.  However, the correlators also give rise to multi-trace terms that are ultimately problematic and unwanted.

\subsection{Worldsheet supersymmetry \texorpdfstring{$S_\Psi$}{S-Psi}} \label{sec3:WS-susy}
While discussed in \cref{chapter2}, we review this matter model here to fix the notation. Worldsheet supersymmetry is introduced by adding fermionic worldsheet spinor fields $\Psi\in \Omega^0(\Sigma, K_\Sigma^{1/2}\otimes\C^d)$, and a gauge field $\chi \in \Omega^{(0,1)}(\Sigma,T_\Sigma^{1/2})$ for the supersymmetry.   Their action is
\begin{equation}
S_\Psi= \frac{1}{2\pi i}\int \Psi\cdot \dbar \Psi + \chi P\cdot \Psi\, .
\end{equation}
The constraint leads to worldsheet gauge transformations
\begin{equation}
\delta \chi=\dbar \eta \, , \qquad \delta X^\mu=\eta \Psi^\mu\, , \qquad \delta \Psi^\mu=\eta P^\mu\, , \qquad \delta P_\mu=0\, ,
\end{equation}
where $\eta \in\Pi\Omega^0(\Sigma,T_\Sigma^{1/2})$ is  a fermionic parameter. Gauge fixing leads to bosonic ghosts $\gamma \in \Omega^0(\Sigma,T_\Sigma^{1/2})$ and corresponding antighosts $\beta\in\Omega^0(\Sigma,K_\Sigma^{3/2})$.  The BRST operator acquires an extra term
\begin{equation}
Q_\Psi=\oint \gamma \cG_{\Psi}\, , \qquad \cG_{\Psi}:= P\cdot \Psi\, .
\end{equation}
On $\CP^1$, the ghosts $\gamma$ have two zero modes by the Riemann-Roch theorem.  Thus, as far as the fermionic symmetry is concerned, we need two fixed vertex operators to fix the ghost zero modes, constructed from $\delta(\gamma) $ multiplied by a factor with values in $ K_\Sigma^{1/2}$. The `integrated' vertex operators (in the fermionic sense) arise from these via the usual descent procedure.   The relevant currents are then given by
\begin{equation}
u=\delta(\gamma)\,\epsilon\cdot \Psi\, , \qquad v=\epsilon\cdot P + k\cdot\Psi\epsilon\cdot\Psi\, .
\end{equation}

These operators are invariant under the discrete symmetry that changes the sign of $\Psi$, $\chi$ and the ghosts.  Imposing invariance under this symmetry will exclude mixing between the ingredients of these operators thought of as parts of $S^L$ and others that might be part of $S^R$.  We will refer to this as GSO symmetry.

As discussed in \cref{chapter2}, the correlators of these currents lead to the reduced Pfaffians of the CHY amplitudes,
\begin{equation}
\la u_1  u_2 v_3\ldots v_n\ra = \mathrm{Pf}'(M)=\frac{1}{\sigma_1-\sigma_2}\mathrm{Pf}(M_{12})\, ,
\end{equation}
where $M$ is the skew $2n\times 2n$ matrix with $n\times n$ block decomposition
\begin{equation}
M=\begin{pmatrix}
A&-C^T\\C&B
\end{pmatrix} \, , \qquad A_{ij}=\frac{k_i\cdot k_j}{\sigma_{ij}}\, , \qquad B_{ij}=\frac{\epsilon_i\cdot \epsilon_j}{\sigma_{ij}}\, , 
\end{equation}
and
\begin{equation}
C_{ij}=\frac{\epsilon_i\cdot k_j}{\sigma_{ij}}\, , \;i\neq j, \qquad C_{ii}= -\epsilon_i\cdot P(\sigma_i)\, ,
\end{equation}
and $M_{12}$ is $M$ with the first two rows and columns removed.

\subsection{Comb system \texorpdfstring{$S_{CS}$}{S-CS}}\label{sec-CS}
The comb system  \cite{Casali:2013} was introduced as a way of obtaining colour factors as sequences of contractions of structure constants rather than as colour ordered traces. In general, such contractions can be generated from trivalent diagrams with the structure constants $f^{abc}$ of some Lie algebra at the vertices and contractions $\delta^{ab}$ along internal edges.  It is well known that the colour factors are linearly dependent as a consequence of the Kleiss-Kuijf relations with a basis being given by `combs', with $n-2$ vertices lined up in a row \cite{KK1989,DelDuca:1999rs} and end points given by 1 and $n$:
\vspace{.4cm}
\begin{center}
\includegraphics[width=5cm]{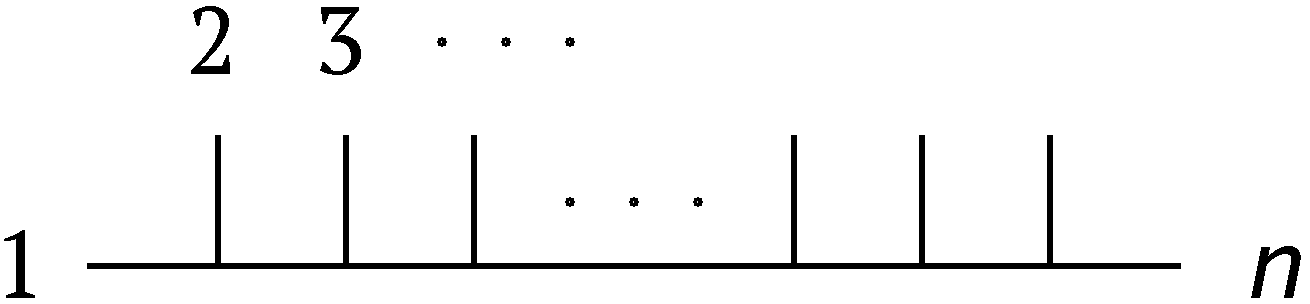} \qquad $\to \qquad f^{a_1a_2b_1}f^{b_1a_3b_2}\cdots f^{b_{n-3}a_{n-1}a_n}$.
\end{center}
\vspace{.4cm}
The comb system \cite{Casali:2013} has the remarkable property that, in conjunction with worldsheet supersymmetry, only these combs arise from correlators,  without the  multitrace terms occurring for ordinary current algebras. This system arises from an action for matter fields 
$\rho, \tilde \rho \in \Pi\Omega^0(\Sigma, K_\Sigma^{1/2}\otimes \mathfrak{g})$ and $ q, y \in \Omega^0(\Sigma, K_\Sigma^{1/2}\otimes \mathfrak{g})$, i.e. worldsheet spinors taking values in the Lie algebra $\mathfrak{g}$ of some gauge group.   The worldsheet action is
\begin{equation}
S_{CS}=\int  \tilde\rho\cdot \dbar \rho + q\cdot \dbar y + \chi \rho\cdot\left(\frac{\scriptstyle 1}{\scriptstyle 2} [\rho,\tilde\rho]+ [q,y]\right)\,,
\end{equation}
 with $\rho, \tilde \rho$ fermionic and $q,y$ bosonic and the $\cdot $ notation is used here to denote the Killing form on the Lie algebra.  As before, $\chi\in\Pi\Omega^{0,1}(\Sigma,T_\Sigma^{1/2})$ is a gauge field on the worldsheet and we are gauging 
the current\footnote{With different assignment of worldsheet spins this current would be a normal BRST current.  If we were to take $\rho, y$, scalars and $\tilde{\rho}, q$  sections of $K_\Sigma$, then $\rho$ and $\tilde\rho$ could be taken to be the ghosts associated to  gauge fixing a worldsheet gauge field $a\in\Omega^{0,1}(\Sigma, \mathfrak{g})$ with action $\int_\Sigma q\cdot \dbar y +q\cdot [a,y]$.} $\rho\cdot\left(\frac{\scriptstyle 1}{\scriptstyle 2} [\rho,\tilde\rho]+ [q,y]\right)$, which is a section of $K_\Sigma^{3/2}$.  
The gauging introduces transformations now for fermionic  $\alpha\in \Pi\Omega^0(\Sigma,T_\Sigma^{1/2})$,
\begin{equation}
\delta(\rho,\tilde\rho,q,y)=\alpha \left(\frac{1}{2}[\rho,\rho]\,, [\rho,\tilde{\rho}]\,, [\rho,q]\,,[\rho,y]\right)\, , \qquad \delta \chi=\dbar\alpha\, . 
\end{equation}
 As in the case of worldsheet supersymmetry, gauge fixing gives bosonic ghosts $\gamma\in \Omega^0(\Sigma,T_\Sigma^{1/2})$ and antighosts $\beta$ that contribute to the BRST operator as
 \begin{equation}
Q_{CS}=\oint \gamma  \cG_{CS}\, , \qquad \cG_{CS}:=\rho\cdot\left(\frac{\scriptstyle 1}{\scriptstyle 2} [\rho,\tilde\rho]+ [q,y]\right)\, .
\end{equation}

As for $S_\Psi$, at genus zero the ghosts develop two zero-modes, and so the correlator contains two fermionically fixed operators with the rest integrated.  The currents that contribute to the vertex operators in this system now depend on a Lie algebra element $t\in\g$,  with two types of fixed and integrated vertex operators respectively
\begin{equation}\label{eq3:VO-comb}
u=\delta(\gamma)t\cdot \rho\, , \quad \tilde u=\delta(\gamma)t\cdot \tilde \rho\, , \qquad  v=\half t\cdot [\rho,\rho]\, , \quad \tilde v=t\cdot\left( [\rho,\tilde\rho]+ [q,y]\right)\, .
\end{equation}
Here $v=\{Q_{CS},t\cdot\rho\}$ and $\tilde v=\{Q_{CS},\tilde t\cdot \tilde \rho\}$, and we need two fixed vertex operators in any correlator, with the remaining vertex operators unfixed.\footnote{
A  more symmetric way to understand this is to say that we choose all unintegrated vertex operators, but then we must insert $n-2$ `picture-changing operators'
\begin{equation}
\Upsilon=\delta(\beta) \rho\cdot\left(\frac{\scriptstyle 1}{\scriptstyle 2} [\rho,\tilde\rho]+ [q,y]\right)\, .
\end{equation}
These could be inserted anywhere in general.  If inserted at one of the $u, \tilde u$ insertion points, it will convert it into a corresponding $v,\tilde v$. A similar approach can be taken for correlators  associated with the $S_\Psi$ matter system. 
}
Notice that $\tilde v= t\cdot j$, where $j^a$ is a level zero current algebra, and that
\begin{equation}
\tilde v (\sigma)\; t'\cdot \tilde \rho (0) \sim - \frac{[t,t']\cdot \tilde \rho (0)}{\sigma} + \ldots \,,
\qquad \tilde v (\sigma)\; t'\cdot \rho (0) \sim - \frac{[t,t'] \cdot \rho (0)}{\sigma} + \ldots \;.
\end{equation}
The following proposal then determines the form of the correlators:
\begin{propn}[Casali-Skinner] Correlators of the currents $u,v,\tilde u,\tilde v$ are only non-vanishing when there is just one untilded current and give 
\begin{equation}\label{eq3:CS-cor}
\la u_1 \tilde v_2\ldots \tilde v_{n-1} \tilde{u}_n\ra=\la \tilde u_1 v_2 \tilde v_3\ldots \tilde v_{n-1} \tilde{u}_n\ra=
\cC(1,\ldots,n)\,,
\end{equation}
where
\begin{equation}
\cC_n=\cC(1,\ldots,n):=\frac{\tr(t_1[t_2,[\ldots,[t_{n-1} ,t_n] \ldots] ])}{\sigma_{12}\sigma_{23}\ldots \sigma_{n1}} + \mathrm{Perm}(2,\ldots,n-1)\, .
\end{equation}
\end{propn}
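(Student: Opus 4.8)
My plan is to evaluate the correlator as a free-field computation in the matter CFT, dressed by the ghost zero-mode factor, and to organise the Wick contractions into the comb colour structure using the level-zero property of the current $j=[\rho,\tilde\rho]+[q,y]$. \textbf{Step 1 (selection rule).} First I would factorise the correlator into the $\beta\gamma$-ghost sector and the matter sector. On $\CP^1$ the bosonic ghost $\gamma$ has two zero modes, so a non-vanishing correlator needs exactly two fixed insertions carrying $\delta(\gamma)$, i.e.\ exactly two operators drawn from $\{u,\tilde u\}$. The $(\rho,\tilde\rho)$ system carries a charge under which $\rho$ has charge $+1$ and $\tilde\rho$ charge $-1$; since $K_\Sigma^{1/2}=\cO(-1)$ has no sections on $\CP^1$, neither field has zero modes and the correlator vanishes unless the total charge is zero. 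Assigning charges $(+1,-1,+2,0)$ to $(u,\tilde u,v,\tilde v)$ and noting that the integrated operators $v,\tilde v$ contribute only non-negative charge, the two constraints ``two fixed operators'' and ``zero net charge'' exclude the option $(u,u)$ and leave exactly the two configurations in the statement: either $(u,\tilde u)$ fixed with all remaining insertions $\tilde v$, or $(\tilde u,\tilde u)$ fixed with precisely one $v$ and the rest $\tilde v$. This establishes the vanishing claim and shows that only one untilded current can appear.

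\textbf{Step 2 (level zero).} The key technical input is that $j=[\rho,\tilde\rho]+[q,y]$ generates a current algebra of level zero. The fermionic bilinear $[\rho,\tilde\rho]$ alone produces a double pole $k\,\delta^{ab}/\sigma^2$ in \cref{current-alg}, but the bosonic bilinear $[q,y]$ yields the identical structure with the opposite sign, because a closed fermion loop carries an extra factor $-1$ relative to the identical boson loop while $\la\rho\tilde\rho\ra$ and $\la qy\ra$ share the same normalisation. Hence the central term cancels, and the OPE $\tilde v_i(\sigma)\,\tilde v_j(0)$ has only a simple pole producing $[t_i,t_j]\cdot j$. This is exactly the mechanism that removes the multitrace and double-trace terms afflicting the ordinary current algebra $S_j$ of \cref{sec3:freeferm}, and it is what makes the residue argument below close.

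\textbf{Step 3 (residue recursion).} I would then compute the matter part of $\la u_1\tilde v_2\ldots\tilde v_{n-1}\tilde u_n\ra$ by treating it as a meromorphic object in a single current variable, say $\sigma_2$. Since $\tilde v_2=t_2\cdot j$ has weight one and, by Step 2 together with the given OPEs $\tilde v(\sigma)\,t'\!\cdot\!\rho(0)\sim-[t,t']\cdot\rho(0)/\sigma$ (and similarly for $\tilde\rho$), its OPE with every other operator has only a simple pole, the correlator is fixed by its residues,
\begin{equation}
 \la u_1\,\tilde v_2\cdots \tilde u_n\ra=\sum_{k\neq 2}\frac{1}{\sigma_{2k}}\,\la u_1\cdots v'_k\cdots \tilde u_n\ra\,,
\end{equation}
where $v'_k$ is the operator at $\sigma_k$ with $t_k$ replaced by $[t_2,t_k]$ and $\tilde v_2$ deleted. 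Each summand is an $(n-1)$-point correlator of the same type, so induction on $n$ (base case $\la u_1\tilde u_2\ra=\tr(t_1t_2)/\sigma_{12}$) assembles the nested commutator $[t_2,[\ldots,[t_{n-1},t_n]\ldots]]$ together with the sum over permutations of the middle labels. The matter correlator produced this way carries the comb denominators centred on the current insertions; multiplying by the $\beta\gamma$ two-point function $\la\delta(\gamma)(\sigma_1)\delta(\gamma)(\sigma_n)\ra$, whose dependence on $\sigma_{1n}$ supplies the missing factor, recasts the answer in the cyclic Parke--Taylor presentation $\cC_n$. The second form of the correlator then follows either from the same computation, with the two sinks $\tilde u_1,\tilde u_n$ fed by the two $\rho$'s of $v_2=\half\,t_2\cdot[\rho,\rho]$, or more cleanly from the relations $v=\{Q_{CS},t\cdot\rho\}$, which guarantee independence of the choice of fixed operators.

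\textbf{Main obstacle.} The OPEs, the selection rule, and the double-pole cancellation are short computations; the delicate part is the combinatorial bookkeeping of Step 3, namely showing that the residue recursion closes \emph{exactly} onto the nested-commutator-plus-permutation form with the correct $\sigma$-denominators once the ghost factor is reinstated, rather than onto some linearly equivalent but differently organised colour basis. Controlling the relative signs between the fermionic and bosonic sectors throughout, and confirming that the two stated forms coincide and not merely agree up to Kleiss--Kuijf relations, is where I expect the real work to lie.
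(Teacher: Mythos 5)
Your proposal is correct and follows essentially the same route as the paper: two fixed operators forced by the $\gamma$ zero modes, $\rho$/$\tilde\rho$ number (charge) conservation giving the selection rule that only one untilded current can appear, and the level-zero property of $j=[\rho,\tilde\rho]+[q,y]$ (boson/fermion loop cancellation) eliminating all trace terms so that only comb contractions survive. Your Step 3 simply formalizes the paper's direct Wick-contraction statement (integrated operators connect along a comb, fixed ones form the ends) as a Kac--Moody Ward-identity recursion, and your observation that the ghost two-point factor $1/\sigma_{1n}$ converts the chain denominators into the cyclic Parke--Taylor form is the same bookkeeping the paper leaves implicit.
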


Instead of the colour traces arising from $S_j$, we obtain `combs', i.e. strings of structure constants $\tr(t_1[t_2,[\ldots,[t_{n-1} ,t_n] \ldots] ])$ as described in \cite{DelDuca:1999iql,DelDuca:1999rs}.  

The argument is as follows. The correlator contains exactly two fixed vertex operators $u$, $\tilde u$ due to the standard counting of $\gamma$ ghost zero modes by the Riemann-Roch theorem. Consider the $\rho, \tilde\rho$ contractions: correlators are only non-trivial for an equal number of $\rho$ and $\tilde \rho$ fields, since neither develops zero modes at genus zero. From the form of the vertex operators \cref{eq3:VO-comb}, it is easily seen that this restricts to the correlators \cref{eq3:CS-cor}. The integrated vertex operators $v$ and $\tilde{v}$ connect  along a `comb', whereas the fixed ones $u$ and $\tilde u$ form the ends, giving rise to the `colour combs' defined above.  However, we can also have contributions from a subset of $\tilde v$ vertex operators contracting in a loop.  This is where the $(q,y)$ system comes into play. These fields can only contract in loops, but, being bosonic, their loop contractions cancel the analogous loop contractions from the $(\rho, \tilde{\rho})$ system.  This can also be seen from the form of the current algebra generated by the $\tilde v$'s:  by construction, this has level zero and thus cannot generate a non-trivial trace after a sequence of OPE's.

\subsection{Other systems with comb structure \texorpdfstring{$S_{YM}$}{S-YM}}\label{sec-YM}  
A problem with the CS system above is that there are clearly two types of gluons, corresponding to the vertex operators $(\tilde u,\tilde v)$ and $(u,v)$ respectively.
We will see that this is not appropriate for pure Yang-Mills although it does give a theory that is sufficient to generate Einstein-Yang-Mills tree amplitudes correctly on certain trace sectors, the ones selected by the choice of untilded operators.\footnote{One may try to symmetrise the correlator in tilded versus untilded gluonic operators, for instance by using $u_t+\tilde u_t$ and $v_t+\tilde v_t$, but then there will be an over-counting of contributions, so that the relative factors of different terms are not correct.} The system we introduce here will give the complete Einstein Yang-Mills amplitude from a single correlator, but will be anomalous and ill-defined due to $Q^2\neq0$.

A worldsheet CFT that will generate Yang-Mills following the ideas above requires the following ingredients.  We need a fermionic worldsheet spinor $\rho^a\in \g$ for the fixed vertex operator, transforming in the adjoint representation of a current algebra $v^a\in \g$ at level zero for the integrated vertex operators; the level zero allows us to avoid multi-trace terms and loops.  Finally we need a spin 3/2 current $\cG_{YM}$ with the following OPE to give the appropriate group compatibilities and descent:
\begin{equation}\label{requirements}
\rho^a(\sigma)\rho^b(0) \sim \frac{\delta^{ab}}{\sigma}\, , \;\;  v^a(\sigma)\rho^b(0)\sim \frac{f^{abc}\rho^c(0)}{\sigma}\, , \;\; \cG(\sigma) \rho^a(0)\sim \frac{v^a(0)}{\sigma}\, , \;\; \cG(\sigma)\cG(0)\sim 0\, .
\end{equation}
It is easy to see that this can be partially realized with $\rho^a $ a `real' free fermion with action $\frac{1}{2}\int \rho^a\dbar \rho^a$, and with 
\begin{equation}
v^a= -\frac{1}{2} f^{abc}\rho^b\rho^c + j^a\, ,\quad j^a(\sigma)\rho^b(0)\sim 0\, ,
\end{equation}
we will obtain the first two of the equations above.   
$\frac{1}{2}f^{abc}\rho^b\rho^c$ is  a current algebra with level $-C$ where $f^{abc}f^{\tilde abc}=C\delta^{a\tilde a}$, so in order for $v^a$ to be a current algebra with level zero, we must take $j^a$ to be a current algebra with level $k=C$. There are many ways to do this, so let us leave this to one side for a moment.  We then need to construct $\cG$.  In order for $\cG$ to generate $v^a$ from $\rho^a$, we require
\begin{equation}
\cG= -\frac{1}{6} f^{abc}\rho^a\rho^b\rho^c + \rho^a j^a +\ldots\,,
\end{equation}       
where the $\ldots$ has non-singular OPE with $\rho^a$ and $j^a$. At this point, however, we see that an anomaly arises preventing $\{\cG,\cG\}=0$. To be specific,
\begin{equation}
\cG(\sigma)\cG(0) \sim \frac{C\,\textrm{dim}(G)}{\sigma^3} + \frac{:j^aj^a(0):}{\sigma}\,,
\end{equation}
where we recall that the energy-momentum tensor of the current algebra $j$ is given by $T(\sigma)=:j^aj^a(\sigma):/2k$. Therefore, we are able to satisfy the first three equations of \cref{requirements}, while the last equation is anomalous, and thus the BRST quantisation is inconsistent.

\subsection{Central charges}

We remark that the theories  $S_B$, $S_\rho$, $S_\Psi$ and $S_{CS}$  above respectively have central charges 
\begin{equation}
 \mathfrak{c}_B=2d-52,\qquad  \mathfrak{c}_\rho =m/2\, , \qquad  \mathfrak{c}_\Psi= d/2+11, \qquad \mathfrak{c}_{CS}=11\, ,
\end{equation}
the latter being just that of the $(\beta,\gamma)$ system as the $(\rho,\tilde\rho)$ and $(q,y)$ contributions cancel via supersymmetry.  (This can be different if the $(q,y)$ are not taken to be spin $1/2$.)  Notably, the type II supergravity model is critical in 10 dimensions as then $c_B+2c_\Psi=0$.  These considerations are less interesting for $S_{YM}$ as that theory is intrinsically anomalous, and in any case its central charge will depend on the choice of  current algebra $j^a$.


\section{Combined matter models}\label{sec3:combined-matter}

On their own, the new worldsheet matter theories $S_{CS}$ and $S_{YM}$ of the previous section do little more than give an alternative to the current algebras in the original models of \cite{Mason:2013sva} that avoids the multitrace terms that were neglected by hand.  To obtain new theories, we will consider the contributions to $S^L$ or $S^R$ of combinations of the above matter systems.
Even without $S_{CS}$ and $S_{YM}$, we will obtain a number of interesting new models. 
Here we will discuss the allowable vertex operators and the correlators of the various combinations that we can form.  These are summarized in the table \ref{matter-combinations}. 

\begin{table}[ht]  {\footnotesize
\begin{tabular}{|l||l|l|l|l|}\hline
  & Fermionic current $\mathcal{G}$ & Matter & Vertex operators & Correlator\\ \hline\hline
 $S_\Psi$ & $P\cdot \Psi$ & $\Psi$ & $u_\Psi=\delta(\gamma)\,\epsilon\cdot\Psi$ & Pf$'(M)$\\ \hline
 \multirow{2}{*}{$S_{\Psi_1,\Psi_2}$} & $P\cdot \Psi_1$ & \multirow{2}{*}{$\Psi_1,\,\Psi_2$} & $u_{\Psi_1}=\delta(\gamma_2)\,k\cdot\Psi_1$ & \multirow{2}{*}{$\big($Pf$'(A)\big)^2$}\\
 & $P\cdot \Psi_2$ & & $u_{\Psi_2}=\delta(\gamma_1)\,k\cdot\Psi_2$ & \\ \hline
 \multirow{2}{*}{$S_{\rho,\Psi}$} & \multirow{2}{*}{$P\cdot \Psi$} & \multirow{2}{*}{$\Psi$, $\rho_a \quad a=1,\dots,m$} & $u_\Psi=\delta(\gamma)\,\epsilon\cdot\Psi$ & \multirow{2}{*}{Pf$(\chi)\,$Pf$'(M|_{\text{red}})$}\\
 &&& $u_{\rho_a}=\delta(\gamma)\,\rho_a$ & \\ \hline
 \multirow{3}{*}{$S_{CS,\Psi}$} & \multirow{3}{*}{$P\cdot \Psi+$tr$\left(\rho(\frac{1}{2}[\tilde{\rho},\rho]+[q,y])\right)$} & \multirow{3}{*}{$\Psi,\,(\tilde{\rho},\,\rho),\,(q,y)$} & $u_\Psi=\delta(\gamma)\,\epsilon\cdot\Psi$ & \multirow{3}{*}{$\mathcal{C}_{(1)}\dots\mathcal{C}_{(m)}$Pf$'(\Pi)$}\\
 &  & & $\tilde{u}_{CS}=\delta(\gamma)\,$tr$(t\tilde{\rho})$ & \\
 &  & & $u_{CS}=\delta(\gamma)\,$tr$(t\rho)$ & \\ \hline
 \multirow{2}{*}{$S_{CS}$} & \multirow{2}{*}{tr$\left(\rho(\frac{1}{2}[\tilde{\rho},\rho]+[q,y])\right)$} &
 \multirow{2}{*}{$(\tilde{\rho},\,\rho),\,(q,y)$} & $\tilde{u}_{CS}=\delta(\gamma)\,$tr$(t\tilde{\rho})$ & \multirow{2}{*}{$\mathcal{C}_n$}\\
 &  & & $u_{CS}=\delta(\gamma)\,$tr$(t\rho)$ &  \\ \hline
\end{tabular}}
\caption{Table of matter models, their combinations and worldsheet correlators.}
\label{matter-combinations}
\end{table}

\subsection{\texorpdfstring{$S_{\rho,\Psi}$}{S-rho,Psi}} 
Consider the action
\begin{equation}
S^L=S_{\rho,\Psi}:=S_\rho+S_\Psi\, .
\end{equation}   
The free fermion system $S_\rho$ seems to naturally lead  to  the  $SO(m)$ current algebra $j^{ab}=j^{[ab]}=\rho^a\rho^b$,  and could therefore superficially be thought to result in the same current algebra. In the presence of worldsheet supersymmetry however, the currents $j^{ab}$ as constituents of the vertex operators are not BRST invariant, since
\begin{equation}
\{Q_\Psi, j^{ab} \e^{ik\cdot X}\}= ik\cdot\Psi j^{ab} \e^{ik\cdot X}\neq 0\, .
\end{equation}
On the other hand, allowable fixed and integrated currents are respectively given by
\begin{equation}
u^a=\delta(\gamma) \rho^a\, , \qquad v^a=k\cdot \Psi \rho^a\, , \quad a=1,\ldots , m\, .
\end{equation}
We also have the standard BRST invariant currents from $S_\Psi$, which in this context we will denote $u_\epsilon=\delta(\gamma)\epsilon\cdot \Psi$ and $v_\epsilon=\epsilon\cdot P+ k\cdot\Psi \epsilon\cdot\Psi$.

In general we will be concerned with a correlator $\la u_1 u_2 v_3 \ldots v_n\ra $ where, if $(\gamma,h)$ is a partition of $1,\ldots , n$,  for  $i\in \gamma $ the current will be one of the new photon currents, and for $i\in h$ it will be a $S_\Psi$ current depending on a polarization vector $\epsilon_{i,\mu}$.   The correlator will factorize into one for the constituent $\rho$'s and one for the $\Psi$'s.  We compute these as Pfaffians of the associated matrices of possible contractions in the correlator.  The simplest is the $\rho$ system. If we restrict it to take values in an algebra with vanishing structure constants, e.g. $\oplus^m  \mathfrak{u}(1)$, the OPEs lead to the $|\gamma| \times |\gamma|$ CHY matrix 
\begin{equation}
\cX_{ij}=\frac{\delta^{a_ia_j}}{\sigma_{ij}}\, , \quad i,j\in \gamma, \quad i\neq j, \quad  \mbox{ otherwise } \quad \cX_{ij}=0\, .
\end{equation}
The Kronecker delta fro $\tr ( t^{a_i}t^{a_j} )$ in the numerator ensures only photons of the same flavour interact.

Much as before, the $\Psi$ system leads to the matrix of possible $\Psi$ contractions
\begin{equation}
M_{\mathrm{Red}}=\begin{pmatrix}
A^{\gamma\gamma}&A^{\gamma h}& (-C^{h\gamma})^T\\
A^{h\gamma}& A^{hh}&(-C^{hh})^T\\
C^{h\gamma}&C^{hh}& B
\end{pmatrix} \,,
\end{equation}
where the matrix is expressed in a bock decomposition under $n=|\gamma|+|h|$ and
\begin{equation}
A_{ij}=\frac{k_i\cdot k_j}{\sigma_{ij}}\, , i\neq j, \quad A_{ii}=0\, ,\qquad B_{ij}=\frac{\epsilon_i\cdot\epsilon_j}{\sigma_{ij}}\, , \quad i,j\in h, i\neq j\,, 
\end{equation}
and 
\begin{equation}
C_{ij}=\frac{\epsilon_i\cdot k_j}{\sigma_{ij}} \, , \quad i\in h , i\neq j\, .
\end{equation}
Finally, the additional $\epsilon\cdot P$ term in the $S_\Psi$ vertex operator is incorporated by setting $C_{ii}=-\epsilon_i\cdot P(\sigma_i)$ as before.  We thus obtain a reduced Pfaffian associated with the two fixed vertex operators as before.  Our final correlator expression is therefore
\begin{equation}
\la u_1 u_2 v_3 \ldots v_n\ra =\Pf(\cX)\Pf'(M_{Red})\,.
\end{equation}
In comparison with \cref{chapter2}, we note that $\pf'(M_{Red})=\pf'(M_{\hat{\gamma}})$ or $\pf'(M_{Red})=\pf'(M_{\hat{s}})$, depending on the theory.

For the GSO symmetry we require all fields, $\rho, \Psi$ and the ghosts to change sign simultaneously.

\subsection{\texorpdfstring{$S_{\Psi_1,\Psi_2}$}{S-Psi1,Psi2}}  
Here we consider two worldsheet supersymmetries 
\begin{equation}
S^L=S_{\Psi_1}+S_{\Psi_2}\, .
\end{equation}
 There are two contributions to the BRST operator, $Q_{\Psi_1}+Q_{\Psi_2}$. The currents from $S_{\Psi_1}$ and  $S_{\Psi_2}$ derived in \cref{sec3:WS-susy} no longer form allowed vertex operators since they are not BRST closed,
\begin{equation}
\{Q_{\Psi_2},\,\delta(\gamma_1) \epsilon\cdot\Psi_1 \e^{i k\cdot X}\}= \delta(\gamma_1)\,\gamma_2\,  k\cdot\Psi_2 \epsilon\cdot \Psi_1 \e^{i k\cdot X} \neq 0\, .
\end{equation}
However, non-trivial  BRST invariant currents are simply given by descendants of $\delta(\gamma_1)\delta(\gamma_2)$, 
\begin{equation}
u=\delta(\gamma_1) \delta(\gamma_2)\, , \qquad v= k\cdot \Psi_1 k\cdot \Psi_2\, ,
\end{equation}
as in \cite{Ohmori:2015sha}. (Equivalently, we could have included the partial descendants $\delta(\gamma_1 ) k\cdot \Psi_2$ and $\delta(\gamma_2)k\cdot \Psi_1$.) 
Again, the correlator of $n$ such vertex operators factorizes into a product of reduced Pfaffians $\pf'(A)$, originating from all possible  $\Psi_1$ and $\Psi_2$ Wick contractions.  As before, $A$ is defined by its off-diagonal entries $k_i\cdot k_j/\sigma_{ij}$ and has co-rank two, and the reduced Pfaffian corresponds to the choice of fixed versus integrated vertex operators.  We therefore obtain
\begin{equation}
\la u_1u_2v_3v_4\ldots v_n\ra=(\Pf'(A))^2\, .
\end{equation}

One might ask whether one can carry on to combine three or more $S_\Psi$ systems into $S^L$, but this is not possible since there are no non-trivial BRST invariant currents.

Again for the GSO symmetry we require all fields, $\Psi_r$ and the ghosts, to change sign simultaneously.

\subsection{\texorpdfstring{$S_{YM,\Psi}$}{S-YM,Psi}}\label{sec3:S_YM-Psi}

In Sections \ref{sec-CS} and \ref{sec-YM}, we introduced $S_{CS}$ and $S_{YM}$ whose correlators provide the colour  comb-structure  together with Parke-Taylor factors. For the remainder of this section, we will combine each of these two systems with $S_\Psi$. The goal is to obtain the building block of Einstein-Yang-Mills amplitudes that gives the appropriate interactions  between gluons and gravitons. We start by discussing the combined theory $S_{YM,\Psi}$, which is slightly simpler than $S_{CS,\Psi}$ and possesses the main important features. Despite $S_{YM}$ not being quantum-mechanically consistent  - and this problem  extends to $S_{YM,\Psi}$ - we are able to obtain Yang-Mills tree amplitudes.  The  theory $S_{CS,\Psi}$ can be made consistent, but has two types of gluons and the corresponding amplitudes arise from an action that is not Yang-Mills (although it contains its classical solutions).

Since both worldsheet matter theories $S_\Psi$ and $S_{YM}$ involve the gauging of spin 3/2 currents $\cG_\Psi=P\cdot \Psi$ and $\cG_{YM}=\rho\cdot\left(-\frac{\scriptstyle 1}{\scriptstyle 6} [\rho,\rho]+ j\right)$, we have the option of gauging both these currents together or separately.  If we gauge them separately, we find that the resulting system is too restrictive to lead to interesting results.  Thus we gauge the sum 
\begin{equation}
\cG = P\cdot \Psi +\rho\cdot\left(-\frac{ 1}{ 6} [\rho,\rho]+ j\right),.
\end{equation}
Gauge fixing introduces a single set of ghosts $(\beta,\gamma)$, and the currents
\begin{equation}
u_t=\delta(\gamma)\rho\cdot t\, , \quad \quad u_ \epsilon=\delta(\gamma)\epsilon\cdot \Psi\,,
\end{equation}
still represent allowed fixed vertex operators. BRST descent\footnote{We highlight at this point again that the quantisation is inconsistent, and thus does not lead to a BRST cohomology. However, the operator $Q$ was defined such that the descent procedure is still valid.} then leads to the integrated vertex operators
\begin{equation}\label{eq3:YM-Psi-VO}
 v_t=k\cdot\Psi \rho\cdot t + v^0_t\,,
 \quad\quad v_\epsilon=\epsilon\cdot P+\epsilon\cdot\Psi k\cdot\Psi\,,
\end{equation}
where $v_t^0$ denotes the original $S_{YM}$ integrated vertex operator, satisfying the OPE relations \eqref{requirements} except the last.  Although the failure of the last relation means that the BRST quantisation is inconsistent, the correlator of these vertex operators does nevertheless  give the correct amplitudes.

In the previous section, we saw that the system $S_{YM}$ on its own gives the correct colour-dressed Parke-Taylor factors, in terms of a comb structure. The combination with $S_\Psi$ leads to additional insertions of $\rho\cdot t$ from the fixed vertex operators \cref{eq3:YM-Psi-VO}, and these will start additional combs. In this way we obtain multiple colour combs/traces and get the right interactions with gravity states. On the other hand, the system $S_\Psi$ on its own leads to a reduced Pfaffian. The combination with $S_{YM}$ will lead to a different but closely related Pfaffian that incorporates the multi-comb structure. We now describe the complete correlator.

\begin{thm}\label{ym-correlator}  As in \cite{Cachazo:2014xea}, let the sets $g$ index the gluons with vertex operators $u_t,v_t$, and $h$ the gravitons with vertex operators  $u_\epsilon, v_\epsilon$.  To be non-zero, a correlator must contain two fixed vertex operators $u$'s, with the remaining ones being $v$'s. The correlator  is then a sum over all partitions of the gluons into sets $T_1, T_2,\ldots, T_m$, where $\cup_{i=1}^m T_i=g$ and $|T_i|\geq 2$.  Each partition gives rise to the term
\begin{equation}
 \sum_{\substack{c_1 < d_1 \in T_1 \\ \cdots \\ c_m < d_m \in T_m }}  \mathcal{K}(c_1 ,d_1 | T_1) \cdots  \mathcal{K}(c_n ,d_n | T_n)  ~~ \mathrm{Pf} ' \left( \begin{array}{ccc|c}
A_{ab} & A_{ a c_j} &  A_{a d_j }  & (-C^T)_{ab}  \\ 
A_{c_i b} & A_{c_i c_j} &    A_{c_i d_j }  & (-C^T)_{c_i b}  \\
  A_{d_i b} &  A_{d_i c_j} & A_{d_i d_j}  &  (-C^T)_{d_i b} \\  \hline
C_{ab} & C_{a c_j} &   C_{a d_j}  & B_{ab} 
\end{array} \right) ~.
\label{eqn:KPf}
\end{equation}
Here, $a$ and $b$ label gravitons and $c_i,d_i$ label gluons in $T_i$, so that $A_{ab}$ is an $|h|\times|h|$ matrix, $A_{c_ib}$ is an $m\times|h|$ matrix, and $A_{c_ic_j}$ is an $m\times m$ matrix. Moreover, we define
\begin{equation}\label{eqn:Comb-Trace-relation}
\mathcal{K}(i,j | T) =  \sigma_{ji} \; \mathcal{C}(T) \,,
\end{equation}
where $\mathcal{C}(T)$ is $\mathcal{C}_n$ restricted to $g\in T$. The reduced Pfaffian $\pf '$ is defined in \cref{eq:redpf}.
\end{thm}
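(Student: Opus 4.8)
The plan is to compute the correlator by factorising it into its independent worldsheet sectors and matching each against the two building blocks already established: the Pfaffian of \cref{sec3:WS-susy} for the $\Psi$ system, and the comb structure of the Casali--Skinner proposition for the $(\rho,j)$ system. First I would expand each integrated gluon vertex operator using \eqref{eq3:YM-Psi-VO}, writing $v_t=k\cdot\Psi\,\rho\cdot t+v_t^0$ and distributing over the correlator. Since the matter fields $\Psi$ and $(\rho,j)$ are decoupled (the only coupling is through the gauged current $\cG$, which merely fixes the ghost zero modes and the $\e^{ik\cdot X}$ sector common to all terms), each resulting term factorises into a product of a $\Psi$-sector correlator and a $(\rho,j)$-sector correlator, times the universal measure and scattering-equation factors. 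A gluon that selects the cross term $k\cdot\Psi\,\rho\cdot t$ then deposits a single $\rho\cdot t$ into the colour sector together with a $k\cdot\Psi$ into the fermion sector, while a gluon that selects $v_t^0$ contributes only to the colour sector through the level-zero current.

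Next I would evaluate the colour sector. The insertions of $\rho\cdot t$ come from any fixed gluon operators $u_t$ and from those integrated gluons that chose the cross term, while the remaining gluons appear through the current $v_t^0$. Because $\rho$ has no zero modes at genus zero, the correlator is non-zero only when these $\rho\cdot t$ insertions pair off, so their number is some even integer $2m$, and by the argument behind the Casali--Skinner proposition (combined with the level-zero property of the current established in \cref{sec-YM} and \cref{sec-CS}, which forbids loops and multi-traces) the correlator organises into $m$ disjoint combs. Each comb consists of exactly two $\rho$-endpoints and a sequence of current insertions for the interior gluons, so it corresponds precisely to a subset $T_i\subseteq g$ with $|T_i|\geq 2$ together with a choice of endpoints $c_i<d_i\in T_i$, producing the comb colour factor $\cC(T_i)$. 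Summing over the partitions $g=\cup_i T_i$ and endpoint choices reproduces the outer sum in the theorem; the relative factor $\sigma_{d_i c_i}$ in \eqref{eqn:Comb-Trace-relation} is the bookkeeping needed to convert between the Parke--Taylor normalisation of $\cC(T_i)$ and the way the endpoints $c_i,d_i$ enter the enlarged Pfaffian below.

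Then I would turn to the fermion sector, which contains $\epsilon\cdot\Psi\,k\cdot\Psi$ (plus the separate $\epsilon\cdot P$) for every graviton and a single $k\cdot\Psi$ for every comb-endpoint $c_i,d_i$. Performing all Wick contractions exactly as in the pure gravity computation of \cref{sec3:WS-susy}, the gravitons contribute both to the momentum block $A$ and to the polarisation block $B$ (with the $\epsilon\cdot P$ producing the diagonal $C_{aa}=-\epsilon_a\cdot P(\sigma_a)$), whereas the comb-endpoints carry only a $k\cdot\Psi$ and hence appear solely in the enlarged momentum block spanned by $\{a\}\cup\{c_i\}\cup\{d_i\}$, populating $A$ among themselves and with graviton momenta and $C$ through the pairing of their momentum with graviton polarisations. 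The two removed rows and columns of the reduced Pfaffian correspond to the two fixed vertex operators, so the fermion correlator evaluates to exactly the $\pf'$ of the matrix displayed in the statement. Assembling the colour and fermion factors for each partition then yields \eqref{eqn:KPf}.

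The main obstacle I expect is the combinatorial and $\sigma$-factor bookkeeping in the final assembly: one must check that the sum generated by distributing the two terms of each $v_t$ and by pairing the $\rho$-endpoints into combs is in bijection with the theorem's sum over partitions and ordered endpoint pairs $c_i<d_i$, with no over- or under-counting, and that the signs and $\sigma$-factors from the reduced Pfaffian (whose removed indices are the fixed operators) combine with the Parke--Taylor denominators of the $\cC(T_i)$ to give precisely $\mathcal{K}(c_i,d_i|T_i)=\sigma_{d_i c_i}\,\cC(T_i)$. A secondary point to address is that, although $S_{YM}$ is anomalous ($Q^2\neq 0$), the correlator is still well defined as a free-field object with the descent used to generate the integrated operators, so the anomaly does not affect this tree-level computation.
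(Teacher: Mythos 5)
Your proposal is correct and follows essentially the same route as the paper's proof: expand each gluon operator into its $k\cdot\Psi\,\rho\cdot t$ and current pieces, factorise the correlator into the $\Psi$ sector (giving the enlarged Pfaffian, reduced by the two fixed operators) and the colour sector (where the even number of $\rho$ insertions pair into disjoint combs, multi-traces being killed by the level-zero property), and then match the resulting sum over Wick patterns bijectively with the sum over partitions and endpoint pairs in \eqref{eqn:KPf}. The only point the paper treats more explicitly is the independence of the reduced Pfaffian on which operators are fixed — it verifies the gluon-fixed and mixed cases directly on the support of the scattering equations rather than appealing only to BRST invariance — but this is a refinement of, not a departure from, your argument.
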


The proof is given in \cref{sec:ecomb-proof}. This correlator reproduces the main building block of the CHY formula for Einstein-Yang-Mills amplitudes in \cite{Cachazo:2014xea}.  Although not quite in the same form, the equivalence can easily be seen from equations (3.16) and (3.17) of \cite{Cachazo:2014xea} and this form is more natural from its derivation as a correlator.

\subsection{\texorpdfstring{$S_{CS,\Psi}$}{S-CS,Psi}}\label{sec3:S_CS-Psi}

While $S_{YM,\Psi}$ gives the correct  amplitude, its BRST quantisation is  inconsistent. We can obtain the same structure from $S_{CS,\Psi}$ by combining the worldsheet theories $S_\Psi$ and $S_{CS}$, which has the advantage of being anomaly free but the disadvantage of containing two types of gluons.

As for $S_{YM,\Psi}$ we gauge  the sum of spin 3/2 currents $\cG_\Psi=P\cdot \Psi$ and $\cG_{CS}=\rho\cdot\left(\frac{\scriptstyle 1}{\scriptstyle 2} [\rho,\tilde\rho]+ [q,y]\right)$, leading to the action 
\begin{equation}
S_{CS_\Psi}=\int \Psi\cdot\dbar\Psi + \tilde \rho\cdot \dbar \rho + q\cdot \dbar y +\chi\left(P\cdot \Psi+ \rho\cdot\left(\frac{\scriptstyle 1}{\scriptstyle 2} [\rho,\tilde\rho]+ [q,y]\right)\right).
\end{equation}
Now the Lie-algebra valued fermion $\rho$ is complex (i.e. not equal to $\tilde \rho$), unlike the previous case of $S_{YM,\Psi}$.  This  will change the physical content of the model. 

The gauge fixing of $\chi$ introduces just one set of ghosts $(\beta,\gamma)$, and we find the standard fixed currents for $S_{CS}$  and $S_\Psi$,
\begin{equation}
u_t=\delta(\gamma)\rho\cdot t\, , \quad \tilde u_t=\delta(\gamma)\tilde\rho \cdot t \, , \quad u_ \epsilon=\delta(\gamma)\epsilon\cdot \Psi\,.
\end{equation}
The BRST descent then leads to the following currents 
\begin{equation}
 v_t=k\cdot\Psi \rho\cdot t + v^0_t\,,
\qquad \tilde v_t=k\cdot \Psi \tilde \rho\cdot t + \tilde v^0_t\, ,\quad v_\epsilon=\epsilon\cdot P+\epsilon\cdot\Psi k\cdot\Psi,
\end{equation}
where $v^0_t$ and $\tilde v^0_t $ denote the original $S_{CS}$ integrated vertex operators. Let us highlight again that both
 $v_t$ and $\tilde v_t$ acquire a new term in $\Psi$. 

To impose GSO symmetry, we require invariance under flipping the sign of the  fields $\rho, \tilde{\rho}, q,y, \Psi, \chi $ and the corresponding ghosts.

Since we have untilded vertex operators $u_t,v_t$, and tilded ones $\tilde u_t, \tilde v_t$, the correlator will depend not only on the number of gluonic vertex operators versus gravity ones $u_\epsilon, v_\epsilon$, but also on the choice of whether the gluonic operators are of untilded or tilded type. Recall from the previous section that, for the theory $S_{CS}$ on its own, the only non-vanishing correlators were those with a single untilded operator and this led to a single comb colour structure that is equivalent to a single trace term.  This followed because of the need to have the same number of $\rho$'s and $\tilde \rho$'s in a non-trivial correlator and a single $\tilde \rho$ could only arise in one or both of the two fixed vertex operator.  Now single $\tilde\rho$'s appear in $\tilde v_t$ and this essentially represents the coupling to gravity. Thus  the coupling to gravity introduces multiple trace terms, with the interaction between each single trace structure being mediated by gravity.  It is easy to see that with the $S_{CS,\Psi}$ system we can now have as many untilded vertex operators as we like with their number corresponding precisely to the number of traces.

\begin{thm}\label{cs-correlator}  Let the set $g$ index the gluons and $h$ the gravitons. To be non-vanishing, a $S_{CS,\Psi}$ correlator must have two fixed vertex operators, with the remaining ones integrated. The correlator of such a collection of vertex operators is  a sum over all partitions of the gluons into sets $T_1, T_2,\ldots, T_m$, where $m$ is the number of untilded gluonic vertex operators, and such that there is only one such vertex operator per $T_i$, $\cup_{i=1}^m T_i=g$, $|T_i|\geq 2$. Each allowed partition gives a contribution equal to \cref{eqn:KPf}.
\end{thm}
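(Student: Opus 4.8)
The plan is to reduce Theorem \ref{cs-correlator} to the already-established Theorem \ref{ym-correlator} by comparing the two correlators sector by sector, exploiting that both theories are free after gauge fixing. First I would write the gauge-fixed $S_{CS,\Psi}$ correlator and note that the action decouples into the $(X,P)$ system, the worldsheet fermions $\Psi$, the complex Lie-algebra fermions $(\rho,\tilde\rho)$, the bosonic $(q,y)$ system, and the ghosts. Integrating out $X$ produces the overall momentum-conserving delta function and localises $P$ on $P(\sigma)=\sum_i k_i\,\d\sigma/(\sigma-\sigma_i)$ exactly as in \cref{sec2:review_ambistrings}, so that the residues $\epsilon_i\cdot P(\sigma_i)$ feed the diagonal $C_{ii}$ entries. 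The correlator then factorises into a $\Psi$-contraction factor and a colour factor built from the $(\rho,\tilde\rho,q,y)$ contractions.

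The first key step is to observe that the $\Psi$-dependence of the $S_{CS,\Psi}$ vertex operators is \emph{identical} to that of $S_{YM,\Psi}$: the graviton operators $u_\epsilon,v_\epsilon$ coincide, and the new $\Psi$-terms in the gluon operators $v_t,\tilde v_t$ are both $k\cdot\Psi$ multiplied by a colour factor ($\rho\cdot t$ or $\tilde\rho\cdot t$), exactly as in \cref{eq3:YM-Psi-VO}. Hence the Wick contractions of the $\Psi$ fields generate precisely the reduced Pfaffian appearing in \cref{eqn:KPf}, with the gluon rows and columns supplied by the $k\cdot\Psi$ seed insertions and the two removed rows and columns fixed by the two $\delta(\gamma)$ operators. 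This part can then be inherited verbatim from the proof of Theorem \ref{ym-correlator} in \cref{sec:ecomb-proof}.

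The second, harder step is the colour sector. Here I would show that the $(\rho,\tilde\rho)$ contractions, with loops cancelled by the $(q,y)$ bosons, reproduce the comb colour factors $\cC(T)$, and crucially that the number of resulting combs equals the number of untilded operators $m$. The mechanism is that a non-vanishing correlator requires equal numbers of $\rho$ and $\tilde\rho$ insertions, since neither develops zero modes at genus zero; each untilded operator supplies a bare $\rho$ that terminates a chain of $\tilde v^0_t=t\cdot([\rho,\tilde\rho]+[q,y])$ contractions, so each untilded operator seeds exactly one comb, while the single $\tilde\rho$'s now liberated by the gravitational term $k\cdot\Psi\,\tilde\rho\cdot t$ in $\tilde v_t$ allow these chains to close off into separate traces. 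Using that the $\tilde v^0_t$ generate a level-zero current algebra, so that no nontrivial trace survives a loop (as in the Casali--Skinner proposition), I would match the chains to the combs $\cC(T_i)$ for each block $T_i$, with the relation $\mathcal{K}(i,j|T)=\sigma_{ji}\,\cC(T)$ of \cref{eqn:Comb-Trace-relation} absorbing the propagator factor from the $\Psi$ seed. Assembling the $m$ combs with the $\Psi$-Pfaffian then yields \cref{eqn:KPf} for each admissible partition, with the constraint $|T_i|\geq 2$ following from the need for at least one contraction per comb.

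The main obstacle I anticipate is precisely this colour-sector bookkeeping: one must show that the complex-fermion comb mechanism of $S_{CS}$ produces the same structure constants, signs and normalisations as the real-fermion $S_{YM}$ mechanism used in Theorem \ref{ym-correlator}, and in particular that the partition rule ``one untilded operator per block, $|T_i|\geq 2$'' emerges correctly from the $\rho/\tilde\rho$ counting together with the $(q,y)$ loop cancellation. Once this dictionary between the two colour systems is established, the equality of the per-partition contributions with \cref{eqn:KPf} follows, and the theorem reduces to Theorem \ref{ym-correlator}.
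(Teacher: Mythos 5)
Your proposal is correct and takes essentially the same route as the paper: the paper likewise reduces the $S_{CS,\Psi}$ correlator to Theorem \ref{ym-correlator}, noting that the $\Psi$-contractions (and hence the reduced Pfaffian of \cref{eqn:KPf}) are unchanged, and derives the partition restriction from precisely your $\rho$/$\tilde\rho$ counting — single $\tilde\rho$'s supplied by the gravitational term of $\tilde v_t$ terminate the combs, the level-zero current algebra together with the $(q,y)$ bosons kills loop contractions, so the number of traces equals the number of untilded operators with one per block $T_i$.
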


Thus the correlator is the same as for $S_{YM,\Psi}$, except that there is a restriction on the allowed partitions of the gluons into traces.

\section{New ambitwistor string theories}\label{sec3:NewModels}
We can now assemble the full table of theories by combining the various possible choices of matter models on the left and right.  These can be identified with their  corresponding space-time theories by comparing the correlators to the formulae of CHY \cite{Cachazo:2014xea}, and this results in table \ref{models}. Hopefully the acronyms for the models are self-explanatory except perhaps that BS denotes the bi-adjoint scalar $\phi^{aa'}$, where $a$ and $a'$ are respectively indices for the Lie algebras of $\SU(N) $ and $\SU(N')$, with action
$$
S_{\mathrm{BS}}= \int d^Dx \left( -\frac{1}{2}\p_\mu\phi^{aa'}\,\p^\mu\phi^{aa'} + \frac{1}{6}\, \phi^{aa'}\phi^{bb'}\phi^{cc'}f^{abc}f^{a'b'c'}\right)\, ,  
$$
where $f^{abc}$ and $f^{a'b'c'}$ are the structure constants of $\SU(N) $ and $\SU(N')$ respectively.

\begin{table}[ht]{\small
\begin{tabular}{|c||l|l|l|l|l|}
  \hline
  \diagbox{$S^L$}{$S^R$}& $S_\Psi$ & $S_{\Psi_1,\Psi_2}$ & $S_{\rho,\Psi}^{(m')}$ & $S_{YM,\Psi}^{(N')}$ & $S_{YM}^{(N')}$\\ \hline \hline
  $S_\Psi$ & E &  & &  & \\ \hline 
  $S_{\Psi_1,\Psi_2}$ & BI & Galileon &  &  & \\ \hline  
  $S_{\rho,\Psi}^{(m)}$ & EM$\big|_{\text{U}(1)^{m}}$ & DBI & EMS$\big|_{\text{U}(1)^{m}\otimes \text{U}(1)^{m}}$ &  & \\ \hline 
  $S_{YM,\Psi}^{(N)}$ & EYM & extended DBI & EYMS$\big|_{\text{SU}(N)\otimes \text{U}(1)^{m'}}$ & EYMS$\big|_{\text{SU}(N)\otimes \text{SU}(N')}$ & \\ \hline  
  $S_{YM}^{(N)}$ & YM &  NLSM & YMS$\big|_{\text{SU}(N)\otimes \text{U}(1)^{m'}}$ & gen. YMS$\big|_{\text{SU}(N)\otimes \text{SU}(N')}$ & BS
  \\ \hline 
 \end{tabular}}
 \caption{Theories arising from the different choices of matter models.} \label{models}
\end{table}

For convenience, we list here the actions of the more exotic theories:
\begin{subequations}
\begin{align}
 S_{\mathrm{Galileon}}&=\int d^Dx \left( -\frac{1}{2}\p_\mu\phi\,\p^\mu\phi + \sum_{m=3}^\infty g_m \phi \text{det}\{\p^{\mu_i}\p_{\nu_j}\phi\}_{i,j=1}^{m-1} \right)\,,\\
 S_{\mathrm{BI}}&=\int d^Dx\; \frac{1}{\ell^{2}}\left(\sqrt{-\text{det}(\eta_{\mu\nu}-\ell^2F_{\mu\nu})}-1\right)\,,\\
 S_{\mathrm{DBI}}&=\int d^Dx \; \frac{1}{\ell^{2}} \left(\sqrt{-\text{det}\left(\eta_{\mu\nu}-\ell^2 \p_\mu\phi^a\p_\nu\phi^a -\ell F_{\mu\nu}\right)}-1\right)\, ,\\
 S_{\mathrm{NLSM}}&=\int d^Dx\left( -\frac{1}{2}\text{tr}\left((\mathbb{1}-\lambda^2\Phi)^{-1}\p_\mu\Phi(\mathbb{1}-\lambda^2\Phi)^{-1}\p^{\mu}\Phi\right)\, \right)\,.
\end{align}
\end{subequations}
For the non-linear $\sigma$-model, $\Phi=\phi^at^a$, and in the Galileon action, $g_m$ denote freely prescribable parameters. However, the CHY-amplitudes only contain one parameter. The theory that is singled out by the ambitwistor model and the CHY formulae is the one described in \cite{Cheung:2014dqa,Cachazo:2014xea,Hinterbichler:2015pqa}, which has smoother soft behaviour than the generic ones.\\

Table \ref{models}, showing how the theories are determined in terms of a pair of worldsheet systems, is a remarkable manifestation of the notion of double copy. This notion has been explored mostly in the context of gravity amplitudes, which are obtained as the double copy of gauge theory ones \cite{Kawai:1985xq,Bern:2008qj}. In the formalism of the scattering equations, this is the double copy of Pfaffian factors, and in ambitwistor string theory, this is the double copy of the worldsheet system $S_\Psi$, as in table \ref{models0}. The amplitude formulae of ref.~\cite{Cachazo:2014xea} and our results extend this notion to a range of other theories. Regarding the relation to previous work, we should mention that a double copy construction for Einstein-Yang-Mills amplitudes was first presented in \cite{Bern:1999bx} for the single trace contribution, and in \cite{Chiodaroli:2014xia} for the complete amplitude, with results extending to loop level. These double copy constructions are based on the colour-kinematics duality \cite{Bern:2008qj,Bern:2010ue}, whose relation to the scattering equations has been explored in \cite{Cachazo:2013iea,Monteiro:2013rya,Naculich:2014rta}.\\

In the table \ref{models}, we have only used $S_{YM}$.  Although this is sufficient to produce the correct tree-level amplitudes, it is an anomalous matter system and so has no hope to be extended beyond tree-level, and indeed its meaning as a string theory is unclear even at tree level.  We can obtain the same tree-amplitudes up to combinatorial factors by use of the comb system $S_{CS}$ , which is not anomalous.  However, this does lead to a doubling of the gauge degrees of freedom as described below in detail for the Einstein Yang-Mills system and bi-adjoint scalar.    

In table \ref{table-VO} we list the vertex operators in each model and the central charges. It can be seen that many models have a critical dimension, in which the central charge vanishes and for which there is some reasonable hope that loop integrands can be described via these theories, provided their one-loop correlation functions are modular invariant.\footnote{The last requirement of modular invariance might not be necessary, see \cref{chapter6}.}

\subsection{Einstein Yang-Mills and \texorpdfstring{$T^*$}{T*}YM}

The worldsheet model that we discussed in the context of Einstein Yang-Mills theory, $S_{CS,\Psi}$, has a consistent quantisation. On the other hand, it does not correspond strictly to the building block of Einstein-Yang-Mills amplitudes, 
because only trace/comb structures consistent with the choice of untilded vertex operators are allowed. Attempts to find a theory that reproduces this correlator seem to lead back to the anomalous $S_{YM,\Psi}$ system. 

Since the theory $S_{CS,\Psi}$ presents no problems, and has correlators which match part of the Einstein-Yang-Mills building block, it is natural to ask whether it is related to a known theory. This theory must contain two types of gluons, associated to tilded and untilded vertex operators, and the untilded type must give the number of allowed multiple trace terms in an amplitude. These conditions are satisfied by the following space-time action for the gauge field
\begin{equation}
\label{eq:tYM}
S_{T^*\text{YM}} = \int d^D x \;\tr( a_\mu\, D_\nu F^{\mu\nu})\,.
\end{equation}
The field $a_\mu$ is a Lagrange multiplier enforcing the Yang-Mills equations $D_\nu F^{\mu\nu}=0$, and the action can be seen as a linearisation of the Yang-Mills action, $A_\mu\to A_\mu+a_\mu$. The field $A_\mu$ corresponds to the tilded degrees of freedom, and the field $a_\mu$ corresponds to the untilded ones. Since the propagator of this action connects $a_\mu$ to $A_\mu$ and the vertices contain a single $a_\mu$, the Feynman rules and a straightforward graph-theoretic argument show that there is one and only one $a_\mu$ external field per trace, also when the system is minimally coupled to gravity. This model therefore describes a linearised Yang-Mills field $a$ propagating on a full Yang-Mills background for the field $A$ with curvature $F$, and we thus refer\footnote{Here $a$ is canonically conjugate to $F$ hence the name $T^*$YM as opposed to $T$YM.} to it as $T^*$YM.  Whilst this should give correct Yang-Mills amplitudes at one loop, it has no higher loop amplitudes in the pure gauge sector. In its critical dimension $d=10$, we would thus expect the ambitwistor string model to give a valid expression for the one-loop integrand for Yang-Mills.

\subsection{Bi-adjoint scalar}

The use of the worldsheet system $S_{CS}$, with its two types of coloured currents, ${\tilde v}$ and $v$, is the reason for the Lagrange multiplier-type action \eqref{eq:tYM}. An even simpler example is the bi-adjoint scalar theory, BS in \cref{models}. In this case, we can easily apply the procedure of \cite{Adamo:2014wea} and obtain explicitly the equations of motion. As in that paper, which was concerned with supergravity, the space-time background fields modify the worldsheet theory only via the constraints. The deformation of the constraints in the bi-adjoint scalar theory is particularly simple: the deformed ambitwistor constraint becomes
\begin{equation}
\mathcal{H} = P^2  \qquad \to \qquad \mathcal{H}_{(\phi,\Phi)}= P^2 
+ \Phi^{aa'}{\tilde v}^a{\tilde v}'^{a'}  + \phi^{aa'}{ v}^a{ v}'^{a'}\,,
\end{equation}
where we introduced currents for each of the two independent groups SU($N$) and SU($N'$). The equations of motion are obtained as anomalies obstructing the vanishing of the constraint at the quantum level,
\begin{align}
\mathcal{H}_{(\phi,\Phi)}(\sigma)\,\mathcal{H}_{(\phi,\Phi)}(0) \sim \frac{1}{\sigma^2} 
\Big( & (2\,\partial^\mu\partial_\mu\Phi^{aa'} +f^{abc}f^{a'b'c'}\Phi^{bb'}\Phi^{cc'}) \, {\tilde v}^a {\tilde v}'^{a'} \nonumber \\
&+(2\,\partial^\mu\partial_\mu\phi^{aa'} +2\,f^{abc}f^{a'b'c'}\Phi^{bb'}\phi^{cc'}) \, v^a v'^{a'} \Big)(0)  \\
+ & \;\;\text{simple} \;\; \text{pole}\,.\nonumber
 \end{align}
If the equations of motion hold, there is no double pole and in fact the OPE is finite, since there can be no simple pole in the self-OPE of a bosonic operator in the absence of higher poles.
The space-time action associated to these equations of motion takes the Lagrange-multiplier form
\begin{equation}
S_{\text{BS}} = \int d^D x \; \phi^{aa'} \left(\partial^\mu\partial_\mu\Phi^{aa'} +\frac{1}{2}\, f^{abc}f^{a'b'c'}\Phi^{bb'}\Phi^{cc'}\right)\, .
\end{equation}
It should be seen as the analogue of the gauge theory action \eqref{eq:tYM}.

\begin{table}[ht]  {\small
\begin{tabular}{|l||l|l|}\hline
 Theories & Integrated vertex operators & Central charge $c$  \\ \hline\hline
 E &  $V_h=\left(\epsilon\cdot P + k\cdot\Psi\epsilon\cdot\Psi\right)\left(\tilde\epsilon\cdot P + k\cdot\tilde\Psi\tilde\epsilon\cdot\tilde\Psi\right)$ & $3(d-10)$  \\ \hline
 \multirow{2}{*}{EM} & $V_h,\, V_\gamma$ & \multirow{2} {*}
 {$3(d-10+\frac{m}{6})$}  \\
 & $V_\gamma=\left(k\cdot \Psi\,t\cdot \rho \right) \left(\tilde\epsilon\cdot P + k\cdot\tilde\Psi \tilde\epsilon\cdot\tilde\Psi\right)$ & \\ \hline
 \multirow{2}{*}{EMS} & $V_h,\,V_\gamma,\,V_{\tilde\gamma},\,V_S$ & \multirow{2}{*}{$3(d-10+\frac{m+\tilde{m}}{6})$} \\ 
 & $V_S=\left(k\cdot \Psi \,t\cdot \rho\right)\left(k\cdot \tilde\Psi t\cdot\tilde\rho\right)$  &\\ \hline
 BI & $V_{BI}=\left(k\cdot \Psi_1 k\cdot \Psi_2\right)\left(\tilde\epsilon\cdot P + k\cdot\tilde\Psi \tilde\epsilon\cdot\tilde\Psi\right)$ & $\frac{1}{2}\left(7d-38\right)$ \\ \hline
 Galileon & $V_G=\left(k\cdot \Psi_1 k\cdot \Psi_2\right)\left(k\cdot \tilde\Psi_1 k\cdot \tilde\Psi_2\right)$ & $4d-8$ \\ \hline
 \multirow{2}{*}{DBI} & $V_{BI},\,V_{S_{BI}}$ & \multirow{2}{*}{$\frac{1}{2}(7d+m-38)$} \\
  & $V_{S_{BI}}=\left(k\cdot \Psi_1 k\cdot \Psi_2\right)\left(k\cdot \tilde\Psi\,t\cdot \tilde\rho\right)$ & \\ \hline
 \multirow{2}{*}{$T^*$YM}&$V_g=\left(\half t\cdot [\rho,\rho]\right)\left(\tilde\epsilon\cdot P + k\cdot\tilde\Psi\tilde\epsilon\cdot\tilde\Psi\right)$ & \multirow{2}{*}{$\frac{5}{2}(d-12)$} \\ 
 & $V_{\tilde{g}}=\left(t\cdot\left( [\rho,\tilde\rho]+ [q,y]\right)\right)\left(\tilde\epsilon\cdot P + k\cdot\tilde\Psi\tilde\epsilon\cdot\tilde\Psi\right)$ & \\ \hline
 \multirow{3}{*}{E$T^*$YM} & $V_h,\,V_g,\,V_{\tilde{g}}$ & \multirow{3}{*}{$3(d-10)$}  \\
 &$V_g=\left(k\cdot\Psi\,t\cdot \rho +\half t\cdot [\rho,\rho]\right)\left(\tilde\epsilon\cdot P + k\cdot\tilde\Psi\tilde\epsilon\cdot\tilde\Psi\right)$ &\\ 
 & $V_{\tilde{g}}=\left(k\cdot\Psi \,t\cdot \tilde\rho +t\cdot\left( [\rho,\tilde\rho]+ [q,y]\right)\right)\left(\tilde\epsilon\cdot P + k\cdot\tilde\Psi\tilde\epsilon\cdot\tilde\Psi\right)$ & \\ \hline
 \multirow{2}{*}{NLSM} & $V=\left(\half t\cdot [\rho,\rho]\right)\left(k\cdot \tilde\Psi_1 k\cdot \tilde\Psi_2\right)$ & \multirow{2}{*}{$3d-19$}  \\
 & $\tilde{V}=\left(t\cdot\left( [\rho,\tilde\rho]+ [q,y]\right)\right)\left(k\cdot \tilde\Psi_1 k\cdot \tilde\Psi_2\right)$ &\\\hline
\end{tabular}}
\caption{Table of the different theories and their integrated vertex operators.}
\label{table-VO}
\end{table}

\section{Discussion}\label{sec3:Discussion}
We have seen compelling evidence in this chapter for the universality of ambitwistor strings for scattering amplitudes in massless theories. In particular, we have extended the ambitwistor string to theories beyond the theories arising from string theory. This highlights the field theory nature of the ambitwistor string from a very different perspective, and strongly indicates that there exist, at least perturbatively, better ways of understanding massless field theories.\\

The work presented in this chapter opens up many possible directions for future exploration, and we will briefly mention a few of them in this discussion.

A very interesting question, concerning these models, is if they allow for an extension to loop amplitudes by taking the correlation functions on higher genus Riemann surfaces as described in \cite{Adamo:2013tsa}. In particular, this requires necessarily that the models are anomaly free and have a critical dimension where the central charge vanishes. We have included the central charges for various models that are not anomalous in \cref{table-VO}, and it can be seen that indeed many models have a critical dimension. Criticality can often be achieved by adding Maxwell fields for low enough dimension.  This suggests that a number of these models might give rise to plausible string expressions for corresponding loop integrands such as given in \cite{Adamo:2013tsa} for the type II theory in 10 dimensions.  However, an independent criterion is that the loop integrand so obtained should be modular invariant and this may well exclude many of the critical models as it does in conventional string theory. A possible way to circumvent this requirement is opened up by a different, ambitwistor string inspired approach to loop amplitudes that we will explore in \cref{chapter6}. We will see that loop amplitudes for gravity, Yang-Mills theory and the bi-adjoint scalar can be obtained alternatively from nodal Riemann spheres, and thus it would be interesting to extend this approach to the full list of (anomaly-free) massless theories.

A related question concerns the existence of further vertex operators, and therefore further sectors of these theories.  In 10 dimensions, following \cite{Adamo:2013tsa}, one can introduce a spin field $\Theta^\alpha$ associated to each $\Psi$ field and use these to introduce further vertex operators that will correspond to space-time fields with spinor indices.  For the type II Einstein theory these give rise to the Ramond sector vertex operators \cite{Adamo:2013tsa} and it can be seen that the same procedure can be applied more generally to some of the models here, particularly to the  Einstein $T^*$YM models.  Following the same procedure one then extends the Einstein NS sector to include the Ramond sectors of type II gravity theories.  However, we can see that the $T^*$YM vertex operators can only be extended in this way on the one side corresponding to the spin operator constructed from the $\Psi$ in the Yang-Mills vertex operator.  Thus one supersymmetry acts trivially on the Yang-Mills and hence is degenerate (it does not square to provide the Hamiltonian on the Yang-Mills fields). \\

By extending the worldsheet matter fields we have generated new possible couplings to space-time fields.  It would be interesting  to explore whether these couplings can be made consistent in the fully nonlinear regime as described in \cite{Adamo:2014wea,Chandia:2015sfa}.

There remain other formulae based on the scattering equations, for which  an underlying ambitwistor string theory has not yet been found. It would for example be  interesting to find ambitwistor strings that give rise to the class of formulae with massive legs \cite{Dolan:2013isa,Naculich:2014naa,Naculich:2015zha, Naculich:2015coa}, and that for ABJM theory \cite{Huang:2012vt,Cachazo:2013iaa}, in particular in the light of the four-dimensional ambitwistor string \cite{Geyer:2014fka,Lipstein:2015vxa} discussed in \cref{chapter5} and the ABJM twistor string \cite{Engelund:2014sqa}.\\

Perhaps the most irritating issue is that we have not been able to find an Einstein-Yang-Mills model that is anomaly-free without unwanted linearised modes.  Conventional string theory produces such amplitudes in open string theory and in closed string heterotic models.  However, the ambitwistor heterotic string has unphysical gravity amplitudes and so far there has been no ambitwistor analogue of open strings. Nevertheless the $T^*$YM model is likely to make sense and provide the correct amplitudes at one loop if modular, although the pure gauge sector does not have loop amplitudes beyond one loop.\\

\chapter{Ambitwistor Strings, Soft Theorems and the Geometry of Null Infinity} \label{chapter4}


Up to this point, the discussion of this thesis has focussed mainly on ambitwistor strings as the mathematical framework underlying scattering amplitudes. As briefly indicated above, a spectacular recent development indicates their importance ina  much wider context by formulating ambitwistor strings on a curved background \cite{Adamo:2014wea}. This provides an important proof that ambitwistor strings can be used to obtain non-linear information, like the fully non-linear Einstein equations, and can provide insights beyond scattering amplitudes. 

In this chapter, we will expand on another extension, relating the asymptotic symmetries of a space-time to scattering amplitudes as natural observables in the `bulk'. To motivate this, recall that diffeomorphism invariance of general relativity (and quantum gravity) implies that there are no local observables. In other words, every physical observable has to be global, and therefore be specified in terms of a theory on the boundary of space-time in some suitable sense. This is the fundamental idea behind the holographic principle, which is best known in its application to the AdS/CFT correspondence: quantum gravity in the full space-time is equivalent to a theory on the boundary. In particular, in the case of Anti-de Sitter spaces in $d+1$ dimensions, the asymptotic symmetry group SO$(2,d)$ is exactly the conformal group of the Conformal Field Theory on the boundary in $d$ dimensions. This highlights the importance of the study of asymptotic symmetries, and provides an excellent example for their relation to 'bulk' observables.

In what follows, we will apply this general idea to the case of asymptotically flat space-times, and focus on the relation between observables of massless bulk theories\footnote{The most natural diffeomorphism invariant observables in a quantum field theory, both from a theoretical and an experimental point of view, are scattering amplitudes. Note that they are also intrinsically holographic, being defined with respect to asymptotic states. } and the symmetries of null infinity $\scri$. 

Ambitwistor strings are a natural candidate theory for explaining and probing this duality due to their underlying geometric structure. In general formulated over any Cauchy hypersurface, there exists a particularly suitable representation for the study of asymptotic symmetries that identifies ambitwistor space with the cotangent bundle at null infinity. Ambitwistor strings can therefore be formulated entirely with respect to the boundary $\scri$, making them ideal candidates for the boundary theory dual to the bulk quantum gravity {\it in a regime where classical supergravity is a valid approximation}.\footnote{At this point it is worth emphasising that ambitwistor strings should not be understood as the full holographic dual of quantum gravity in an asymptotically flat space-time, but rather, as highlighted above, as a suitable effective theory in a limit where classical supergravity is a valid effective field theory in the bulk.}\\

The purpose of this chapter is to explore the duality between the asymptotic symmetries of an asymptotically flat space-time and the low-energy behaviour of a theory in the context of ambitwistor strings, rather than Ashtekar's Fock space of radiative modes used in \cite{Strominger:2013lka,Strominger:2013jfa,He:2014laa,Cachazo:2014fwa} . It also expands on the relationship between asymptotic symmetries of an asymptotically flat space-time and the vertex operators in the theory, and how a soft momentum eigenstate  becomes an extended BMS generator at leading and subleading order.  Moreover, we will see that these ideas are realised straightforwardly in the more twistorial four dimensional ambitwistor strings in \cref{sec5:scri}.\\

This chapter is structured as follows: After a brief review of the asymptotic symmetries, soft limits and the Ward identities linking them in \cref{sec4:review-scri}, we give a description of the geometry of ambitwistor space at null infinity in \cref{sec:soft_geom}, and describe the Hamiltonian lift of asymptotic symmetries of $\scri$ to ambitwistor space. This geometric picture is naturally encoded in the ambitwistor string since its action is based on the symplectic potential. The singular components of the OPEs in the ambitwistor worldsheet theory thus correspond directly to the Poisson structure on the cotangent bundle. Therefore, the Hamiltonian generating diffeomorphisms of $\scri$ induces directly the action of the symplectic diffeomorphism in the ambitwistor string model, see \cref{sec:soft_model}. We will then implement the Ward identity relating soft theorems and asymptotic symmetries in the ambitwistor worldsheet CFT in \cref{sec:soft_BMS} by expanding vertex operators in a low-energy limit and identifying the leading and subleading contribution\footnote{Recall that general vertex operators arise from Hamiltonians generating diffeomorphisms of ambitwistor space that determine the scattering from past to future null infinity.}  as the generators of symmetries at $\scri$.
This provides a beautifully geometric interpretation of the relation of asymptotic symmetries to the infrared behaviour of the bulk theory, and more generally gives an explicit perturbative correspondence between the scattering of null geodesics and that of the gravitational field via ambitwistor string theory. 


\section{Review of asymptotic symmetries and soft limits}\label{sec4:review-scri}
It has long been understood that infrared behaviour in gravity is related to supertranslation ambiguities in the choices of coordinates at null infinity \cite{Ashtekar:1981sf,ashtekar1987asymptotic}. In a recent series of papers \cite{Strominger:2013lka,Strominger:2013jfa,He:2014laa,Cachazo:2014fwa}, Strominger and coworkers have proposed a new way of understanding the Weinberg soft limit theorems as a Ward identity associated to the BMS group at null infinity and used the approach to suggest new theorems for the subleading terms in the soft limit.  

In an intriguing recent paper \cite{Adamo:2014yya}, Adamo, Casali and Skinner proposed a string model at null infinity for four dimensions to provide an explanation for these ideas.  They also suggested a link with ambitwistor strings that extends their ideas to arbitrary dimension, particularly in view of the recent proof of the subleading soft limit results in \cite{Schwab:2014xua} using the CHY formulae \cite{Cachazo:2013gna, Cachazo:2013hca,Cachazo:2013iea} as these formulae arise from ambitwistor string theory in arbitrary dimensions \cite{Mason:2013sva}. 

In this section, we will provide a general review of the background material, including a discussion of soft limits, the BMS group and their relation as proven by Strominger et al.

\paragraph{Soft limits.}
Several decades ago, Weinberg showed that photon and graviton
amplitudes behave in a universal way when one of the external
particles with momentum $s$ becomes soft \cite{Weinberg:1965nx}: 
\begin{equation}\label{eq4:softlimits}
\mathcal{A}_{n+1}\rightarrow\sum_{a=1}^{n}\frac{\epsilon_{a}\cdot
  k_{a}}{s\cdot
  k_{a}}\mathcal{A}_{n},\qquad \mathcal{M}_{n+1}\rightarrow\sum_{a=1}^{n}\frac{\epsilon_{\mu\nu}k_{a}^{\mu}k_{a}^{\nu}}{s\cdot
  k_{a}}\mathcal{M}_{n}\, , \qquad \mbox{as } s\rightarrow 0 \,.
\end{equation}
Recently, Cachazo and Strominger analysed subleading and sub-subleading terms in the soft limit of tree-level graviton amplitudes in four dimensions \cite{Cachazo:2014fwa}, finding that
\begin{equation}\label{eq4:soft-subleading}
 \cM_{n+1}=\left(S^{(0)}+S^{(1)}+S^{(2)}\right)\cM_n+\cO(s^2)\,,
\end{equation}
where 
\begin{align*}
 S^{(0)}=\sum_{a=1}^{n}\frac{(\epsilon\cdot k_a)^2}{s\cdot k_{a}}, \qquad S^{(1)}=\frac{\epsilon_{\mu\nu}k_{a}^{\mu}s_{\lambda}J_{a}^{\lambda\nu}}{s\cdot k_{a}}, \qquad
 S^{(2)}=\frac{\epsilon_{\mu\nu}(s_{\lambda}J_{a}^{\lambda\mu})(s_{\rho}J_{a}^{\rho\nu})}{s\cdot k_{a}}\,.
\end{align*}
A similar subleading factor was found by Casali for tree-level Yang-Mills amplitudes in four dimensions \cite{Casali:2014xpa}:
\begin{equation}\label{eq4:soft-subleading-YM}
 \cA_{n+1}=\left(S^{(0)}+S^{(1)}\right)\cA_n+\cO(s)\,,
\end{equation}
where $S^{(0)}$ denotes the Weinberg soft limit, and $S^{(1)}$ is the subleading contribution,
\begin{align*}
 S^{(0)}=\sum_{a\text{ adj. }s}\frac{\epsilon\cdot k_{a}}{s\cdot k_{a}}, \qquad S^{(1)}=\sum_{a\text{ adj. }s} \frac{\epsilon_{\mu}s_{\nu}J_{a}^{\mu\nu}}{s\cdot k_{a}}\,.
\end{align*}
Subleading soft limits of gauge and gravity amplitudes were previously studied in \cite{Low:1958sn,Burnett:1967km,Larkoski:2014hta} and \cite{Jackiw:1968zza,Gross:1968in,White:2011yy}, respectively. Schwab and Volovich subsequently proved these subleading soft limit formulae for tree-level Yang-Mills and gravity amplitudes in any spacetime dimension using the CHY formulae \cite{Schwab:2014xua}. Loop corrections to the subleading soft limits were subsequently studied using dimensional regularisation in \cite{Bern:2014oka,He:2014bga,Cachazo:2014dia}. 

Strominger and collaborators have shown that these soft limits are a consequence of the asymptotic symmetry group of Minkowski space \cite{Strominger:2013lka,Strominger:2013jfa,He:2014laa}, known as the BMS group (Bondi-van der Burg-Metzner-Sachs) \cite{Bondi:1962px,Sachs:1962wk}, see also \cite{Hollands:2003ie,Hollands:2003xp,Tanabe:2011es,Kapec:2015vwa} in higher dimensions.

\paragraph{BMS symmetries.}
In an asymptotically flat and simple space-time, the conformal boundary $\scri$ is a null hypersurface and decomposes into two disjoint sets $\scri=\scri^-\cup\scri^+$, corresponding to past and future null infinity. Moreover, they have the topology of a $d$-dimensional light-cone, $\scri^{\pm}\cong\mathbb{R}\times S^{d-2}$. An asymptotic flat metric can be written in a neighbourhood of $\scri$ in Bondi gauge as
\begin{equation}\label{eq4:metric}
 \d s^2 =-\d u^2 -2 \d u\d r + r^2 \gamma_{AB}\d \theta^A\d\theta^B + \cO(r^{-1})\,,
\end{equation}
where $u$ and $\theta$ are null coordinates on $\mathbb{R}$ and $S^{d-2}$ respectively, and $r$ denotes a radial coordinate. The symmetry group of $\scri^{\pm}$ in an asymptotically flat space-time is the BMS group \cite{Bondi:1962px,Sachs:1962wk,Kapec:2015vwa}, generated by diffeomorphisms of null infinity $\scri^{\pm}$ that preserve its weak and strong\footnote{preserving the null angle between two tangent directions at $\scri^{\pm}$.} conformal structure \cite{Penrose:1986ca}. More specifically, it is the group of diffeomorphisms preserving the universal structure, i.e. the boundary conditions defined by asymptotic flatness and the fall-off of the subleading terms in \cref{eq4:metric}, modulo diffeomorphisms acting trivially at $\scri^{\pm}$. The vector fields generating this group only involve the conformal Killing vectors $Y^A(\theta)$ of $S^{d-2}$ and a generating function $T(\theta)$. The BMS group therefore consists of the global conformal transformations of $S^{d-2}$ and the so-called supertranslations ST,
\begin{equation}
 \text{BMS}_{d}=\text{ST}\ltimes\text{SO}(d-1,1)\,.
\end{equation}
In particular, the supertranslations ST are angle-dependent translations along $\scri^{\pm}$ that form an infinite dimensional\footnote{In $d>4$, this symmetry enhancement from the Poincar\'{e} group has been historically ruled out: the fall-off conditions on the subleading terms in \cref{eq4:metric} restrict $T$ to the ordinary translations of the Poincar\'{e} group. However, the validity of the soft theorems in arbitrary dimension strongly suggests an extension. This was proposed in \cite{Kapec:2015vwa}, using less stringent boundary conditions at null infinity. See also \cite{Avery:2015gxa} for a discussion of these aspects.} Abelian group. They are not isometries of space-time, but rather relate different asymptotically flat solutions.\smallskip

Strominger et al. argue that the
soft limit theorem of Weinberg arises from the Ward identity following
from supertranslation invariance \cite{Strominger:2013jfa}, but taking only a diagonal subgroup BMS$^0\subset\text{BMS}^+\times\text{BMS}^-$ of the product of the groups obtained at past null infinity $\scri^-$ with that at $\scri^+$ for Christodoulou-Klainerman spaces  \cite{PhysRevLett.67.1486,Christodoulou}.  In four dimensions, this diagonal subgroup is obtained by requiring the real part of the second derivative of the shear from $\scri^-$ to be equal at space-like infinity to that from $\scri^+$.   
Furthermore, the BMS group can be
extended by so-called superrotations, which correspond to extending
the global conformal symmetry of the 2-sphere in $d=4$ to a local conformal symmetry \cite{Barnich:2009se,Barnich:2011ct,Barnich:2011mi} in 4-dimensions (and more general diffeomorphisms\footnote{The conformal group of the sphere is only enhanced in four dimensions for $S^2$, obstructing a straightforward generalisation to higher dimensions. However, it was shown in \cite{Campiglia:2014yka} that in four dimensions, supertranslations can be enhanced to Diff$(S^2)$, dropping the requirement that the vector fields are conformal Killing vectors. This can be generalised to higher dimensions, giving B$=\text{ST}\ltimes\text{Diff}(S^{d-2})$.} in higher dimensions).

\paragraph{Soft theorems and BMS symmetries.}

In \cite{Strominger:2013lka,Strominger:2013jfa,He:2014laa}, Strominger et al. have shown that the soft theorems  \cref{eq4:softlimits} are equivalent to Ward identities in Ashtekar's Fock space of radiative modes \cite{Ashtekar:1981sf,ashtekar1987asymptotic}. These Ward identities are associated with the diagonal subgroup of $\text{BMS}^0\subset\text{BMS}^+ \times \text{BMS}^-$, when it is proposed as a symmetry of the gravitational S-matrix. 
Moreover, the subleading soft theorem \cref{eq4:soft-subleading} implies a Ward identity associated with the extended BMS superrotation symmetry \cite{Kapec:2014opa}.

In particular, they proposed that this diagonal BMS generator is a spontaneously broken\footnote{Since the supertranslations in BMS$_0$ are not isometries of flat space-time, but rather connect different asymptotically flat solutions, the symmetry has to be broken spontaneously.} symmetry of the S-matrix.
In showing that Weinberg's soft theorem follows from this Ward identity, a key step in the argument of Strominger et al. is that acting with a supertranslation generator on null infinity leads to the insertion of a soft graviton. Specialising to the case of four dimensions, they showed that
\begin{subequations}
\begin{align}
T^{-}\left|in\right\rangle =F^{-}\left|in\right\rangle +\sum_{k\in in}E_{k}f\left(z_{k},\bar{z}_{k}\right)\left|in\right\rangle \,,\\
\left\langle out\right|T^{+}=\left\langle out\right|F^{+}+\sum_{j\in out}E_{j}f\left(z_{j},\bar{z}_{j}\right)\left\langle out\right|\,,
\end{align}
\end{subequations}
where $T^{\pm}$ are supertranslations acting at $\scri^{\pm}$, $F^{\pm}$ are outgoing/incoming soft graviton operators, and
$f(z,\bar{z})$ is an arbitrary function on the conformal 2-sphere of null infinity (note
that soft graviton insertions at $\scri^{\pm}$ are related by crossing
symmetry). This is a specific example of the general phenomenon that spontaneously broken symmetries modify the Ward identities, see \cite{Avery:2015gxa}. The BMS symmetry generator $B=B_{\text{soft}}+B_{\text{hard}}$ decomposes into a term $B_{\text{soft}}$ acting non-trivially only on the vacuum, and a term $B_{\text{hard}}$ annihilating the vacuum. In particular, the action of $B_{\text{soft}}$ on the vacuum is proportional to a Goldstone boson insertion, which corresponds here to a soft graviton. For the case of supertranslations described above, we then have
\begin{align}
 &B_{\text{hard}}=T_{\text{hard}}=\sum_{j\in out}E_{j}f\left(z_{j},\bar{z}_{j}\right)-\sum_{k\in in}E_{k}f\left(z_{k},\bar{z}_{k}\right)\\
 & B_{\text{soft}}=T_{\text{soft}}=F\,,
\end{align}
where $F$ denotes a soft graviton insertion.
The global Ward identity in the presence of a spontaneously broken symmetry then takes the form
\begin{equation}
\left\langle out\right|F^{+}\mathcal{S}-\mathcal{S}F^{-}\left|in\right\rangle =\left(\sum_{k\in in}E_{k}f\left(z_{k},\bar{z}_{k}\right)-\sum_{j\in out}E_{j}f\left(z_{j},\bar{z}_{j}\right)\right)\left\langle out\right|\mathcal{S}\left|in\right\rangle \,.
\end{equation}
This equation  is equivalent to Weinberg's soft theorem \cref{eq4:soft-subleading}. Using similar considerations, Strominger et al. showed that the subleading term in the soft graviton limit implies an analogous Ward identity for superrotations.

\section{Geometry and BMS symmetries}\label{sec:soft_geom}
Ambitwistor strings are a natural candidate theory for explaining the duality between BMS symmetries and soft limits due to both their ability to reproduce the most compact known expressions for amplitudes, the CHY formulae, and the underlying geometric structure of their target space. The aim of this section is to 
relate asymptotic symmetries at $\scri$ to Hamiltonians on ambitwistor space. This is achieved by identifying ambitwistor space with the cotangent bundle of null infinity in such a way that the extended BMS generators and their generalizations, indeed arbitrary symplectic diffeomorphisms of $T^*\scri$, act canonically.
\subsection{Background geometry}
Recall that ambitwistor space $\A$ is the complexification of the phase  space of
complex null geodesics with scale in a space-time, see \cref{sec2:review_ambi}.  As such, $\A$ can
be represented by the directions of the complex null geodesics and their intersection with any
Cauchy surface.  The symplectic potential $\Theta$ and symplectic form
$\rd\Theta$ on $\A$ arise from identifying $\A$ with the cotangent
bundle of the complexification of that Cauchy  hypersurface.  In an
asymptotically simple space-time,  they can therefore be represented
with respect to the complexification of null infinity, which we will
denote  $\scri$, and so $\A=T^*\scri$; and at this point $\scri$ can be
the complexification of either future or past null infinity, $\scri^+$
or $\scri^-$.

Null infinity can be represented as a light cone, although it is normal to invert the parameter up the generators to give a parameter $u$ for which the vertex is at $u=\infty$.  In order to make the symmetries manifest, we use homogeneous coordinates $p_\mu$ with $p^2=0$ for the complexified sphere of generators of $\scri$, and a coordinate $u$ of weight one also, so that $(u, p_\mu)\sim (\alpha u, \alpha p_\mu)$ for $\alpha\neq 0$.  As depicted in Figure \ref{scri}, a null geodesic through a point $x^\mu$ with null cotangent vector $P_\mu$ reaches $\scri$ at the point with coordinates 
\begin{equation}
(u,p_\mu)=w (x^\nu P_\nu,P_\mu)\, ,
\end{equation}
where $w$ encodes the scale of $P$.  The notation is intended to be
suggestive of the fact that $u$ is canonically conjugate to the frequency
here denoted by $w$.

Since  $\A=T^*\scri$, it can be described using homogeneous coordinates $(u,p_\mu, w,q^\mu)$ with $(w, q^\mu)$ of weight zero and $(u,p_\mu)$ weight one to yield the 1-form 
\begin{equation}
\Theta= w \rd u- q^\mu \rd p_\mu\, ,
\end{equation}
and this defines the symplectic potential on $\A$.
As $\Theta $ must be orthogonal to the Euler vector field $\Upsilon= u\p_u +p_\mu\p_{p_\mu}$  we have the constraint
\be{constraint}
wu-q\cdot p=0\, ,
\ee
which is the Hamiltonian for $\Upsilon$.
\begin{figure}[htbp] 
\centering
       \includegraphics[width=3.2in]{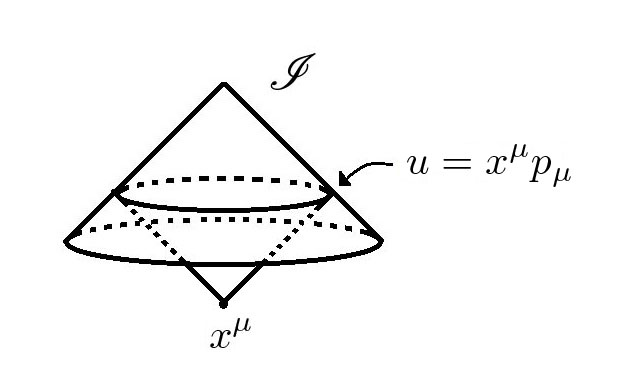}  
    \caption{Diagram of null infinity $\scri$.}
    \label{scri}
    \end{figure} 

To relate this to the original model \cite{Mason:2013sva} reviewed in \cref{chapter2}, recall that the coordinate description of $\A$ was given as the symplectic quotient of the cotangent bundle of space-time, i.e.\  as $(x^\mu, P_\mu)$ with $P^2=0$, quotiented by $P\cdot \p_{x}$.  Including the scale, $\A$ is a symplectic manifold with symplectic form $\rd \Theta $ where $\Theta=P\cdot \rd x$.
The null geodesic through $x^\mu$ with null cotangent vector $P_\mu$ has coordinates at $\scri$ determined by $u=p\cdot x=wP\cdot x$ and the symplectic potential is given by
\begin{equation}
\Theta =P_\mu \rd x^\mu =w\rd u-q^\mu \rd p_\mu \,,
\end{equation}
such that we obtain the relations
\begin{equation}
q^\mu=w x^\mu \mbox{ modulo } p^\mu\,   \quad \mbox{ and } \quad p_\mu=w P_\mu\, .
\end{equation}
On reducing by the constraint \eqref{constraint}, we can identify the scalings of $p$ with those of the momentum by scaling $w$ to 1.  For scaled null geodesics, we can therefore simply incorporate the scale of $p$.

In summary, we can express $\A$ as the symplectic quotient of $(u,p,w,q)$ space by the constraints $p^2=0$ and $uw-q\cdot p=0$. Although $(u,w)$ can be eliminated by using the constraint \eqref{constraint} and making the gauge choice $w=1$, they serve to manifest $\A$ as the cotangent bundle $T^*\scri$ at null infinity.

For the RNS ambitwistor string models, we augment the coordinates above to include either $d$ or $2d$ fermionic coordinates respectively in the heterotic case and the type II case.  These are given by coordinates $\Psi^\mu_r$, $r=1,2$ in the type II case (and $r=1$ in the heterotic case), subject to the constraint(s)  $w\,p\cdot \Psi_r=0$.  The symplectic potential is then augmented to
\begin{equation}\label{susy-pot}
\Theta=w\rd u - q\cdot \rd p + \frac{1}{2}\sum_r\Psi\cdot \rd \Psi\, .
\end{equation}

\subsection{BMS symmetries and their generalizations}
All diffeomorphisms of a manifold have a Hamiltonian lift to the cotangent bundle with Hamiltonian given by the contraction of the generating vector field with the symplectic potential  $\Theta$.

Poincar\'e motions in particular act as diffeomorphisms of $\scri$. 
Translations act by $\delta x^\mu=a^\mu$ give $\delta p=0$, $\delta u=a\cdot p$ and $\delta q^\mu=w a^\mu$ and have Hamiltonian 
\begin{equation}
H_a=wa\cdot p\, .
\end{equation}
 The more interesting supertranslations generalise these to $\delta u= f(p)$ where $f$ is now an arbitrary function of weight 1 in $p$ (i.e. a section of $\cO(1)$) but no longer necessarily linear (and generally with singularities in the complex).
These motions are all symplectic with Hamiltonian
\begin{equation}
H_f=w f(p)\, .
\end{equation}

Lorentz transformations  act by $\delta p_\mu=r_{\mu}{}^\nu p_\nu$, $\delta q^\mu=-r_\nu{}^\mu q^\nu$ ( similarly for  $x^\mu$ and $\Psi_r^\mu$) with $r_{\mu\nu}=r_{[\mu\nu]}$.  
This action has a natural lift to the total space of the line bundle $\cO(1)$ of homogeneity degree 1 functions in which $u$ takes its values.
The Hamiltonian for this action is 
\begin{equation}
H_r=  (q^{[\mu} p^{\nu]} +w\sum_r \Psi_r^\mu\Psi_r^\nu) r_{\mu\nu}\, .
\end{equation}
We can define the angular momentum to be
\begin{equation}\label{angmom}
J^{\mu\nu}=(q^{[\mu} p^{\nu]} +w\sum_r \Psi_r^\mu\Psi_r^\nu)\, .
\end{equation}
It is the sum of an orbital part and an intrinsic spin part and commutes with the constraints $p^2=0$ and $wp\cdot \Psi_r=0$.

Superrotations will be defined here by generalising $r_{\mu\nu}$ to functions that have non-trivial dependence on $p$ (but still of weight zero),
\begin{equation}\label{superr}
H_r=  J^{\mu\nu} r_{\mu\nu}(p)\, . 
\end{equation}
They also preserve the constraints $p^2=p\cdot \Psi=0$ on the constraint surface.
In general dimension, conformal motions are finite dimensional even locally. In four dimensions however, if not constrained to be global on the Riemann sphere, they become infinite dimensional and provide a non-trivial restriction on the general diffeomorphisms we have allowed above.

\section{Ambitwistor strings at null infinity}\label{sec:soft_model}
In the last section, we have constructed ambitwistor space over the Cauchy hypersurface $\scri$ by identifying it with the cotangent bundle of null infinity in such a way that arbitrary symplectic diffeomorphisms of $\scri$ act canonically, and thus lift to Hamiltonians on ambitwistor space. 
In this section, we will describe an ambitwistor string at null infinity whose action is constructed from the contact structure of ambitwistor space. This geometric interpretation of the action can be utilized to define operators from the Hamiltonians inducing the action of the symmetries in the ambitwistor string worldsheet model. While general vertex operators corresponding to graviton insertions implement diffeomorphisms of ambitwistor space, the leading and subleading terms in a soft expansion can be identified with (lifts of) BMS generators on null infinity. The analogous story for Yang-Mills is that vertex operators at null infinity correspond to certain gauge transformations at  $T^*\scri$.  Their soft expansions yield gauge transformations analogous to supertranslations at leading order and superrotations for the subleading terms.

\subsection{The string model}
As in the original ambitwistor string, the action is determined by the symplectic potential $\Theta$.  This gives the worldsheet action on a Riemann surface $\Sigma$ in the new coordinates as
\begin{equation}
S= \frac1{2\pi}\int_\Sigma w\dbar u -q^\mu \dbar p_\mu + \frac{1}{2}\sum_r\Psi_r\cdot \dbar \Psi _r + eT +  \frac{1}{2} \tilde{e} p^2 + \sum_r\chi_r w p\cdot\Psi_r + a(uw-q\cdot p)\, .
\end{equation}
Here, as in \cref{chapter2}, the fields take values in the following bundles:
\begin{subequations}
\begin{align}
 &u,\,p_\mu \in \Omega^0(\Sigma, K_\Sigma)\,, && a\in\Omega^{0,1}(\Sigma)\,,\\
 & w,\,q^\mu\in \Omega^0(\Sigma)\,, && \tilde{e}\in\Omega^{0,1}(\Sigma, T_\Sigma)\,,\\
 &\Psi_r^\mu\in\Pi\Omega^0(\Sigma, K_\Sigma^{1/2})\,, &&\chi_r\in\Pi\Omega^{0,1}(\Sigma, T_\Sigma^{1/2})\,.
\end{align}
\end{subequations}
In particular, $\tilde{e}$, $\chi_r $ and $a$ are gauge fields imposing and gauging the various constraints. The term $eT$ with $T=(w\p u -q^\mu \p p_\mu + \frac{1}{2}\sum_r\Psi_r\cdot \p \Psi _r) $  allows for an arbitrary choice of complex structure parametrised by $e$.  

These gaugings are fixed by setting $e=\tilde e=\chi_r=a=0$ but lead to
respective ghost systems $(b,c)$ and $(\tilde b,\tilde c)$ fermionic, $(\beta_r
,\gamma_r)$ bosonic and $(r,s)$ fermionic.  We are left with the
BRST operator
\begin{equation}\label{Q}
Q_{BRST}=\frac1{2\pi i}\oint c T+ \tilde c \frac {p^2}2 + \sum_r\gamma_r w\Psi_r\cdot p + s
(uw-q\cdot p)\, .
\end{equation}

This is sufficiently close to the original ambitwistor string that we
can simply adapt the Yang-Mills vertex operators in the heterotic model (with $r=1$ only) and the gravitational vertex operators in the type II case with $r=1,2$. With momentum vector $k^\mu$ and polarisation vectors
$\epsilon_{r\mu}$, this gives 
\begin{subequations}\label{Vertex}
\begin{align}
U&= \e^{ik\cdot q/w}\prod_{r=1}^2\delta(\gamma_r) \Psi_r\cdot \epsilon_r\, ,\\
\cV&=\int_\Sigma \bar\delta(k\cdot p) \, w \,\e^{ik\cdot q/w}
\prod_{r=1}^2\epsilon_{r\mu}(p^\mu +i\Psi_r^\mu \Psi_r\cdot k)\, ,
\end{align}
\end{subequations}
for gravity. For Yang-Mills we have
\begin{subequations}\label{VertexYM}
\begin{align}
U^{ym}&= \e^{ik\cdot q/w}\delta(\gamma) \Psi\cdot \epsilon j\cdot t\, , \\
\cV^{ym}&=\int_\Sigma \bar\delta(k\cdot p) w  \e^{ik\cdot q/w}
\epsilon_{\mu}(p^\mu +i\Psi^\mu \Psi\cdot k) j\cdot t\, ,
\end{align}
\end{subequations}
where $j$ is a current algebra on the worldsheet associated to the gauge group and $t$ a Lie algebra element.  As described in \cref{sec2:review_ambistrings}, $\cV$ are the integrated vertex operators, and $U$ are unintegrated
with respect to both the zero modes of $\gamma_r$ and
$\tilde c$.  At genus zero, we need two
insertions of $c\tilde c U$ to fix the two pairs of $\gamma_r$
zero-modes and a third insertion of $c\tilde c $ multiplied by an unintegrated vertex operator to fix the third
of the $c$ and $\tilde c$ zero-modes.

The new feature here is the gauge field $a$ whose ghost $s$ has a zero
mode that must also be fixed.  This can be associated also with a
$1/\mathrm{Vol \,GL}(1)$ factor from the scalings and we will treat this
as the requirement that the scale of $w$ be fixed to be 1.  This can
be done before correlators are taken because the vertex operators do
not depend on $u$.  At this
point it is easily seen that the amplitude computations directly reduce to
the original ones of \cite{Mason:2013sva} reviewed in \cref{sec2:review_ambistrings} to yield the CHY formulae.
A key feature of this derivation is that, in the evaluation of the correlation functions, the exponentials in the vertex operators are taken into the off-shell action leading to the following expression for $p$:
\be{p}
p(\sigma)=\sum_i \frac{k_i}{\sigma-\sigma_i}\, .
\ee

\subsection{Symmetries, vertex operators and diffeomorphisms} \label{symmetry}
Because the action of the worldsheet model is based on the symplectic potential, the singular parts of OPE of operators in the ambitwistor string theory precisely arise from the Poisson structure, so that for example
\begin{equation}
p_\mu(\sigma') q^\nu(\sigma) \sim \frac {\delta^\nu_\mu \rd\sigma}{\sigma-\sigma'} + \ldots \, , \qquad \Psi^{\mu}(z)\Psi_{\nu}(w)=\frac{\delta_{\nu}^{\mu}}{z-w}+...\,,\label{eq:ope}
\end{equation}
where the ellipses denote finite terms.
The Hamiltonians must all have weight one in $p$ (or weight two in $\Psi_r$) since they preserve the symplectic potential and so on the worldsheet they take values in $\Omega^{1,0}_\Sigma$.   We can therefore directly use the Hamiltonian $h$ that generates a symplectic diffeomorphism of  $\A$  to define an operator
\begin{equation}
Q_h=\frac1{2\pi i}\oint h\,,
\end{equation}
that induces the action of the symplectic diffeomorphism in the ambitwistor string model, i.e. for translations we have
\begin{equation}
Q_{p_\mu} q^{\nu}=\frac1{2\pi i}\oint  \frac {\delta^\nu_\mu}{\sigma-\sigma'} + \ldots = \delta^\nu_\mu\, .
\end{equation}
Clearly the same logic will apply to more general BMS transformations and indeed more general diffeomorphisms of $\scri$ as these all have a symplectic lift\footnote{Note that the converse is not true, not every diffeomorphism of ambitwistor space descends to a diffeomorphism of $\scri$.} to $\A=T^*\scri$.

In fact all vertex operators can be related to such motions.  This is most easily stated for gravity where we can rewrite the integrated vertex operator as 
\begin{align}
\cV&=\int_\Sigma \bar\delta(k\cdot p)\, w\,  \e^{ik\cdot q/w}
\prod_{r=1}^2\epsilon_{r\mu}(p^\mu +i\Psi_r^\mu \Psi_r\cdot k)\nonumber \\ 
&= \frac1{2\pi i}\oint \frac{\e^{ik\cdot q/w}}{k\cdot p} \, w\, 
\prod_{r=1}^2\epsilon_{r\mu}(p^\mu +i\Psi_r^\mu \Psi_r\cdot k) \, ,\label{gen-diffeo}
\end{align}
where we have used the relation
$$
\bar\delta(k\cdot p)= \frac1{2\pi i}\dbar\frac1{k\cdot p}\,,
$$ 
to reduce the integral over $\Sigma$ to a contour integral around the pole at $p\cdot k=0$. Thus we see that the vertex operator is the generator of the diffeomorphism of $\A$ with Hamiltonian given by the integrand of \eqref{gen-diffeo}.  This is to be expected in the ambitwistor construction as the data of the space-time metric is encoded in deformations of the complex structure of ambitwistor space, see \cref{sec2:review_ambi} and the original work \cite{LeBrun:1983}.  Such deformations can in turn be encoded in a Dolbeault fashion as a global variation of the $\dbar$-operator as in the first line of \eqref{gen-diffeo} or as a \v Cech deformation of the patching functions for the manifold as determined by the Hamiltonian in the second line.

The story for Yang-Mills is very similar except that now we are talking about variations of the $\dbar$-operator on a bundle in the Dolbeault description, or a non-global gauge transformation in the \v Cech description. In particular, we can rewrite the integrated vertex operator as
\begin{equation}
\mathcal{V}^{ym}=\frac{1}{2\pi i}\oint\frac{\e^{ik\cdot q/w}}{k\cdot p} w \;\epsilon_{\mu}(p^{\mu}+i\Psi^{\mu}\Psi\cdot k)\, j\,,
\label{gen-ym}
\end{equation}
where $j$ is the worldsheet current algebra.

\section{From soft limits to BMS}\label{sec:soft_BMS}

In this section, we will expand the gravitational vertex operator in \eqref{gen-diffeo} in the soft limit, which corresponds to the momentum of the graviton going to zero, and show that the leading and subleading terms in the expansion correspond to generators of supertranslations and superrotations, respectively.

Denoting the soft momentum as $s$, we can expand the vertex operator as follows:
\begin{align} 
 \cV_s&=\frac1{2\pi i}\oint  \frac{w\, \e^{is\cdot q/w}}{s\cdot p} \prod_{r=1}^2 \epsilon_{r\mu}(p^\mu+i\Psi_r^\mu\Psi_r\cdot s) \nonumber \\
&=  \cV_{s}^0+ \cV_{s}^1+\cV_{s}^2+\cV_{s}^3 +\ldots\,.
\end{align}
Simplifying to the situation where $\epsilon_1=\epsilon_2$ (which is sufficient for ordinary gravity), the first two terms in the expansion are given by
\begin{subequations}\label{expgravsoftVO}
\begin{align} 
\cV_{s}^0&= \frac1{2\pi i}\oint w\frac{ (\epsilon\cdot p)^2}{s\cdot p}  \\
\cV_{s}^1&= \frac 1{2\pi i}\oint \frac{\epsilon\cdot p}{s\cdot p} \left ({  i\epsilon\cdot p\,   s\cdot q} + iw\sum_{r=1}^2  \epsilon\cdot\Psi_r s\cdot \Psi_r \right)  \nonumber \\
&=\frac 1{2\pi }\oint \frac{\epsilon\cdot p}{s\cdot p} \epsilon^\mu s^\nu\left ({  p_{[\mu}\,   q_{\nu]}}+ w\sum_{r=1}^2  \Psi_{r\mu}  \Psi_{r\nu} \right) \nonumber \\
&=\frac 1{2\pi }\oint \frac{\epsilon\cdot p}{s\cdot p} \epsilon^\mu s^\nu J_{\mu\nu}\,,
\end{align}
\end{subequations}
where we have used the angular momentum operator defined in \eqref{angmom}
\be{J}
J_{\mu\nu}={ p_{[\mu}q_{\nu]}} +w\sum_{r=1}^2  \Psi_{r\mu}  \Psi_{r\nu}\,,
\ee 
which corresponds to a sum of orbital angular momentum and intrinsic spin. To get to the second line of \eqref{expgravsoftVO}, we note that the extra $s\cdot p$ term in the numerator cancels that in the denominator and so there is no singularity and the contour integral gives zero. 

The integrands in \eqref{expgravsoftVO} correspond precisely to the generators of the Hamiltonian lift of the supertranslations and superrotations of null infinity discussed in section \ref{symmetry}. In particular, $\cV_s^{0}$ generates the supertranslation $\delta u= \frac{(\epsilon\cdot p)^2}{s\cdot p}$, and $\cV_s^{1}$ generates the superrotation $r_{\mu\nu}= i \epsilon_{[\mu}s_{\nu]}\frac{\epsilon\cdot p}{s\cdot p}$ on $\scri$. The OPE of two $J_{\mu\nu}$'s is 
\[
J_{\mu\nu}(z)J_{\rho\lambda}(w)=\frac{2\left(\eta_{\mu\rho}\eta_{\nu\lambda}-\rho\leftrightarrow\lambda\right)}{(z-w)^{2}}+\frac{\left(\eta_{\mu\rho}J_{\nu\lambda}-\rho\leftrightarrow\lambda\right)-\mu\leftrightarrow\nu}{z-w}+...\,,
\]
where ellipsis correspond finite terms. Thus
$J_{[\mu\nu]}$ generate a Kac-Moody algebra:
\begin{equation}
J^{A}(z)J^{B}(w)=\frac{k\delta^{AB}}{2(z-w)^{2}}+\frac{i{f^{AB}}_CJ^{C}(w)}{z-w}+...\,,
\label{km}
\end{equation}
for the Lorentz group, where $k=4$ is the level and ${f^{AB}}_C$ are the structure
constants. From this, we see that the superrotation
generators in $d$ dimensions correspond to a Kac-Moody algebra associated
to $SO(d-1,1)$, which is also the conformal group
associated with the $d-2$ sphere of generators of $\scri$ parametrised
by $p$.  However, the superrotation generator $r_{\mu\nu}J^{\mu\nu}$
with $r_{\mu\nu}= i
\epsilon_{[\mu}s_{\nu]}\frac{\epsilon\cdot p}{s\cdot p}$ has
additional dependence on $p$ (although it will always be linear in $q$ and quadratic in $\Psi$). These all
respect the constraint $p^2=0$ and have the appropriate weight in $p$,
so  the algebra generated by the superrotations is the diffeomorphism
group of the sphere of generators of null infinity.

By a similar calculation to that for $\cV_{s}^1$, one finds that
\begin{align}
\cV_{s}^2 &= \frac1{2\pi i} \oint \frac{(\epsilon\cdot p)^2 (s\cdot q)^2 +2w \epsilon\cdot p s\cdot q \sum_r \epsilon\cdot \Psi s\cdot \Psi + w^2\prod_r \epsilon\cdot \Psi_r s\cdot \Psi_r}{2w\;s\cdot p}\, \nonumber\\ 
&= \frac1{2\pi i} \oint \frac{( \epsilon^\mu s^\nu J_{\mu\nu})^2}{2w\;s\cdot p}.
\end{align}  
$\cV_{s}^2$ therefore gives a `superrotation squared' on ambitwistor space. Note that $\cV_{s}^2$ does not generate a symmetry of null infinity, since the square of a symmetry generator does not in general correspond to another symmetry generator. Hence, beyond subleading order, terms in the expansion of a soft graviton vertex operator generate diffeomorphisms of ambitwistor space $\A=T^*\scri$, but not diffeomorphisms of $\scri$ itself. 

Correlators of $\mathcal{V}_s^{0}$ and $\mathcal{V}_s^{1}$ give rise to the leading and subleading terms \cref{eq4:soft-subleading} in the soft limit of graviton amplitudes:
\begin{subequations}\label{Ward}
\begin{align}
\left\langle \mathcal{V}_{1}...\mathcal{V}_{n} \cV_{s}^0\right\rangle &=\left(\sum_{a=1}^{n}\frac{(\epsilon\cdot k_a)^2}{s\cdot k_{a}}\right)\left\langle \mathcal{V}_{1}...\mathcal{V}_{n}\right\rangle \,, \label{Ward0}\\
\left\langle \mathcal{V}_{1}...\mathcal{V}_{n} \cV_{s}^1\right\rangle &=\sum_{a=1}^{n}\frac{\epsilon_{\mu\nu}k_{a}^{\mu}s_{\lambda}J_{a}^{\lambda\nu}}{s\cdot k_{a}}\left\langle \mathcal{V}_{1}...\mathcal{V}_{n}\right\rangle \,, \label{Ward1}
\end{align}
\end{subequations}
where $\epsilon^{\mu\nu}=\epsilon^{\mu}\epsilon^{\nu}$
and $J_a^{\mu\nu}=k_{a}^{[\mu}\frac{\partial}{\partial k_{a,\nu]}}+\epsilon_{a}^{[\mu} k_{a}^{\nu]}$, and we refer to \cref{gravd} for details of the calculation. These results give an alternative expression of the claims of  \cite{Strominger:2013jfa,He:2014laa,Cachazo:2014fwa} that the soft theorems are equivalent to Ward identities associated with the diagonal subgroup of $\text{BMS}^+ \times \text{BMS}^-$, when it is proposed as a symmetry of the gravitational S-matrix.  Here, however, the Ward identity is expressed in the context of the worldsheet quantum field theory of the ambitwistor string rather than the Fock space of the radiative modes of Yang-Mills or gravity.

Equations \eqref{Ward} contain all the information encoded in the Ward identities for supertranslations and superrotations. The correlation functions with insertions of $\cV_s^0$ and $\cV_s^1$, yielding the leading and subleading contributions for graviton amplitudes, imply the general Ward identities for arbitrary supertranslations and superrotations with Hamiltonians $H_f$ and $H_r$. 

There is a similar story for Yang-Mills theory. If we expand the gluon vertex operator in \eqref{gen-ym} in powers of the soft momentum $s$, we obtain the series
\begin{align} 
 \cV^{ym}_s&=\frac1{2\pi i}\oint  \frac{\e^{is\cdot q/w}}{s\cdot p}  \epsilon_{\mu}(p^\mu+i\Psi^\mu\Psi \cdot s) \, j \nonumber \\
&=  \cV^{ym,0}_{s}+ \cV^{ym,1}_{s}+\cV^{ym,2}_{s}+\cV^{ym,3}_{s} +\ldots\,,
\end{align}
where the terms in the expansion are given by
\begin{subequations}\label{ymsoftser}
\begin{align} 
\cV^{ym,0}_{s}&= \frac1{2\pi i}\oint \frac{ \epsilon\cdot p}{s\cdot p}\, j \,, \\
\cV^{ym,1}_{s}&=\frac 1{2\pi }\oint \frac{\epsilon^\mu s^\nu}{s\cdot p}  J_{\mu\nu} \, j\,.
\end{align}
\end{subequations}
Hence, the leading and subleading terms in the expansion of the gluon vertex operator generate an analogue of supertranslations and superrotations for Yang-Mills theory being respectively generators of gauge transformations that depend only on $p$ or are linear in $J_{\mu\nu}$. Unlike  gravity, $\cV^{ym,2}_{s}$ is no longer the square of $J$. 

Correlators of $\mathcal{V}^{ym,0}_{s}$ and $\mathcal{V}^{ym,1}_{s}$ give rise to the leading and subleading terms \cref{eq4:soft-subleading-YM} in the soft limit of gluon amplitudes (see \cref{ymd}): 
\begin{subequations}
\begin{align}
\left\langle \mathcal{V}_{1}...\mathcal{V}_{n} \mathcal{V}^{ym,0}_{s} \right\rangle &=\left(\frac{\epsilon\cdot k_{1}}{s\cdot k_{1}}-\frac{\epsilon\cdot k_{n}}{s\cdot k_{n}}\right)\left\langle \mathcal{V}_{1}...\mathcal{V}_{n}\right\rangle \,,\label{eq4:Ward-YM} \\
  \left\langle \mathcal{V}_{1}\dots\mathcal{V}_{n}\mathcal{V}^{ym,1}_{s}\right\rangle &= \left(\frac{\epsilon_{\mu}s_{\nu}J_{1}^{\mu\nu}}{s\cdot k_{1}}-\frac{\epsilon_{\mu}s_{\nu}J_{n}^{\mu\nu}}{s\cdot k_{n}}\right)\left\langle \mathcal{V}_{1}...\mathcal{V}_{n}\right\rangle\,. \label{eq4:Ward-YM-sub}
\end{align}
\end{subequations}
Hence, we find that the leading and subleading terms in the soft limit of gluon amplitudes arise from the action of gauge transformations that are gauge analogues of supertranslations and superrotations. 

In summary, the soft limits of tree-level graviton and gluon scattering amplitudes emerge as Ward identities for supertranslations and superrotations on $\scri$. The natural Hamiltonian lift $h$ of diffeomorphisms of $\scri$ to the cotangent bundle $T^*\scri\cong\A$ allows us to define symmetry operators $Q_h$ inducing the action of the diffeomorphism on $\scri$ in the ambitwistor string. This in turn facilitates the identification of the leading and subleading terms in the soft limit of the integrated vertex operators with the  generators of supertranslations and superrotations on $\scri$, whose insertion into correlators gives the well-known soft terms emerging from the corresponding Ward identities.

\section{Discussion}\label{sec4:Discussion}
We have seen in this chapter that ambitwistor space can be identified with the cotangent bundle of any Cauchy surface, and thus in particular in an asymptotically simple space-time with the cotangent bundle of null infinity, $T^*\scri^\pm$. The asymptotic symmetry group at $\scri$, the (extended) BMS group, acts canonically on ambitwistor space. An ambitwistor string constructed in this representation has been used to explain the results by Strominger et al. relating the diagonal subgroup of $\text{BMS}^o\subset \text{BMS}^+ \times \text{BMS}^-$ to Weinberg's soft graviton theorem, and the extended BMS superrotations to subleading terms in the soft graviton expansion. In particular, if one expands the vertex operator of a soft graviton in powers of the soft momentum, the leading and subleading terms correspond to supertranslation and superrotation generators, thus confirming the the conjectures of Cachazo and Strominger.  Furthermore, we find that higher order terms in the expansion correspond to an infinite series of new soft terms which are associated with more general diffeomorphisms of ambitwistor space, although no longer lifted from diffeomorphisms of null infinity. This realises Strominger's derivation of soft theorems as Ward identities associated to BMS symmetries, formulated on the Fock space of radiative modes, in the framework of the ambitwistor worldsheet conformal field theory.

A remarkable feature is that gravitational vertex operators in ambitwistor string theory always arise as generators of rather more general  symplectic diffeomorphisms of $\A$. That such diffeomorphisms should encode the gravitational field follows from the original ambitwistor constructions of LeBrun \cite{LeBrun:1983} in which the gravitational field is encoded in the deformed complex structure of ambitwistor space, see \cref{sec2:review_ambi}.\\  


What is therefore suggested by this picture is that we can give a description of the full nonlinear ambitwistor space  in a globally hyperbolic space-time as follows.   We glue together the flat space one  constructed from the complexification  $T^*\scri^-_\C$ to another constructed from the complexification  $T^*\scri^+_\C$ using the gluing map obtained from the diffeomorphism from the real $T^*\scri^-_\R\rightarrow T^*\scri^+_\R$ determined by the flow along the real null geodesics.  This then specifies enough of the complex structure on ambitwistor space to determine the full gravitational field and its scattering.
The scattering of null geodesics is already a complicated object and to identify those that correspond to solutions to Einstein's equations seems rather daunting in a fully nonlinear regime. 
However, within ambitwistor string theory, this is somehow achieved perturbatively, but nevertheless to all orders, as the scattering of null geodesics determined by each Fourier mode in the vertex operator determines the scattering of the gravitational field by explicit ambitwistor-string calculation.  The correlator achieves the required nonlinear superposition of the effects of each linearised Fourier mode to the required order in perturbation theory.   It would be intriguing to find a nonperturbative formulation of this correspondence.   In the ambitwistor string theory this might be expressed in the form of the structure of a curved beta-gamma system along the lines of \cite{Witten:2005px,Nekrasov:2005wg} with gluing determined by diffeomorphism from $\scri^-$ to $\scri^+$ arising from the scattering of null geodesics but pieced together from manageable ingredients as it is in the perturbative calculations. \\


The analogous story for Yang-Mills is that vertex operators at null infinity correspond to certain gauge transformations at  $T^*\scri$. The scattering here now corresponds to parallel propagation along each real null geodesic, regarding all null geodesics essentially as Wilson lines.  In its soft expansion, we obtain  gauge transformations analogous to supertranslations at leading order and superrotations for the subleading terms. This gives a realization in perturbative string theoretic terms of the ambitwistor constructions of \cite{Witten:1978xx, Isenberg:1978kk,Witten:1985nt} in which Yang-Mills fields are encoded in the complex structure of a holomorphic vector bundle over ambitwistor space with the gauge transformations playing the role of patching functions.

\chapter{Ambitwistor Strings in four dimensions} \label{chapter5} 

The previous chapter gave a powerful demonstration of the potential of different representations for the ambitwistor string: by choosing a representation of ambitwistor space as the cotangent bundle of null infinity, ambitwistor strings tie naturally into the structure of $\scri$, and are thus ideally suited to understand the interplay between asymptotic symmetries and soft theorems. We will explore this theme of choosing a representation adapted to a particular problem further in this chapter. While the focus of \cite{Mason:2013sva} reviewed in \cref{chapter2} was the RNS model in arbitrary dimension\footnote{Recall though that this model is critical in 10 dimensions.} $d$, we will focus here  on the special case of four space-time dimensions.

In four dimensions, some of the most remarkable insights and advances originated from the twistor string theories for  $\mathcal{N}=4$ super Yang-Mills \cite{Witten:2003nn,Berkovits:2004hg,Roiban:2004yf,Berkovits:2004jj,Mason:2007zv} and $\mathcal{N}=8$ supergravity \cite{Skinner:2013xp}. Correlators in both theories led to the first discovery of strikingly compact formulae whose simplicity was obscured by a Feynman diagram approach \cite{ParkeTaylor:1986,Roiban:2004yf,Hodges:2012ym,Cachazo:2012kg,Cachazo:2012pz,Cachazo:2012da,Adamo:2015gia,Adamo:2015ina}. Moreover, they also sparked a `twistor revolution', providing a tantalising paradigm for how twistor theory might eventually make contact with physics. Twistor strings led to a wide variety of results, ranging from efficient techniques for calculating scattering amplitudes (such as the MHV formalism \cite{Cachazo:2004kj,Adamo:2011cb} and the  Britto-Cachazo-Feng-Witten (BCFW) recursion relation \cite{Britto:2004ap,Britto:2005fq,ArkaniHamed:2010,Mason:2009sa,Skinner:2010cz}), to the study of conformal and dual conformal symmetry\footnote{leading to the discovery of the Yangian, an infinite dimensional symmetry algebra of scattering amplitudes in planar $\mathcal{N}=4$ SYM.} \cite{Drummond:2008vq}, the Grassmannian formalism for scattering amplitudes \cite{ArkaniHamed:2009dn,ArkaniHamed:2012nw}, and twistor actions \cite{Mason:2005zm,Mason:2007ct,Boels:2007qn,Adamo:2011cb,Adamo:2013tja,Adamo:2013cra}. However, this paradigm is still a long way from being fully realized, both due to the reliance on maximal supersymmetry and the lack of a clear route to an extension to a critical model allowing for loop calculations.\\

It is thus natural to ask whether we can choose a representation of ambitwistor strings adapted to four dimensions that makes these simplifications and advances manifest. Recall in this context that ambitwistor strings can be defined almost algorithmically by complexifying the action for a spinning massless particle, or geometrically from a chiral pull-back of the contact structure of ambitwistor space. So given either a twistorial representation of the action of a massless particle, or a twistorial representation of ambitwistor space, an ambitwistor string can be constructed.

Ambitwistor space has indeed an alternative spinorial representation in four dimensions, in which the constraints $P^2=0$ are explicitly solved. The resulting ambitwistor string models arise equivalently as the complexifications of the four-dimensional Ferber superparticle \cite{Ferber:1977qx}.  Interestingly, the original twistor-string  was similarly interpreted in \cite{Siegel:2004dj,Berkovits:2005} and the similarity with the ambitwistor approach was also remarked upon in \cite{Mason:2013sva,Bandos:2014lja}. The spinorial representation of the ambitwistor string leads to simple fomulae for any amount of supersymmetry, and the few moduli integrals are fully localised on a refined version of the scattering equations. \\

This chapter is structured as follows. After a brief review of ambitwistor space in \cref{sec5:ambispace4d}, we construct ambitwistor string models for Yang-Mills theory and gravity in four dimensions with any amount of supersymmetry, using the spinorial representation of the target space. These models yield remarkably simple new formulae for tree-level scattering amplitudes which are parity invariant, supported on the scattering equations, and dependent on very few moduli. Moreover, they are supported on a refined version of the scattering equations, adapted to the MHV degree of the amplitude. For maximal supersymmetry and $\cN=0$ Yang-Mills, we prove that these are equivalent to the Roiban-Spradlin-Volovich-Witten (RSVW) and Cachazo-Skinner (CS) formulae obtained from twistor strings. This recasts the ambitwistor string as the natural generalisation of twistor strings, allowing for arbitrary dimensions and any amount of supersymmetry in four dimensional space-time. This point of view is further corroborated by the generalisation of other features of twistor strings, including the double fibration, ambitwistor correspondence and Penrose transform expressing space-time fields in terms of geometrical data relating the auxiliary target space and space-time. 

In \cref{sec5:scri}, we revisit the Ward identities relating extended BMS symmetries and soft theorems in the context of the spinorial representation of the ambitwistor string.

\section{Ambitwistor strings in four dimensions}\label{sec5:ambi-strings}
As motivated above, the spinorial representation can be exploited to construct ambitwistor string theories adapted to the simplifications occurring in four dimensions. To this end, we briefly review ambitwistor space in four dimensional space-time, and then proceed to construct ambitwistor string theories from its contact structure, see also \cref{sec5:review} for a review of twistor space and scattering amplitudes in four dimensions.

\subsection{Ambitwistor space in four dimensions}\label{sec5:ambispace4d}
As discussed in \cref{sec2:review_ambi}, projective ambitwistor space $P\A$ is a supersymmetric extension of the space of
complex null geodesics. In four dimensions, it allows for a representation as a quadric $Z\cdot W=0$ inside the product of twistor space and dual twistor space $P\T\times P\T^*$.  Here we work with $\cN$  supersymmetries so that 
\begin{subequations}
\begin{align}
 &Z=(\lambda_\alpha,\mu^{\dot\alpha},\chi^r)\in \T=\C^{4|\cN}\,,\\
 &W=(\tilde \mu ,\tilde \lambda, \tilde \chi)\in\T^*\,,
\end{align}
\end{subequations}
where $\chi, \tilde \chi$ are fermionic, and we denote the chiral spinor indices by $\alpha=0,1$, $\dot\alpha=\dot 0,\dot 1$ and the R-symmetry indices by $r=1,\ldots \cN$.  Ambitwistor space is then represented as the quadric 
\begin{equation}\label{eq5:ambi-space}
P\A=\bigslant{\big\{(Z,W)\in \T\times \T^*| Z\cdot W=0\big\}}{\{Z\cdot\p_Z-W\cdot \p_W\}} \,,
\end{equation}  
where $Z\cdot W:=\lambda_\alpha\tilde\mu^\alpha+\mu^{\dot\alpha}\tilde\lambda_{\dot\alpha}+\chi^r\tilde\chi_r$, and we also  quotient by the relative scaling $\Upsilon-\tilde\Upsilon$, where $\Upsilon= Z\cdot \p/\p Z$ and $\tilde \Upsilon=W\cdot\p/\p W$ are the twistor and dual twistor Euler vector fields. The symplectic potential in this representation takes the form
\begin{equation}\label{eq5:contact}
\Theta=\frac i2(Z\cdot \rd W-W\cdot \rd Z)\, .
\end{equation}

Ambitwistor space inherits the twistor incidence relations
\begin{subequations}
\begin{align}
&\mu^{\dot \alpha}=i(x^{\alpha\dot\alpha} +i\theta^{r\alpha}\tilde\theta^{\dot\alpha}_r)\lambda_\alpha\, ,&&  \chi^r=\theta^{r\alpha}\lambda_\alpha\, ,\\
&\tilde \mu^{ \alpha}=-i(x^{\alpha\dot\alpha} -i\theta^{r\alpha}\tilde\theta^{\dot\alpha}_r)\tilde \lambda_{\dot \alpha}\, ,&&
\tilde \chi_r=\tilde \theta_r^{\dot \alpha}\tilde \lambda_{\dot \alpha}\, ,
\end{align}
\end{subequations}
which realize a point $(x,\theta,\tilde\theta)$ in (non-chiral) super Minkowski space as a quadric,  $\CP^1\times\CP^1$ parametrized by $(\lambda,\tilde\lambda)$.  It is easily seen that these lie inside the set $Z\cdot W=0$ and indeed, these are the only quadrics in $P\A$ of that degree. To make contact with null geodesics in vectorial representation, the momenta can be defined to be $p_{\alpha\dot\alpha}=\lambda_\alpha\tilde\lambda_{\dot\alpha}$, which now automatically satisfy the constraint $p^2=0$.  This yields the ambitwistor correspondence in its spinorial framing; by construction a point in $P\A$ now corresponds to complex null geodesic in $M$, since we have explicitly solved the constraint $p^2=0$, and the restriction to the quadric $Z\cdot W=0$ identifies points along a null geodesic. Conversely, a point in $M$ corresponds by the discussion above to a quadric $Q_x \cong \CP^1\times\CP^1\subset P\A$.

\subsection{Ambitwistor strings in twistorial representation}
Defining the ambitwistor string action via the chiral pull-back of the contact structure $\Theta$ \cref{eq5:contact}, the ambitwistor string contains the worldsheet spinors  $(Z,W)$, and a GL$(1,\C)$ gauge field $a$ acting as a Lagrange multiplier for the constraint $Z\cdot W=0$, encoding the reduction to ambitwistor space;
\begin{subequations}
\begin{align}
 &Z\in\Omega^0(\Sigma, K_\Sigma^{1/2}\otimes\T)\,,\\
 &W\in\Omega^0(\Sigma,K_\Sigma^{1/2}\otimes\T^*)\,,\\
 &a\in\Omega^{0,1}(\Sigma)\,.
\end{align}
\end{subequations}
In conformal gauge, the action is therefore given by
\begin{equation}\label{eq5:action_YM}
S=\frac1{2\pi}\int_\Sigma  W\cdot \dbar Z-Z\cdot \dbar W  +a Z\cdot W +S_j\, ,
\end{equation}
where $S_j$ is the action for a worldsheet current algebra $j^a\in\Omega^0(\Sigma, K_\Sigma\otimes\g)$ for some Lie algebra $\g$. The action is invariant under a gauge symmetry,
\begin{equation}
 Z^I\rightarrow e^\gamma Z^I\,,\quad W_I\rightarrow e^{-\gamma}W_I\,,\quad a\rightarrow a-2\dbar \gamma\,,
\end{equation}
that quotients the target space by $\Upsilon - \widetilde \Upsilon$. As in general dimensions (see \cref{chapter2}), the quotient to ambitwistor space is thus implemented in the worldsheet CFT via the gauge field $a$ and the associated gauge redundancy. While very reminiscent of the original Berkovits-Witten twistor string \cite{Witten:2003nn,Berkovits:2004hg,Mason:2007zv}, the fields $(Z,W)$ are fixed to be worldsheet spinors in the ambitwistor string, see also \cref{sec5:Discussion} for a discussion of this issue.\\


Gauge fixing worldsheet gravity\footnote{In a general gauge, the $\dbar$ operator in \cref{eq5:action_YM} is replaced by  $\bar\p_{\tilde{e}}=\bar\p+\tilde{e}\p$, parametrising the worldsheet diffeomorphism freedom.} and the gauge redundancy via the BRST procedure introduces the standard Virasoro $(b,c)$ ghost system, as well as a GL$(1)$ ghost system $(u,v)$ with 
\begin{subequations}
\begin{align}
 &c\in\Pi\Omega^0(\Sigma,T_\Sigma)\,, && v\in \Pi\Omega^0(\Sigma)\,,\\
 &b\in\Pi\Omega^0(\Sigma,K_\Sigma^{2})\,, && u\in \Pi\Omega^0(\Sigma,K_\Sigma)\,.
\end{align}
\end{subequations}
The full worldsheet action is then given by
\begin{equation}\label{eq5:action}
 S=\frac1{2\pi}\int_\Sigma  W\cdot \dbar Z-Z\cdot \dbar W  +b\dbar c+u\dbar v +S_j\,,
\end{equation}
and the BRST operator takes the form
\begin{equation}
Q=\oint c T + vZ\cdot W+Q_{\text{gh}}\,. 
\end{equation}
where $T=W\cdot \partial Z-Z\cdot \partial W +T_j$ is the world-sheet stress tensor. In general, this will be anomalous,\footnote{With central charge $\mathfrak{c}=2(4-\cN)_{ZW}-26_{bc}-2_{uv}+\mathfrak{c}_j=20-2\cN+\mathfrak{c}_j$ and a GL$(1)$ anomaly $\mathfrak{a}_{\text{GL}(1)}=4-\cN$.} although there should be choices of matter that give an anomaly free theory, see also \cref{sec5:Discussion}.  Despite the anomaly, and hence $Q^2\neq0$, vertex operators for this theory are still given by BRST-closed combinations of fields. Moreover, despite constituting a serious obstruction to defining loop amplitudes, this anomaly won't affect the tree-level calculations.\footnote{The non-vanishing central charge leads to an insertion of a factor in the path integral. In general, this is a section of a determinant line bundle over the moduli space $\cM_g$ that is not a volume-form and thus cannot be invariantly integrated. At genus zero however, the moduli space reduces to a point, and thus the central charge anomaly only leads to an overall numerical ambiguity.} Amplitudes will be obtained as correlation functions of vertex operators.  In the following we will explicitly give  integrated vertex operators, that simply differ by a factor of $c$ from their unintegrated counterpart. The ghost system will serve to give the GL$(2,\C)$ quotients that are needed in the tree-level formulae, and we will divide by the volume of $\GL(2,\C)$ in the final formula, understood in the usual Faddeev-Popov sense.  

\section{Yang-Mills} 
\subsection{Yang-Mills amplitudes from the ambitwistor string}\label{sec5:YM-amplitudes}
Physical vertex operators in a worldsheet theory arise from infinitesimal deformations of the worldsheet action in the BRST cohomology. Whilst the model is anomalous, vertex operators in the ambitwistor string will still be BRST closed. We can construct such vertex operators that couple to the current algebra and correspond to the Yang-Mills degrees of freedom. From the ambitwistor space perspective, Yang-Mills vertex operators arise from general wave functions  $a\in H^1(\P\A,\mathcal{O})$ multiplied by the currents $j\cdot t_a$ of the current-algebra $j$, to give $\cV_a=\int_\Sigma a \,j\cdot t_a$.  In general, such an $a$ corresponds to an off-shell Maxwell field on space-time, but if it extends off $P\A$ into $P\T\times P\T^*$ to 3rd order or beyond, it must be on-shell (see for example \cite{Baston:1987av}), and only when on-shell is it manifestly $Q$-closed. On shell, such wave functions are a sum of wave functions pulled back from either twistor space or dual twistor space, thus leading to two different types of vertex operators.  For momentum eigenstates, 
\begin{align}
&\cV_a'=\int\frac{\rd s_{a}}{s_{a}}\bar{\delta}^{2}(\lambda_{a}-s_{a}\lambda)\e^{is_{a}\left([\mu\,\tilde{\lambda}_{a}]+\chi^{r}\tilde{\eta}_{ar}\right)}j\cdot t_{a} \,,\label{eq5:VO-YM} \\
&\widetilde  \cV_a= \int\frac{\rd s_{a}}{s_{a}}\bar{\delta}^{2}(\tilde{\lambda}_{a}-s_{a}\tilde{\lambda})\e^{is_{a}\left(\la\tilde{\mu}\,\lambda_{a}\ra+\tilde{\chi}_{r}\eta_{a}^{r}\right)}j\cdot t_{a}\, .
\end{align}
These vertex operator obey $\{Q,\cV_a\}=\{Q,\widetilde  \cV_a\}=0$, and are thus $Q$-closed.  However,
having the supersymmetry in this form will be inconvenient in
what follows. A more suitable representation is obtained by a Fourier transform of the $\tilde
\eta$'s into $\eta$'s in \cref{eq5:VO-YM},
\begin{equation}
\cV_a=\int \frac{\rd s_a}{s_a} \bar\delta^{2|\cN}(\lambda_a-
s_a\lambda|\eta_a-s_a\chi)\e^{is_a[\mu\,
  \tilde\lambda_a]}j\cdot t_a \, ,
\end{equation}
where for a fermionic variable $\chi$, $\delta(\chi)=\chi$.  These vertex operators encode the full $\mathcal{N}=4$ super Yang-Mills degrees of freedom for $\cN=3$ (with $r=1,\ldots,\cN=4$ we would have doubled the spectrum). 

Similar to the original twistor string, this model also contains vertex operators associated to non-minimal conformal gravity degrees of freedom, $f\in \Omega^{0,1}(\Sigma,K_\Sigma\otimes \T)$ and $\tilde{f}\in \Omega^{0,1}(\Sigma,K_\Sigma\otimes \T^*)$, giving rise to vertex operators
\begin{equation}
 \cV^f=\int_\Sigma W_I f^I(Z)\,,\qquad \cV^{\tilde{f}}=\int_\Sigma Z^I \tilde{f}_I(W)\,.
\end{equation}
From the perspective of ambitwistor space, these are the variations in the symplectic potential obtained from wave functions\footnote{where $\cO(p,q)$ is the line bundle of functions of weight $p$ in $Z$ and $q$ in $W$.} in $H^1(P\A, \cO(1, 1))$, corresponding to general off-shell variations in the conformal structure. Built from $f^I\in H^1 (P\T, TP\T)$ and $\tilde{f}_I \in H^1 (P\T^*, TP\T^*)$, they correspond to deformations of twistor space and dual twistor space respectively and hence self-dual and anti-self-dual deformations of the conformal structure. $f \in H^1 (P\T, T P\T)$ can be derived from a worldsheet point of view from $\{Q,\cV^f\}=0$, implying $\partial_If^I=0$. As in the twistor string, the presence of these states implies that loop amplitudes will be corrupted by $\cN=4$ conformal gravity.\\ 

In this section, we will focus on the case where all external particles are gluons. N$^{k-2}$MHV amplitudes will then be obtained as correlation functions of the above vertex operators $\cV$ and $\wt{\cV}$, taking $k$ from dual twistor space and $n-k$ from twistor space:
\begin{equation}\label{eq5:correlator-YM}
\cA=\left\la \widetilde \cV_1 \ldots \widetilde \cV_k \cV_{k+1}\ldots  \cV_n\right\ra\, .
\end{equation}
The current algebra correlator straightforwardly gives the Parke-Taylor denominator (together with some multitrace contributions that we will ignore for the purposes of this chapter).  As in general dimension \cite{Mason:2013sva}, rather than attempt to compute the infinite number of contractions required by the exponentials,  we instead take the exponentials into the action \cref{eq5:action} to provide sources 
\[
\int_\Sigma  \sum_{i=1}^k  i s_i (\la \tilde \mu \lambda_i\ra +\tilde \chi\cdot \eta_i)\bar\delta(\sigma-\sigma_i) + \sum_{p=k+1}^nis_p[\mu\,\tilde\lambda_p]\bar\delta(\sigma-\sigma_p)\,. 
\] 
The new equations of motion for $Z$ and $W$ from this effective action are then
\begin{subequations}
\begin{align}\label{eq5:EoM}
\bar{\partial}_\sigma Z&=\bar{\partial}\left(\lambda,\mu,\chi\right)=\sum_{i=1}^{k}s_{i}\left(\lambda_{i},0,\eta_{i}\right){\bar\delta}\left(\sigma-\sigma_{i}\right)\,,  \\ \label{dbar-l}
\bar{\partial}_\sigma W&=\bar{\partial}\left(\tilde{\mu},\tilde{\lambda},\tilde{\chi}\right)=\sum_{p=k+1}^{n}s_{p}\left(0,\tilde{\lambda}_{p},0\right){\bar\delta}(\sigma-\sigma_{p}) \,,
\end{align}
\end{subequations}
and $\bar\partial \mu = \bar\partial\tilde\mu=0$. Since $(Z,W)$ are worldsheet spinors (i.e. of weight $(-1,0)$), the solutions are uniquely given by
\begin{subequations}\label{eq5:sol}
\begin{align}
Z(\sigma)&=\left(\lambda,\mu,\chi\right)=\sum_{i=1}^{k}\frac{s_{i}\left(\lambda_{i},0,\eta_{i}\right)}{\sigma-\sigma_{i}}\,,
\\
W(\sigma)&=\left(\tilde{\mu},\tilde{\lambda},\tilde{\chi}\right)=\sum_{p=k+1}^{n}\frac{s_{p} \left(0,\tilde{\lambda}_{p},0\right)}{\sigma-\sigma_{p}} \, .
\end{align}
\end{subequations}
By the Riemann-Roch theorem \cref{eq2:Riemann-Roch}, the ghosts $c$ and $u$ develop $n_c=3$ and $n_u=1$ zero modes respectively, thus leading to the quotient by GL$(2,\C)$ as claimed above. With this, the path integrals over the $(Z,W)$ system can be performed trivially, localising them on the solutions \cref{eq5:sol}, and the current correlator contributes the usual Parke-Taylor factor. The correlator \cref{eq5:correlator-YM} then becomes
\begin{equation}\label{final-form_ampl}
\begin{split}
\cA=\int  \frac{1}{\vol \,\GL(2,\C)} 
  &\prod_{a=1}^n\frac{\rd s_a\rd\sigma_a}{s_a (\sigma_a-\sigma_{ a+1})}  
  \prod_{i=1}^k  \bar\delta^2(\tilde \lambda_i -s_i\tilde\lambda) \\
  &\prod_{p=k+1}^n\bar\delta^{2|\cN} 
  (\lambda_p-s_p\lambda(\sigma_p),\eta_p-s_p\chi(\sigma_p)) \, .
  \end{split}
\end{equation}
We can write this in terms of homogeneous coordinates on the Riemann sphere $\sigma_\alpha=\frac1s (1,\sigma)$ using the notation\footnote{With indices raised and lowered by the usual skew symmetric $\epsilon_{\alpha\beta}$.} $(i \,j)=\sigma_{i\alpha}\sigma_j^\alpha$ as follows
\begin{equation}\label{dbar-sol-hgs}
Z(\sigma)= \sum_{i=1}^{k}\frac{\left(\lambda_{i},0,\eta_{i}\right)}{(\sigma\, \sigma_{i})}
\, , \qquad 
W(\sigma)= \sum_{p=k+1}^{n}\frac{(0,\tilde{\lambda}_{p},0) }{(\sigma \,\sigma_{p})}\, ,
\end{equation}
where we have rescaled $W$ and $Z$ by a factor of $1/s$. The final formula for the amplitude then takes the form
\begin{equation}\label{final-form-hgs}
 \begin{split}
  \cA=\int  \frac1{\vol \,\GL(2,\C)} &\prod_{a=1}^n\frac {\rd^2 \sigma_a}{(a\, a+1)}  \,
   \prod_{i=1}^k  \bar\delta^2(\tilde \lambda_i - \tilde\lambda(\sigma_i))  \\ 
   &\prod_{p=k+1}^n\bar\delta^{2|\cN} (\lambda_p-\lambda(\sigma_p),\eta_p-\chi(\sigma_p))\, .
 \end{split}
\end{equation}
For notational simplicity we have taken the colour order to be $(1,\ldots,n )$; of course any other choice will just lead to the obvious re-ordering of the Parke-Taylor denominator.\\

Since we did not specify $\cN$ in this derivation, the expression \cref{final-form-hgs} for Yang-Mills amplitudes holds for any amount of supersymmetry. Moreover, the delta-functions localise the moduli integral completely: there are $2n$ bosonic delta functions and  $2n-4$ moduli integrals, the minus four coming from the $\vol (\GL(2,\C))$ quotient. The remaining delta functions encode momentum conservation, which can be seen explicitly from
\begin{equation}
\sum_{p=k+1}^{n}\lambda_{p}\tilde{\lambda}_{p}=\sum_{p=k+1}^{n}\tilde{\lambda}_{p}\sum_{j=1}^{k}\frac{\lambda_{j}}{(p \, j)}=-\sum_{j=1}^{k}\lambda_{j}\tilde{\lambda}_{j}.
\end{equation}
Here, we used the first (second) set of delta functions in \eqref{final-form-hgs} to get the first (second) equality. Similarly $\sum_{a=1}^{n}\tilde{\lambda}_{a}\eta_{a}=0$.

Defining $P_{\alpha\dot\alpha}(\sigma)=\lambda_\alpha(\sigma)\tilde\lambda_{\dot\alpha}(\sigma)$, the scattering equations \begin{equation}   
 k_a\cdot P(\sigma_a)=\lambda_a^\alpha\tilde\lambda_a^{\dot\alpha} P_{\alpha\dot\alpha}(\sigma_a)=0\,,
\end{equation}
follow on the support of the delta functions. The amplitude therefore localises, as in general dimensions, on the scattering equations. However, the scattering equations are here {\it refined} to give just those appropriate to the N$^k$MHV degree of the amplitude,
\begin{subequations}\label{eq5:ref-SE}
\begin{align}
&[\tilde \lambda_i\, \tilde\lambda(\sigma_i)]=0\, , && i=1\ldots k\,,\label{eq5:SE_1}\\
&\la \lambda_p \, \lambda(\sigma_p)\ra=0 , && p=k+1 \ldots n\,.\label{eq5:SE_2}
\end{align}
\end{subequations}
At N$^k$MHV, these equations have $\abinom{n-3}{k-2}$ solutions, where we denote by $\abinom{p}{q}$ the $(p,q)$ Eulerian\footnote{The Eulerian number $A(p,q)$ is the number of permutations of 1 to $p$ where $q$ elements are larger than their preceding element. They are defined recursively by $A(p,q)=(p-q)A(p-1,q-1)+(q+1)A(p-1,q)$.} number. In particular, these obey $\sum_{k}\abinom{n-3}{k-2}=(n-3)!$, and thus we recover the expected number of solutions to the scattering equation in vectorial representation when summing over all MHV degrees. Yang-Mills amplitudes obtained from the spinorial representation of ambitwistor strings are therefore valid for any amount of supersymmetry, and localise on the {\it refined scattering equations} \cref{eq5:ref-SE}.\\

The formula \eqref{final-form-hgs} can be verified at $\cN=0$ by comparison with  Witten's parity invariant representation of Yang-Mills amplitudes \cite{Witten:2004cp} (or analogous formulae in \cite{Cachazo:2013gna}), as discussed in \cref{N=0}, and extended to arbitrary $\cN$ by superconformal invariance. 
In particular, at $\cN=3$, it is equivalent to the RSVW formula \cref{eq3a:RSVW}, \cite{Roiban:2004yf}, as we will prove in \cref{sec5:RSVW}.


\subsection{Comparison to Witten's parity invariant formula for Yang-Mills} \label{N=0}

We now verify that \eqref{final-form-hgs} agrees with the parity invariant representation of the amplitude at $\cN=0$ due to Witten \cite{Witten:2004cp}. Using $u_\alpha$ as the homogeneous coordinates on the Riemann sphere,  the ambidextrous expression takes the form
\begin{multline}
\mathcal{M}=\int \rd^{2k}P_{\alpha}\, \rd^{2(n-k)}T_{\dot\alpha}\int\prod_{a=1}^{n}\frac{d^{2}u_{a}}{( u_a\, u_{a+1})}\prod_{b\neq a}\frac{1}{(u_b\, u_a) ^{2}}
\\
\prod_{l\in
  L}\delta^{2}\left(\lambda_{l}-\frac{P\left(u_{l}\right)}{\prod_{p\neq
      l} ( u_{p}u_{l}) }\right)
\prod_{r\in R}\delta^{2}\left(\lambda_{r}-P\left(u_{r}\right)\right)
\\
\prod_{r\in R}\delta^{2}\left(\tilde{\lambda}_{r}-\frac{T\left(u_{r}\right)}{\Pi_{t\neq r}( u_{t} u_{r}) }\right)
\prod_{l\in L}\delta^{2}\left(\tilde{\lambda}_{l}-T\left(u_{l}\right)\right)
.\label{eq:witten1}
\end{multline}
For convenience we have chosen $L=\{1,\ldots ,k\}$ and $R=\{k+1,\ldots,n\}$ (although in general there no correlation is implied between the colour ordering in the Park-Taylor denominator and the choice of $L$ and $R$). 

In order to compare with \eqref{final-form-hgs}, we need to integrate
out the moduli for the $P$'s and $T$'s against the delta functions for
$\lambda_l$, $l\in L$, and $\tilde \lambda_r$ for $r\in R$. This
can be done most simply by choosing a basis of the homogeneity
degree $k-1$ functions that form $P$ and degree $n-k-1$
functions that form $T$ respectively
\begin{equation}\label{basis-fns}
C_{l}(u)=\prod_{j\in L, j\neq l} \frac{(u_j\,
u)}{(u_j\, u_l)}\, , \qquad \widetilde C_{r}(u)=\prod_{q\in R, q\neq
r} \frac{(u_q\, u)}{(u_q\, u_r)}\, .
\end{equation}
These functions satisfy $C_l(u_j)=\delta_{lj}$ for $l,j\in L$, and
so integrating the moduli against the first set of delta functions in the second
and third lines of \eqref{eq:witten1} gives
\begin{equation}\label{P}
P(u)=\sum_{l\in L} \lambda_l C_l(u) \prod_{i\neq
  l}(u_i\,u_l)=
\sum_{l\in L} \frac{\lambda_l}{(u \, u_l)} \left(\prod_{r\in R} (u_r\, u_l)\right)  \prod_{j\in L}(u_j \, u)
 \, .
\end{equation}
If we now consider the remaining delta functions, they involve $P(u_r)$ for $r\in R$, that is
$$
P(u_r)=
\sum_{l\in L} \frac{\lambda_l}{(u_r \, u_l)} \left(\prod_{r'\in R} (u_r'\, u_l)\right) \prod_{j\in L}(u_j \, u_r)\, .
$$
Thus,  if for $l\in L$, $r\in R$ we make the substitutions
\begin{equation}\label{u-to-sigma}
\sigma_l=\frac{u_l}{\prod_{r\in R} (u_r\, u_l)}\, , \qquad \sigma_r=\frac{u_r}{\prod_{l\in L}(u_l\,  u_r)}\, ,
\end{equation}
we find
$$
P(u_r)=\sum_{l\in L} \frac{\lambda_r}{(\sigma_l\, \sigma_r)}=\lambda(\sigma_r)\, ,
$$
as required for agreement with the ambitwistor string formula.  Since the substitution \eqref{u-to-sigma} is left-right symmetric, we will similarly have, following the analagous procedure for $T(u)$, 
\begin{equation}\label{T}
T(u)=\sum_{r\in R} \tilde \lambda_r \widetilde C_r(u)\prod_{a\neq r} (u_a\, u_r)=\sum_{r\in R} \frac{\tilde \lambda_r }{(u\, u_r)} \left(\prod_{l\in L} (u_l\, u_r)\right)\prod_{q\in R} (u_q\, u)\, ,
 \end{equation}
and so similarly
$$
T(u_l)=\sum_{r\in R}\frac{\tilde \lambda_r}{(\sigma_l\, \sigma_r)}=\tilde\lambda(\sigma_l)\, .
$$

To see that we now obtain \eqref{final-form-hgs} with $\cN=0$ we must check the determinants following the various changes of variables.  Firstly we observe that 
$$
\prod_{a=1}^{n}\frac{d^{2}u_{a}}{( u_a\, u_{a+1})}=\prod_{a=1}^{n}\frac{d^{2}\sigma_{a}}{( \sigma_a\, \sigma_{a+1})}\, .
$$
Furthermore, when we integrate the moduli against the first set of delta functions in the second and third lines of \eqref{eq:witten1}, this gives the Jacobians $\Pi_{l\in L,p\neq l}\left(u_{p}u_{l}\right)^{2}$ and $\Pi_{r\in R,t\neq r}\left(u_{t}u_{r}\right)^{2}$, respectively. These Jacobians cancel the factor of $\Pi_{b\neq a}\frac{1}{\left(u_{b}u_{a}\right)^{2}}$ in \eqref{eq:witten1}, yielding \eqref{final-form-hgs} for $\cN=0$.

\subsection{Comparison to the RSVW formula}\label{sec5:RSVW}
Let us also briefly discuss the equivalence of \cref{final-form-hgs} for $\mathcal{N}=3$ to the formula \cref{eq3a:RSVW} for $\cN=4$ super Yang-Mills amplitudes \cite{Witten:2003nn,Berkovits:2004hg,Roiban:2004yf} derived from the twistor string. This comparison is facilitated by the similarity of the underlying worldsheet theories, as we will see below.\\

Recall that for maximal supersymmetry, the twistor string \cite{Witten:2003nn,Berkovits:2004hg} leads to the RSVW formula \cite{Roiban:2004yf},
\begin{equation}
 \cA=\int \frac{\prod_{r=0}^d d^{4|4}Z_r} {\vol \,\GL(2,\C)} \prod_{a=1}^n \frac{\rd \sigma_a}{ (a\,a+1)}\prod_{a=1}^n A_a(Z)\,.
\end{equation}
To make contact with the scattering amplitudes derived from the ambitwistor string, we will need to intergrate out the map moduli $Z_r(\sigma)=(\lambda,\mu,\chi)$. On the support of momentum eigenstates,
\begin{equation*}
A_{a}(Z)=\int\frac{\rd t_{a}}{t_{a}}\bd^{2|4}\left(\lambda_{a}-t_{a}\lambda(\sigma_{a})|\eta_a-t_a\chi(\sigma_a)\right)e^{it_{a}[\mu(\sigma_{a}),\tilde{\lambda}_{a}]}\,,
\end{equation*}
with the map moduli
\[
\left(\lambda,\mu,\chi\right)(\sigma)=\sum_{r=0}^{d}\left(\rho_{r},\mu_{r},\chi_{r}\right)C_{r}(\sigma)\,,
\]
the integration over the moduli $\mu_r$ is straightforward, and we obtain the delta-functions
\begin{equation*}
\prod_{r=0}^{d}\delta^{2}\left(\sum_{a=1}^{n}t_{a}\tilde{\lambda}_{a}C_{r}(\sigma_{a})\right)\,.
\end{equation*}
The remaining moduli $(\rho_r,\chi_r)$ can be integrated from the $k$ of the delta-functions associated with the momentum eigenstates by choosing a convenient basis for $C_r(\sigma)$, where $C_{i-1}(\sigma_j)=\delta_{ij}$ for $i,j\in\{1,\dots,k\}$, similar to the strategy in \cref{N=0};
\begin{equation}
 C_{i-1}(\sigma)=\prod_{l\neq i, l=1}^k \frac{(\sigma\,\sigma_l)}{(i\,l)}, \qquad\qquad\text{for } i=1,\dots,k\,.
\end{equation}
This trivially determines $\rho_{r=i-1}=\lambda_i/t_i$ and $\chi_{r=i-1}=\eta_i/t_i$, and integrating out the moduli leaves us with
\begin{equation*}
\begin{split}
 \cA=\int &\frac{\prod_{a=1}^n \rd \sigma_a} {\vol \,\GL(2,\C)} \prod_{a=1}^n \frac{\rd t_a}{t_a} \frac1{(a\,a+1)}\\
 &\qquad \prod_{i=1}^{k}\delta^{2}\left(\tilde{\lambda}_{i}+\frac{1}{t_{i}}\sum_{p=k+1}^{n}t_{p}\tilde{\lambda}_{p}\prod_{l=1,l\neq i}^{k}\frac{(p\, l)}{(i\, l)}\right)\\
 &\qquad\qquad \prod_{p=k+1}^{n}\delta^{2|4}\left(\lambda_{p}-t_{p}\sum_{i=1}^{k}\frac{\lambda_{i}}{t_{i}}\prod_{l\neq i,l=1}^{k}\frac{(p\, l)}{(i\, l)}\right)\,.
 \end{split}
\end{equation*}
As a final step, we can now perform a change of variables;
\begin{align*}
 &s_i=\frac1{\prod_{l=1,l\neq i}^k (i\,l)\,t_i} &&i=1,\dots k\,,\\
 &s_p=\prod_{l=1}^k (p\, l)\, t_p && p=k+1,\dots,n\,,
\end{align*}
which maps the RSVW representation of super Yang-Mills amplitudes to
\begin{equation}
 \begin{split}
\cA=\int&\frac{1}{\vol\,\GL(2,\C)}\prod_{a=1}^{n}\frac{\rd^{2}\sigma_{a}}{(a\, a+1)}\\
&\qquad\prod_{i=1}^{k}\bar{\delta}^{2}(\tilde{\lambda}_{i}-\tilde{\lambda}(\sigma_{i}))\prod_{p=k+1}^{n|4}\bar{\delta}^{2}\left(\lambda_{p}-\lambda(\sigma_{p})|\eta_{p}-\chi(\sigma_{p})\right)\,,
  \end{split}
\end{equation}
where the fields $\lambda(\sigma)$, $\tilde{\lambda}(\sigma)$ and $\tilde\chi(\sigma)$ are given by
\begin{align}
(\lambda(\sigma),\chi(\sigma))=\sum_{i=1}^{k}\frac{(\lambda_{i},\eta_i)}{(\sigma\,\sigma_{i})}\,,
\qquad\qquad
\tilde{\lambda}(\sigma)=\sum_{p=k+1}^{n}\frac{\tilde{\lambda}_{p}}{(\sigma\,\sigma_{p})} \, .
\label{ne4}
\end{align}

To make contact with the $\cN=3$ representation of the ambitwistor string, note that the on-shell superfield of $\mathcal{N}=4$ super Yang-Mills can be encoded in two $\mathcal{N}=3$ on-shell superfields \cite{Elvang:2011fx}. In particular, the $\mathcal{N}=3$ superfields can be obtained by either integrating out $\eta_4$ or setting $\eta_4=0$ in the $\mathcal{N}=4$ superfield, and both descriptions encode the same field content. Hence, $\mathcal{N}=4$ SYM amplitudes are equivalent to $\mathcal{N}=3$ super Yang-Mills amplitudes at tree-level, and we can obtain an $\mathcal{N}=3$ representation of the N$^{\text{k-2}}$MHV amplitude \eqref{ne4} by integrating out $\eta_i$ for $i=1,...,k$ and setting $\eta_j=0$ for $j=k+1,...,n$. We are then left the amplitude in the ambitwistor string representation \eqref{final-form-hgs} for $\mathcal{N}=3$, which concludes the proof for $\cN=4$ Yang-Mills amplitudes.

\section{Einstein gravity}\label{sec5:gravity}
\subsection{Einstein gravity in the ambitwistor string}
While the action of the ambitwistor string was straightforwardly constructed from the pull-back of the contact structure, the worldsheet model for Einstein gravity needs additional structures. We thus construct it as an ambitwistor analogue of the $\cN=8$ twistor string \cite{Skinner:2013xp} for flat space-times. Note however that it can be constructed more generically for non-vanishing cosmological constant $\Lambda$, and has beautiful derivation from a split $(1|2)$ worldsheet supermanifold, see \cite{Skinner:2013xp,Adamo:2013tca}.\\

The basic ingredients of the model are again the worldsheet spinor fields $(Z,W)$, encoding the map into $\T\otimes\T^*$. In addition, we introduce (dual) twistor- and dual twistor-valued fermionic worlsheet spinors $(\rho,\tilde\rho)$, and thus the field content of the theory is given by
\begin{subequations}
\begin{align}
 &Z\in\Omega^0(\Sigma, K_\Sigma^{1/2}\otimes\T)\,, && \rho\in\Pi\Omega^{0}(\Sigma,  K^{1/2}\otimes\T)\,,\\
 &W\in\Omega^0(\Sigma, K_\Sigma^{1/2}\otimes\T^*)\,, && \tilde\rho\in\Pi\Omega^{0}(\Sigma,K^{1/2}\otimes\T^*)\,,\\
 &a\in\Omega^{0,1}(\Sigma)\,.
\end{align}
\end{subequations}
In order to describe gravity, we introduce infinity twistors $I_{IJ}$ and $I^{IJ}$ satisfying $I^{IJ}I_{JK}=\Lambda\delta^{I}_{K}$. These break conformal invariance by determining a metric on space-time, see \cref{sec5:rev-twistor} for details. In general the infinity twistor can encode a cosmological constant $\Lambda$ and a gauging of $R$-symmetry \cite{Wolf:2007tx}, but they are rank $2$ tensors in the simplest zero cosmological constant ungauged case that we will work with here, setting $I_{IJ}Z_1^IZ_2^J=\la \lambda_1\, \lambda_2\ra=:\la Z_1, Z_2\ra$ and $I^{IJ}W_{1I}W_{2J}=[\tilde\lambda_1\, \tilde\lambda_2]=:[W_1, W_2]$.   
In the ambitwistor string for gravity, conformal invariance is broken by the infinity twistors in the following gauged currents:
\begin{equation}\label{eq5:K_a}
K_a=\left(Z\cdot W, \rho\cdot \tilde \rho,  W\cdot \rho, [W, \tilde \rho], Z\cdot \tilde \rho, \la Z, \rho\ra, \la \rho, \rho\ra, [\tilde \rho, \tilde \rho]\right)\,.
\end{equation}
Note in particular that, as in the Yang-Mills model, the constraint $Z\cdot W=0$ and the associated gauge symmetry ensure that the worldsheet is mapped into ambitwistor space. Gauging these currents and gauge fixing all the gauge fields to zero leads to the introduction of the corresponding ghost systems $(\beta_a,\gamma^a)$, 
together with the by now familiar fermionic $(b,c)$ worldsheet reparametrisation ghosts\footnote{The original model \cite{Skinner:2013xp} was introduced without a $(b,c)$ ghost system. In its absence, there exists no natural set of coordinates for the worldsheet moduli, and the automorphism group has to be fixed `by hand'. We follow the route proposed in  \cite{Adamo:2013tca} by simply adding the $(b,c)$ system here.} \cite{Adamo:2013tca}.  The BRST $Q$-operator for this theory is then given by
\begin{equation}
Q=\int c T + \gamma^aK_a -\frac i2 \beta_a \gamma^b\gamma^c  C^a_{bc}\, ,
\end{equation}
where $C^a_{bc}$ are the structure constants of the current algebra $K_a$, and the ghosts take values in 
\begin{subequations}
\begin{align}
 &c\in\Pi\Omega^0(\Sigma,T_\Sigma)\,, && \gamma_r\in \Pi\Omega^0(\Sigma)\,,&& \gamma_s\in \Omega^0(\Sigma)\,,\\
 &b\in\Pi\Omega^0(\Sigma,K_\Sigma^{2})\,, && \beta_r\in \Pi\Omega^0(\Sigma,K_\Sigma)\,, && \beta_s\in \Omega^0(\Sigma,K_\Sigma)\,,
\end{align}
\end{subequations}
for $r=1,2,7,8$, $s=3,4,5,6$. While $Q$ is again anomalous, vertex operators will still be BRST-closed. The gauge fixed ambitwistor string action then becomes
\begin{equation}\label{eq5:S-gravity}
 S=\frac{1}{2\pi}\int_{\Sigma} W\cdot\dbar Z-Z\cdot\dbar W+\tilde{\rho}\dbar\rho -\rho\dbar\tilde{\rho}+b\,\dbar c+\sum_a \beta_a\dbar\gamma^a\,.
\end{equation}
\smallskip

In contrast to the Yang-Mills ambitwistor string, vertex operators in this model must also account for the zero-modes\footnote{In the original set-up on a split $(1|2)$ worldsheet supermanifold, these correspond to fermionic automorphisms, see \cite{Skinner:2013xp}.} of the ghosts $\gamma^a$. The model therefore contains fixed and integrated\footnote{where integration is with respect to the ghost zero modes.} vertex operators, with respect to both $\nu=(\gamma^3,\gamma^4)$ and $\tilde\nu=(\gamma^5,\gamma^6)$,
\begin{subequations}\label{eq5:VO-gravity}
\begin{align}
&\widetilde{V}=\int_{\Sigma}\delta^{2}(\tilde{\nu})\,\tilde{h}(W)\,, &&\widetilde{\cV}=\int_{\Sigma}\left\la Z,\frac{\partial \tilde{h}}{\partial W}\right\ra +\left\la\rho, \frac{\partial}{\partial W}\right\ra \tilde{\rho}\cdot\frac{\partial\tilde{h}}{\partial W}\,,\\
&V=\int_{\Sigma}\delta^{2}(\nu)\,h(Z)\,,&& \cV=\int_{\Sigma}\left[W,\frac{\partial h}{\partial Z}\right]+\left[\tilde{\rho},\frac{\partial}{\partial Z}\right]\rho\cdot\frac{\partial h}{\partial Z}\,.
\end{align}
\end{subequations}
$Q$-invariance $\{Q,V\}=\{Q,\widetilde{V}\}=0$ implies that the vertex operators are built from $\dbar$-closed $(0,1)$-forms $h$ ($\tilde{h}$) of weight two on (dual) twistor space, $h\in H^{1}(P\T,\cO(2))$ and $\tilde{h}\in H^{1}(P\T^{*},\cO(2))$, in accordance with the Penrose transform for gravitons. 

The integrated picture is obtained from fixed vertex operators via the usual superconformal descent procedure (cf. \cite{Friedan:1985ge, Witten:2012bh}).  In the worldsheet correlation function, this can be accomplished by the insertion of picture changing operators (PCOs) for the corresponding ghost systems. Following \cite{Verlinde:1987sd}, a picture changing operator for a generic bosonic $\beta\gamma$-system takes the form
\begin{equation*}
 \Upsilon_{\beta}=\left\{Q,\Theta(\beta)\right\}=\delta(\beta)\,\{Q,\beta\}\,.
\end{equation*}
This implies in the ambitwistor string
\begin{equation}
\Upsilon=\delta^{2}(\mu)\,W\cdot\rho\,[W,\tilde{\rho}]\,,\qquad \widetilde{\Upsilon}=\delta^{2}(\tilde{\mu})\,\la\rho,Z\ra\,\tilde{\rho}\cdot Z\,,
\end{equation}
where, $\mu=(\beta^3,\beta^4)$, and $\tilde\mu=(\beta^5,\beta^6)$. The integrated vertex operators are then obtained from the OPE of the picture changing operator $\Upsilon$ and the fixed vertex operator.

To make contact with the twistor string calculations, a convenient choice for the wave functions will be momentum eigenstates, with $h$ and $\tilde{h}$ are given by
\begin{subequations}
\begin{align} \label{grav_mom_ES}
 h_a&=\int \frac{\rd s_a}{s_a^3}\bar\delta^{2|\cN}(\lambda_a-
 s_a\lambda|\eta_a-s_a\chi)\e^{is_a[\mu\,
  \tilde\lambda_a]}\,,\\
 \tilde{h}_a&=\int\frac{\rd s_{a}}{s_{a}^{3}}\bar{\delta}^{2}(\tilde{\lambda}_{a}-s_{a}\tilde{\lambda})\e^{is_{a}\left(\la\tilde{\mu}\,\lambda_{a}\ra+\tilde{\chi}_{r}\eta_{a}^{r}\right)}\,.
\end{align}
\end{subequations}
To describe full $\cN=8$ supergravity  we can again use the above vertex operators for $\cN=7$. This suggests an interesting connection with Hodges' $\cN=7$ formalism, \cite{Hodges:2011wm}.

To obtain tree-level amplitudes, we are interested in calculating worldsheet correlation functions on $\Sigma\cong\P^1$ for degenerate infinity twistors (i.e. $\Lambda= 0$). According to the Riemann-Roch theorem, the ghosts $\nu=(\gamma^3,\gamma^4)$, $\tilde\nu=(\gamma^5,\gamma^6)$ each develop one zero mode, that is fixed by the insertion of one each of the unintegrated vertex operators $V$ and $\tilde{V}$. Amplitudes are therefore given by the worldsheet correlation function saturating all zero-modes,
\begin{equation}
 \cM=\left\langle \widetilde V_{\tilde h_1}\prod_{i=2}^{k} \widetilde{\mathcal{V}}_{\tilde h_i} \prod_{p=k+1}^{n-1} \mathcal{V}_{{h}_p} V_{{h}_n}\right\rangle\,.
\end{equation}
For vanishing cosmological constant, the number of Wick contractions contributing to this correlator is greatly reduced.  Using standard matrix-tree theorem arguments for the fermionic $(\rho,\tilde \rho)$ system \cite{Feng:2012sy, Skinner:2013xp}, the gravity amplitude is then given by 
\begin{equation} \label{Mgrav_4d}
  \cM=\int \frac{\prod_{a=1}^n \rd^2 \sigma_a}{\vol \,\GL(2,\C)}\: \text{det}'(\cH)\, \prod_{i=1}^k  \bar\delta^2(\tilde \lambda_i - \tilde\lambda(\sigma_i))\prod_{p=k+1}^n\bar\delta^{2|\cN} (\lambda_p-\lambda(\sigma_p)\,\eta_p-\chi(\sigma_p))\, .
\end{equation}
Let us expand on this in a bit more detail. The matrix $\cH$ is built from the correlation function of the $(\rho,\tilde\rho)$ system. Since it is a free CFT, its two-point function is given by
\begin{subequations}
\begin{align}
 &\langle \rho_\alpha(\sigma_i)\tilde\rho^\beta(\sigma_j)\rangle =\delta^\beta_\alpha \frac{(\sigma_i\,\d\sigma_i)^{1/2}(\sigma_j\,\d\sigma_j)^{1/2}}{(i\,j)}\,,\\
 &\langle \rho^{\dot\alpha}(\sigma_i)\tilde\rho_{\dot\beta}(\sigma_j)\rangle =\delta_{\dot\beta}^{\dot\alpha} \frac{(\sigma_i\,\d\sigma_i)^{1/2}(\sigma_j\,\d\sigma_j)^{1/2}}{(i\,j)}\,.
\end{align}
\end{subequations}
This leads to 
\begin{equation}
\cH= \begin{pmatrix}{ \mathbb{H}}& 0\\ 0&\widetilde{\mathbb{H}}\end{pmatrix},
\end{equation}
where, for $i,j\in\{1,...,k\}$ and $p,q\in\{k+1,...,n\}$, with $i\neq j$ and $ p\neq q$,
\begin{subequations}
\begin{align}
 &{\mathbb{H}}_{ij}=\frac{\braket{i\, j}}{(i\, j)},  && {\mathbb{H}}_{ii}=-\sum_{j=1,j \neq i}^{k} {\mathbb{H}}_{ij}\\
 &\widetilde{\mathbb{H}}_{pq}=\frac{[p\, q]}{(p\, q)}, &&  \widetilde{\mathbb{H}}_{pp}=-\sum_{q=k+1,q \neq p}^n \widetilde{\mathbb{H}}_{pq}.
\end{align}
\end{subequations}
The off-diagonal element $\cH_{ij}$ corresponds to the contraction of the $\rho$-term in the $i$th vertex operator with the $\tilde \rho$-term in the $j$th, and the diagonal elements of $\cH$ come from the remaining terms in the integrated vertex operators (see \cite{Skinner:2013xp}). Since the system $(\rho,\tilde\rho)$ does not develop zero modes, all $\rho$ and $\tilde{\rho}$ insertions must be absorbed by Wick contractions. However, due to the degenerate infinity tensor, $\cV$ only contains $\dot\alpha$ components of $\tilde\rho_{\dot\alpha}$, while $V$ only involves $\rho^\alpha$. Therefore, there are no contractions between the two types of vertex operators, and the matrix $\cH$ becomes block-diagonal. Moreover, the fixed vertex operators imply that instead of obtaining the full determinant, the correlator is given by $\det' \cH$, where we omitting a row and column from each of $\widetilde{\mathbb H} $ and $\mathbb{H}$ associated to the unintegrated vertex operators for 1 and $n$. The amplitude is independent of this choice because each matrix has co-rank one, with the kernel spanned by the vector $(1,\ldots,1)$. As in the Yang-Mills case, the path integral over the $(Z, W)$ system enforces the equations of motion \cref{eq5:EoM}, and thus the amplitude is given by \cref{Mgrav_4d}.\\

In analogy to the Yang-Mills case, the graviton scattering amplitude localises on the scattering equations {\it refined by MHV degree}, and is valid for any amount of supersymmetry.\\

The equivalence to the Cachazo-Skinner formula \cref{eq:CS} \cite{Cachazo:2012kg}, proven in \cref{sec5:compgrav}, is seen by following the Yang-Mills strategy to integrate out the moduli and making judicious identifications of reference spinors with the given $\sigma_a$.

\subsection{Comparison to the Cachazo-Skinner formula}\label{sec5:compgrav}
In order to prove the new gravitational formula \eqref{Mgrav_4d}, we will make contact with
the Cachazo-Skinner formula for maximal supergravity \cite{Cachazo:2012pz, Cachazo:2012kg}; note that the similarities in the construction for the theories will facilitate this comparison. Recall the Cachazo-Skinner formula
\begin{equation} \label{M_CS}
 \cM_{CS}=\int \frac{\prod_{r=0}^d \rd^{4|8}Z_r}{\vol \,\GL(2,\C)}\prod_{a=1}^n (\sigma_a\, \rd \sigma_a) |\Phi|'|\widetilde{\Phi}|'\prod_{a=1}^{n}h_a(Z),
\end{equation}
where, for momentum eigenstates, the matrices $\Phi$ and $\wt\Phi$ are defined by
\begin{subequations}
\begin{align}
 \widetilde{\Phi}_{ab}&=\frac{[a\,b]}{(a\,b)}t_at_b && \widetilde{\Phi}_{aa}=-\sum_{b\neq a}\frac{[a\,b]}{(a\,b)}t_at_b\prod_{r=0}^{d}\frac{(b\,w_r)}{(a\,w_r)},\\
 \Phi_{ab}&=\frac{\la a\,b \ra}{(a\, b)}\frac1{t_at_b} && \Phi_{aa}=-\sum_{b\neq a}\frac{\la a\,b \ra}{(a\, b)} \frac1{t_at_b}\prod_{r=0}^{\tilde{d}}\frac{(b\,u_r)}{(a\,u_r)}\prod_{c\neq a,b}\frac{(a\, c)}{(b\, c)}.
\end{align}
\end{subequations}
The reduced determinant $|\Phi|'$ and $|\widetilde{\Phi}|'$ are obtained by removing $d+2$ rows and columns of the matrix $\widetilde{\Phi}$ and $n-d$ rows and columns of $\Phi$ respectively, and including the appropriate Vandermonde factors, see \cite{Cachazo:2012pz, Cachazo:2012kg} and \cref{sec5:review} for more details. \\

The Cachazo-Skinner formula \eqref{M_CS} can be mapped onto the newly derived formula \eqref{Mgrav_4d} for gravity scattering amplitudes by integrating out the moduli $Z_r=(\rho_r, \mu_r,\chi_r)$ and a convenient choice of reference spinors $u_r$, $w_r$. As a general outline for the proof of the equivalence of these formulae, we will proceed as follows:
\begin{enumerate}[i.]
 \item Eliminate convenient rows and columns from the matrices $\Phi$ and $\wt{\Phi}$,
 \item integrate out the moduli $Z_r$, using the delta-functions appearing in the first $k$ momentum eigenstates,
 \item change variables from $t_a$ to $s_i$, $s_p$, where $i=1,\dots,k$, $p=k+1,\dots,n$,
 \item simplify the new matrices $\Phi'$ and $\wt{\Phi'}$ by extracting common factors occuring in all rows and columns, and lastly
 \item make a convenient choice for the reference spinors $u_r$, $w_r$.
\end{enumerate}
In what follows, we will discuss these steps in more detail.\\

As already manifest in the structure of the matrices $\Phi$ and $\wt{\Phi}$, these will correspond to $\mathbb{H}$ and $\wt{\mathbb{H}}$ respectively. To make this correspondence more explicit, we can choose to remove the $k=d+1$ rows and columns associates to the particles $i=1,\dots,k$ from $\wt\Phi$, and equivalently the $n-k=n-d-1$ rows and columns $p=k+1,\dots,n$ from $\Phi$. Note that with this choice, both reduced matrices $\Phi_{(k\times k)}$ and $\wt\Phi_{(n-k\times n-k)}$ have now co-rank 1, and we obtain
\begin{subequations}
\begin{align}\label{detPhi}
  |\wt\Phi|'&=\frac{\text{det}(\wt\Phi_{\text{red}})}{\prod_{\substack{i,j=1\\ i<j}}^k (i\,j)^2\prod_{j=1}^k (j\, q_{\text{rem}})^2}\\
  |\Phi|'&=\frac{\text{det}(\Phi_{\text{red}}) \prod_{l=1,l\neq j_{\text{rem}}}^k (l\,j_{\text{rem}})^2}{\prod_{i<j,i,j=1}^k (i\,j)^2}\,, 
\end{align}
\end{subequations}
where we have denoted the particles removed in addition to those chosen above by $j_{\text{rem}}$ and $q_{\text{rem}}$ for $\Phi$ and $\wt\Phi$ respectively. \\
 
With the choice of momentum eigenstates \eqref{grav_mom_ES}, it is straightforward to explicitly integrate out the map moduli $\mu_r$ and $\chi_r$ from the Cachazo-Skinner formula following the same procedure as in Yang-Mills, leading to the delta-functions
 \begin{equation*}
  \prod_{r=0}^d \bd^{2|8}\Big(\sum_{a=1}^n t_a \tilde{\lambda}_a C_r(\sigma_a)\Big).
 \end{equation*}
 Recall that, in addition, momentum eigenstates supply $2n$ delta-functions which localize $\lambda(\sigma_a)$ on $\lambda_a$;
 \begin{equation*}
  \prod_{a=1}^n\bd^2\left(\lambda_a-t_a\lambda(\sigma_a)\right)=\prod_{a=1}^n\bd^2\Big(\lambda_a-t_a\sum_{r=0}^d \rho_r C_r(\sigma_a)\Big)\,.
 \end{equation*}
$d+1=k$ of these delta-functions determine the map coefficients $\rho_r$, and thus localize the integral over the bosonic moduli. In this context, we can use the same convenient choice of basis for $C_r(\sigma)$ as for Yang-Mills, where $C_{i-1}(\sigma_j)=\delta_{ij}$ for $i,j\in\{1,\dots,k\}$;
 \begin{equation}
  C_{i-1}(\sigma)=\prod_{l=1,l\neq i}^k\frac{(\sigma\,\sigma_l)}{(i\,l)}.
 \end{equation}
To compare the resulting formula directly to the gravitational formula derived from the ambitwistor string model, we will perform a change of variables;
\begin{subequations}
 \begin{align} \label{change-t-s-i}
  &s_i=\frac1{\prod_{j=1,j\neq i}^k (ij)\, t_i} && i\in\{1,\dots,k\}\\
  &s_p=\prod_{j=1}^k (pj)\, t_p && p\in\{k+1,\dots,n\}. \label{change-t-s-p}
 \end{align}
 \end{subequations}
 The delta-functions now take the form
 \begin{align*}
  \prod_{i=1}^k\delta^{2|8} \Big(\tilde\lambda_i -s_i\sum_{p=k+1}^n \frac{s_p \tilde\lambda_p}{(i\,p)}\Big) \,\prod_{p=k+1}^n \delta^2 \Big(\lambda_p-s_p\sum_{i=1}^k \frac{s_i\lambda_i}{(p\,i)}\Big),
 \end{align*}
 where the remaining factors of $(i\,p)$ in the denominator stem from the different range of the indices in the products of \eqref{change-t-s-i} and \eqref{change-t-s-p}. After the change of variables, the entries of the matrices $\wt\Phi$ and $\Phi$ can be simplified immensely by extracting the common factors $(\prod_{l=1}^k (p\,l)\prod_{l=1}^k (q\,l))^{-1}$ occuring in the $p^{\text{th}}$ row and $q^{\text{th}}$ column of $\wt\Phi'$ (and $\prod_{l=1,l\neq i}^k (i\,l)\prod_{l=1,l\neq j}^k (j\,l)$ from the $i^{\text{th}}$ row and $j^{\text{th}}$ column of $\Phi'$). Denoting the matrices obtained this way by $\Phi^{H}$ and $\tilde\Phi^{H}$, we find
 \begin{align*}
  &\wt\Phi^H_{pq}= \frac{[p\,q]}{(p\,q)}s_ps_q =\wt\HH_{pq}\,,\\
  &\wt\Phi^H_{pp}= -\sum_{q=k+1,q\neq p} \frac{[p\,q]}{(p\,q)}s_ps_q \frac{\prod_{l=1}^k (p\,l)}{\prod_{l=1}^k (q\,l)} \prod_{r=0}^d \frac{(q\,w_r)}{(p\,w_r)}\\
  &\quad\qquad - \sum_{j=1}^k \frac{[p\,j]}{(p\,j)}\frac{s_p}{s_j}\frac{\prod_{l=1}^k(p\,l)}{\prod_{l=1,l\neq j}^k(j\,l)} \prod_{r=0}^d \frac{(j\,w_r)}{(p\,w_r)}\,,
 \end{align*}
 and 
 \begin{align*}
  &\Phi^H_{ij}= \frac{\la i\,j\ra}{(i\,j)}s_is_j=\HH_{ij}\,, \\
  &\Phi^H_{ii}= -\sum_{j=1,j\neq i}^k \frac{\la i\,j\ra}{(i\,j)}s_is_j \frac{\prod_{r=k+1}^n (i\,r)}{\prod_{r=k+1}^n (j\,r)}\prod_{r=0}^{\tilde{d}}\frac{(j\,u_r)}{(i\,u_r)}\\
  &\quad\qquad - \sum_{q=k+1}^n  \frac{\la i\,q\ra}{(i\,q)}\frac{s_i}{s_q} \frac{\prod_{r=k+1}^n (i\,r)}{\prod_{r=k+1,r\neq q}^n (q\,r)}\prod_{r=0}^{\tilde{d}}\frac{(q\,u_r)}{(i\,u_r)}\,.
 \end{align*}
In writing down the full amplitude, we also have to include Jacobians from the measure after the change of variables, the choice of basis for $C_r(\sigma)$, the Vandermonde factors associated to the generalized determinants $|\Phi|'$ and $|\wt\Phi|'$ \eqref{detPhi}, and the rescaling the matrices $\Phi'$ and $\wt\Phi'$ to $\Phi^H$ and $\wt\Phi^H$ discussed in the previous step. Combining these contributions leaves us with 
 \begin{equation} \label{CS2}
 \begin{split}
 \cM=\int &\frac{\prod_{a=1}^n (\sigma_a\, \rd \sigma_a)}{\vol \,\GL(2,\C)} \prod_{a=1}^n\frac{\rd s_a}{s_a^3}   \,|\Phi^{H}|_{j_\text{rem}}^{j_\text{rem}}|\tilde\Phi^{H}|_{q_\text{rem}}^{q_\text{rem}}\\
  &\prod_{i=1}^k\delta^{2|8} \big(\tilde\lambda_i -s_i\tilde\lambda(\sigma_i)\big) \,\prod_{p=k+1}^n \delta^2 \big(\lambda_p-s_p\lambda(\sigma_p)\big)\,,
 \end{split}
 \end{equation}
 where we have used the definitions of $\lambda(\sigma)$ and $\tilde\lambda(\sigma)$ familiar from above,
 \begin{equation}
  \lambda(\sigma)=\sum_{i=1}^k \frac{s_i\lambda_i}{(\sigma\,\sigma_i)}\,, \qquad\qquad \tilde\lambda(\sigma)= \sum_{p=k+1}^n \frac{s_p \tilde\lambda_p}{(\sigma\,\sigma_p)}\,.
 \end{equation}
 
As a last step, we will make a convenient choice for the reference spinors: Let $\{\sigma_{w_r}|r=0,\dots d\}=\{\sigma_l|l=1,\dots,k\}$, and furthermore $\{\sigma_{u_r}|r=0,\dots \tilde{d}\}=\{\sigma_q|q=k+1,\dots,n\}$. Note in particular that, with this choice of reference spinors, we do not encounter any singularities, as $w_r=l\neq p$ for all $r$ (and $u_r=q\neq i$). The diagonal entries of the matrices thus become
 \begin{align*}
  \wt\HH_{pp}\equiv\wt\Phi^H_{pp}&=-\sum_{q=k+1,q\neq p} \frac{[p\,q]}{(p\,q)}s_ps_q\,,\\
  \HH_{ii}\equiv\Phi^H_{ii}&=-\sum_{j=1,j\neq i}^k \frac{\la i\,j\ra}{(i\,j)}s_is_j\,.
 \end{align*}
With these choices for the reference spinors, the matrices $\Phi^H$ and $\wt\Phi^H$ coincide with $\HH$ and $\wt\HH$. Again, let us remark briefly on the representation of the $\cN=8$ supersymmetry. Whereas the representation chosen here makes the full $\cN=8$ supersymmetry of the amplitude manifest, the scattering amplitude derived from the ambitwistor string model exhibits only manifest $\cN=7$ supersymmetry. Note however that the two formulations are equivalent; as in Yang-Mills, the $\cN=8$ supermultiplet field content can be encoded in two $\cN=7$ superfields obtained by either integrating out or setting to zero $\eta_a^8$. In particular, to obtain the N$^{\text{k-2}}$MHV amplitude, we will integrate out $\tilde\eta_i^8$ and set $\tilde\eta_p^8=0$ for $i\in\{1,\dots,k\}$, $p\in\{k+1,\dots,n\}$.\\

This concludes the proof of \eqref{Mgrav_4d} for gravitational scattering amplitudes; we have shown explicitly that for $\cN=8$, the amplitude derived as the worldsheet correlation function of an ambitwistor string yields the same result as the Cachazo-Skinner formula. However, \eqref{Mgrav_4d} constitutes a greatly simplified form, where the moduli have been integrated out. Moreover, the formula is manifestly permutation invariant and localizes manifestly on the support of the scattering equations, thereby highlighting the connection between the earlier Cachazo-Skinner formula and the recently proposed CHY formulae.

\paragraph{Grassmannian and link representations.}
Before concluding the discussion of gravity amplitudes from the ambitwistor string, let us briefly comment on the relation of the representation for gravity amplitudes obtained here to the Grassmannian or link representations \cite{Cachazo:2012pz,He:2012er}. These express the graviton N$^{k-2}$MHV amplitude as a multidimensional contour integral over the Grassmannian G$(k,n)$ in terms of link variables \cite{ArkaniHamed:2009si}, reminiscent of the connected prescription for $\cN=4$ super Yang-Mills \cite{ArkaniHamed:2009dg,Bourjaily:2010kw,Bullimore:2009cb,Dolan:2009wf,Dolan:2010xv,Dolan:2011za,Nandan:2009cc,Spradlin:2009qr}. 

In the ambitwistor string, the Grassmannian and link representations are obtained from choosing the external vertex operators to be constructed from elemental states. These are pull-backs of twistor or dual twistor eigenstates,  supported at points in twistor space and are cohomology valued, $f\in H^1(P\T,\cO(2h-2))$ and $\tilde{f}\in H^1(P\T^*,\cO(2h-2))$, and thus correspond via the Penrose transform to fields on space-time.  The prototypes for such wave functions on twistor space are
\begin{subequations}
\begin{align}
&f_{Z_i}(Z)=\int \frac{\rd s}{s^{2h-1}} \delta^{4|\cN}(Z_i-sZ)\, ,&& f_{W_i}(Z)= \int \frac{\rd s}{s^{2h-1}} \exp( s W_i\cdot Z)\, ,\\
&\tilde{f}_{W_i}(W)=\int \frac{\rd s}{s^{2h-1}} \delta^{4|\cN}(W_i-sW)\, , &&\tilde{f}_{Z_i}(W)= \int \frac{\rd s}{s^{2h-1}} \exp( s Z_i\cdot W)\, .
\end{align}
\end{subequations}
In particular, this gives rise directly\footnote{With the same definition of the link variables as in \cite{Cachazo:2012pz}.} to the Grassmannian-like representation (5.9) of \cite{Cachazo:2012pz} when all external wave functions are obtained from twistors, with $k$ vertex operators using $f_{Z_i}(Z)$, and $n-k$ obtained from $f_{W_i}(Z)$. Similarly, we obtain the link representation of \cite{He:2012er} from choosing all wave functions as exponential elemental states. These representations thus emerge naturally from the ambitwistor string.

\section{Four dimensional ambitwistor strings at null infinity} \label{sec5:scri}
An interesting question to ask in the context of \cref{chapter4} is whether there exists as well a spinorial representation of ambitwistor strings in four dimensions tying into the asymptotic structure of space-time. Since the twistorial representation of ambitwistor strings action is constructed as the pull-back of the contact structure to the worldsheet, we can use the same idea of identifying the cotangent bundle at null infinity with projective ambitwistor space, and thus formulate the model with target space $T^*\scri$. As in \cref{chapter5}, the ambitwistor string gives a particularly beautiful and transparent geometric proof of the relation between extended BMS symmetries and soft limits of the amplitudes from a worldsheet CFT perspective. \\

In four dimensions, adapting ambitwistor strings of \cref{sec5:ambi-strings} and \cref{sec5:gravity} to null infinity is perhaps more elegant, requiring no new coordinates.  This model uses again the twistorial representation of ambitwistor space.  As above, we use the coordinates 
$$
Z=(\mu^{\dot\alpha},\lambda_\alpha,\chi^a)\in \T\qquad \mbox{ and }\qquad W=(\tilde\lambda_{\dot\alpha},\tilde\mu^\alpha,\tilde \chi_a)\in\T^*\,,
$$ 
where $\alpha=0,1$ and $\dot\alpha=\dot 0,\dot 1$ and $a=1\ldots \cN$ to represent ambitwistor space spinorially as
\begin{equation}
P\A=\bigslant{\big\{(Z,W)\in \T\times \T^*| Z\cdot W=0\big\}}{\{Z\cdot\p_Z-W\cdot \p_W\}} \,.
\end{equation}    

With homogeneous coordinates $(u,P_{\alpha\dot\alpha})$ on $\scri$ as in \cref{chapter4} (using the spinorial decomposition of the vector index on $P$) we have that the projection from this representation of ambitwistor space to null infinity follows by setting \cite{Eastwood:1982} 
\begin{equation}
u=-i\la \lambda\, \tilde\mu\ra \, , \qquad \tilde u = i[\tilde\lambda\,\mu] \, , \qquad wp_{\alpha\dot\alpha}=P_{\alpha\dot\alpha}=\lambda_\alpha\tilde\lambda_{\dot\alpha}\, ,
\end{equation}
where we have introduced the usual spinor helicity bracket notation to denote spinor contractions, $\la \lambda \,\tilde\mu\ra:=\lambda_\alpha \tilde\mu^\alpha$ and $[\tilde\lambda\,\mu]:=\tilde\lambda_{\dot\alpha}\mu^{\dot\alpha}$.  The spinorial representation here explicitly solves the constraint $P^2=0$, and, working without supersymmetry,  we see that on ambitwistor space $u=\tilde u$ due to $Z\cdot W=0$. 

\subsection{Symmetries and Hamiltonians} \label{4dhamiltonian}
Poincar\'e generators and supertranslations can easily be adapted to act on this representation of ambitwistor space.  The Hamiltonian for the supertranslations $\delta u=f(\lambda,\tilde\lambda)$ with $f$ of weight $(1,1)$ in this model is simply $f$ itself as it induces the transformation
$$
\delta \tilde \mu^\alpha=i\frac {\p f}{\p\lambda_\alpha}\, , \quad \mbox{so} \quad \delta u =\lambda_\alpha \frac {\p f}{\p\lambda_\alpha}=f\,,
  $$
with the latter equality following by homogeneity.  Superrotations can similarly be taken to be those transformations generated by Hamiltonians $H_r$ of weight $(1,1)$ that are linear in $(\mu,\chi)$ and in $(\tilde \mu,\tilde\chi)$ but have more complicated dependence in $(\lambda,\tilde\lambda)$, which will then of necessity include poles. These Poisson commute with $Z\cdot W$ on $Z\cdot W=0$ as they have weight $(1,1)$.  Thus we obtain
\begin{equation}
H_r=[\mu, \tilde{r}]+\la\tilde\mu, r\ra\, , 
\end{equation}
for $ r_\alpha $  and $\tilde{r}_{\dot\alpha}$ respectively weight $(1,0)$ and $(0,1)$ functions of $(\lambda,\tilde\lambda)$.  These are linear functions respectively of $\tilde \lambda$ or $\lambda$ for ordinary rotations or dilations but for superrotations will be allowed to have poles and more general functional dependence on $(\lambda,\tilde\lambda)$. Below we will make the further requirement that 
$$
\frac{\p r_\alpha}{\p\lambda_\alpha}+ \frac{\p \tilde r_{\dot\alpha}}{\p \tilde\lambda_{\dot\alpha}} =\frac{\p^2 H_r}{\p Z^I\p W_I}=0\,,
$$
which will ensure that we are working with $\SL(4)$ rather than $\GL(4)$.  \smallskip

In order to incorporate Einstein gravity in the worldsheet model, we introduce further coordinates $(\rho,\tilde\rho)\in \T\times \T^*$  of opposite statistics to $(Z,W)$ as in \cref{sec5:gravity} and perform the symplectic quotient by the following  further constraints
\begin{equation}\label{constraints}
Z\cdot\tilde \rho=W\cdot \rho=\rho\cdot \tilde\rho= \la Z\, \rho \ra = [W\, \tilde\rho]=0\, ,
\end{equation}
where $\la Z_1 Z_2\ra=\la \lambda_1 \lambda_2\ra$ and $[W_1\, W_2]=[\tilde\lambda _1\, \tilde \lambda_2]$.   In this model, the symplectic potential is 
$$
\Theta = \frac i2 \left (  Z\cdot \rd W -W\cdot \rd Z + \rho\cdot\rd \tilde\rho-\tilde \rho \rd \rho\right)\, .
$$ 

In order to extend the supertranslations and superrotations to this space, we need to extend the above Hamiltonians to commute with these further constraints on the constraint submanifold.  It can be checked that this can be done automatically by taking the Hamiltonians above and acting on them with $1+\rho \cdot \p_Z \tilde \rho\cdot \p_W$.  Thus for supertranslations we obtain the extensions
$$
\left(1+\rho \cdot \p_Z \tilde \rho\cdot \p_W\right)H_f=f + \rho^I \tilde\rho_J\frac{\p^2 f}{\p Z^I\p W_J}\, ,
$$
and for superrotations we get
$$
\left( 1+\rho \cdot \p_Z \tilde \rho\cdot \p_W\right)H_r=[\mu, r]+\la\tilde\mu,\tilde r\ra +  \rho^I \tilde\rho_J\frac{\p^2 ([\mu, r]+\la\tilde\mu,\tilde r\ra)}{\p Z^I\p W_J}\, .
$$

\subsection{The string model}
As before we base the action on the symplectic potential so that the
Poisson brackets will be reflected in the conformal field theory OPEs.  
Since no additional coordinates were required in four dimensions, the action is identical to the original ambitwistor string \cref{eq5:action_YM},
\begin{equation}
S=\frac1{2\pi}\int_\Sigma  W\cdot \dbar Z-Z\cdot \dbar W  +a Z\cdot W +S_j\, ,
\end{equation}
with $Z\in\Omega^0(\Sigma, K_\Sigma^{1/2}\otimes\T)$, $W\in\Omega^0(\Sigma,K_\Sigma^{1/2}\otimes\T^*)$ and the gauge field $a$ enforcing the constraint $Z\cdot W=0$. Gauge fixing the action proceeds as discussed above, and in particular the vertex operators are given by \cref{eq5:VO-YM}. The gravity sector of this model, as discussed in \cref{sec5:YM-amplitudes} is the Berkovits-Witten non-minimal version of conformal supergravity \cite{Berkovits:2004jj}.  

\smallskip

For Einstein gravity we also incorporate the $(\rho,\tilde\rho)$ system described above to give the gauge-fixed action \cref{eq5:S-gravity},
\begin{equation}
 S=\frac{1}{2\pi}\int_{\Sigma} W\cdot\dbar Z-Z\cdot\dbar W+\tilde{\rho}\dbar\rho -\rho\dbar\tilde{\rho}+b\,\dbar c+\sum_a \beta_a\dbar\gamma^a\,.
\end{equation}
Recall in particular that here, we gauged all the currents \cref{eq5:K_a}
$$
K_a=\left(Z\cdot W, \rho\cdot \tilde \rho, Z\cdot \tilde \rho, W\cdot \rho, \la Z \rho\ra,  [W\, \tilde \rho], \la \rho\, \rho\ra, [\tilde \rho\, \tilde \rho]\right)\,,
$$  
and gauge fixed all the gauge fields to zero, leading to the BRST operator  
$$
Q_{BRST}=\int cT+ \gamma^aK_a -\frac i2 \beta_a \gamma^b\gamma^c  C^a_{bc}\, .
$$

The pull-back from twistor space and dual twistor space leads to vertex operators for self-dual and anti self-dual fields \cref{eq5:VO-gravity}. For future convenience, we will re-cast the integrated vertex operators as
\begin{subequations}\label{VOgravity4d}
\begin{align}
\cV_p&=\int_\Sigma \left(1+\rho \cdot \p_Z \tilde \rho\cdot \p_W\right)\frac{\rd t_p}{t_p^3} \bar\delta^2(\lambda_p-t_p\lambda
(\sigma_p)) \, [\tilde\lambda(\sigma_p) \, \tilde \lambda_p]   \, \e^{it_p
  [\mu(\sigma_p) \tilde\lambda_p]} \,,  \\
\wt{\cV_i}&=\int_\Sigma \left(1+\rho \cdot \p_Z \tilde \rho\cdot \p_W\right)\frac{\rd t_i}{t_i^3} \bar\delta^2(\tilde\lambda_i-t_i\tilde\lambda
(\sigma_i)) \, \la\lambda(\sigma_i) \,  \lambda_i\ra   \, \e^{it_i
  \la\tilde\mu(\sigma_i) \lambda_i\ra}   \,,
\end{align}
\end{subequations}
which is easily seen to agree with the original definition in \cref{sec5:gravity}.\\

The amplitude calculations for both Yang-Mills and gravity then reduce trivially to those of the original four-dimensional ambitwistor string \cite{Geyer:2014fka}, thus yielding the expected scattering amplitudes. Following the same strategy as in higher dimensions, the correlation function will be evaluated by incorporating the exponentials of the vertex operators into the off-shell action. For $k$ insertions of $\wt\cV$ and $n-k$ insertions of $\cV$, the equations of motion determine $\lambda(\sigma)$ and $\tilde{\lambda}(\sigma)$ to be
\begin{equation}
 \lambda(\sigma)=\sum_{i=1}^k \frac{t_i\lambda_i}{\sigma-\sigma_i}, \qquad \tilde\lambda(\sigma)=\sum_{p=k+1}^n \frac{t_p\lambda_p}{\sigma-\sigma_p}\,.
\end{equation}

As in the higher dimensional case covered in \cref{chapter4}, the ambitwistor string theory naturally incorporates the geometry encoded in the Poisson structure via the singular part of the OPE, due to the construction based on the symplectic potential. The discussion of section \ref{symmetry} is therefore directly applicable in the four-dimensional case; any Hamiltonian $h$ generating a symplectic diffeomorphism on $\A$ preserves the symplectic potential. Therefore, it has the correct weights in the fields to define a corresponding symmetry operator $Q_h$. In particular, the Hamiltonians for the extended BMS transformations discussed above will lead to operators inducing the action of the diffeomorphism of $\scri$ in the ambitwistor string.

\subsection{Soft limits}
BMS symmetries and soft limits were discussed in more detail in \cref{sec4:review-scri}. At this point, we therefore just review briefly the spinorial representation of soft limits \cref{eq4:soft-subleading} and \cref{eq4:soft-subleading-YM} in four dimensional space-time. 

As discovered in \cite{Cachazo:2014fwa}, the leading and subleading terms in the soft limit of tree-level gravity amplitudes take the form 
\begin{equation}
 \cM_{n+1}=\left(S^{(0)}+S^{(1)}+S^{(2)}\right)\cM_n+\cO(s^2)\,,
\end{equation}
where $S^{(0)}$ denotes the Weinberg soft limit, and $S^{(1)}$, $S^{(2)}$ are the subleading contributions,
\begin{subequations}
\begin{align}
 &S^{(0)}=\sum_{a=1}^n\frac{[as]\la \xi\,a\ra^2}{\la a\,s\ra\la \xi\, s\ra^2}\,, \\ &S^{(1)}=\sum_{a=1}^n\frac{[a\,s]\la\xi\,a\ra}{\la a\,s\ra\la\xi\,s\ra}\tilde\lambda_s\cdot\frac{\p}{\p \tilde\lambda_a}\,, \\
 &S^{(2)}=\frac1{2}\sum_{a=1}^n \frac{[a\,s]}{\la a\,s\ra}\tilde{\lambda}_s^{\dot\alpha}\tilde{\lambda}_s^{\dot\beta}\frac{\p^2}{\p\tilde\lambda_a^{\dot\alpha}\p\tilde\lambda_a^{\dot\beta}}\,.
\end{align}
\end{subequations}
Moreover, the analogous expansion for tree-level Yang-Mills amplitudes is given by \cite{Casali:2014xpa}
\begin{equation}
 \cA_{n+1}=\left(S^{(0)}+S^{(1)}\right)\cA_n+\cO(s)\,,
\end{equation}
where again $S^{(0)}$ denotes the Weinberg soft limit, and $S^{(1)}$ is the subleading contribution,
\begin{subequations}
\begin{align}
 &S^{(0)}=\frac{\langle 1\,n\rangle}{\langle s\,1\rangle\langle s\,n\rangle}, \\
 &S^{(1)}=\left(\frac{1}{\left\langle s1\right\rangle }\tilde{\lambda}_{s}\cdot\frac{\partial}{\partial\tilde{\lambda}_{1}}+\frac{1}{\left\langle ns\right\rangle }\tilde{\lambda}_{s}\cdot\frac{\partial}{\partial\tilde{\lambda}_{n}}\right)\,.
\end{align}
\end{subequations}
Without loss of generality, we have chosen particles $1$ and $n$ to be adjacent to the soft particle $s$.

In the following section, we will see how these soft theorems emerge from the ambitwistor string, and tie into the asymptotic symmetries at null infinity.

\subsubsection*{Yang-Mills in 4d}
Following the same strategy as in higher dimensions, we will again expand the integrated vertex operators \eqref{eq5:VO-YM} in the soft gluon limit to show that the leading and subleading terms correspond to generators of supertranslations and superrotations.\\

The scaling integral for the soft momentum occurring in the Yang-Mills vertex operators can be performed explicitly against one of the delta
functions with a choice of reference spinors $\xi_\alpha$ or
$\tilde \xi_{\dot \alpha}$
 to give $s_s= \la \xi \, \lambda_s\ra/\la \xi \,
\lambda(\sigma_s)\ra$ (or its tilded version respectively).  For $\cV$ this leads to 
\begin{align}
\cV^{ym}_s&=\int_\Sigma  \frac{\la \xi \, \lambda(\sigma_s)\ra}{\la \xi \,
  \lambda_s\ra }\bar\delta(\la \lambda_s\, \lambda (\sigma_s)\ra)\;
\exp \left(i
 \frac{\la \xi \, \lambda_s\ra [\mu(\sigma_s) \tilde\lambda_i  ] s}{\la \xi \,
 \lambda(\sigma_s)\ra }  \right)\;
j\cdot t_s \nonumber \\
&=\oint
\frac{\la \xi \, \lambda(\sigma)\ra}{\la \xi \, \lambda_s\ra 
\la \lambda_s\, \lambda (\sigma_s)\ra }
\exp \left(i 
 \frac{\la \xi \, \lambda_s\ra [\mu(\sigma_s) \tilde\lambda_s]}{\la \xi \,
 \lambda(\sigma_s)\ra }  \right)\; 
j\cdot t_s \nonumber\\
&= \cV^{ym,0}_s+\cV^{ym,1}_s +\cV^{ym,2}_s+\ldots\,,
\end{align}
where, as before, in the second line we have used the definition of $\bd$,
\be{deltabar}
\bar\delta(\la \lambda_s\, \lambda (\sigma_s)\ra )=\dbar \frac1{2\pi i\la
\lambda_s\, \lambda (\sigma_s)\ra }\,,
\ee
to reduce the integral to a
contour integral around $\la \lambda_s\, \lambda (\sigma_s)\ra =0$.  In
the last line we are expanding the exponential in the soft gluon limit
$\lambda_s\tilde\lambda_s\rightarrow 0$.  We obtain
\begin{subequations}
\begin{align}
\cV_s^{ym,0}&=\oint \rd \sigma_s
\frac{\la \xi \, \lambda(\sigma_s)\ra}{\la \xi \, \lambda_s\ra 
\la \lambda_s\, \lambda (\sigma_s)\ra }j\cdot t_s\,, 
 \\
V_s^{ym,1}&=\oint 
\frac{i[\mu(\sigma_s) \tilde\lambda_s]}{
\la \lambda_s\, \lambda (\sigma)\ra } \; j\cdot t_s\,,
 \\
V_s^{ym,2}&=\oint 
\frac{-\la \xi\, \lambda_s\ra [\mu(\sigma_s) \tilde\lambda_s]^2}{ \la \xi
  \, \lambda(\sigma_s)\ra 
\la \lambda_s\, \lambda (\sigma_s)\ra } \; j\cdot t_s\,.
\end{align}
\end{subequations}
These can be thought of as singular gauge transformations, the gauge analogues of the supertranslations and superrotations in the gravitational case below.

As we show in appendix \ref{ym4}, a single insertion of the charges generating those singular gauge transformations directly give the leading and subleading terms of the soft gluon limit:
\begin{subequations}
\begin{align}
\left\langle \widetilde{\mathcal{V}}_{1}...\widetilde{\mathcal{V}}_{k}\mathcal{V}_{k+1}...\mathcal{V}_{n}\cV_s^{ym,0}\right\rangle& =\frac{\left\langle 1n\right\rangle }{\left\langle s1\right\rangle \left\langle sn\right\rangle }\left\langle \widetilde{\mathcal{V}}_{1}...\widetilde{\mathcal{V}}_{k}\mathcal{V}_{k+1}...\mathcal{V}_{n}\right\rangle \,,\\
\left\langle \widetilde{\mathcal{V}}_{1}...\widetilde{\mathcal{V}}_{k}\mathcal{V}_{k+1}...\mathcal{V}_{n}\cV_s^{ym,1}\right\rangle& =\left(\frac{1}{\left\langle s1\right\rangle }\tilde{\lambda}_{s}\cdot\frac{\partial}{\partial\tilde{\lambda}_{1}}+\frac{1}{\left\langle ns\right\rangle }\tilde{\lambda}_{s}\cdot\frac{\partial}{\partial\tilde{\lambda}_{n}}\right)\times \nonumber\\
&\qquad\qquad\times
\left\langle \widetilde{\mathcal{V}}_{1}...\widetilde{\mathcal{V}}_{k}\mathcal{V}_{k+1}...\mathcal{V}_{n}\right\rangle \,.
\end{align}
\end{subequations}

\subsubsection*{Einstein gravity in 4d}
In analogy to the discussion in Yang-Mills, we can identify the leading and subleading terms in the soft expansion of the integrated gravity vertex operators as generators of supertranslations and superrotations on $\scri$, with the corresponding Ward identities yielding the soft graviton contributions found by Cachazo and Strominger \cite{Cachazo:2014fwa}.

Following through the same steps as before, we get
\begin{align}
\cV_s&=\int_\Sigma \left(1+\rho \cdot \p_Z \tilde \rho\cdot \p_W\right)\frac{\rd t}{t^3} \bar\delta^2(\lambda_s-t\lambda
(\sigma_s)) \, [\tilde\lambda(\sigma_s) \, \tilde \lambda_s]   \, \e^{it
  [\mu(\sigma_s) \tilde\lambda_s]}   \nonumber \\
&= \int_\Sigma \left(1+\rho \cdot \p_Z \tilde \rho\cdot \p_W\right) \bar\delta(\la \lambda_s\, \lambda (\sigma_s)\ra) \frac{\la \xi \, \lambda(\sigma)\ra^2 [\tilde\lambda(\sigma) \, \tilde \lambda_s]}{\la \xi \,  \lambda_s\ra^2 }  \, 
\e^{ \left(i 
 \frac{\la \xi \, \lambda_s\ra [\mu(\sigma) \tilde\lambda_s]}{\la \xi \,
 \lambda(\sigma_s)\ra }  \right)}\nonumber \\
&=\oint    \left(1+\rho \cdot \p_Z \tilde \rho\cdot \p_W\right)  \frac{\la \xi \, \lambda(\sigma_s)\ra^2 [\tilde\lambda(\sigma_s) \, \tilde \lambda_s]}{\la \xi \,  \lambda_s\ra^2 \la \lambda_s\, \lambda (\sigma_s)\ra} \,
\e^{ \left(i 
 \frac{\la \xi \, \lambda_s\ra [\mu(\sigma_s) \tilde\lambda_s]}{\la \xi \,
 \lambda(\sigma_s)\ra }  \right)}
 \nonumber\\
&= \cV^0_s+\cV^1_s +\cV^2_s+\ldots\,,
\end{align}
where, as above,  to get to the second line we have performed the $s$-integrals  against one of the delta
functions with a choice of reference spinor $\xi_\alpha$ to find $s= \la \xi\, \lambda_s\ra/ \la \xi \, \lambda(\sigma_s)\ra$.  To get to the third line we  have again used $\bar\delta(\la \lambda_s\, \lambda(\sigma_s)\ra)=\dbar (1/\la \lambda_s\,\lambda(\sigma_s)\ra)$.  In the last line we are simply expanding out the exponential as before to find
\begin{subequations}\label{4dgravV}
\begin{align}
\cV^0_s&=\oint    \left(1+\rho \cdot \p_Z \tilde \rho\cdot \p_W\right)  \frac{\la \xi \, \lambda(\sigma_s)\ra^2 [\tilde\lambda(\sigma_s) \, \tilde \lambda_s]}{\la \xi \,  \lambda_s\ra^2 \la \lambda_s\, \lambda (\sigma_s)\ra} \,, 
\\
\cV^1_s&= \oint  \left(1+\rho \cdot \p_Z \tilde \rho\cdot \p_W\right)  \frac{i\la \xi \, \lambda(\sigma_s)\ra [\tilde\lambda(\sigma_s) \, \tilde \lambda_s]  [\mu(\sigma_s) \tilde\lambda_s]}{\la \xi \,  \lambda_s\ra \la \lambda_s\, \lambda (\sigma_s)\ra} \,, \\
\cV^2_s&= \oint  \left(1+\rho \cdot \p_Z \tilde \rho\cdot \p_W\right)  \frac{ [\tilde\lambda(\sigma_s) \, \tilde \lambda_s]  [\mu(\sigma_s) \tilde\lambda_s]^2}{ \la \lambda_s\, \lambda (\sigma_s)\ra} \, .
\end{align}
\end{subequations}
From the discussion in section \ref{4dhamiltonian}, we can identify $\cV^0_s$ as a supertranslation generator, and of $\cV^1_s$ as a superrotation. $\cV^2_s$ corresponds, as in higher dimensions, to the `square' of a superrotation. As terms in the soft expansion of the vertex operators, all of these contributions generate diffeomorphisms of $\A$, but only $\cV^0_s$ and $\cV^1_s$ can be seen to arise from Hamiltonian lifts of diffeomorphisms of $\scri$.

In the four-dimensional case, the transformations generated by superrotations are actually singular in the worldsheet coordinates. To see this, recall that in four dimensions, a general superrotation corresponds to the Hamiltonian
\[
H_r=[\mu, \tilde{r}]+\la\tilde\mu, r\ra\, , 
\]
for $ r_\alpha $  and $\tilde{r}_{\dot\alpha}$ respectively weight $(1,0)$ and $(0,1)$ functions of $(\lambda,\tilde\lambda)$. 
Comparing to the generators in \eqref{4dgravV}, we see that
\[
[\mu, \tilde{r}]=\frac{\left\langle \xi\lambda\right\rangle }{\left\langle \xi s\right\rangle \left\langle s\lambda\right\rangle }\left[\tilde{\lambda}s\right]\left[\mu s\right],\,\,\, \left\langle \tilde{\mu},r\right\rangle =\frac{\left[\xi\tilde{\lambda}\right]}{\left[\xi s\right]\left[s\tilde{\lambda}\right]}\left\langle \lambda s\right\rangle \left\langle \tilde{\mu} s\right\rangle \,.
\] 
Hence, $ r_\alpha $  and $\tilde{r}_{\dot\alpha}$ have poles, which arise from the poles of $(\lambda,\tilde\lambda)$, and this will continue to be true for general functions which are weight $(1,0)$ and $(0,1)$ functions of $(\lambda,\tilde\lambda)$, respectively. Furthermore, the superrotation generators have the following OPE:
\[
[\mu, \tilde{r}](z)\left\langle \tilde{\mu},r\right\rangle(w)=\frac{1}{\left(z-w\right)^{2}}+\frac{\left\langle \tilde{\mu}(w)s\right\rangle \left\langle \xi\lambda(w)\right\rangle /\left\langle \xi s\right\rangle +\left[\mu(w)s\right]\left[\xi\tilde{\lambda}(w)\right]/\left[\xi s\right]}{z-w}+...\label{eq:4dope}\,.
\]
As in general dimensions, this has the form of a Kac-Moody algebra, where the analogue of the structure constants in \eqref{km} are now functions of the worldsheet coordinates.

With the vertex operators defined above for momentum eigenstates, an insertion of a soft graviton leads to the following Ward identities or the generators of supertranslations and superrotations respectively,
\begin{subequations}
\begin{align}
 &\left\langle\wt{\mathcal{V}}_{1}...\wt{\mathcal{V}}_{k}\mathcal{V}_{k+1}...\mathcal{V}_{n}\;\cV^{0}_s\right\rangle=\sum_{a=1}^n\frac{[as]\la \xi\,a\ra^2}{\la a\,s\ra\la \xi\, s\ra^2}\left\langle\wt{\mathcal{V}}_{1}...\wt{\mathcal{V}}_{k}\mathcal{V}_{k+1}...\mathcal{V}_{n}\right\rangle\,,\\
 &\left\langle \wt{\mathcal{V}}_{1}...\wt{\mathcal{V}}_{k}\mathcal{V}_{k+1}...\mathcal{V}_{n}\;\cV^{1}_s\right\rangle=\sum_{a=1}^n\frac{[a\,s]\la\xi\,a\ra}{\la a\,s\ra\la\xi\,s\ra}\tilde\lambda_s\cdot\frac{\p}{\p \tilde\lambda_a}\left\langle \wt{\mathcal{V}}_{1}...\wt{\mathcal{V}}_{k}\mathcal{V}_{k+1}...\mathcal{V}_{n}\right\rangle\,,
\end{align}
\end{subequations}
and we refer to appendix \ref{grav4} for more details on the calculation. These Ward identities can immediately be seen to be equivalent to the leading and subleading terms in the soft graviton limit. 

In the sub-subleading case, $\cV^2_s$ generates a diffeomorphism of ambitwistor space corresponding to the `square' of a rotation, but does not descend to $\scri$ itself. Inserting it into correlators yields at tree-level the sub-subleading soft graviton contribution found by Cachazo and Strominger,
\begin{equation}
 \left\langle \wt{\mathcal{V}}_{1}...\wt{\mathcal{V}}_{k}\mathcal{V}_{k+1}...\mathcal{V}_{n}\;\cV^{2}_s\right\rangle
 =\frac1{2}\sum_{a=1}^n \frac{[a\,s]}{\la a\,s\ra}\tilde{\lambda}_s^{\dot\alpha}\tilde{\lambda}_s^{\dot\beta}\frac{\p^2}{\p\tilde\lambda_a^{\dot\alpha}\p\tilde\lambda_a^{\dot\beta}}\left\langle \wt{\mathcal{V}}_{1}...\wt{\mathcal{V}}_{k}\mathcal{V}_{k+1}...\mathcal{V}_{n}\right\rangle\,.
\end{equation}

\subsection{Brief comparison to the  model of Adamo et al.\ }

This four dimensional twistorial ambitwistor model is closely connected to the 2d CFT recently proposed by Adamo, Casali, and Skinner \cite{Adamo:2014yya}, and extended to gauge theory in \cite{Adamo:2015fwa}. Indeed,  ambitwistor strings provide a very flexible framework and one can take different coordinate realizations of the space of null geodesics, adding further variables and corresponding constraints  to bring out different structures or features.  This can lead to quite different realizations with different properties (as witnessed by the distinction between the 4d RNS model versus the twistoral one which manifests both the underlying conformal invariance and its breaking).  The general strategy of identifying the worldsheet action as the chiral pull-back of a  symplectic potential guarantees that Hamiltonians will give rise to operators that can be realized in the worldsheet theory.

 Their  model also lives on a supersymmetric extension of the cotangent bundle of the complexification of null infinity.  This model uses a different presentation of the supersymmetry, but the main coordinates can be identified from the symplectic potential $\Theta$ on $T^*\scri$.  Thus we see for example that in their coordinates the symplectic potential is 
$$
\Theta= w \rd u + \chi\rd \xi+ \nu^A\rd \lambda_A+ \tilde \nu^{\dot A} \rd \tilde \lambda_{\dot A} + \bar \psi^A \rd \psi_A + \bar{\tilde \psi}^{\dot A}\rd \tilde\psi_{\dot A}\, , \quad A=(\alpha, a)\, ,
$$
and here $a=1,\ldots ,4$ is an R-symmetry index corresponding to a representation of $\cN=8$ supersymmetry.  We can clearly identify 
$$
Z=(\lambda_A,i\tilde \nu^{\dot A})\, , \quad  W= (-i\nu^{\dot A},\tilde \lambda_{\dot A})\, , \quad \rho=(\psi_A,\bar{\tilde \psi}^{\dot A})\, ,\quad  \tilde \rho=(\bar{ \psi}^{ A},\tilde\psi_{\dot A})\, ,
$$ 
because the bosonic parts of $\lambda$ and $\tilde\lambda$ are geometrically identical to that of the original twistorial ambitwistor model, 
although the representation of supersymmetry is somewhat different to the usual one on twistor space.  There are additional variables $(w,u)$ playing an identical role to the $(w,u)$ in the $d$-dimensional ambitwistor string model at $\scri$. These are again associated with an additional constraint that can be used to eliminate them.  They also introduce a further pair of fields $(\chi,\xi)$.     One can readily identify  $Z\cdot W=\la Z \rho \ra=[W\tilde \rho]=0$ constraints of the 4d ambitwistor string model amongst the gaugings in the ACS model.   

There are nevertheless important distinctions.  Firstly, the vertex operators are quite distinct from ours, and secondly their formulae work with the worldsheet fields taking values in line bundles of more general degree (ours are taken to be spinors on the worldsheet) leading to a larger integral over moduli in the evaluation of scattering amplitudes. The distinction between the vertex operators seems to be quite substantial and would appear to correspond to a realization of linearized fields in $H^2$ rather than in $H^1$ as in the models presented in this chapter. 

Conceptually, the ACS model \cite{Adamo:2014yya} is formulated in terms of radiative data at $\scri$, including the asymptotic shear $\sigma^0$ manifestly in its field content, whereas the model presented here is more targeted towards scattering amplitudes, manifesting the interpretation of soft gravitons giving rise to the Ward identities.

\section{Discussion}\label{sec5:Discussion}
Ambitwistor strings provide a chiral infinite tension limit of conventional strings.  Here we have formulated them in four space-time dimensions in terms of twistors and dual twistors, leading to remarkably simple new formulae for tree amplitudes for (super) Yang-Mills and (super) gravity. These are nontrivially related to previous twistor string formulae, as we have seen in \cref{sec5:RSVW} and \cref{sec5:compgrav}. Our gravitational formula is similar to the link representation of \cite{He:2012er}, and so one can regard ambitwistor strings as providing the theory underlying such representations. Moreover, the twistorial representation of ambitwistor strings gives a particularly elegant framework for implementing the Ward identity relating asymptotic symmetries of an asymptotically flat four-dimensional space-time to low energy theorems. 

\paragraph{Comparison to the RNS model.}
There are many directions for future exploration. One important question regards the representation of loop amplitudes. Although our model is sufficient for computing tree-level amplitudes, in general it is noncritical and anomalous (the gauge anomalies require $\cN=4$ for the first model and $\cN=8$ for the Einstein gravity model, which suggest a doubling of the spectrum in our context). On the other hand, it is likely that a critical, anomaly-free theory can be obtained by coupling to appropriate matter as for example obtained by reduction from an anomaly-free theory in 10 dimensions \cite{Mason:2013sva,Adamo:2013tsa, Berkovits:2013xba}. Note in this context that the bosonic part of the action in spinorial representation emerges naturally from the RNS ambitwistor string via $P_{\alpha\dot\alpha}= \lambda_\alpha\tilde\lambda_{\dot\alpha}$ and the incidence relations,
\begin{equation}
\begin{aligned}
 S\supset \frac{1}{2\pi}\int_\Sigma P\cdot \dbar X &= \frac{1}{2\pi}\int_\Sigma \lambda_\alpha\tilde\lambda_{\dot\alpha} \,\dbar X^{\alpha\dot\alpha}\\
 &=\frac{1}{2\pi}\int_\Sigma \tilde\lambda_{\dot\alpha}\,\dbar (X^{\alpha\dot\alpha}\lambda_\alpha) -\tilde\lambda_{\dot\alpha} X^{\alpha\dot\alpha}\,\dbar\lambda_\alpha\\
 &=\frac{1}{2\pi}\int_\Sigma \tilde\lambda_{\dot\alpha}\,\dbar \mu^{\dot\alpha} -\tilde\mu^{\alpha}\,\dbar\lambda_\alpha\equiv \frac{1}{2\pi}\int_\Sigma W\cdot\dbar Z\,.
 \end{aligned}
\end{equation}
Treating $Z$ and $W$ on an equal footing then gives rise to the bosonic part of the spinorial ambitwistor action. However, completing the rest of the correspondence is still an open problem. Nevertheless, the relation to the RNS ambitwistor string might make it possible to represent loop amplitudes as integrals over higher genus moduli spaces. 

\paragraph{Degree of the line bundle.}
An issue raised by the gauging associated with the gauge field $a$ is the validity of imposing a choice of degree for the line bundles on $\Sigma$  in which the worldsheet spinor fields $(Z,W)$ take their values. 

The obvious alternative is to not gauge the current $Z\cdot W$, and instead restrict to ambitwistor space globally. This of course changes the interpretation of the model, which is now a string theory with target space $P\T\times P\T^*$. The final factor of GL$(1,\C)$ in the amplitude does not arise from a local gauge redundancy in this case, but rather from a global symmetry. While we have chosen to gauge this current in the discussion above to make the derivation of the factor of GL$(2,\C)$ transparent, this is a valid approach to take. However, note that in the gravitational case, more care is needed since the current algebra only closes after imposing that $Z\cdot W=0$ globally.\\

If we choose to gauge the current $Z\cdot W$ on the other hand, we also have to sum over the degree of line bundle $L$ associated to the worldsheet gauge field $a$, with
\begin{subequations}
\begin{align}
 &Z\in\Omega^0(\Sigma,  L\otimes\T)\,,\\
 &W\in\Omega^0(\Sigma,\tilde{L}\otimes\T^*)\,,\\
 &a\in\Omega^{0,1}(\Sigma)\,,
\end{align}
\end{subequations}
and where 
\begin{equation}\label{eq5:deg-L}
 L\otimes\tilde{L}\cong K_\Sigma\,.
\end{equation}
This ties into the relation to RNS string mentioned above: since $P_{\alpha\dot\alpha}\in\Omega^0(\Sigma,K_\Sigma)$ is a section of the worldsheet canonical bundle, there is no obvious split for $\lambda_\alpha$ and $\tilde\lambda_{\dot\alpha}$, and thus we take $\lambda_\alpha\in \Omega^0(\Sigma,L)$ and $\tilde\lambda_{\dot\alpha}\in \Omega^0(\Sigma,\tilde{L})$. $P_{\alpha\dot\alpha}=\lambda_\alpha\tilde\lambda_{\dot\alpha}\in\Omega^0(\Sigma,K_\Sigma)$ then implies \cref{eq5:deg-L}. 

Let us thus consider the case where $L$ is a line bundle of degree $d$. In this case, $Z$ is of degree $d$, and $W$ of degree $-d-2$. This affects the solutions to the equations of motion from the effective action \cref{eq5:EoM} and \cref{dbar-l}. In particular,
\begin{equation}
 \dbar \tilde\lambda_{\dot\alpha}=\sum_{i=1}^k \tilde\lambda_{i,\dot\alpha}\,\bar\delta_{(-d-2,d)}((\sigma\,\sigma_i))\,,
\end{equation}
cannot be solved generically.\footnote{Here, we have introduced a generalised delta-function \cite{Witten:2004cp} of arbitrary homogeneity by
\begin{equation}
 \bar\delta_{(m-1,-m-1)}((\sigma\,\sigma_i))=\left(\frac{\sigma}{\sigma_i}\right)^m\bar\delta_{(-1,-1)}((\sigma\,\sigma_i))\,.
\end{equation}
This definition is well-defined since on the support of the delta-function $\sigma$ is a non-zero multiple of $\sigma_i$.} However, integrating out the zero modes associated to the field $\mu$ leads to additional delta function insertions in the path integral,
\begin{equation}
 \prod_{m=0}^d \bar\delta^2\left(\sum_{i=1}^k \tilde{\lambda}_i\,C_m(\sigma_i)\right)\,,
\end{equation}
where $C_m(\sigma)$ are a basis of $d+1$ linearly independent degree $d$ polynomials. This implies that the obstruction for solving for $\tilde{\lambda}$ vanishes.\footnote{Note that the obstruction is given by the argument of the delta-functions due to 
\begin{equation*}
 0=\int\dbar\left(\tilde\lambda \,C_m(\sigma)\right)=\int\sum_{i=1}^k\tilde{\lambda}_i\,\bar\delta_{(-d-2,d)}((\sigma\,\sigma_i))\, C_m(\sigma)=\sum_{i=1}^k \tilde{\lambda}_iC_m(\sigma_i)\,.
\end{equation*}}
Moreover, these delta-functions restrict the degree to $d+1\leqslant k$, since for $d+1> k$, the amplitude has distributional support.

This strongly restricts the degrees of the line bundle $L$ giving a non-trivial contribution, since the terms with $d+1> k$ vanish for generic momenta. A full investigation of the remaining degrees $d\leqslant k-1$ however is still needed. Since the final expressions obtained from correlators of the ambitwistor string as discussed above have been confirmed against known formulae, this leaves two possibilities: one is that the contributions from all degrees contribute equally, leading to an overall numerical factor. However, this seems unsatisfactory, and one would like to argue that only one degree of the line bundle is contributing. This could be fixed by the choice of external states, with the twistor string emerging from taking all external states as twistor representatives, and the ambitwistor string from an ambidextrous choice. Further research in this area would not only shed some light on these issues, but could also further clarify the relation between the twistor and ambitwistor string at the level of conformal field theories.

\paragraph{Non-zero cosmological constant and extension to other massless theories.}
Another direction is the generalization of our formulae to nonzero
cosmological constant. The choice of infinity twistor in our model already allows for this, 
leading to modifications in the computation of the correlator. This could provide an efficient method for computing tree-level correlation functions in $AdS_{4}$  and $dS_{4}$ that can be compared to the formulae of \cite{Adamo:2012nn,Adamo:2013tja,Adamo:2015ina} and may in turn have applications to the $AdS_{4}/CFT_{3}$  correspondence and cosmology.\\

Moreover, it would be interesting to extend the twistorial representations in four dimensions to a wider family of massless theories similar to \cref{chapter3}. An especially prominent candidate theory is minimal conformal supergravity, since it can be used to derive graviton scattering amplitudes on asymptotically de Sitter spaces \cite{Maldacena:2011mk}. An ambitwistor model for minimal conformal gravity would thus allow us to access space-times with non-zero cosmological constant from a different direction. This also ties in with recent results \cite{Adamo:2015ina} providing new twistorial representations for gravity scattering amplitudes in space-times with non-zero cosmological constant. Moreover, twistorial representations of Einstein-Yang-Mills tree amplitudes have been discovered recently \cite{Adamo:2015gia}. It would be highly interesting to derive these new formulae from ambitwistor string models.

\paragraph{Ambitwistor strings at null infinity.}
In the context of the representation of ambitwistor strings at null infinity, a natural direction to explore further  is the relation to \cite{Adamo:2014yya,Adamo:2015fwa}. Both models manifest different properties, and a more direct mapping between the two would give insights into how these properties arise in the opposite models. Furthermore, recent progress has been made on understanding the algebra of soft limits derived here from braiding of the soft vertex operators \cite{Lipstein:2015rxa}, and in obtaining 1-loop correction to the subleading soft graviton theorem due to infrared divergences. This is of particular interest in the context of recent developments in the study of loop amplitudes, developed in \cref{chapter6}.

Moreover, the spinorial representation of the ambitwistor string in four dimensions provides the easiest setting in which the a full non-linear ambitwistor string could be explored, obtained from gluing the ambitwistor space $T^*\scri^-_\C$ to $T^*\scri^+_\C$ via the flow along real null geodesics, see \cref{sec4:Discussion}.
\chapter{Loop Integrands from the Riemann Sphere}\label{chapter6}

The wide-ranging impact of ambitwistor strings on the study of amplitudes at tree-level discussed in \cref{chapter3,chapter4,chapter5} naturally raises the question whether, and how far, this progress can be extended to loop amplitudes. Adamo, Casali and Skinner (ACS) \cite{Adamo:2013tsa} explored this further, extending the powerful framework of the scattering equations to loop level using the ambitwistor string, and leading to a conjecture for the one-loop integrand of type II supergravity from genus one ambitwistor string correlators. But while the mathematical framework describing ambitwistor strings on higher genus Riemann surfaces \cite{Adamo:2013tsa,Casali:2014hfa,Adamo:2015hoa} exhibits a conceptual simplicity and provides important evidence for the validity of the resulting formulae, it is computationally challenging and obscures the relative simplicity of the expected amplitudes. 

Recall in this context the role the ambitwistorial representation of scattering amplitudes play in relation to traditional approaches: they act as a third alternative to the combinatorial approach using Feynman diagrams, and the geometric problem posed by the integration over the moduli space of marked Riemann surfaces in worldsheet models. In particular, the localisation of the moduli space integrals on the scattering equations reduced the calculation of scattering amplitudes to a purely algebraic problem. This demonstrates very clearly the origin of the complexity of the higher genus amplitudes: on higher genus worldsheets, the algebraic scattering equations involve Jacobi theta functions that obscure the purely rational representation of loop integrands.
Despite the ACS one-loop proposal \cite{Adamo:2013tsa}, the description of loop amplitudes thus effectively remained a burning open question.\\

In this chapter, we will take a different approach, inspired by the localisation of the amplitudes on the scattering equations, as well as the Feynman tree theorem \cite{Feynman:1963ax} at one loop: starting from the ambitwistor correlators on higher genus Riemann surfaces, the loop integrands can be reduced to expressions on nodal Riemann spheres, that exhibit a similar complexity to tree-level amplitudes involving two particles for every loop momentum. This is achieved by a multidimensional residue theorem relying crucially on the localisation of the amplitude on the scattering equations. This residue theorem effectively moves the support off ambitwistor space to allow for off-shell loop momenta. 

We develop this idea into a widely applicable framework for loop integrands in quantum field theories that localises the expressions on new off-shell scattering equations. Moreover, we demonstrate that this framework not only gives the correct supergravity amplitudes, thereby providing strong evidence for the ACS conjecture, but also extend it to include super Yang-Mills integrands at leading trace, as well as non-supersymmetric theories. For these non-supersymmetric one-loop formulae, we provide a systematic proof relying on the factorisation properties of the nodal worldsheet underlying these formulae. These ideas have a natural extension to higher genus, and we conclude on a proposal for an all-loop integrand that would realise the primary motivation behind the Feynman tree-theorem - expressing a higher loop amplitude as an object of a complexity comparable to a tree-level amplitude.

\paragraph{Summary.}
More specifically, we will start with a brief review of the progress in formulating the ambitwistor string on the elliptic curve \cite{Adamo:2013tsa,Casali:2014hfa,Adamo:2015hoa} in \cref{sec6:review_loop}. Building on this, \cref{sec6:general} will develop the new mathematical framework for integrands from the nodal Riemann sphere. To allow for the identification of a well-defined loop-integrand, we will briefly review the scattering equations on the elliptic curve with a different representation of the solution to the differential equation for the meromorphic differential $P$. The localisation of the formulae on the scattering equations then facilitates the application of a contour integral argument for the complex structure $\tau$ of the elliptic curve, leading to new formulae on a nodal Riemann sphere. In particular, these are supported on new off-shell scattering equations determining both the marked points associated to insertions of external momenta and the location of the nodes of the Riemann sphere that correspond to insertions of the off-shell loop momentum. In this representation, the one-loop integrands of the scattering amplitudes are manifestly rational functions of the external momenta, and achieve the goal of having a similar complexity to tree-amplitudes.\\

We demonstrate both the computational simplicity and the mathematical beauty of this framework on nodal Riemann spheres in \cref{sec:supersymm-theor-1} and \cref{sec:non-supersymm-theor} by exploring these amplitudes in more detail for both supersymmetric and non-supersymmetric theories. This provides very strong evidence for the validity of the ACS one-loop fomula. For supersymmetric theories in \cref{sec:supersymm-theor-1}, we derive supergravity amplitudes from the ACS one-loop conjecture, with an integrand constructed from one-loop extensions of the CHY Pfaffians.  If one of the one-loop Pfaffians is replaced by a one-loop extension of the Parke-Taylor factor, super-Yang-Mills amplitudes are obtained. If both one-loop Pfaffians are replaced by one-loop Parke-Taylor factors, it was shown by \cite{He:2015yua} (see also \cite{Baadsgaard:2015hia}) that certain subtleties arise as additional degenerate solutions of the scattering equations contribute, and diagrams with bubbles on the external legs need to be considered.

Non-supersymmetric theories are presented in \cref{sec:non-supersymm-theor}, where we provide a detailed study of the individual contributions of the Neveu-Schwarz and Ramond sectors to the one-loop amplitudes. We express the NS contribution as a reduced Pfaffian of CHY type for a larger $(n+2)\times (n+2)$ matrix of co-rank two. The resulting formulae have been subjected to various checks at low low particle number, and are proved systematically in \cref{sec:factorization}.

A subtlety that arises here follows from an analysis of \cite{He:2015yua} in which it is argued that a degenerate class of solutions to the scattering equations might contribute non-trivially for non-supersymmetric theories, and that of \cite{Baadsgaard:2015hia} who point out that on these degenerate solutions there is a risk of divergence, and some regularization might be required.  For our proposed integrands we show (in the subsequent section) that no regularization is required at the divergent solutions.  Nevertheless, we propose that these degenerate solutions should not be included as we see in our proof in the subsequent section that they do not contribute to the Q-cuts \cite{Baadsgaard:2015twa}, and so are not needed in the final formula.  It seems most likely that they correspond to degenerate contributions that will vanish under dimensional regularization and are discarded in the Q-cut formalism.

In \cref{sec:factorization}, we provide a full proof at one loop for the $n$-gon conjecture, and for the non-supersymmetric gauge, gravity and bi-adjoint scalar amplitudes.  The basic strategy is to study factorisation of the Riemann sphere.   The one-loop scattering equations link factorisation channels of the integrands, apart from the explicit\footnote{where $\ell$ is the loop momentum} $1/\ell^2$, to degenerations of the Riemann sphere. We can use this to identify all the poles involving the loop momentum and the corresponding residues. Moreover, similar factorisation ideas allow us to identify the fall-off as $\ell\rightarrow \infty$.  This immediately gives  the poles and residues in the case of the $n$-gon conjecture.  For  gauge and gravity amplitudes, we also need to study the the Parke-Taylor factors and Pfaffians that arise. The poles and residues that we find give perfect agreement with the Q-cut representation of the loop amplitude, as obtained recently in \cite{Baadsgaard:2015twa}. This completes a proof of our formulae; the Q-cut procedure applied to our formula will yield the correct Q-cut representation.  We are restricted to a proof for the non-supersymmetric theories since there are no closed-form formulae for tree amplitudes with two Ramond sector particles.\\

To conclude, we give a brief discussion of the extension of these ideas to an all-loop conjecture for the one-loop amplitudes of type-II supergravity and super Yang-Mills theory in \cref{sec:Loops_all-loop}. In particular, at $g$ loops, these integrands have the same complexity as $n+2g$-particle tree-amplitudes. This realises and extends the primary motivation behind the Feynman tree theorem \cite{Feynman:1963ax}, and initiates a program with the potential to resolve long-standing questions regarding the UV behaviour of maximal supergravity.

\section{Review of the ambitwistor string at genus one}\label{sec6:review_loop}
In this section, we will review the ambitwistor string at genus one \cite{Adamo:2013tsa}, including in particular the  proposals for loop amplitude expressions following strategies developed in string theory.  While at tree-level the gain might not yet be immediately obvious, the loop extension beautifully highlights the benefit of an underlying mathematical structure and the strength of worldsheet theory, leading to new insights and formulae. We will focus here on the RNS ambitwistor string \cite{Adamo:2013tsa,Casali:2014hfa}, however, progress has also been made in the pure spinor formalism \cite{Adamo:2015hoa}.

Adamo, Casali and Skinner (ACS) \cite{Adamo:2013tsa} used the ambitwistor string to extended the CHY formulae for type II supergravity in 10 dimensions to one-loop in terms of scattering equations on an elliptic curve (and, in principle, to $g$-loops on curves of genus $g$). While supergravity loop amplitudes are divergent without a suitable renormalisation scheme, the origin of these divergences is well-understood and can be factored out, leaving finite integrands obtained from the anomaly free ambitwistor worldsheet conformal field theory. We will review the ACS one-loop proposal in this section, focussing on the one-loop scattering equations in \cref{sec6:SE_g1} and on the full integrands, localised again on the scattering equations, in \cref{sec6:amplitudes_g1} (see also \cref{sec6:moduli_space} for a general review of moduli space of curves, especially focussing on genus one, and \cite{Verlinde:1987sd,D'Hoker:1988ta,D'Hoker:2002gw,Witten:2012bh} for reviews of multiloop amplitudes in string theory).

\subsection{Scattering equations at genus one}\label{sec6:SE_g1}
To study the ambitwistor string at genus one, consider an elliptic curve $\Sigma_q =\C/\{\Z\oplus \Z \tau\}$ with complex coordinate $z$ and modulus $\tau$, where we have introduced for future convenience the variable $q=\e^{2\pi i \tau}$. We will discuss first the form of the scattering equations at genus one, derived again from solutions to the defining equation \cref{eq2:DE_SE} for the meromorphic one-form $P_\mu\in\Omega^0(\Sigma,K_\Sigma)$.

Recall from the discussion in \cref{sec2:ambi_string} that gauge fixing the $\alpha$ gauge symmetry leads to insertions \cref{eq2:gaugefixinginPI} in path integral. The gauge fixing functional vanishes at the insertion points $z_i$,  giving an obstruction to setting $e=0$ parametrised by 
\begin{equation}
 \sum_{r=1}^{h^{0,1}(\Sigma,T_\Sigma(-\sigma_1\dots-\sigma_n))}s_r\mu_r\,,
\end{equation}
where $\mu_r$ is a basis of Beltrami differentials. In the case of genus one, the dimension of the moduli space is $h^{0,1}(\Sigma,T_\Sigma(-\sigma_1\dots-\sigma_n))=n$, and for the unmarked elliptic curve $h^{0,1}(\Sigma,T_\Sigma)=1$. We can thus choose $n-1$ of the Beltrami differentials in the basis to extract residues at marked points, similar to the choice at tree level. The remaining Beltrami differential $\mu$ is then associated to (variations in) the complex structure parameter $\tau$ of the elliptic curve:
\begin{equation}\label{eq6:SE_tau}
 \bar\delta\left(\int_\Sigma \mu P^2\right)\,\int_\Sigma\tilde{b}\mu = \tilde{b}_0\,\bar\delta\left(P^2 (\tau,z_0)\right)\,,
\end{equation}
where $z_0$ is an arbitrarily chosen reference point.
By the Riemann-Roch theorem \cref{eq2:Riemann-Roch} at genus one, the antighost $\tilde{b}$ has exactly one zero mode, which is absorbed\footnote{The $c$ and $\tilde{c}$ zero modes at genus one are absorbed by the insertion of one fixed vertex operator in the correlator.} by the insertion of $\tilde{b}$. The delta-function in \cref{eq6:SE_tau} is understood to be part of the scattering equations at genus one, and fixes the integral over the modulus $\tau$. Geometrically, the scattering equations encode the same content as at genus zero, ensuring that the target space of the ambitwistor string is indeed the space of complex null geodesics $P\A$. To see this, note that the meromorphic quadratic differential $P^2$ has again at most simple poles, since the coefficients of the double poles vanish on-shell. The first $n-1$ scattering equations obtained from the Beltrami differentials $\mu_i$ imply that the residues of $P^2$ at these simple poles vanish, and thus $P^2$ must be globally holomorphic on $\Sigma$. On the elliptic curve, there exists a unique holomorphic quadratic differential $u\, \d z$, which is constant. The final scattering equation $P^2 (\tau,z_0)$ is thus independent of the insertion point $z_0$ and ensures that the constant $u$ vanishes. The scattering equations therefore encode the geometric information that $P^2=0$ everywhere, and the localisation on the support of the scattering equations ensures that the target space of the chiral worldsheet model remains ambitwistor space at higher genus;
\begin{equation}\label{eq6:SE_genus1}
 \text{Res}_iP^2=0\, \quad i=2,\dots,n\qquad \text{and}\quad P^2(z_0)=0\,.
\end{equation}
Indeed, this was to be expected, since it emerges as a direct consequence of the gauge fields $\tilde{e}$ and $\chi_r$ implementing the quotient by the Hamiltonian vector fields $D$ (and since the gauge redundancy is not anomalous). 

To find the explicit form of these scattering equations, we can again perform the $(X,P)$ path integral for a correlator with $n$ vertex operator insertions. As at genus zero, the integral over the zero modes of $X$ yields an overall momentum conserving delta function $\delta^{10}(\sum_ik_i)$, while the non-zero modes impose the constraint \cref{eq2:DE_SE},
\begin{equation}
    \dbar P=2\pi i \d z \sum_i k_i \,\bar \delta(z-z_i) \,.
\end{equation}
On the elliptic curve, this equation has the homogeneous solution $\ell_\mu \d z$ where $\d z$ is the unique holomorphic Abelian differential on the torus $\Sigma_q=\C/\{\mathbb{Z}\oplus\mathbb{Z}\tau\}$. The general solution to \cref{eq2:DE_SE} is therefore given by the sum over the zero modes and inhomogeneous solutions,
\begin{equation}\label{P-def_genus1}
 P_\mu(z)=2\pi i \ell_\mu \d z +\sum_{i=1}^n k_{i,\mu}\,\wt{S}_1(z,z_i|\tau)\,.
\end{equation}
Here, we have chosen a representation where $\wt{S}_1$ is the $PX$ propagator on the genus one Riemann surface, \cite{Adamo:2013tsa},
\begin{equation}\label{eq6:SE_ACS}
 \wt{S}_1(z,z_i|\tau)=\left(\frac{\theta_1'(z-z_i;\tau)}{\theta_1(z-z_i;\tau)}+4\pi\frac{\text{Im}(z-z_i)}{\text{Im}(\tau)}\right)\d z\,,
\end{equation}
where $\theta_1(z;\tau)$ are the Jacobi theta functions associated with the spin structure $\pmb{\alpha}=1$, see \cref{eq:theta-def}. Note that $\wt{S}$ is indeed meromorphic in $z$ since the term proportional to Im$(z)$ vanishes by momentum conservation, and is invariant under modular transformations $\wt{S}_1(z,z_i|\tau)=\wt{S}_1(z,z_i|\tau+1)=\wt{S}_1(\frac{1}{z},\frac{1}{z_i}|-\frac{1}{\tau})$ (see \cref{sec6:moduli_space} for a brief review of the modular group). An alternative proposal \cite{Casali:2014hfa} uses instead the purely holomorphic Szeg\H{o} kernel for the spin structure $\pmb{\alpha}=1$,
\begin{equation}\label{eq6:SE_CT}
 \wt{S}_1(z,z_i|\tau)=S_1(z,z_i|\tau)=\frac{\theta_1'(z-z_i;\tau)}{\theta_1(z-z_i;\tau)}\d z\,.
\end{equation}
As argued in \cite{Casali:2014hfa}, this choice is necessary to reproduce the correct pole structure of the field theory integrand. Moreover, \cite{Casali:2014hfa} clarifies the relation of \cref{eq6:SE_genus1} to the scattering equations to the genus one scattering equations found by Gross and Mende. 

The scattering equations completely localise the integral over the $n$-dimensional moduli space $\mathscr{M}_{1,n}$. Therefore, the introduction of $\ell^\mu$ as zero modes of $P$ is of crucial importance, since they are explicitly integrated over after the correlator is calculated, and will acquire the interpretation of a {\it loop momentum}. This provides an important distinction from usual string theory, which is UV finite.\footnote{The Deligne Mumford compactification of moduli space of marked curves includes the singularities on boundary, but these correspond to IR divergences, see \cref{sec6:moduli_space} for a very brief review and e.g. \cite{Witten:2012bh} for more details. (Also, \cite{Witten:2015mec} provides a nice heuristic point of view.)} Since the moduli integral in the ambitwistor string includes these zero modes, it is non-compact, and can thus potentially give rise to the UV divergences of field theory.

\subsection{One-loop scattering amplitudes} \label{sec6:amplitudes_g1}
Before discussing the one-loop correlation function, recall that on the elliptic curve, we get two contributions from different spin structures for the fermions.\footnote{A general manifold has to fulfil a topological condition to allow for spinors. In particular, the transition functions of the tangent frame bundle must have a lift from SO$(d)$ to its double cover $Spin(d)$. If this holds, the set of transition functions is known as the spin structure. Riemann surfaces are spin, and in general admit more than one spin structure. We can classify the spin structures in terms of periodic and antiperiodic boundary conditions for the fermions around the generators of the Riemann surface. This corresponds to $2^{2g}$ choices at genus $g$, and thus there are $2^{2g}$ spin structures. Note that both periodic (Ramond) and antiperiodic (NS) boundary conditions transform irreducibly under modular transformations.} In particular, there are $2^2=4$ spin structure, three even ones with $\pmb{\alpha}=2,3,4$ and one odd one, $\pmb{\alpha}=1$, that we will treat separately.\\

The odd spin structure, characterised by the fermions being periodic around each non-trivial cycle, only contributes for $n\geqslant5$ due to the zero modes of the fermion fields $\Psi_r$, and won't be relevant for the remainder of this thesis. We therefore refer to the original work \cite{Adamo:2013tsa} for a complete discussion.\\

For the even spin structures, the fermionic fields do not acquire zero modes, but both the ghosts and antighosts of the $(b,c)$ and $(\tilde{b},\tilde{c})$ systems develop a constant zero mode. As discussed above, these are absorbed for the ghosts by the insertion of one fixed vertex operator in the correlation function, while the insertion in path integral originating from the gauge fixing absorb the antighost zero modes. The correlator is therefore given by
\begin{equation}\label{eq6:ampl-genus-1}
 \cM^{(1),\text{even}}_n=\left\langle b_0\tilde{b}_0\,\bar\delta(P^2(z_0))\,c\tilde{c} V_1(z_1)\prod_{i=2}^n\mathcal{V}_i \right\rangle\,.
\end{equation}
Due to the absence of fixed vertex operators, the correlators of the fields $\Psi_r$ give rise to full Pfaffians, whilst the non-trivial partition function at genus one contributes $Z_{\pmb{\alpha};\pmb{\beta}}$. Imposing the standard type IIB GSO projection,\footnote{Hence, in absence of vertex operators, the partition function takes the form
\begin{equation}
 Z_{\text{IIB}}=\left(Z_1+\sum_{\pmb{\alpha}=2,3,4}(-1)^{\pmb{\alpha}}Z_{\pmb{\alpha}}\right) \left(\wt{Z}_1+\sum_{\pmb{\beta}=2,3,4}(-1)^{\pmb{\beta}}\wt{Z}_{\pmb{\beta}}\right)\,.
\end{equation}
Therefore, the one loop zero-point correlation function vanishes due to the Jacobi vanishing identity, and thus there is no one-loop contribution to the space-time cosmological constant.}
the amplitude, stripped off the overall momentum conserving delta function, becomes
\begin{equation}\label{elliptic-amp}
 \cM^{(1),\text{even}}_n=\int \d^{10}\ell\wedge \d\tau\,\bar\delta(P^2(z_0))\,\prod_{i=2}^n \bar\delta\left(k_i\cdot P(z_i)\right)\,\cI_q\,,
\end{equation}
where the integrand is given by
\begin{equation}
 \cI_q=\sum_{\pmb{\alpha},\pmb{\beta}=2,3,4}(-1)^{\pmb{\alpha}+\pmb{\beta}}Z_{\pmb{\alpha};\pmb{\beta}}(\tau)\pf(M_{\pmb{\alpha}})\pf(\wt{M}_{\pmb{\beta}})\,.
\end{equation}
The integrand, obtained from the CFT correlator of the vertex operators, is thus given as a sum over products of Pfaffians for even spin structures, weighted by partition functions of the same spin structure. It depends, just as at tree level, only on the kinematics, polarisation and insertion points of the external particles. Let us highlight again that due to the trivial $XX$ OPE, the theory does not contain massive states, hence there are only massless states propagating in the loop. 
As discussed above, the scattering equation completely localise the integral over the $n$-dimensional moduli space $\cM_{1,n}$, hence the only integration to be performed is the one over the zero modes $\ell_\mu$ of $P_\mu$.

The Pfaffians in the integrand $\cI_q$ are given by 
\begin{equation}\label{eq:Malpha}
 M_{\pmb{\alpha}}=\begin{pmatrix} A & -C^T\\ C &B \end{pmatrix}\,,
\end{equation}
with entries, for $i\neq j$,
\begin{align}
 &A_{ij}=k_i\cdot k_j S_{\pmb{\alpha}}(z_{ij}|\tau)\,, && B_{ij}=\epsilon_i\cdot \epsilon_j S_{\pmb{\alpha}}(z_{ij}|\tau)\,, && C_{ij}=\epsilon_i\cdot k_j S_{\pmb{\alpha}}(z_{ij}|\tau)\,,\\
 & A_{ii}=0\,, && B_{ii}=0\,, && C_{ii}=\epsilon_i\cdot \ell\d z_i+\sum_{j\neq i}\epsilon_i\cdot k_j \wt{S}_1(z_{ij}|\tau)\,. \nonumber
\end{align}
Here, the Szeg\H{o} kernels $S_{\pmb{\alpha}}$ are the torus free fermion propagators for the respective spin structures, interchanged under modular transformations.
\begin{equation}
 S_{\pmb{\alpha}}(z_{ij}|\tau)=\frac{\theta_1'(0;\tau)}{\theta_1(z_{ij};\tau)}\frac{\theta_{\pmb{\alpha}}(z_{ij};\tau)}{\theta_{\pmb{\alpha}}(0;\tau)}\sqrt{\d z_i \d z_j}\,.
\end{equation}
The tilded matrix $\wt{M}_{\pmb{\alpha}}$ is defined as $M_{\pmb{\alpha}}$, but with different polarisation vectors $\tilde\epsilon$, such that
the polarisation tensors
$\epsilon_i^{\mu\nu}=\epsilon_i^\mu \tilde\epsilon_i^\nu$
correspond to the NS-NS states of supergravity,
graviton, the dilaton and the B-field. In terms of $\epsilon^{\mu\nu}$, the dilaton corresponds to the trace part, the $B$-field to the skew part, and the graviton to the traceless symmetric part.\\

The partition function for the $\pmb{\alpha};\pmb{\beta}$ spin structure receives a contribution of $1/\eta(\tau)^{16}$ from the $(P,X)$ system, and $\theta_{\pmb{\alpha}}(0|\tau)^4/\eta(\tau)^{4}$ from each of the $\Psi_r$ fermion systems, where the powers of the Dedekind eta function are given by twice and one-half the transverse directions of $10$ dimensional Minkowski space. The full $Z_{\pmb{\alpha};\pmb{\beta}}$ then take the form
\begin{equation}\label{eq6:pt-funs}
 Z_{\pmb{\alpha};\pmb{\beta}}(\tau)=\frac{1}{\eta(\tau)^{16}}\frac{\theta_{\pmb{\alpha}}(0|\tau)^4}{\eta(\tau)^{4}} \frac{\theta_{\pmb{\beta}}(0|\tau)^4}{\eta(\tau)^{4}}\,,
\end{equation}
where, for $q=e^{2i\pi\tau}$, the Dedekind eta function and the Jacobi theta functions are given by
\begin{align}\label{eq:theta-def}
 \theta_{\pmb{\alpha}}(z|\tau)&=\sum_{n\in\mathbb{Z}}q^{\frac{1}{2}(n-a/2)^2}e^{2i\pi(z-b/2)(n-a/2)}\,,\\
 \eta(\tau)&=q^{1/24}\prod_{n=1}^\infty (1-q^n)\,.
\end{align}
Here, $\pmb{\alpha}=3,4,2:=(a,b)=(0,0),(0,1),(1,0)$ denote the even spin structures and $\pmb{\alpha}=1:=(1,1)$ is the odd characteristic, corresponding to periodic $(1)$ and antiperiodic $(0)$ boundary conditions around the non-trivial cycles of the elliptic curve.\\

An important characteristic of the amplitude \cref{elliptic-amp} is its modular invariance: if under transformations $T:(z|\tau)\rightarrow(z|\tau+1)$ and $S:(z|\tau)\rightarrow(\frac{z}{\tau}|-\frac{1}{\tau})$, we require $\ell_\mu\rightarrow \tau\ell_\mu$ such that the zero modes $\ell_\mu\d z$ of the meromorphic one-form $P_\mu$ do not transform,\footnote{Note however that care is needed with the definition of the integration cycle for the loop momentum.} the amplitude is  modular invariant. Therefore, the integration region for the modular parameter $\tau$ of the elliptic curve is given by the fundamental domain $\mathcal{F}$, see \cref{fig6:fund-dom0}. Note that the only boundary of $\mathcal{F}$ is at $\Im(\tau)=\infty$, or $q=0$, corresponding to the non-separating degeneration pinching a generating cycle of the elliptic curve.\\

Strong evidence for the validity of this formula was given in \cite{Adamo:2013tsa} by studying its factorisation properties. Moreover, in \cite{Casali:2014hfa}, it was shown that for $n=4$, as in conventional string theory, the integrand $\cI_q$ is independent of $z_i$ and $q$, and factors out of the integral. Moreover, the formulae were shown to reproduce the known
integrands of four-points supergravity amplitudes at a triple cut.
The non-trivial remaining integral for four particles is the $n=4$ version of the more general integral
\begin{equation}
\cM^{(1)}_{n\mathrm{-gon}}=\int \,d^{d} \ell \,  d\tau \, \bar\delta
(P^2(z_0'))\prod_{i=2}^n \bar\delta(k_i\cdot P(z_i)) d z_i^2
\,,\label{eq:ngon}
\end{equation}
where the integral can be checked to be modular invariant in the dimension $d=2n+2$.  In \cite{Casali:2014hfa}, this was conjectured to be equivalent to a sum over permutations of $n$-gons and, if true at $n=4$, this would confirm the 4-particle supergravity conjecture at one loop. \\

This chapter not only provides the missing proof that the ambitwistor string at genus one \cref{elliptic-amp} reproduces supergravity amplitudes, but also develops a completely new approach to loop scattering amplitudes in massless theories, reformulating them not as arising from a higher genus worldsheet, but rather from nodal Riemann spheres. This result relies crucially on the localisation on the scattering equations, and significantly reduces the (manifest) complexity of loop amplitudes.

\section{Loop integrands from the nodal Riemann sphere}\label{sec6:general} 
As discussed above, the scattering equation completely localise the integral over the $n$-dimensional moduli space $\mathscr{M}_{1,n}$ and thus restrict the integral to a sum over a discrete set of solutions. While Galois theory guaranteed a rational result at genus zero, this is obscured at one loop: how do rational integrands emerge from sum over solution to scattering equations, that contain Jacobi theta functions? We will see that the answer takes a surprising form: on the support of the scattering equations, we can use a residue theorem to localise the amplitude on the boundary of the moduli space $\mathcal{F}$ for the modulus $\tau$, corresponding to a non-separating degeneration to a nodal Riemann sphere. Loop amplitudes thus have the same complexity as tree-amplitudes, albeit with effectively more external particles.

To develop this new framework, we first need to reformulate the scattering equations in a manifestly doubly periodic and holomorphic form in \cref{sec:scatt-equat-torus}. The localisation on these holomorphic doubly periodic scattering equations provides the basis for the application of the residue theorem that reformulates the amplitude on a nodal Riemann sphere, see \cref{sec6:torus-to-RS}. As a first, non-trivial check, we will use this to prove the $n$-gon conjecture \cref{eq:ngon} of \cite{Casali:2014hfa}.

\subsection{The scattering equations on a torus}\label{sec:scatt-equat-torus}

In this section, we define holomorphic and single-valued scattering equations on a torus.  These are motivated by the definitions \cref{eq6:SE_ACS} and \cref{eq6:SE_CT} given in \cite{Adamo:2013tsa, Casali:2014hfa}, but the definition has been changed to ensure that they are holomorphic and single-valued on the torus with a well defined loop momentum.   

On the elliptic curve $\Sigma_q =\C/\{\Z\oplus \Z \tau\}$ with $q=\e^{2\pi i \tau}$, we obtain the scattering equations by solving \cref{eq2:DE_SE} for the meromorphic one-form $P(z,z_i|q)$ on $\Sigma_q$,
\begin{equation}
\dbar P=2\pi i \sum_i k_i \,\bar \delta(z-z_i) dz\, , 
\label{eq:dbarP}
\end{equation}
where the delta function is defined as in \cref{eq:deltabar-def}. Again parametrising the zero modes of $P$ by $\ell\in\R^{1,d-1}$, we can use momentum conservation to express the solution in a manifestly holomorphic and doubly-periodic manner,
\begin{equation}
\label{P-def}
P_\mu(z,z_i)=2\pi i\,\ell_\mu \d z +\sum_{i=1}^n k_{i,\mu}\,\wt{S}_1(z,z_i|\tau)\,,
\end{equation}
where
\begin{equation}
 \wt{S}_1(z,z_i|\tau)=\left(\frac{\theta'_1 (z-z_i)}{\theta_1 (z-z_i)} + \frac{\theta'_1
  (z_i-z_0)}{\theta_1 (z_i-z_0)} + \frac{\theta'_1 (z_0-z)}{\theta_1
  (z_0-z)}\right)\d z\,.
\end{equation}
Here the prime denotes $\partial/\partial z$,  $z_0$ is a choice of reference point, and the Jacobi theta functions $\theta_{\pmb\alpha}$ were defined in \cref{eq:theta-def}. Since $\theta_1(z)\underset{z\to0}{\sim} z$, the meromorphic one-form $P_\mu$ has poles at $z=z_i$ for $i=1,\ldots , n$. However, momentum conservation implies that the coefficient of $\theta'_1(z_0-z)/\theta_1(z_0-z)$ is in fact zero, so $P$ is holomorphic at $z_0$.  We include the last term to make the double periodicity manifest.    
Theta functions are trivially periodic under $z\rightarrow z+1$, but
under $z\rightarrow z+\tau$ we have 
\begin{equation}
\frac{ \theta'(z+\tau)}{\theta(z+\tau)}
=\frac{\theta'(z)}{\theta(z)}-2\pi i\, .\label{eq:theta-periodicity}
\end{equation}
It is easy to see that our  expression for $P$ is doubly periodic in $z$ as a consequence of momentum conservation, but it is also doubly periodic in the $z_i$ as a consequence of the extra last term involving the reference point in \eqref{P-def}.

Using this, we define the scattering equations to be
\begin{subequations} \label{eq:SE}
 \begin{align}
  &\mathrm{Res}_{z_i} P^2(z)=2k_i\cdot P(z_i)=0\, , \qquad\qquad i=2,\dots,n \label{eq6:SE_res}\\
  &P^2(z_0')=0\,.\label{eq6:SE_P^2}
 \end{align}
\end{subequations}
where $z'_0$ is another choice of reference point. Because the sum of residues of $P^2$ vanishes, the first scattering equation Res$_{z_1}P^2=0$ follows from \cref{eq6:SE_res}. Translation invariance moreover implies that we can fix the location of $z_1$ arbitrarily. Next note that on the support of the other scattering equations \cref{eq6:SE_res}, $P^2(z_0')$ is global and holomorphic in $z'_0$ and hence independent of the choice of reference point.  Therefore, the last scattering equation \cref{eq6:SE_P^2} serves to determine the complex structure $\tau$ of the torus.\smallskip

Some remarks are in order here. Since our $P$ is meromorphic and doubly periodic both in $z$ \emph{and} the $z_i$, so are the scattering equations.\footnote{An earlier version presented in \cite{Geyer:2015bja} is holomorphic and doubly periodic, but concerns were raised about factorisation by Adamo, Casali \& Skinner. We would like to thank them for suggesting this approach.}  It differs from the previous versions \cref{eq6:SE_ACS} and \cref{eq6:SE_CT} in the choice of an additive `constant' term in $\ell$ that depends on the $z_i$ and $k_i$.  The ACS version \cref{eq6:SE_ACS} is not holomorphic in the $z_i$; this leads to non-holomorphic scattering equations and it was argued in \cite{Casali:2014hfa} that they do not give the correct $1/\ell^{2}$ pole structure. A holomorphic version \cref{eq6:SE_CT} was proposed there
for which factorisation was checked, which is also the version used in \cite{Ohmori:2015sha}.  However, that version is not doubly periodic so the scattering equations are not well defined on the elliptic curve
for fixed constant loop momentum $\ell$; there are different numbers of solutions on
the different fundamental domains of the lattice as well as those
related by SL$(2,\Z)$ as observed numerically
in~\cite{Casali:2014hfa}.\footnote{This fact leads to a well-known
  apparent ambiguity in the definition of the loop momentum in all
  first quantized theories (worldline, strings \cite{D'Hoker:1988ta}).
  This ambiguity drops out of the physical observables after
  integration of the loop momentum and does not alter the modular
  properties of the string amplitudes.
However, the case of the first quantized ambitwistor string is undoubtedly more
subtle because of the presence of the scattering equations and the fact that we must integrate only over a real contour in the loop momentum variable. Therefore
we must proceed by making two assumptions.  Firstly, we must cure the ambiguity in the loop momentum in
the integrand by defining  $P$ by 
\eqref{P-def}. Secondly, we want to define the integration cycle of
the theory (in the sense of \cite{Ohmori:2015sha,Witten:2012bh}) as
including only the solutions to the scattering equations in the
fundamental domain, as described below.}

With this version of the scattering equations, the ACS proposal for the one-loop integrand of type-II supergravity amplitudes \cref{elliptic-amp} is not only modular invariant, but also doubly periodic in the marked points $z_i$.


\subsection{From a torus to a Riemann sphere}\label{sec6:torus-to-RS}
Here we use a residue theorem (or integration by parts in our
notation) to reduce the formula \cref{eq6:ampl-genus-1} on the elliptic curve to one on the
nodal Riemann sphere at $q=0$. 
The argument relies on the intuitive fact that the scattering equation imposed
by $\bar\delta (P^2(z_0))$ has a separate status from the others,
serving to fix $\tau$, and can be analysed on the $\tau-$plane alone. We can use a residue theorem to convert it into an equation
enforcing $q=0$.  Such `global residue theorems' have already been
applied to tree-level CHY formulae by \cite{Dolan:2013isa} to relate
the scalar CHY formulae to their Feynman diagrams.  
We apply the same strategy here, and we reduce the ACS conjecture \cref{eq6:ampl-genus-1} to a formula for the one-loop integrand based on off-shell scattering equations on a nodal Riemann sphere.\footnote{In fact, they are a forward limit of those of \cite{Naculich:2014naa}.}  These one-loop off-shell scattering equations strongly resemble the tree-level ones, but explicitly depend on off-shell momenta associated to $\ell$.

\begin{figure}[ht]
\begin{center}
 \begin{tikzpicture} [scale=2.5]
  \filldraw [fill=light-gray, draw=white] (0.5,0.866) arc [radius=1, start angle=60, end angle= 120] -- (-0.5,2.7) -- (0.5,2.7) -- (0.5,0.866);
  \draw [black] (1.1,0) -- (-1.1,0);
  \draw [black] (0,0) -- (0,2);
  \draw [black,dashed] (0,2) -- (0,2.5);
  \node at (0.5,-0.15) {$\frac{1}{2}$};
  \node at (-0.5,-0.15) {-$\frac{1}{2}$};
  \node at (1.195,2.63) {$\tau$};
  \draw (1.1,2.7) -- (1.1,2.55) -- (1.25,2.55);
  \draw [gray] (0,0) arc [radius=1, start angle=0, end angle= 90];
  \draw [black] (1,0) arc [radius=1, start angle=0, end angle= 180];
  \draw [gray] (1,1) arc [radius=1, start angle=90, end angle= 180];
  \draw [gray] (0.5,0) -- (0.5,0.866);
  \draw [gray] (-0.5,0) -- (-0.5,0.866);
  \draw (0.5,0.866) -- (0.5,2);
  \draw (-0.5,0.866) -- (-0.5,2);
  \draw [dashed] (0.5,2) -- (0.5,2.5);
  \draw [dashed] (-0.5,2) -- (-0.5,2.5);
  \draw (0.5,2.5) -- (0.5,2.7);
  \draw (-0.5,2.5) -- (-0.5,2.7);
  \draw [red3,thick,->] (0.48,2.7) -- (0,2.7);
  \draw [red3,thick] (0,2.7) -- (-0.48,2.7);
  \draw [red3,thick] (0.48,2.5) -- (0.48,2.7);
  \draw [red3,thick,<-] (-0.48,2.5) -- (-0.48,2.7);
  \draw [red3,thick,dashed] (0.48,2) -- (0.48,2.5);
  \draw [red3,thick,dashed] (-0.48,2) -- (-0.48,2.5);
  \draw [red3,thick,->] (0.48,1.2) -- (0.48,2);
  \draw [red3,thick,->] (0.48,0.9) -- (0.48,1.2);
  \draw [red3,thick,<-] (-0.48,1.5) -- (-0.48,2);
  \draw [red3,thick] (-0.48,0.9) -- (-0.48,1.5);
  \draw [red3,thick] (0.48,0.9) arc [radius=0.97, start angle=60, end angle= 119];
  \draw (-0.025,1.125) -- (0.025,1.075);
  \draw (-0.025,1.075) -- (0.025,1.125);
  \draw (0.225,1.225) -- (0.175,1.175);
  \draw (0.225,1.175) -- (0.175,1.225);
  \draw (-0.225,1.225) -- (-0.175,1.175);
  \draw (-0.225,1.175) -- (-0.175,1.225);
  \draw (-0.025,1.455) -- (0.025,1.405);
  \draw (-0.025,1.405) -- (0.025,1.455);
  \draw (-0.385,1.735) -- (-0.435,1.685);
  \draw (-0.385,1.685) -- (-0.435,1.735);
  \draw (0.385,1.735) -- (0.435,1.685);
  \draw (0.385,1.685) -- (0.435,1.735);
  \draw (0.325,1.805) -- (0.375,1.755);
  \draw (0.325,1.755) -- (0.375,1.805);
  \draw (-0.325,1.805) -- (-0.375,1.755);
  \draw (-0.325,1.755) -- (-0.375,1.805);
  \draw (0.205,2.425) -- (0.255,2.375);
  \draw (0.255,2.425) -- (0.205,2.375);
  \draw (-0.205,2.425) -- (-0.255,2.375);
  \draw (-0.205,2.375) -- (-0.255,2.425);
  \draw [gray] (0,0) arc [radius=0.333, start angle=0, end angle= 180];
  \draw [gray] (-0.334,0) arc [radius=0.333, start angle=0, end angle= 180];
  \draw [gray] (0.666,0) arc [radius=0.333, start angle=0, end angle= 180];
  \draw [gray] (1,0) arc [radius=0.333, start angle=0, end angle= 180];
  \draw [gray] (0,0) arc [radius=0.196, start angle=0, end angle= 180];
  \draw [gray] (-0.608,0) arc [radius=0.196, start angle=0, end angle= 180];
  \draw [gray] (0.392,0) arc [radius=0.196, start angle=0, end angle= 180];
  \draw [gray] (1,0) arc [radius=0.196, start angle=0, end angle= 180];
  \draw [gray] (0,0) arc [radius=0.142, start angle=0, end angle= 180];
  \draw [gray] (-0.716,0) arc [radius=0.142, start angle=0, end angle= 180];
  \draw [gray] (0.284,0) arc [radius=0.142, start angle=0, end angle= 180];
  \draw [gray] (1,0) arc [radius=0.142, start angle=0, end angle= 180];
  \draw [gray] (0,0) arc [radius=0.108, start angle=0, end angle= 180];
  \draw [gray] (-0.784,0) arc [radius=0.108, start angle=0, end angle= 180];
  \draw [gray] (0.216,0) arc [radius=0.108, start angle=0, end angle= 180];
  \draw [gray] (1,0) arc [radius=0.108, start angle=0, end angle= 180];
  \draw [gray] (0.5,0) arc [radius=0.122, start angle=0, end angle= 180];
  \draw [gray] (-0.5,0) arc [radius=0.122, start angle=0, end angle= 180];
  \draw [gray] (0.744,0) arc [radius=0.122, start angle=0, end angle= 180];
  \draw [gray] (-0.256,0) arc [radius=0.122, start angle=0, end angle= 180];
  \draw [gray] (0.5,0) arc [radius=0.060, start angle=0, end angle= 180];
  \draw [gray] (-0.5,0) arc [radius=0.060, start angle=0, end angle= 180];
  \draw [gray] (0.620,0) arc [radius=0.060, start angle=0, end angle= 180];
  \draw [gray] (-0.380,0) arc [radius=0.060, start angle=0, end angle= 180];
  \draw [gray] (0.666,0) arc [radius=0.039, start angle=0, end angle= 180];
  \draw [gray] (0.412,0) arc [radius=0.039, start angle=0, end angle= 180];
  \draw [gray] (-0.588,0) arc [radius=0.039, start angle=0, end angle= 180];
  \draw [gray] (-0.334,0) arc [radius=0.039, start angle=0, end angle= 180];
  \draw [gray] (0.334,0) arc [radius=0.062, start angle=0, end angle= 180];
  \draw [gray] (0.790,0) arc [radius=0.062, start angle=0, end angle= 180];
  \draw [gray] (-0.666,0) arc [radius=0.062, start angle=0, end angle= 180];
  \draw [gray] (-0.210,0) arc [radius=0.062, start angle=0, end angle= 180];
  \draw (3.05,1) circle [x radius=0.4, y radius=0.2];
  \draw (2.85,1.045) arc [radius=0.4, start angle=240, end angle=300];
  \draw (3.2,1.015) arc [radius=0.4, start angle=70, end angle=110];
  \draw [fill] (2.95,1.1) circle [radius=.3pt];
  \draw [fill] (2.8,0.95) circle [radius=.3pt];
  \draw [fill] (3.35,0.98) circle [radius=.3pt];
  \draw [fill] (3.3,1.01) circle [radius=.3pt];
  \draw (3.05,1.75) circle [x radius=0.4, y radius=0.2];
  \draw (3.05,1.92) to [out=180, in=60] (2.9,1.85) to [out=250, in=180] (3.05,1.72) to [out=0, in=290] (3.2,1.85) to [out=120, in=0] (3.05,1.92);
  \draw [fill] (2.78,1.82) circle [radius=.3pt];
  \draw [fill] (2.9,1.63) circle [radius=.3pt];
  \draw [fill] (3.35,1.72) circle [radius=.3pt];
  \draw [fill] (3.3,1.76) circle [radius=.3pt];
  \draw (2.5,2.5) circle [x radius=0.4, y radius=0.2];
  \draw (2.5,2.7) to [out=240, in=60] (2.35,2.6) to [out=250, in=180] (2.5,2.47) to [out=0, in=290] (2.65,2.6) to [out=120, in=300] (2.5,2.7);
  \draw [fill] (2.23,2.57) circle [radius=.3pt];
  \draw [fill] (2.35,2.38) circle [radius=.3pt];
  \draw [fill] (2.8,2.48) circle [radius=.3pt];
  \draw [fill] (2.75,2.51) circle [radius=.3pt];
  \draw (3.6,2.5) circle [x radius=0.4, y radius=0.2];
  \draw [dashed,red3] (3.53,2.65) to [out=140, in=180] (3.6,2.9) to [out=0, in=40] (3.67,2.65);
  \draw [fill] (3.33,2.57) circle [radius=.3pt];
  \draw [fill] (3.45,2.41) circle [radius=.3pt];
  \draw [fill] (3.8,2.43) circle [radius=.3pt];
  \draw [fill] (3.85,2.55) circle [radius=.3pt];
  \draw [fill,red3] (3.53,2.65) circle [radius=.3pt];
  \draw [fill,red3] (3.67,2.65) circle [radius=.3pt];
  \node at (3.05,2.5) {$\leftrightarrow$};
 \end{tikzpicture}
 
\caption{Residue theorem in the fundamental domain.}
\label{fund-dom}
\end{center}
\end{figure}
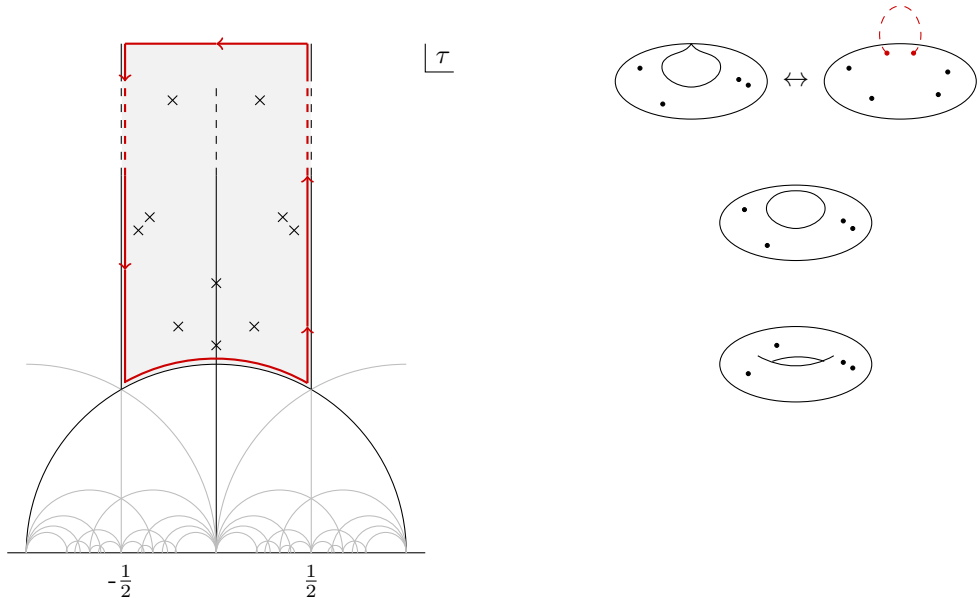  

In order to obtain a formula for the amplitude on the Riemann sphere,
we need the integrand $\cI_q:=\cI(\ldots|q)$ to be holomorphic as a function of
$q$ on the fundamental domain
$\mathcal{F}_\tau=\{|\tau|\geq 1, \Re (\tau) \in[-1/2,1/2]\}$ for the modular
group.  In the case of the $n$-gon conjecture investigated below, 
$\cI_q=1$, and holomorphicity is obvious.
For supergravity, however, $\cI_q$ is a product of two one-loop analogues of CHY Pfaffians that in particular have many contributions of the form $1/ \theta_1(z_i-z_j)$, which provide potential poles when $z_i\rightarrow z_j$, and it is conceivable that as $q$ varies, these might lead to poles in $q$. However,  such poles are suppressed by the scattering equations for generic choices of the momenta.   As $z_i\to z_j$ for $i,j \in I$ and $I$ some subset of $1, \ldots , n$, $P_\mu$ is well approximated by its counterpart on the Riemann sphere near the concentration point, and it is easily seen that such factorisation of the $z_i$ can only occur if the corresponding partial sum of the momenta for $i\in I$ becomes null.  See \cref{sec:factorization} for a detailed discussion of the argument.  Thus, for generic momenta $k_i$, we cannot have $z_i\to z_j$ on the support of $k_i\cdot P(z_i)=0$, and so the integrand $\cI_q$ has no poles.  

 It was shown in
\cite{Adamo:2013tsa} that the holomorphicity of the supergravity
integrand at $q=0$ is a consequence of the GSO projection. For other
values of $q$, the possible poles in the integrand can only occur when
$z_i\rightarrow z_j$. However, the standard factorisation argument
\cite{Dolan:2013isa} implies that this can only
happen when the momenta are factorising, and hence non-generic. \\

The main argument is then the following global residue theorem in $q$:
\begin{align}
\cM^{(1)}_{SG}&=\int \cI_q\,d^d \ell \,  \frac{d q}{ q} \, \dbar \left(\frac{1}{2\pi i P^2(z_0')}\right)\prod_{i=2}^n \bar\delta(k_i\cdot P(z_i)) d z_i \nonumber \\
&= -\int \cI_q\,d^d \ell \,  \dbar\left(\frac{d q}{2\pi i q} \right)\,  \frac{1}{ P^2(z_0')}\prod_{i=2}^n \bar\delta(k_i\cdot P(z_i)) d z_i \nonumber \\
&= -\int \cI_0\,d^d \ell \,    \frac{1}{  P^2(z_0')}\prod_{i=2}^n \bar\delta(k_i\cdot P(z_i)) d z_i \,
\Big|_{q=0}\, .\label{eq6:pre-1-loop}
\end{align}
In the first line, we put $d\tau=dq/(2\pi i q)$ and inserted the definition of $\bar\delta (P^2(z_0'))$.  In the second line, we integrated by parts in the fundamental domain $\mathcal{F}$, yielding a delta function supported at $q=0$ that is then integrated out.  The boundary terms cancel because of the modular invariance.  This is equivalent to a contour integral argument in the fundamental domain $\mathcal{F}$, as in figure \cref{fund-dom}. The sum of the residues at the poles of $1/P^2(z_0' |q)$ simply gives the contribution from the residue at the top, $q=0$, since the contributions from the sides and the unit circle cancel by modular invariance.  

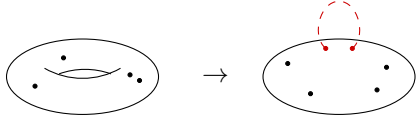
\begin{figure}[ht]
\begin{center}\vspace{-10pt} 
\begin{tikzpicture}[scale=2.5]
 \draw (3.6,1) circle [x radius=0.4, y radius=0.2];
  \draw [dashed,red3] (3.53,1.15) to [out=140, in=180] (3.6,1.4) to [out=0, in=40] (3.67,1.15);
  \draw [fill] (3.33,1.07) circle [radius=.3pt];
  \draw [fill] (3.45,0.91) circle [radius=.3pt];
  \draw [fill] (3.8,0.93) circle [radius=.3pt];
  \draw [fill] (3.85,1.05) circle [radius=.3pt];
  \draw [fill,red3] (3.53,1.15) circle [radius=.3pt];
  \draw [fill,red3] (3.67,1.15) circle [radius=.3pt];
  \node at (2.95,1) {$\rightarrow$};
  \draw (2.25,1) circle [x radius=0.4, y radius=0.2];
  \draw (2.05,1.045) arc [radius=0.4, start angle=240, end angle=300];
  \draw (2.4,1.015) arc [radius=0.4, start angle=70, end angle=110];
  \draw [fill] (2.15,1.1) circle [radius=.3pt];
  \draw [fill] (2,0.95) circle [radius=.3pt];
  \draw [fill] (2.55,0.98) circle [radius=.3pt];
  \draw [fill] (2.5,1.01) circle [radius=.3pt];
\end{tikzpicture}
\end{center}
\caption{The residue theorem maps from an elliptic curve to a nodal Riemann sphere.\label{fig6:degen-1-loop}}
\end{figure}

The fundamental domain for $z$ maps,
\begin{equation}
\sigma= \e^{2\pi i (z-\tau/2)}\,,
\end{equation}
 to $\{\e^{-\pi\Im \tau} \leq|\sigma|\leq\e^{\pi\Im \tau}\}$, with the
 identification $\sigma\sim q \sigma$. As  $q\rightarrow0$, we obtain
 $\sigma\in\CP^1$ with $\sigma=0,\infty$  identified, giving a double point
 corresponding to the pinching of $\Sigma_q$ at a non-separating
 degeneration as illustrated in figure \ref{fund-dom} and \cref{fig6:degen-1-loop}. We have
 $dz=\frac{d\sigma}{2\pi i\sigma}$ and, at $q=0$,
\begin{equation}
 \frac{\theta'_1 (z-z_i)}{\theta_1 (z-z_i)}dz=\frac{\pi}{\tan \pi
   (z-z_i)} \,dz =-\frac{d\sigma}{2\,\sigma} +
 \frac{d\sigma}{\sigma-\sigma_i}\,.
 \label{eq:thetaq0}
\end{equation}
Using momentum conservation we obtain 
\begin{equation}
\label{Psphere}
P(z)=P(\sigma)= \ell \,\frac{d\sigma}{\sigma} +\sum_{i=1}^n \frac{k_i\, d\sigma}{\sigma-\sigma_i}\, ,
\end{equation}
where here we have translated $\ell$ by\footnote{This was the extra term included in  $\ell$ to make $P$ doubly periodic at constant $\ell$.  In this limit on the Riemann sphere, it no longer plays a useful role and corresponds to taking $\sigma_0=e^{2i\pi z_0'}\rightarrow\infty$. } $\sum_{i} k_i \cot \pi
z_{i0 }$. 

If we now consider the meromorphic quadratic differential $P^2(\sigma)$, we find that it has double poles at $0$, $\infty$ (along with the usual simple poles at $\sigma_i$).  Defining 
\begin{equation}
S=P^2-\ell^2 \frac{\rd \sigma^2}{\sigma^2}\,,
\end{equation}
$S$ now has only simple poles. The vanishing of the residues of $S$ gives the {\em off-shell scattering equations} 
\begin{equation}\label{SE2n}
0=\mathrm{Res}_{\sigma_i}S =k_i\cdot P(\sigma_i) = \frac{k_i\cdot \ell}{\sigma_i} + \sum_{j\neq i}\frac{k_i\cdot k_j}{\sigma_i-\sigma_j}\, . 
\end{equation}
Again, the sum of the residues of $\sum\sigma_\alpha\sigma_\beta S$ must vanish with $\sigma_\alpha=(1,\sigma)$ in affine coordinates, and thus the equations for $i=2,\ldots , n$ imply the vanishing of the residues of $S$ at $\sigma_1$, $0$ and $\infty$. Therefore, any $n-1$ of the scattering equations imply all $n+2$. Since $S$ is holomorphic and of negative weight, it vanishes, and hence $P^2=\ell^2\rd \sigma^2/\sigma^2$. Geometrically, the contour integral argument thus takes the correlator off ambitwistor space. This is reflected in the interpretation of the residues of the new poles in $P$ as off-shell loop momenta. In particular, $P_\mu$ strongly resembles the tree-level solution, though with two off-shell legs with equal/opposite momentum associated to the node of the Riemann sphere.\smallskip

With these off-shell scattering equations, the one-loop formula \cref{eq6:pre-1-loop} becomes
\begin{equation}\label{1-loop}
\cM^{(1)}_{SG}= -\int \cI_0\,d^d \ell \, \frac{1}{\ell^2}\prod_{i=2}^n \bar\delta(k_i\cdot P(\sigma_i))\frac{d \sigma_i}{\sigma_i^2}\, ,
\end{equation}
where we have used the identity $\bar\delta(\lambda f)=\lambda^{-1}\bar\delta(f)$ to give $\bar{\delta}(k_i\cdot P(z_i)) dz_i=
\bar{\delta}(k_i\cdot P(\sigma_i))d\sigma_i/\sigma_i^2$. The formula \eqref{1-loop} is our new proposal for the supergravity loop integrand, with $\cI_0$ the $q=0$ limit of the ACS correlator.

The simpler `$n$-gon' conjecture presented in \cite{Casali:2014hfa} is given by $\cI_0=1$. For both this and supergravity, modular invariance is no longer an issue on the Riemann sphere, and the new formulae make sense in any dimension.  However, the link to a formula on the elliptic curve will only be valid in the critical dimension.

Since, at this stage, the formula does not require modular invariance any more, it is strongly suggestive of a framework widely applicable in massless scattering. Similar to tree-level, the one-loop off-shell scattering equations and the measure form the universal backbone, while the integrand $\cI_0$ specifies the theory. Moreover, note that \cref{1-loop} has manifestly the same complexity as a tree-level amplitude with $n+2$ external particles! This signifies a huge simplification in comparison to \cref{eq6:ampl-genus-1} that allows us to check the conjecture directly. Below, we will discuss several different theories, starting from the simplest $n$-gons and then proceeding with supersymmetric and non-supersymmetric gravity and Yang-Mills theory amplitudes.\\

Let us include a few more comments at this point clarifying the relation to string theory.
The integration by parts localises the amplitude on $q=0$. This is also the regime in
full string theory where one extracts the field theory or
$\alpha'\to0$ limit of the amplitudes. The difference here is that
this limit is obtained by application of the residue theorem, so we
are not discarding any terms, whereas in string theory we
would be projecting out the contribution of massive states running in the
loop by doing so.
At the moment it is unclear if the similarity between the method we
use here and string theory is just a consequence of the fact that both
strings are physical and hence factorise properly at the boundary of
the moduli space, or if this goes deeper.
In any case, the similarity between the $\alpha'\to0$ limit and our
integration by parts will allow us to reuse some standard technology from string
theory.\footnote{Several restrictions apply; there is no fully fledged
  well defined heterotic ambitwistor string (see
  \cref{sec:motiv-form-ambitw}), there are no winding modes
  which can become massless at self-dual radii in compactifications to
  enhance abelian to non-abelian gauge groups, see
  \cref{sec:dimens-reduct}, and it is yet unknown how to
  include the contribution of non-perturbative states of
  supergravity.}

\subsection{A first check: the \texorpdfstring{$n$}{n}-gon conjecture, partial fractions and shifts}\label{sec:n-gon}

The question arises as to how the $\ell$ appearing in \eqref{1-loop} relates to the loop momentum flowing in any given propagator.  We will see that the answer requires a new way of expressing one-loop amplitudes. The expression \eqref{1-loop} is a representation of the one-loop contribution to the scattering amplitude of a theory specified by $\cI_0$.  In this subsection, we consider the integrand $\cI_0=1$, which was conjectured in \cite{Casali:2014hfa} to give rise to a permutation sum of polygons.  
When $n=4$, the $n$-gon conjecture implies the supergravity conjecture \cite{Casali:2014hfa}.

For $n=4$, the off-shell scattering equations can be solved exactly with two solutions\footnote{This problem is identical to that arising in factorisation as studied in \cite{Casali:2014hfa} except that now $\ell$ is off-shell.  It was  conjectured there to have $(n-1)!-2(n-2)!$ solutions giving 2 at $n=4$.} given explicitly in \cref{4pt-soln}.   
After substituting into \eqref{1-loop}, this yields
\begin{equation}\label{4pt-result}
\hat{\cM}^{(1)}_4 = \frac{1}{\ell^2} \sum_{\rho \in S_4} 
\frac{1}{2\ell\cdot k_{\rho(1)}(2\ell\cdot (k_{\rho(1)}+ k_{\rho(2)})+2k_{\rho(1)}\cdot k_{\rho(2)})(-2\ell\cdot k_{\rho(4)})}\,,
\end{equation}
where we defined the loop integrand as $\hat{\cM}^{(1)}$,
\begin{equation}\label{4pt-result-int}
{\cM}^{(1)} = \int d^D\ell \;\hat{\cM}^{(1)}.
\end{equation}
This result is not obviously equivalent to the permutation sum of the boxes
\begin{equation}
  \label{eq:Ibox}
I^{1234} = \frac{1}{\ell^2(\ell+k_1)^2(\ell+k_1+k_2)^2(\ell-k_4)^2}\,,
\end{equation}
as the only manifest propagator in $\cM^{(1)}_4$ is the pre-factor ${1}/{\ell^2}$, and all the other denominator factors are linear in $\ell$. 
However,  the partial fraction identity\footnote{This
identity is easily proven by induction. Equivalently, it can also be derived from a residue theorem argument in the following integral \cite{Baadsgaard:2015twa}:
 \begin{equation}
   \label{eq:Di-partfracz}
   \frac{1}{2\pi i}\oint_{|z|=\epsilon}\frac1{z\,\prod_{i=1}^{n}(D_{i}-z)}\,.
 \end{equation}}
\begin{equation}
  \label{eq:Di-partfrac}
  \frac1{\prod_{i=1}^{n}D_{i}} =
  \sum_{i=1}^{n}\frac{1}{D_{i}\prod_{j\neq i}(D_{j}-D_{i})}\,,
\end{equation}
can be applied to a contribution such as \eqref{eq:Ibox}.   The right-hand-side of this identity is a sum of terms with a single factor of the type $D_i=(\ell+K)^2$, and several factors of the type $D_{j}-D_{i}=2\ell\cdot K +\mathcal{O}(\ell^0)$, where $K$ denotes a partial sum of the momenta. We then perform a shift in the loop momentum for each term such that the respective $D_i$ is simply $\ell^2$. Applying this procedure to the permutation sum, we precisely obtain $\hat{\cM}_4^{(1)}$.

We are now in a position to address the $n$-gon conjecture of \cite{Casali:2014hfa}. It states that $\cI$=1 corresponds to a permutation-symmetric sum of $n$-gons, which can be written as
\begin{align}
\hat{\cM}^{(1)}_{n\text{-gon}}= \frac1{\ell^2} \sum_{\rho\in S_n}
 \frac{1}{{\prod_{i=1}^{n-1} }\bigg(2\ell\cdot\sum_{j=1}^i k_{\rho_i} +\left(\sum_{j=1}^i
k_{\rho_i}\right)^2\bigg) } .
\label{formulangon}
\end{align}
This can be verified analytically at four points, using the explicit solutions to the scattering equations in \cref{4pt-soln}, and numerically at five points. We will see later in \cref{factorize} that this equality can be proved directly by factorisation arguments.

The $n=2$ and $3$ examples are also instructive.  The bubble (2-gon) example gives\footnote{Henceforth, we use a capital symbol $K$ to distinguish a possibly massive momentum.}
\begin{align}
  \frac1{\ell^2(\ell+K)^2} & =
  \frac{1}{\ell^2(2\ell\cdot K+K^2)} + \frac{1}{(\ell+K)^2(-2\ell\cdot K-K^2)} \nonumber \\
& \stackrel{\textrm{shift}}{\longrightarrow} \frac1{\ell^2}
\left( \frac1{2\ell\cdot K+K^2} + \frac1{-2\ell\cdot K+K^2} \right),\label{bubble}
\end{align}
where a shift $\ell \to \ell -K$ was applied to the second term. If $K$ is null, the bubble vanishes, which is also the result of dimensional regularisation. The triangle (3-gon) with massless corners, $k_1^2=k_2^2=k_3^2=0$, also vanishes:
\begin{align}
  \frac1{\ell^2(\ell+k_1)^2(\ell-k_3)^2} \stackrel{\textrm{shift}}{\longrightarrow} -\frac{ \ell\cdot(k_1+k_2+k_3)}{4\,\ell^2(\ell\cdot k_1)(\ell\cdot k_2)(\ell\cdot k_3)} =0.\label{triangle}
\end{align}

In general, the loop momentum dependence of typical integrands is not restricted to propagator factors in the denominator, and numerators of theories like gauge theory or gravity depend on $\ell$. The loop momentum in the numerators should thus also be shifted.
 For more general amplitudes, this can be achieved with a
 shift in the loop momentum together with a contour integral argument, and this has been explored and considerably
 generalised in  \cite{Baadsgaard:2015twa} and reviewed in \cref{sec:factorization}.

\section{Supersymmetric theories}
\label{sec:supersymm-theor-1}

Supergravity and Yang-Mills one-loop amplitudes can be expressed on the Riemann sphere using different choices for $\cI_0$ in \eqref{1-loop}.  While the former are readily derived from the type II RNS ambitwistor string, the Yang-Mills one cannot be derived from a full ambitwistor string calculation due to the corrupted gravity in the heterotic model (see some motivational comments in \cref{sec:motiv-form-ambitw}). We show that these
integrands pass several non-trivial consistency checks, and demonstrate that they factorise on the correct poles in \cref{sec:factorization}.

\subsection{Supergravity}
\label{sec:supergravity}
Let us start by recalling the form of genus-one graviton amplitudes in
ambitwistor string, as derived by ACS in \cite{Adamo:2013tsa} and reviewed in \cref{sec6:review_loop}.
As in the usual RNS string, the worldsheet correlator incorporates a
GSO projection to remove the unwanted states. The integrand $\cI_q$, derived from the worldsheet correlator of the vertex operators, is a sum over spin structures on the torus.  The odd-odd spin structure gives a fermionic 10-dimensional
zero-mode integral that leads to a $10$-dimensional Levi-Civita $\epsilon$ symbol.  This will vanish if all the polarisation data and external kinematics are restricted to $7$ dimensions or less.\footnote{In string theory, the $\epsilon_{10}\epsilon_{10}$ term is actually an $\epsilon_{8}\epsilon_{8}$ one, see for instance \cite{Peeters:2000qj}. It is not clear how this situation transposes to the ambitwistor string.} For simplicity we will assume this in the following\footnote{In doing this, we miss the term that leads to the Green-Schwarz anomaly \cite{Mafra:2012kh}.} and focus only on the even spin structures labelled by $\pmb{\alpha}=2,3,4$.  With this, the ACS proposal for the amplitude explicitly reads as \eqref{elliptic-amp} with 
\begin{equation}
\cI_q:=\frac14\sum_{\pmb{\alpha};\pmb{\beta}=2,3,4}
	    (-1)^{\pmb{ \alpha}+ \pmb{\beta}} Z_{ \pmb{\alpha}; \pmb{\beta}}(\tau)
	\ {\rm Pf}(M_{ \pmb{\alpha}})\,{\rm Pf}(\widetilde{M}_{ \pmb{\beta}})\,,
\label{e:graveven}
\end{equation}
where the numerical factor of $1/4$ comes from the two GSO projections. The vertex operators are naturally a product of two factors, and since these two parts essentially decouple, the full correlator decomposes also as a product as follows:
\begin{equation}
  \mathcal{I}_q = \mathcal{I}^L_q\,\mathcal{I}^R_q, \qquad \text{with} \qquad \mathcal{I}^L_q=\frac12\left(Z_2\,\pf(M_2) -Z_3\,\pf(M_3) +Z_4\,\pf(M_4)\right) .
\label{defIsugra}
\end{equation}  
with an analogous definition for $\mathcal{I}^{R}$.
The matrices $M_{\pmb\alpha}$ and $\wt{M}_{\pmb\alpha}$ are generalisations of the CHY matrix, and
arise from a straightforward application of Wick's theorem to the one factor of the vertex operators in the spin structure $\pmb{\alpha}$. In particular, they depend on the kinematic data, the insertion points $z_i$ and the polarisation vectors (see also \cref{sec6:amplitudes_g1}), and thus the amplitude encodes the NS-NS states of supergravity via the polarisation tensors $\epsilon^{\mu\nu}_i=\epsilon_i^\mu\tilde{\epsilon}_i^\nu$, with the dilaton corresponding to the trace, the $B$-field to the skew and the graviton to the traceless symmetric part. Explicit expressions for the matrices and the partition functions were given in \cref{sec6:amplitudes_g1}, we will use these for the expansions detailed below.\\

Applying our contour integral argument to go from the torus to the nodal
Riemann sphere localises the amplitude on the limit $q\to0$. The partition
functions possess $1/\sqrt{q}$ poles which extract higher order
terms in the Szeg\H{o} kernels. Hence we need the following
$q$-expansions:
\begin{equation}
Z_2(\tau) = 16+O(q^{2}),\quad
Z_3(\tau)= \frac1{\sqrt{q}}+8+O(q),\quad
Z_4(\tau) = \frac1{\sqrt{q}}-8+O(q).
\label{eq:Z-q-exp}
\end{equation}
and
\begin{equation}
\begin{aligned}
S_1(z_{ij}|\tau) &\to \frac{1}{2}\,\frac{1}{\sigma_ i-\sigma_ j} \left(\sqrt{\frac{\sigma_ i}{\sigma_ j}}+ \sqrt{\frac{\sigma_ j}{\sigma_ i}}\right)  \sqrt{d\sigma_ i} \sqrt{d\sigma_ j}  ,
  \\ 
S_2(z_{ij}|\tau) &\to \frac{1}{2}\,\frac{1}{\sigma_ i-\sigma_ j} \left(\sqrt{\frac{\sigma_ i}{\sigma_ j}}+ \sqrt{\frac{\sigma_ j}{\sigma_ i}}\right)  \sqrt{d\sigma_ i} \sqrt{d\sigma_ j} ,
  \\ 
S_3(z_{ij}|\tau) &\to \left(\frac{1}{\sigma_ i-\sigma_ j}+\sqrt{q} \,\frac{\sigma_ i-\sigma_ j}{{\sigma_ i}{\sigma_ j}}\right) \sqrt{d\sigma_ i} \sqrt{d\sigma_ j} ,
  \\ 
S_4(z_{ij}|\tau) &\to \left(\frac{1}{\sigma_ i-\sigma_ j}-\sqrt{q} \,\frac{\sigma_ i-\sigma_ j}{{\sigma_ i}{\sigma_ j}}\right) \sqrt{d\sigma_ i} \sqrt{d\sigma_ j} .
\end{aligned}\label{eq:szego-limit}
\end{equation}
in terms of the coordinates $\sigma=e^{2\pi i (z-\tau/2)}$.
The limit of $P(z_i)$ required for the components $C_{ii}$ was already
given in \eqref{Psphere}. 
The $q=0$ residue of \eqref{defIsugra} is then given by 
\begin{equation}
  \begin{aligned}
  \label{eq:sugra-sphere}
   \mathcal{I}^{L}=\frac{1}{2\sqrt{q}} \left( \pf(M_{3})\big|_{q^{0}} -
    \pf(M_{4})\big|_{q^{0}} \right)+
\frac12\left(\pf(M_{3}) \big|_{\sqrt{q}}+\pf(M_{4}) \big|_{\sqrt{q}}\right)+\\
  4\left(\pf(M_{3}) \big|_{q^0}+\pf(M_{4}) \big|_{q^0}-2\pf(M_{2} )\big|_{q^0}\right)+O(\sqrt{q})\,,
  \end{aligned}
\end{equation}
where the symbol $(\cdots)|_{q^r}$ with $r=0,1/2$ denotes
the coefficient of $q^r$ in the Taylor expansion around
$q=0$.\footnote{In the original ACS paper, the $O(\sqrt{q})$ were not
    included in the analysis of the factorisation channel.}

Some simplifications occur at this stage. Firstly, it is easy to see from \cref{eq:szego-limit} that 
\begin{equation}
  \pf(M_{3})\big|_{q^{0}} = \pf(M_{4})\big|_{q^{0}}\,,
  \label{eq:m3m4q0}    
  \end{equation}
  which reflects the projection of the ambitwistor string ``NS--tachyon''
  (we come back on this later). Furthermore, we also have that
\begin{equation}
  \pf(M_{3})\big|_{\sqrt{q}} = -\pf(M_{4})\big|_{\sqrt{q}}\,.
  \label{eq:m3m4q12}    
  \end{equation}
Using the two previous identities finally leads to the following expression for the full {\it one-loop supergravity integrand} for any number of external particles:
\begin{equation}
  \label{eq:IL0-sugra}
  \mathcal{I}_{0}^{L} = \pf(M_{3}) \big|_{\sqrt{q}}+
  8\left(\pf(M_{3}) \big|_{q^0}-\pf(M_{2} )\big|_{q^0}\right)\,.
\end{equation}

Let us comment briefly on the complexity of this object. As observed above, due to the simplicity of the scattering equations on the nodal sphere, the amplitude has a similar complexity to an $n=2$ point tree-level amplitude. Moreover, while the structure of $\mathcal{I}_{0}$ may initially appear to be quite complicated compared to the extreme simplicity of one-loop maximal supergravity
integrands, it simplifies considerably due to standard stringy theta function identities \cite{Casali:2014hfa}, see also \cite{Tsuchiya:1988va,Broedel:2014vla} for a discussion in string theory.
The simplest identities involve products of up to three Szeg\H{o} kernels,
\begin{equation}
\sum_{\alpha=2,3,4} (-1)^\alpha Z_\alpha \prod_{r=1}^m S_\alpha (w_{(r)}|\tau) = 0, \quad 
\text{for} \quad m=0,1,2,3,
\end{equation}
where the $w_{(r)}$ can be arbitrary. At $n=0$, this is the well known
Jacobi's identity
$\theta_{2}(0,\tau)^{4}-\theta_{3}(0,\tau)^{4}+\theta_{4}(0,\tau)^{4}=0$. For
$m>3$, the analogous identities are valid only for
\begin{equation}
\label{sumztheta}
w_{(1)}+\ldots+w_{(m)}=0.
\end{equation}
Let us consider the case $m=4$. In our application, the condition \cref{sumztheta} on
the $w_{(r)}$ is naturally achieved by the set
$(z_{ij},z_{jk},z_{kl},z_{li})$, and the corresponding identity is
\begin{equation}
  \sum_{\alpha=2,3,4} (-1)^\alpha Z_\alpha \prod_{r=1}^4 S_\alpha
  (w_{(r)}|\tau)  = (2\pi)^4 \,,
\end{equation}
where we have omitted the global form degree $dz_i dz_j dz_k dz_l$.
Applied to \eqref{defIsugra}, these identities tell us that
$\mathcal{I}^L$ is a constant for four-point scattering
\cite{Casali:2014hfa}. This follows from the structure of the
Pfaffians, or equivalently from the structure of the vertex
operators: as in string theory, only the terms with 8 $\psi$'s or more
contribute. At $n$ points, each term in $\Pf(M_\alpha)$ is a product
of $m$ Szeg\H{o} kernels of type $\alpha$ and $m-n$ factors
$C_{ii}$. The Szeg\H{o} kernels of type $\alpha$ appear with arguments
which precisely satisfy the condition \eqref{sumztheta}. At four
points, the sum over spin structures ensures that no $C_{ii}$
contributes, as $m<4$ for those terms, whereas the $m=4$ identity
implies that $\mathcal{I}^L$ is a constant. For $n>4$, the sum over
spin structures ensures that there are no terms with more than $n-4$
factors of the type $C_{ii}$. See \cite{Tsuchiya:1988va} for
identities up to $n=7$.
Since the loop momentum enters explicitly in $\mathcal{I}^L$ only
through $C_{ii}$, this means that $\mathcal{I}^L$ is a polynomial of
order $n-4$ in the loop momentum, which is always contracted with a
polarisation vector. This discussion holds for any value of $\tau$. In
the limit $q\to 0$ ($\tau\to i\infty$) of interest here, the Riemann identities become
algebraic identities, and can be easily checked at low multiplicity.\\

With few external particles, the following consistency checks have been performed on \cref{1-loop} with \cref{eq:IL0-sugra}: for $n=4$, $\mathcal{I}$ is a constant for any $q$, giving the expected $t_8t_8R^4$ kinematic tensor \cite{Casali:2014hfa}, and the $n$-gon results above suffice to give the correct answer. For $n=5$,  the integrand $\mathcal{I}_0$ depends on the $\sigma_ i$ and the loop momentum. The amplitude can be written in terms of pentagon and box integrals, and we can apply the shift procedure above to connect to our results, yielding 
\begin{align*}
\label{eq:5ptint}
& \hat{\cM}_5^{(1)}=\frac{1}{32\,\ell^2} \sum_{\sigma\in S_5}
\frac{1}{\prod_{i=1}^4 \big(\ell\cdot\sum_{j=1}^i k_{\sigma_i} +\frac{1}{2} (\sum_{j=1}^i k_{\sigma_i})^2\big)} 
\nonumber \\
& \times 
\left( N^5_{\sigma,\ell} \,+ 
\frac{1}{2} \sum_{i=1}^4 N^{\text{box}}_{\sigma_i\sigma_{i+1}} \,\frac{\ell\cdot\sum_{j=1}^i k_{\sigma_i} +\frac{1}{2} (\sum_{j=1}^i k_{\sigma_i})^2}{k_{\sigma_i}\cdot k_{\sigma_{i+1}}} \,   \right).
\end{align*}
The supergravity numerators $N^5$ and $N^{\text{box}}$ are the square of the gauge theory numerators given in \cite{Carrasco:2011mn} or \cite{Mafra:2014gja}, which satisfy the colour-kinematics duality \cite{Bern:2008qj,Bern:2010ue}.  This formula precisely matches that from the off-shell scattering equations at 5 points numerically, see \cref{sec:checks} for details.

\subsection{Super-Yang-Mills theory}
\label{sec:super-yang-mills}
The supergravity amplitude was derived in \cref{sec:supergravity} from the genus-one ambitwistor string expression of \cite{Adamo:2013tsa}. However, a Yang-Mills analogue of the latter on the torus is not known, despite the progress in formulating an ambitwistor string version of gauge theory at tree level \cite{Mason:2013sva,Geyer:2014fka,Casali:2015vta,Adamo:2015gia}, see \cref{chapter3}. Nevertheless, a proposal for super Yang-Mills amplitudes can be given, using the tree-level integrand as a motivation and relying on the relation between gauge theory and gravity.

At tree level, CHY \cite{Cachazo:2013hca} found that the expression for the gauge theory amplitude is obtained from the supergravity one by substituting one Pfaffian by a Parke-Taylor factor. The fact that gauge theory has only one Pfaffian, depending on a set of polarisation vectors ($\epsilon_i^\mu$), while gravity has two Pfaffians, each depending on a different set of polarisation vectors ($\epsilon_i^\mu$ and $\tilde\epsilon_i^\mu$), is a clear manifestation of gravity as a `square' of gauge theory, in agreement with the Kawai-Lewellen-Tye relations \cite{Kawai:1985xq} and with the Bern-Carrasco-Johansson (BCJ) double copy \cite{Bern:2008qj,Bern:2010ue}. At loop-level, the BCJ double copy is known to hold at one-loop in a variety of cases, including certain classes of amplitudes at any multiplicity \cite{Mafra:2012kh,Boels:2013bi,Bjerrum-Bohr:2013iza,Mafra:2014gja,He:2015wgf}, so it is natural to propose that one-loop formulae based on the scattering equations will also exhibit this property. We therefore propose that the {\it full  one-loop super Yang-Mills integrand} is given by\footnote{This gives the planar (single-trace) contribution to the amplitude. At one loop, the non-planar (double-trace) contribution is determined by the planar part for any gauge theory involving only particles in the adjoint representation of $SU(N_c)$ \cite{Bern:1994zx}.}
\begin{equation}
\mathcal{I}^{SYM}= \mathcal{I}^L_0\,\mathcal{I}^{PT}\,,
\label{defIsym}
\end{equation}
where $\mathcal{I}^L_0$ is defined in \eqref{eq:IL0-sugra}. The one-loop analogue of the Parke-Taylor factor is conjectured to be
\begin{equation}
\label{eq:PTloop-def}
\mathcal{I}^{PT}=\sum_{i=1}^n \frac{\sigma_{\ell^+\,\ell^-}}{\sigma_{\ell^+\,i}\sigma_{i\, i+1}\sigma_{i+1\, i+2}\ldots \sigma_{i+n\,\,\ell^-}}\, ,
\end{equation}
where $\sigma_{\ell^+}$ and $\sigma_{\ell^-}$ represent the pair of insertion points of the loop momentum, and where we identify the labels $i\sim i+n$. In our choice of coordinates used above, we have fixed $\sigma_{\ell^+}=0$ and $\sigma_{\ell^-}=\infty$, so that
\begin{equation}
\mathcal{I}^{PT}= - \sum_{i=1}^n \frac1{\sigma_{i}\sigma_{i\, i+1}\sigma_{i+1\, i+2}\ldots \sigma_{i+n-1\,i+n}}\, .
\end{equation}
At four points, $\cI^L_0$ is constant as mentioned above and factors out of the moduli integral.
Moreover, this proposal \cref{defIsym} has been checked numerically at both four and five points, see \cref{sec:checks} for details.

In \cref{sec:motiv-form-ambitw}, we present a motivation for our conjecture based on the
heterotic ambitwistor models.

\section{Non-supersymmetric theories}
\label{sec:non-supersymm-theor}

In this section, we describe new formulae for Yang-Mills theory and
gravity amplitudes in the absence of supersymmetry. The main tool in
arriving at these formulae is the detailed analysis of the sum over
spin structures (or GSO sum), which was part of the formulae for supergravity and
super Yang-Mills theory presented in \cref{sec:supersymm-theor-1}.

On the torus, these GSO sectors correspond to the various states
propagating in the loop. Once taken down to the sphere, we will see how
they provide amplitudes with $n$ external on-shell gravitons (or
gluons) and additional NS-NS, R-NS, NS-R or R-R additional
off-shell states (resp. NS or R), running in the loop.  
In particular, we are able to see that the $M_2$ contribution in $\eqref{eq:IL0-sugra}$ corresponds to the Ramond sector.  Furthermore the $M_3$ contributions naturally combine as a reduced Pfaffian of an $(n+2)\times (n+2)$ matrix in which the number of NS states running in the loop can be chosen at will.  

Taken individually, these one-loop amplitudes are non-supersymmetric. Using
these building blocks, one can engineer various types of
amplitudes.
For gravity, we discuss both NS-NS gravity (graviton, dilaton,
B-field) and pure Einstein gravity (graviton only).
We later show that our formulae match the known 4-point one-loop amplitudes in Yang-Mills theory and gravity, in a certain helicity sector.

A subtlety that arises however is that a class of degenerate solutions to the scattering equation becomes non-trivial (and in fact potentially divergent) for these non-supersymmetric amplitudes, as described by \cite{Baadsgaard:2015hia,
He:2015yua} for the bi-adjoint scalar theories.  So we first rephrase the scattering equations in a more SL$(2,\C)$ invariant manner to give a less degenerate formulation of these solutions.  In the next section, we will see that the contribution of these degenerate solutions is finite for our proposed formulae, and can furthermore be discarded without changing the integrated amplitude.  

In \cref{sec:factorization}, we present a full systematic proof for the non-supersymmetric one-loop amplitudes proposed in this section, relying on the recently proposed Q-cut formalism \cite{Baadsgaard:2015twa}.

\subsection{General form of the one-loop scattering equations}
\label{sec:general-form-one}
Before proceeding, we rewrite our previous expressions in
order to use their different building blocks for non-supersymmetric
theories. The reason for this, as pointed out in \cite{He:2015yua}, is
that the one-loop scattering equations on the sphere possess, in their
general form, more solutions than manifest from
\eqref{SE2n}.  We used part of the SL$(2,\mathbb{C})$
freedom on the Riemann sphere to fix the positions of the
loop-momentum insertions at $\sigma_{\ell^+}=0$ and
$\sigma_{\ell^-}=\infty$ as was natural from the
degeneration of the torus into a nodal Riemann sphere. However there are  extra solutions to the scattering
equations for which $\sigma_{\ell^+}=\sigma_{\ell^-}$ with the remaining
$\sigma_i$ satisfying the tree-level scattering equations. These solutions do arise in the previous gauge fixing with $\sigma_1=1$ as all the marked points collide, $\sigma_i\rightarrow\sigma_1$, but this gauge is much less convenient for these solutions.  We will see in \cref{sec:factorization} that these
extra solutions do not contribute to the formulae for maximal
supergravity and super Yang-Mills theory given in \cref{sec:supersymm-theor-1}
and reviewed above, but do contribute for generic theories, e.g. the
bi-adjoint scalar theory. As discussed in \cite{He:2015yua}, the total
number of solutions contributing is $(n-1)!-(n-2)!$, of which
$(n-1)!-2(n-2)!$ are the `regular' solutions considered in
\eqref{SE2n}, and $(n-2)!$ are the `singular' solutions for which
$\sigma_{\ell^+}=\sigma_{\ell^-}$.

Hereafter, we will write the one-loop formulae based on the general scattering equations as
\begin{equation}\label{1-loopgeneral}
\cM^{(1)}= -\int d^d \ell \, \frac{1}{\ell^2} \int 
\frac{d\sigma_{\ell^+} d\sigma_{\ell^-} d^n\sigma}{\text{vol}\,G} \;\; \hat{\cI}
\;\bar\delta(\text{Res}_{\sigma_{\ell^+}} S) \bar\delta(\text{Res}_{\sigma_{\ell^-}} S) \prod_i{} \;\bar\delta(\mathrm{Res}_{\sigma_i}S)\,, 
\end{equation}
where we should not fix the positions of both $\sigma_{\ell^+}$ and $\sigma_{\ell^-}$ in choosing the $G=\SL(2,\mathbb{C})^2$ gauge, to avoid losing the `singular' solutions. Since
\begin{equation}
 P=   \left( \frac{\ell}{\sigma-\sigma_{\ell^+}} - \frac{\ell}{\sigma-\sigma_{\ell^-}} + \sum_{i=1}^n 
\frac{k_i}{\sigma-\sigma_i} \right) d\sigma\,,
\end{equation}
and
\begin{equation}
S= P^2-  \left( \frac{\ell}{\sigma-\sigma_{\ell^+}} - \frac{\ell}{\sigma-\sigma_{\ell^-}}\right)^2 d\sigma^2,
\end{equation}
the scattering equations take the form
\begin{subequations}
\begin{align}
\mathrm{Res}_{\sigma_i}S = k_i\cdot P(\sigma_i) = \frac{k_i\cdot
  \ell}{\sigma_i-\sigma_{\ell^+}} -\frac{k_i\cdot
  \ell}{\sigma_i-\sigma_{\ell^-}} + \sum_{j\neq i}\frac{k_i\cdot
  k_j}{\sigma_i-\sigma_j} =0\, ,  \\
\mathrm{Res}_{\sigma_{\ell^-}}S = -\sum_{i}\frac{\ell\cdot k_j}{\sigma_{\ell^-}-\sigma_i} =0\, ,\\
\mathrm{Res}_{\sigma_{\ell^+}}S = \sum_{i}\frac{\ell\cdot k_j}{\sigma_{\ell^+}-\sigma_i} =0\, .
\end{align}
\end{subequations}
In the formula \eqref{1-loopgeneral}, the quotient by the volume of SL$(2,\C)$ denotes that only $n-3$ of these equations should be enforced (with those at $\sigma_{\ell^\pm}$ now on an equal footing with the external particle insertions $\sigma_i$).  The three remaining scattering equations hold by the same reasoning as in \cref{sec6:torus-to-RS}.

\subsection{General form of the one-loop integrand}
The interesting part of formula \eqref{1-loopgeneral} is the quantity $\cI$ specifying the theory. We introduced
\begin{equation}
\hat{\cI} = \frac1{(\sigma_{\ell^+\,\ell^-})^4} \, \cI\,,
\label{ihatdef}
\end{equation}
so that $\hat{\cI}$ has the same $\SL(2,\mathbb{C})$ homogeneity in $\{\sigma_{\ell^+},\sigma_{\ell^-},\sigma_i\}$, as required by the integration, whereas $\cI$ has zero weight in $\{\sigma_{\ell^+},\sigma_{\ell^-}\}$. The $n$-gon formula now corresponds to
\begin{equation}
\cI^{n-\text{gon}} = \prod_{i=1}^n \left( \frac{\sigma_{\ell^+\,\ell^-}}{\sigma_{i\,\ell^+}\,\sigma_{i\,\ell^-}} \right)^2\,.
\end{equation}
The relation to the $n$-gon representation in \cite{He:2015yua} follows from the identity
\begin{equation}
\sum_{\alpha\in S_n} \, \frac{\sigma_{\ell^+\,\ell^-}}{ \sigma_{\ell^+\,\alpha(1)}\,
\sigma_{\alpha(1)\,\alpha(2)} \ldots \sigma_{\alpha(n-1)\,\alpha(n)}\, \sigma_{\alpha(n)\,\ell^-}} =
\prod_{i=1}^n  \frac{\sigma_{\ell^+\,\ell^-}}{\sigma_{\ell^+\,i}\,\sigma_{i\,\ell^-}}\,,
\end{equation}
derived by  induction and  partial fractions.

For supergravity and for super Yang-Mills theory, we have
\begin{equation}
\cI^{SG} = \cI^L_0\, \cI^R_0 \qquad \text{and} \qquad \cI^{SYM} =  \cI^L_0 \,\cI^{PT} \,,
\end{equation}
where $\cI^{PT}$ was defined in \eqref{eq:PTloop-def}.
The quantities $\cI^L_0$ and $\cI^R_0$ are defined as in \eqref{eq:IL0-sugra}, but the Szeg\H{o} kernels in the matrices $M_\alpha$ are now expressed as
\begin{equation}
\begin{aligned}
S_2(z_{ij}|\tau) &\to \frac{1}{2}\,\frac{1}{\sigma_{i\,j}} \left(\sqrt{\frac{\sigma_{i\,\ell^+}\,\sigma_{j\,\ell^-}}{\sigma_{j\,\ell^+}\,\sigma_{i\,\ell^-}}}+ \sqrt{\frac{\sigma_{j\,\ell^+}\,\sigma_{i\,\ell^-}}{\sigma_{i\,\ell^+}\,\sigma_{j\,\ell^-}}} \right)  \sqrt{d\sigma_ i} \sqrt{d\sigma_ j}\, ,
  \\ 
S_3(z_{ij}|\tau) &\to \frac{1}{\sigma_{i\,j}} \left(1 +\sqrt{q} \;\frac{(\sigma_{i\,j}\,\sigma_{\ell^+\,\ell^-})^2}{\sigma_{i\,\ell^+}\,\sigma_{i\,\ell^-}\,\sigma_{j\,\ell^+}\,\sigma_{j\,\ell^-}}\right) \sqrt{d\sigma_ i} \sqrt{d\sigma_ j} \,,
\end{aligned}\label{eq:szegolimitsl2c}
\end{equation}
in the limit $q\to 0$.

Regarding the `singular' solutions to the scattering equations, it is
clear that they do not contribute in the $n$-gon case, since
$\hat{\cI}^{n-\text{gon}}\to 0$ for $\sigma_{\ell^+}\to
\sigma_{\ell^-}$. However, they do contribute in the case of the
non-supersymmetric Yang-Mills and gravity formulae to be presented
below, and some care is needed in their evaluation, due to the factor
$(\sigma_{\ell^+\,\ell^-})^{-4}$ in \eqref{ihatdef}. It is easy to see
that $\cI^{PT}\sim
\mathcal{O}\left((\sigma_{\ell^+\,\ell^-})^2\right)$. We will show in \cref{sec:powercounting} that
\begin{align}
\pf(M_2)|_{q^0}, \, \pf(M_3)|_{q^0} & \sim \mathcal{O}\left((\sigma_{\ell^+\,\ell^-})^2\right), \nonumber\\
\left(\pf(M_2)-\pf(M_3)\right)|_{q^0}, \, \pf(M_3)|_{\sqrt{q}} & \sim \mathcal{O}\left((\sigma_{\ell^+\,\ell^-})^3\right).
\end{align}
This is irrespective of the context there of taking the limit of large $\ell$, or of considering the `singular' solutions.
The contributions from these solutions to our formulae are therefore finite, as expected, and they vanish in the case of $\hat{\cI}^{SG}$ and $\hat{\cI}^{SYM}$.
Furthermore, we will see that the degenerate solutions do not contribute to the integrated amplitudes, and can thus be discarded.
It would, however, be useful to have an explicit formula for the limit.

\subsection{Contributions of GSO sectors and the NS Pfaffian}
\label{sec:analys-indiv-gso}
We now turn to the individual contributions of each GSO sector to the
supergravity amplitudes. This analysis is based on standard
string theory, the reader is referred to standard string textbooks such as that by Polchinski, or \cite{Adamo:2013tsa} for further details.

We work in dimension $d$ for $ d\leq10$ by
dimensional reduction from $d=10$. Since there are no winding modes, taking the radii of
compactification to zero is enough to decouple the Kaluza-Klein
modes, see \cref{sec:dimens-reduct} for further comments.
We consider first the ``left'' and ``right'' sectors
independently.\footnote{In string theory, this is justified by the
  chiral splitting of the worldsheet correlator whose dramatic consequences include  the KLT
  relations~\cite{Kawai:1985xq,BjerrumBohr:2010hn}. In the ambitwistor string this follows from KLT orthogonality
  \cite{Cachazo:2013gna}.} These
consists of $\mathcal{N}=1$ super Yang-Mills multiplets in $d=10$, and their
dimensional reduction is well known~\cite{Brink:1976bc}. The 10
dimensional vector $A^{(10)}_{\mu}$ splits into a $d$-dimensional
vector and $(10-d)$ scalars -- the fermionic case is discussed below.

The important point for the present analysis is that the partition
functions $Z_{a,b}$ as defined in \cref{eq6:pt-funs} arise from the respective sectors of the theory. More specifically, $a=0$ and $a=1$
correspond to the NS and R sectors, while $b=0,1$ correspond to the
periodicity of the boundary conditions. Thus
\begin{center}
  \begin{tabular}[h]{rcl}
  $Z_{3},Z_{4}$& $\longleftrightarrow$ & NS sector,\\
  $Z_{1},Z_{2}$ &  $\longleftrightarrow$ &R sector.
  \end{tabular}
\end{center}
As noted above, we will ignore the odd spin structure $Z_1$ here as it only contributes when the kinematics are in $d\geqslant 8$.  In analogy to the partition functions, the sum over spin structures in the correlators corresponds to particular sectors of the theory. So we define
\begin{align}
  \mathcal{I}_{NS} &=\pf(M_{3})\big|_{\sqrt{q}}+8\,
                                    \pf(M_{3})\big|_{q^{0}}\label{eq:chiral-NS}\,,\\ 
\mathcal{I}_{R\phantom{S}}&=8\pf(M_{2})\big|_{q^{0}}\label{eq:chiral-R}\,.
\end{align}

In $10$ dimensions, these correspond to chiral integrands for one vector
and one Majorana-Weyl fermion.
When we reduce  to $d<10$ dimensions, the problem that one
faces is how to decide which parts of the integrand
(\ref{eq:chiral-NS}) correspond to the $10-d$ scalars and which part
corresponds to the vector.
Following in particular the string theory analysis of
\cite{Tourkine:2012vx}, it is easy to identify first the scalar
contribution by reading off the (vanishing) coefficient
$\frac12(\pf(M_{3})\big|_{q^{0}}-\pf(M_{4})\big|_{q^{0}})$ of the
$1/\sqrt{q}$ pole in \cref{eq:sugra-sphere}.  Ignoring
the minus sign of the GSO projection, it corresponds to the (vanishing)
propagation of the unphysical scalar state
$\delta(\gamma_{1}) \delta(\gamma_{2})c\tilde c \exp(i k\cdot X)$.
With this we identify the scalar integrand as
$\pf(M_{3})\big|_{q^{0}}$ (recalling \cref{eq:m3m4q0}) and we can deduce  
\begin{subequations}
  \begin{align}
    \mathcal{I}^{L}_{\mathrm{scal}} &=  \pf(M_{3})\big|_{q^{0}}\label{eq:chiral-scalar}\\
    \mathcal{I}^{L}_{\mathrm{vect}} &= \pf(M_{3})\big|_{\sqrt{q}}+ (d-2)\,
                                      \pf(M_{3})\big|_{q^{0}}\label{eq:chiral-vector}\\
    \mathcal{I}^{L}_{\mathrm{ferm}} &=
                                      -c_{d}\,\pf(M_{2})\big|_{q^{0}}\, .\label{eq:chiral-fermion}
  \end{align}
\end{subequations}
The fermion integrand \cref{eq:chiral-fermion} comes
with a constant $c_{d}$ that follows from dimensional reduction of the 10d
Majorana-Weyl spinor, which produces an 8d Weyl spinor, four 6d
symplectic-Weyl spinors, and four 4d Majorana spinors. From
\cref{eq:sugra-sphere} we read off $c_{10}=8$, therefore we have
$c_{8}=8$, $c_{6}=2$, $c_{4}=2$. 

We can therefore obtain the reduced gravitational states in the loop by taking the tensor
product of the two sectors
\begin{subequations}
\label{eq:chiral-int-grav}
  \begin{align}
    \label{eq:grav-scalar}
    \mathcal{I}_{\mathrm{NS-NS-grav}}&=( \pf(M_{3})\big|_{\sqrt{q}}+ (d-2)\,\pf(M_{3})\big|_{q^{0}})^{2}
    \,,\\
    \label{eq:scalar}
    \mathcal{I}_\mathrm{scalar}&=(\pf(M_{3})\big|_{q^{0}})^{2}\,,\\
    \label{eq:vect}
    \mathcal{I}_\mathrm{vector}&=(\pf(M_{3})\big|_{q^{0}})( \pf(M_{3})\big|_{\sqrt{q}}+ (d-2)\,\pf(M_{3})\big|_{q^{0}})\,,
  \end{align}
\end{subequations}
from the NS-NS sector. Here, by NS-NS gravity 
in \cref{eq:grav-scalar} we mean Einstein gravity coupled to a B-field and a dilaton.
By convention, the squares are to be understood as incorporating a replacement of
the $\epsilon$'s by $\tilde \epsilon$'s in the second factor.

In the R-NS and NS-R sectors, we have
\begin{subequations}
  \begin{align}
    \label{eq:fermion}
    \mathcal{I}_\mathrm{fermion}&=-c_{d}\,\pf(M_{2})\big|_{q^{0}}\pf(M_{3})\big|_{q^{0}}\,,\\
    \label{eq:gravitino}
    \mathcal{I}_\mathrm{gravitino}&=-c_{d}\,\pf(M_{2})\big|_{q^{0}}\left( \pf(M_{3})\big|_{\sqrt{q}}+ (d-2)\,
    \pf(M_{3})\big|_{q^{0}}\right)\,.
  \end{align}
\end{subequations}
The R--R states in $d=10$ simply involve the square
\begin{equation}
  \label{eq:RR}
  \left(c_d \pf(M_{2})\big|_{q^{0}}\right)^{2}=\mathcal{I}_{\mathrm{RR}}\,,
\end{equation}
and it would be interesting to investigate this sector further.

With these interpretations of how different fields in the loops correspond to different  ingredients of the one-loop correlator, we can make the following proposals. 

\subsection{Pure YM and gravity amplitudes} 
Firstly, by adjusting the building
blocks in \eqref{eq:chiral-int-grav} in an appropriate way, we conjecture that a four-dimensional
one-loop pure gravity amplitude can be written as follows;
\begin{equation}
  \label{eq:pure-gravity}
  \mathcal{I}_{\mathrm{pure-grav}}^{(d=4)}=( \pf(M_{3})\big|_{\sqrt{q}}+ 2\,\pf(M_{3})\big|_{q^{0}})^{2}-2\,(\pf(M_{3})\big|_{q^{0}})^{2}\,,
\end{equation}
where the subtraction removes the two scalar degrees of freedom of the
dilaton and B-field.\footnote{The single degree of freedom of the
  B-field in four dimensions is the axion.} This subtraction is analogous to the
prescription of~\cite{Johansson:2014zca}, where scalars with fermionic
statistics were introduced to implement the BCJ double copy in
loop-level amplitudes of pure gravity theories.

Using the prescription reviewed in \cref{sec:super-yang-mills}, we
can also build four-dimensional pure YM amplitudes, by simply
multiplying the vector integrand of \cref{eq:chiral-vector}
with the Parke-Taylor factor \cref{eq:PTloop-def},
\begin{equation}
  \label{eq:pure-YM}
  \mathcal{I}_{\mathrm{pure-YM}}^{(d=4)}=
  ( \pf(M_{3})\big|_{\sqrt{q}}+ 2\,\pf(M_{3})\big|_{q^{0}})\,\mathcal{I}^{cPT}.
\end{equation}
These formulae \cref{eq:pure-gravity} and \cref{eq:pure-YM} generalise straightforwardly to $d$ dimensions, and explicit formulae for {\it pure Yang-Mills and gravity integrands} in arbitrary dimensions are given by
\begin{align}
 \mathcal{I}_{\mathrm{pure-YM}}&=
  ( \pf(M_{3})\big|_{\sqrt{q}}+ (d-2)\,\pf(M_{3})\big|_{q^{0}})\,\mathcal{I}^{cPT}\,, \label{eq:pure-YM-d}\\
  \mathcal{I}_{\mathrm{pure-grav}}&=( \pf(M_{3})\big|_{\sqrt{q}}+ (d-2)\,\pf(M_{3})\big|_{q^{0}})^{2}-\alpha\,(\pf(M_{3})\big|_{q^{0}})^{2}\,, \label{eq:pure-gravity-d}
\end{align}
where $\alpha=\frac{1}{2}(d-2)(d-3)+1$ counts the degrees of freedom of the B-field and the dilaton, see also \cref{sec:non-supersymm-theor-1}.\\

We will perform checks on these amplitudes in \cref{sec:gso-projection} and give a general proof in the next section. 
Note that although the standard string ideas discussed here are suggestive of the above proposals, they do not constitute a proof, so it is important to produce an independent proof\footnote{Amongst other issues, a point that is missing is
  that the abelian gauge groups do not get enhanced at self-dual radii
  of compactification as there are no winding modes that could become massless.} in \cref{sec:factorization}.

\paragraph{Pfaffian structure of the new amplitudes.}

A feature of the previous formulae is that they
provide information on the structure of tree-level amplitudes.
The finite residue that we extract at $q=0$ coincides with the residue
at the factorisation channel $q\simeq \ell^{2}\to0$. The only
difference between our expression and a ``single cut'' is the presence
of $1/\ell^{2}$ and the full $d$-dimensional integral
$\int d^{d}\ell$.
Therefore, we have  a variety of tree-level amplitudes with
$n+2$ (on-shell) particles, in a forward limit configuration where
$k_{n+1}=-k_{n+2}=\ell$ are off-shell, but traced over their polarization states.

One may therefore expect that  the integrands of the pure gravity and
Yang-Mills amplitudes \eqref{eq:pure-gravity} and (\ref{eq:pure-YM})
can be reformulated to look more like CHY Pfaffians. 

For Yang-Mills, this can be done as follows: the full supergravity integrands
$\hat{\cI}^{L,R}_0=\frac{1}{\sigma_{\ell^+\,\ell^-}}\cI^{L,R}_0$ can be expressed more compactly in terms of a single NS sector matrix $M_{NS}$,  defined explicitly below, as
\begin{equation}\label{eq:defInt_MNS}
 \hat{\cI}^{L,R}_0=\sum_r\pf'(M_{\text{NS}}^r)-\frac{c_d}{\sigma_{\ell^+\,\ell^-}^2}\pf(M_2)\,,
\end{equation}
where
$\pf'(M_{\text{NS}}^r)\equiv\frac{-1}{\sigma_{\ell^+\,\ell^-}}\pf({M_{\text{NS}}^r}_{(\ell^+\,\ell^-)})$, and the brackets $(\ell^+\,\ell^-)$ indicate that the rows and columns associated to $\ell^+$ and $\ell^-$ have been removed. In particular, this implies that
\begin{equation}
  \sum_r \pf'(M_{\text{NS}})=\pf(M_3)\,\big|_{\sqrt{q}}+(d-2)\pf(M_3)\,\big|_{q^0}\,.
\label{eq:PfNs-def}
\end{equation}
The matrix $M_{\text{NS}}^r$ is defined by
\begin{equation}\label{eq:defMNS}
 M_{\text{NS}}^r=\begin{pmatrix}A & -C^T\\ C & B \end{pmatrix}\,,
\end{equation}
and more specifically
\begin{equation}  \label{eq:defMNS_details}
\begingroup
\renewcommand*{\arraystretch}{1.4}
 M_{\text{NS}}^r=\left( \begin{array}{cccc:cccc}
    0 & 0 & \frac{\ell\cdot k_i}{\sigma_{\ell^+i}} & \frac{\ell\cdot k_j}{\sigma_{\ell^+j}} & -\epsilon^r\cdot P(\sigma_{\ell^+}) & \frac{\ell\cdot\epsilon^r}{\sigma_{\ell^+\ell^-}} & \frac{\ell\cdot\epsilon_i}{\sigma_{\ell^+i}} & \frac{\ell\cdot\epsilon_j}{\sigma_{\ell^+j}}\\
    & 0 & -\frac{\ell\cdot k_i}{\sigma_{\ell^-i}} & -\frac{\ell\cdot k_j}{\sigma_{\ell^-j}} & -\frac{\ell\cdot\epsilon^r}{\sigma_{\ell^-\ell^+}} & -\epsilon^r\cdot P(\sigma_{\ell^-}) & -\frac{\ell\cdot\epsilon_i}{\sigma_{\ell^-i}} & -\frac{\ell\cdot\epsilon_j}{\sigma_{\ell^-j}}\\
    &   & 0 & \frac{k_i\cdot k_j}{\sigma_{ij}} & \frac{k_i\cdot\epsilon^r}{\sigma_{\ell^+i}} & \frac{k_i\cdot\epsilon^r}{\sigma_{\ell^-i}} & -\epsilon_i\cdot P(\sigma_{i}) & \frac{k_i\cdot\epsilon_j}{\sigma_{ij}}\\
    &   &   & 0 & \frac{\epsilon^r\cdot k_j}{\sigma_{\ell^+j}} & \frac{k_j\cdot\epsilon^r}{\sigma_{\ell^-j}} & \frac{k_j\cdot\epsilon_i}{\sigma_{ji}} & -\epsilon_j\cdot P(\sigma_j)\\
    \hdashline
    & & & & 0 & \frac{d-2}{\sigma_{\ell^+\ell^-}} & \frac{ \epsilon^r\cdot\epsilon_i}{\sigma_{\ell^+i}} & \frac{ \epsilon^r\cdot\epsilon_j}{\sigma_{\ell^+j}}\\
    & & & &   & 0 & -\frac{\epsilon^r\cdot\epsilon_i}{\sigma_{\ell^-i}} & \frac{\epsilon^r\cdot\epsilon_j}{\sigma_{\ell^-j}}\\
    & & & &   &   & 0 & \frac{\epsilon_i\cdot \epsilon_j}{\sigma_{ij}}\\
    & & & &   &   &   & 0\\
  \end{array} \right)\,.
  \endgroup
\end{equation}
The sum runs over a basis of polarisation vectors $\epsilon^r$, and
$d$ denotes the space-time dimension. Note in particular that the reduced
Pfaffian is well-defined since this matrix has indeed co-rank
two: similar to the structure at tree-level, the vectors $(1,\dots,1,0,\dots,0)$
and $(\sigma_{\ell^+},\sigma_{\ell^-},\sigma_1,\dots,\sigma_n,0,\dots,0)$ span the
kernel of $M_{\text{NS}}^r$ on the support of the scattering
equations.

The matrix depends explicitly on the number of NS polarisation states running in the loop and we can adjust this to give different theories in different dimensions (we comment on dimensional reduction in \cref{sec:dimens-reduct}).\smallskip

The proof of \cref{eq:PfNs-def}, relies on standard properties of
Pfaffians, and the interested reader is referred to \cref{sec:MNS}. In
this form, the NS contribution to the integrand is very suggestive of
a worldsheet CFT correlator, and indeed it is not hard to see that
this Pfaffian arises from an off-shell correlator on the Riemann sphere, with two marked 
points associated to the loop momentum. The corresponding polarizations should be replaced by a photon propagator
in a physical gauge.

The gravity case uses also $M_{NS}$, and is treated in more
details later in \cref{sec:non-supersymm-theor-1}, when we discuss
the factorisation properties of these pure Yang-Mills and gravity
amplitudes. Basically, we simply decompose the difference of squares
in \cref{eq:pure-gravity} as a product.

To conclude this discussion, we note that the fermion integrand of
\cref{eq:fermion} for a two-fermion-$n$-graviton integrand seems to
arise naturally as a factorised product of Pfaffians.
Although amplitudes with fermions have been computed in
\cite{Adamo:2013tsa}, no closed form for higher-points amplitude is
known, partly because of the non-polynomial nature of the spin-field
OPE's in the RNS framework. While these difficulties can be circumvented and there exist tree-level expressions with external fermions in the literature obtained from the pure spinor formalism \cite{Mafra:2011nv,Gomez:2013wza},\footnote{See also \cite{Mafra:2015vca} for new efficient methods to integrate out the variables of the pure spinor superspace.} these do not lead to closed-form  formulae, and the connection to the Pfaffians found here is highly non-trivial.
It is possible that \cref{eq:fermion} is different from the
generic -- i.e. non-forward -- amplitude due to terms vanishing with
$\ell^{2}$. Nevertheless, this hints at some unexpected simplicity.

\section{Proof for non-supersymmetric amplitudes at one-loop}
\label{sec:factorization}

We now give a full proof of the formulae for one-loop amplitudes derived above for non-supersymmetric theories, i.e. the $n$-gons, bi-adjoint scalar theory, Yang-Mills and gravity.\footnote{In particular, the proof holds for both the NS sector (including the B-field and the dilaton for gravity) and the pure theories.}
There are two main ingredients in our proof.  The first is to identify the poles in our formulae arising from factorisation or bubbling of the Riemann sphere, which allows us to determine the location of the poles and their residues. Since the $1/\ell^2$ is already apparent, this analysis of factorisation will lead to the identification of the residue at two poles. The second is the theory of `Q-cuts' introduced in \cite{Baadsgaard:2015twa} that expresses  a general one-loop amplitude in terms of tree amplitudes that is perfectly adapted to the factorisation of our formulae (this is perhaps not completely surprising as their construction was motivated by our formulae). \smallskip 

As reviewed briefly in \cref{sec2:review_SE}, an amplitude must factorise in the sense that if a partial sum of the external momenta  $k_I=\sum_{i\in I}k_i$, where $I\subset \{1,\ldots, n\}$, becomes null, then there will be a pole corresponding to a propagator with momentum $k_I$ flowing through it.  Furthermore, the residue is the product of two tree amplitudes for the theory in question with external legs consisting of $\pm k_I$ and the elements of $I$ or its complement $\bar I$. We have seen that the scattering equations \cite{Cachazo:2013hca} relate the factorisation channels of the momenta precisely to factorisation channels of the Riemann surface, i.e. when a subset of the marked points collide. This concentration point  can then be blown up to give a bubbled-off Riemann sphere connected to the original at the concentration point, see  below (or \cite{Dolan:2013isa}).  \smallskip

Our scattering equations at one-loop will give worldsheet factorisation channels that lead to poles associated to loop momenta, but these are not immediately recognizable as loop propagators; they instead correspond to poles of the form of those in the sum of \eqref{formulangon}.  These however can be understood as naturally arising in the `Q-cuts' of \cite{Baadsgaard:2015twa}.  These are a systematic extension of the contour integral argument that leads to the partial fractions expansion of \eqref{eq:Di-partfrac} applicable to any one-loop integrand.  They follow from a two-step process.  The first follows the contour integral argument of \eqref{eq:Di-partfracz}.  Consider a one-loop integrand 
\begin{equation}
\cM(\ell,k_i,\epsilon_i)=\frac{N(\ell,\ldots)}{D_{I_1}\ldots D_{I_m}}\,,
\end{equation}
for the theory under consideration, where $N$ is a polynomial numerator, and $D_I=(\ell+k_I)^2$ a propagator.  We shift the loop momentum $\ell\rightarrow \ell + \eta$ where $\eta$ is in some higher dimension than the physical momenta and polarization vectors, so that the only shift that occurs in the invariants is $\ell^2\rightarrow \ell^2+z$, with all other inner products remaining unchanged.  One then runs the contour integral argument that expresses the amplitude as the residue at $z=0$ of $\cM(\ell+\eta,k_i,\epsilon_i)/z$ in terms of  the sum of the other residues of this expression.  Such residues arise at shifted propagators $1/(D_I+z)$ with poles at $z=-D_I$. One then shifts $\ell\rightarrow \ell-k_I$ in each of these new residues so that $z$ becomes $\ell^2$.  This gives a representation of a one-loop amplitude as a sum of terms of the form
\begin{equation}
\frac{1}{\ell^2}\left[
\frac{\tilde 
N(\ell)}{(2\ell\cdot k_{I_1} +k_{I_1}^2)
\ldots (2\ell\cdot k_{I_m}+ k_{I_m}^2)
}
\right]\,,
\end{equation}
giving a generalization of the partial fraction formulae of \eqref{eq:Di-partfrac}.  

In order  to interpret constituents of this expression as tree amplitudes, \cite{Baadsgaard:2015twa} considers a further contour integral argument with integrand
\begin{equation}
\frac{\cM(\alpha \ell)}{\alpha-1}\,,
\end{equation}
where $\cM(\ell)$ is now the expression with shifted $\ell$s obtained above.  The residue at $\alpha=1$ returns the original $\cM(\ell)$.  The residues  at zero and infinity can be discarded as they vanish in dimensional regularization.
It can then be argued \cite{Baadsgaard:2015twa} that the finite residues finally yield the `Q-cut' decomposition
\begin{equation}
 \cM\Big|_{\text{Q-cut}}\equiv\sum_{I}\cM_I^{(0)}(\dots,\tilde \ell_I, \tilde \ell_I+k_I)\frac{1}{\ell^2 \,(2\ell\cdot k_I+k_I^2)}\cM_{\bar I}^{(0)}(-\tilde \ell_I,-\tilde \ell_I-k_I,\dots)\,,
\end{equation}
where $\tilde \ell_I=\alpha(\ell +\eta)$, with $\alpha=-k_I^2/2\ell\cdot k_I$, $\eta^2=-\ell^2$, $\eta\cdot\ell=\eta\cdot k_i=0$, and $\cM_I^0$ and $\cM_{\bar I}^0$ are now tree amplitudes.

\begin{figure}[ht]
\begin{center}
 \begin{tikzpicture}[scale=3]
 \draw (1.75,1.2) to [out=30, in=180] (2.125,1.35) to [out=0, in=150] (2.5,1.2);
 \draw (1.75,0.8) to [out=330, in=180] (2.125,0.65) to [out=0, in=210] (2.5,0.8);
 \draw (1.75,1.2) -- (1.375,1.4);
 \draw (1.75,0.8) -- (1.375,0.6);
 \draw (2.5,1.2) -- (2.875,1.4);
 \draw (2.5,0.8) -- (2.875,0.6);
 \draw (1.75,1.07) -- (1.3,1.14);
 \draw (1.75,0.93) -- (1.3,0.86);
 \draw (2.5,1.07) -- (2.95,1.14);
 \draw (2.5,0.93) -- (2.95,0.86);
 \draw [fill, light-grayII] (1.75,1) circle [x radius=0.2, y radius=0.4];
 \draw (1.75,1) circle [x radius=0.2, y radius=0.4];
 \draw [fill, light-grayII] (2.5,1) circle [x radius=0.2, y radius=0.4];
 \draw (2.5,1) circle [x radius=0.2, y radius=0.4];
 \node at (0.72,1) {$\displaystyle \mathcal{M}^{(1)}\Big|_{\text{Q-cut}}= \,\sum_I$};
 \node at (2.125,1.5) {$\scriptstyle\ell$};
 \node at (2.125,0.5) {$\scriptstyle \ell+K_I$};
 \node at (1.75,1) {$\scriptstyle I$};
 \node at (2.5,1) {$\scriptstyle \overline{I}$};
 \node at (3.7,1) {$\displaystyle =\sum_I\frac{\cM_I^{(0)} \cM^{(0)}_{\bar I}}{\ell^2(2\ell\cdot K_I +K_I^2)}$};
\end{tikzpicture}
\end{center}
\caption{Representation of the amplitude as a sum over Q-cuts.}
\label{fig:q-cut}
\end{figure}  

We will see in this section that factorisation of the worldsheet in our formulae gives precisely these poles and residues for $\alpha=1$ and therefore precisely yield this decomposition, including the contributions from the residues at $\alpha=0,\infty$.\footnote{While this decomposition should strictly speaking be referred to as a `pre-Q-cut decomposition', we will for simplicity denote them as Q-cuts as well in the discussion below. Moreover, following the work presented here, the relation of the formulae to Q-cuts for the bi-adjoint scalar and gauge theory was explored further in \cite{Cachazo:2015aol}, which also gives a derivation of the formulae presented here from higher-dimensional tree-amplitudes.}
Moreover, similar factorisation arguments allow us to determine the UV behaviour of the amplitudes. These two results combine to prove our formulae by a multidimensional analogue of Liouville's theorem.

\begin{thm}{}\label{thm:fact}
Consider a subset $I\subset\{1,\ldots, n, \ell^+,\ell^-\}$ with one fixed point in $I$ and two in its complement $\bar I$. Suppose furthermore that we have solutions to the scattering equations behaving as $\sigma_i= \sigma_I + \varepsilon x_i+\cO(\varepsilon^2)$ for $i\in I$ and $\varepsilon \rightarrow 0$, with $x_i=O(1)$, $\sigma_{ij}=O(1)$ for $j\in\bar I$. 
Then we must also have  $\tilde{s}_I =O(\varepsilon)$ where 
 \begin{equation}
  \tilde{s}_I=\begin{cases}
               k_I^2 & \sigma_{\ell^+},\sigma_{\ell^-}\in\bar I\,,\\
               2\ell\cdot k_I + k_I^2 & \sigma_{\ell^+}\in I, \sigma_{\ell^-}\in \bar I\,.
              \end{cases}
 \end{equation}
 In this factorisation channel, our one-loop formulae $\cM^{(1)}$ on the Riemann sphere for n-gons, bi-adjoint scalar theory, Yang-Mills and gravity have poles at $\tilde{s}_I= 0$. For $\sigma_{\ell^+},\sigma_{\ell^-}\in\bar I$ (case I of figure \ref{fig:degen}), the residue of this pole is given by the separating degeneration of the nodal Riemann sphere 
 \begin{equation}
  \cM^{(1)}(\dots,-\ell,\ell)=\cM_I^{(0)}(\dots)\frac{1}{s_I}\cM_{\bar I}^{(1)}(\dots,-\ell,\ell) \,,
 \end{equation}
 and for $\sigma_{\ell^+}\in I, \sigma_{\ell^-}\in \bar I$ (case II of figure \ref{fig:degen}) by the Q-cut degeneration
 \begin{equation}\label{eq:Q-cut fact}
  \cM^{(1)}(\dots,-\ell,\ell)=\cM_I^{(0)}(\dots,\ell_I,\ell_I+k_I)\frac{1}{\ell^2}\frac{1}{\tilde{s}_I}\cM_{\bar I}^{(0)}(-\ell_I,-\ell_I-k_I,\dots)\,,
 \end{equation}
 where $\ell_I=\ell+\eta$, with $\eta^2=-\ell^2$, $\eta\cdot\ell=\eta\cdot k_i=0$.
\end{thm}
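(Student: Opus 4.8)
The plan is to prove both factorisation statements in parallel by analysing how the one-loop formula \eqref{1-loopgeneral} degenerates when a subset $I$ of marked points collides. First I would set up the local coordinates near the concentration point by writing $\sigma_i = \sigma_I + \varepsilon x_i + \cO(\varepsilon^2)$ for $i\in I$ and examining the behaviour of the meromorphic differential $P$ and the off-shell scattering equations \eqref{SE2n} (or their $\SL(2,\C)$-covariant form in \cref{sec:general-form-one}). Following the standard tree-level factorisation argument reviewed in \cref{sec2:review_SE}, I would show that the scattering equations split into those governing the $x_i$ on the bubbled-off sphere and those governing the remaining insertions on the parent sphere, with a node carrying the partial momentum $\pm k_I$. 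The key algebraic step, exactly as in the computation leading to $k_I^2=\cO(\varepsilon)$ at tree level, is to show that the relevant kinematic invariant $\tilde s_I$ must vanish as $\cO(\varepsilon)$: when both loop insertions $\sigma_{\ell^\pm}$ lie in $\bar I$ the node sees only external momenta and $\tilde s_I = k_I^2$, whereas when $\sigma_{\ell^+}\in I$ but $\sigma_{\ell^-}\in\bar I$ the node momentum is $\ell+k_I$, giving $\tilde s_I = (\ell+k_I)^2 = 2\ell\cdot k_I + k_I^2$ after using $\ell^2$ as the overall prefactor.

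Next I would extract the residue at $\tilde s_I = 0$. The measure $\prod_i \bar\delta(\mathrm{Res}_{\sigma_i}S)\,d\sigma_i$ degenerates into a product of a measure on the bubbled sphere and one on the parent sphere, joined by an integral over the node position and the node momentum; this is the geometric content of the separating/non-separating degeneration depicted in figure \ref{fig:degen}. For case I, both loop legs remain on the parent component, so the parent carries the loop and the residue is the product $\cM_I^{(0)}\,\tfrac{1}{s_I}\,\cM_{\bar I}^{(1)}$ of a tree amplitude and a lower-point one-loop amplitude, with $s_I=k_I^2$. For case II, one loop leg migrates onto the $I$-component, turning both factors into \emph{tree} amplitudes with an extra off-shell leg, and reproducing precisely the Q-cut building block \eqref{eq:Q-cut fact} with $\ell_I=\ell+\eta$. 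Here I must be careful that the integrand $\hat\cI$ for each theory ($n$-gon, bi-adjoint scalar, Yang-Mills, gravity) factorises correctly: the Parke-Taylor factor \eqref{eq:PTloop-def} and the NS Pfaffian $M_{\text{NS}}$ of \eqref{eq:defMNS} each need to degenerate into the corresponding tree-level objects on the two components, which follows from their pole structure and the co-rank-two kernel argument, together with the fact that the matrix entries involving the node reconstruct the diagonal $C_{ii}$ terms of the lower-point matrices.

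The main obstacle I anticipate is controlling the integrand in case II, where the loop insertion splits across the two components. Unlike the clean separating degeneration of case I, this requires tracking how the rows and columns associated to $\ell^+$ and $\ell^-$ in $M_{\text{NS}}^r$ distribute between the parent and bubbled matrices, and verifying that the resulting object is genuinely a product of two \emph{on-shell-looking} tree Pfaffians with the node legs $\ell_I$ and $\ell_I+k_I$ — including the correct treatment of the sum over loop polarisation states $\epsilon^r$ that implements the trace over the off-shell node. I would also need to confirm that the degenerate `singular' solutions with $\sigma_{\ell^+}=\sigma_{\ell^-}$ do not spoil these residues; the power-counting estimates of \cref{sec:powercounting} (namely $\pf(M_3)|_{q^0}\sim\cO((\sigma_{\ell^+\ell^-})^2)$ and the stronger vanishing of the combinations appearing in $\cI^L_0$) ensure their contributions are finite and can be discarded. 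Once the poles and residues in $\ell$ are pinned down by this theorem, combining them with the large-$\ell$ fall-off obtained from a parallel factorisation argument (sending $\sigma_{\ell^+}\to\sigma_{\ell^-}$, or equivalently $\ell\to\infty$) will let me invoke a multidimensional Liouville-type argument to match the full Q-cut representation of \cite{Baadsgaard:2015twa}, completing the proof.
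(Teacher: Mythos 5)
Your proposal follows the paper's proof essentially step for step: the same worldsheet degeneration $\sigma_i=\sigma_I+\varepsilon x_i+\cO(\varepsilon^2)$, the same derivation of $\tilde s_I=\cO(\varepsilon)$ by summing the scattering equations on the degenerate surface, the same split of the $\varepsilon$-weights between the factorised measure ($\varepsilon^{2(|I|-1)}\,\bar\delta(\tilde s_I+\varepsilon\mathcal{F})\,d\varepsilon/\varepsilon$) and the integrands, and the same identification of the case I / case II residues with the bubbling and Q-cut degenerations, including the subtle point that the sum over polarisation states $\epsilon^r$ implements the trace over node states in case II. The only cosmetic difference is that you credit the Pfaffian factorisation to the co-rank-two kernel argument, whereas the paper does this via the explicit block-Pfaffian factorisation lemma of Dolan--Goddard (with its odd-by-odd case producing precisely the polarisation sum you anticipate); the substance is the same.
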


\begin{proof}
 Here, we outline the idea of the proof, all details will be developed in \cref{sec:fact_SE} and \cref{sec:fact_int}. The central observation is that poles in \cref{1-loop} occur only if a subset $I$ of the marked points approach the same marked point $\sigma_I$; so that $\sigma_i \rightarrow \sigma_I+\varepsilon x_i +\cO(\varepsilon^2)$ for $i\in I$.
This is conformally equivalent to a degeneration of the Riemann sphere into two components, connected by a node. All such poles receive contributions from both the measure and scattering equations. In particular, the question whether a pole occurs for a given integrand reduces to a simple scaling argument in the degeneration parameter $\varepsilon$, and we can straightforwardly identify the residues.

To be more explicit, for some $m\in I$ fix $\sigma_m= \varepsilon$ so that $x_m=1$ is the new fixed point on the $I$ component of the degenerate Riemann surface, then the measure and the scattering equations factorise as
 \begin{equation}
  d \mu\equiv\prod_{i=2}^n \bar{\delta}(k_i\cdot P(\sigma_i))d\sigma_i=\varepsilon^{2(|\cI|-1)}
  \frac{d\varepsilon}{\varepsilon}  \,\bar{\delta}(s_I+\varepsilon\mathcal{F})\,d\mu_I\, d\mu_{\bar I}\,.
\end{equation}
Moreover, \cref{sec:fact_int} provides details on how the integrands for $n$-gons, Yang-Mills theory and gravity factorise as well,
\begin{equation}
 \cI^{(1)}=\varepsilon^{-2(|\cI|-1)}\cI_I^{(0)}\cI_{\bar I}^{(0)}\,,
\end{equation}
where $\cI_{I,\bar I}^{(0)}$ depend only on on-shell momenta $k_i$ and the loop momentum in the on-shell combination $\ell_I=\ell+\eta$, $\eta^2=-\ell^2$. The full amplitude therefore factorises on the expected poles, and the residue gives the Q-cut factorisation described above;
\begin{equation}
 \cM^{(1)}=\int \cI^{(1)} d\mu=\int  \frac{d\varepsilon}{\varepsilon} \,\bar{\delta}(\tilde{s}_I+\varepsilon\mathcal{F})\,\cI_I^{(0)}\cI_{\bar I}\,d\mu_I^{(0)}\, d\mu_{\bar I}=\cM_I^{(0)}\frac{1}{\tilde{s}_I}\cM_{\bar I}^{(0)}\,.
\end{equation}
\end{proof}

\begin{thm}{}\label{lemma:powercounting}
 The amplitudes $\cM^{(1)}$ scale as $\ell^{-N}$ for $\ell\rightarrow\infty$, where
 \begin{center}
\begin{tabular}{l|c}
 theory & scaling $\ell^{-N}$\\ \hline 
 $n$-gon & $N=2n$\\
 supergravity & $N=8$\\
 super Yang-Mills & $N=6$\\
 pure gravity & $N=4$\\
 pure Yang-Mills & $N=4$\\
 bi-adjoint scalar & $N=4$
\end{tabular}
\end{center}
\end{thm}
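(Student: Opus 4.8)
The plan is to exploit the fact that, once the moduli integral in \eqref{1-loopgeneral} is localised on the off-shell scattering equations, the loop integrand $\hat{\cM}^{(1)}(\ell)$ becomes a rational function of $\ell$; the assertion $\cM^{(1)}\sim\ell^{-N}$ is then a statement about its degree at infinity, which I would read off directly from the scattering-equation representation rather than from any Feynman-integral form. First I would identify which solutions of the scattering equations dominate as $\ell\to\infty$. Writing $\varepsilon=\sigma_{\ell^+\ell^-}$ and inspecting the external equations $\mathrm{Res}_{\sigma_i}S = k_i\cdot\ell\,\frac{\sigma_{\ell^+\ell^-}}{(\sigma_i-\sigma_{\ell^+})(\sigma_i-\sigma_{\ell^-})} + \sum_{j\neq i}\frac{k_i\cdot k_j}{\sigma_{ij}}=0$, the loop-momentum term is the only $\ell$-dependent piece and is $O(\ell\varepsilon)$; for it to balance the $O(1)$ tree part on a solution with the $\sigma_i$ finite and distinct, one needs $\ell\varepsilon=O(1)$, i.e. the two nodal points collide at the rate $\sigma_{\ell^+\ell^-}\sim 1/\ell$. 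Hence the UV limit is governed by the collision degeneration $\sigma_{\ell^+}\to\sigma_{\ell^-}$, and I would analyse it by the same scaling/blow-up technology used for \cref{thm:fact}, now for the degeneration in which the two loop insertions come together.

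With this identification, I would assemble $N$ by tracking the combined homogeneity under the rescaling $\ell\to t\ell$, $\varepsilon\to t^{-1}\varepsilon$ of three ingredients: the explicit prefactor $1/\ell^2$; the localised measure $d\sigma_{\ell^+}d\sigma_{\ell^-}d^n\sigma/\mathrm{vol}\,G$ together with the Jacobian of the $\bar\delta$'s; and the integrand $\hat{\cI}=\cI/(\sigma_{\ell^+\ell^-})^4$. The integrand scaling is the combinatorial heart: for the $n$-gon, $\cI^{n\text{-gon}}=\prod_{i=1}^n(\sigma_{\ell^+\ell^-}/\sigma_{i\ell^+}\sigma_{i\ell^-})^2$ vanishes as $(\sigma_{\ell^+\ell^-})^{2n}$ in the collision, which I expect to translate into the box-like falloff $\ell^{-2n}$. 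For gauge and gravity I would feed in the Szeg\H{o}-kernel expansions \eqref{eq:szegolimitsl2c} --- whose relevant combinations obey $S_3|_{\sqrt q}\sim(\sigma_{\ell^+\ell^-})^2$ and $(S_2-S_3)|_{q^0}\sim(\sigma_{\ell^+\ell^-})^2$ --- and the elementary bound $\cI^{PT}\sim(\sigma_{\ell^+\ell^-})^2$, and count how many such factors can appear in the reduced Pfaffians $\pf'(M_{\mathrm{NS}}^r)$ of \eqref{eq:defMNS_details} and in $\cI^{L,R}_0$, \eqref{eq:IL0-sugra}. The remaining theories follow from the decompositions already established: supergravity from $\cI^L_0\,\cI^R_0$, super-Yang-Mills from $\cI^L_0\,\cI^{PT}$, the bi-adjoint scalar from $\cI^{PT}\cI^{PT}$, and the pure theories from the $M_{\mathrm{NS}}$ form \eqref{eq:defInt_MNS}--\eqref{eq:PfNs-def} with the number of loop NS states tuned as in \eqref{eq:pure-YM-d}--\eqref{eq:pure-gravity-d}.

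The hard part is that the power counting is not simply ``substitute $\varepsilon\sim 1/\ell$ into $\hat{\cI}$ and multiply''. The Pfaffian and $M_{\mathrm{NS}}$ entries carry explicit loop momentum (through $C_{ii}=-\epsilon_i\cdot P(\sigma_i)$ and the $\ell\cdot k_i/\sigma_{\ell^\pm i}$, $\ell\cdot\epsilon/\sigma_{\ell^+\ell^-}$ blocks), so the naive per-solution and per-term scaling is \emph{worse} than the advertised $N$: this is already visible at $n=4$, where the individual partial-fractioned box terms in \eqref{4pt-result} fall off only as $\ell^{-5}$, whereas the genuine $n$-gon $\ell^{-8}$ emerges only after the permutation sum. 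The advertised $N$ therefore relies on cancellations enforced by the Pfaffian/permutation structure and on the precise interplay between the $\varepsilon$-expansion and the explicit $\ell$-dependence; organising the determinant/Jacobian count so that these cancellations are manifest, and verifying that the subleading powers indeed drop, is where the main work lies. I would also have to dispose of the degenerate ``singular'' solutions with $\sigma_{\ell^+}=\sigma_{\ell^-}$: by the estimates of \cref{sec:powercounting} their contribution is finite and subleading (and vanishes outright for supergravity and super-Yang-Mills because of the $(\sigma_{\ell^+\ell^-})$-suppression of $\hat{\cI}^{SG}$ and $\hat{\cI}^{SYM}$), so they do not spoil the bound. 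As cross-checks I would confirm the counts against the explicit $n=4$ and $n=5$ results and against the box/pentagon power counting. Finally, this lemma is not an end in itself: combined with the factorisation residues of \cref{thm:fact}, the falloff $\ell^{-N}$ supplies the behaviour at infinity needed for a multidimensional Liouville argument, which together pin down $\cM^{(1)}$ uniquely as the Q-cut representation.
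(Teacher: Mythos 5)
Your overall strategy coincides with the paper's proof: as $\ell\to\infty$ the scattering equations force the two loop insertions to collide at the rate $\sigma_{\ell^+\ell^-}\sim 1/\ell$, the localized measure together with the explicit $1/\ell^2$ supplies a universal factor $\ell^{-4}$, and all the theory dependence of $N$ sits in the weight of the integrand in the degeneration parameter $\varepsilon$. The genuine gap is that your proposal stops exactly where the proof begins: the $\varepsilon$-weights are set by cancellations that you flag but never establish, and for the pure theories your sketch aims at the wrong mechanism. The matrix $M_{\text{NS}}^r$ of \eqref{eq:defMNS_details} contains no Szeg\H{o} kernels at all, so ``counting how many $\varepsilon^2$-suppressed kernel factors appear in $\pf'(M_{\text{NS}}^r)$'' cannot work. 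In fact $\hat{\cI}^{\text{NS}}$ \emph{naively diverges} as $\varepsilon^{-2}$ (the entry $B_{\ell^+\ell^-}=(d-2)/\sigma_{\ell^+\ell^-}$ combined with the overall $1/\sigma_{\ell^+\ell^-}$), and the actual argument is a rank statement: the coefficient of $\varepsilon^{-2}$ is the Pfaffian of the full tree matrix $M^{(0)}$, which vanishes because $M^{(0)}$ has co-rank two on the scattering equations, and the $\varepsilon^{-1}$ term is killed by exhibiting a deformed kernel vector $v_0+\varepsilon\,v_1$, which requires verifying the solvability condition $M^{(0)}v_1=-M^{(1)}_{\text{NS}}v_0$ against the kernel of $M^{(0)}$. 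Without this, the entries $N=4$ for pure Yang--Mills and pure gravity are unsupported. Likewise, for the supersymmetric entries it is not enough to quote $S_3|_{\sqrt q}\sim\varepsilon^2$ and $(S_2-S_3)|_{q^0}\sim\varepsilon^2$: one must check that the $\cO(\varepsilon^2)$ and $\cO(\varepsilon^3)$ terms of $8\left(\pf(M_3)|_{q^0}-\pf(M_2)|_{q^0}\right)$ cancel against $\pf(M_3)|_{\sqrt q}$, which hinges on the relative factor $1/8$ between $S_2^{(2,3)}$ and $S_3^{(2,3)}$ conspiring with the coefficient $8$ in \eqref{eq:IL0-sugra}. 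If that cancellation failed, $\hat{\cI}_0$ would have weight $0$ rather than $2$, and supergravity would come out as $\ell^{-4}$ instead of $\ell^{-8}$.

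There is also a reasoning error in how you motivate the need for cancellations. The $\ell^{-5}$ fall-off of individual terms in \eqref{4pt-result} is a property of the shifted partial-fraction (Feynman-like) representation, not of per-solution contributions in the scattering-equation representation: for the $n$-gon the integrand is $\ell$-independent and manifestly of weight $2n$ in $\varepsilon$, so \emph{every} solution already contributes at order $\ell^{-2n}$, and no permutation-sum cancellation is invoked anywhere in the proof. The cancellations that do matter live entirely inside the $\varepsilon$-expansion of the integrand on the degenerate solution branch --- photon decoupling for the Parke--Taylor factor (which is what upgrades your ``elementary bound'' $\cI^{PT}\sim\varepsilon^2$ from the naive $\cI^{PT}\sim\varepsilon$), the NS/R cancellation for the supersymmetric integrand, and the rank argument for $M_{\text{NS}}$ --- and these must be verified theory by theory, which is precisely the content of the paper's \cref{sec:powercounting}. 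Finally, note that your appeal to ``the estimates of \cref{sec:powercounting}'' to dismiss the singular solutions is circular in a blind proof: those estimates are part of what this lemma's proof has to supply.
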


\begin{proof} 
  This  follows from the fact that as $\ell\rightarrow \infty$,  the insertions of $\sigma_{\ell^+}$ and $\sigma_{\ell^-}$ must approach each other. This is conformally equivalent to a degeneration of the worldsheet into a nodal Riemann sphere with no further insertions and another Riemann sphere carrying all the external particles, see case III in figure \ref{fig:degen}. Moreover, this is also the configuration that corresponds to the singular/degenerate solutions described in the previous sections and so our analysis of fall-off in $\ell$ will also give information about the finiteness of the contributions from these degenerate solutions. 
 We  give the full details in \cref{sec:powercounting}.\end{proof}

\Cref{thm:fact} and \cref{lemma:powercounting} will now allow us to prove that the representation of one-loop amplitudes from the nodal Riemann sphere is equivalent to the Q-cut representation reviewed above.

\begin{thm}{}
 $\cM^{(1)}_{\mathrm{BS}}$, $\cM^{(1)}_{\mathrm{YM}}$ and $\cM^{(1)}_{\mathrm{gravity}}$ with the degenerate solutions omitted are representations of the one-loop amplitudes for the bi-adjoint scalar theory, Yang-Mills and gravity respectively.
\end{thm}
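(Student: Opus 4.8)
The plan is to treat each one-loop integrand $\hat\cM^{(1)}(\ell)$ produced by our nodal-sphere formula as a rational function of the loop momentum $\ell$, and to pin it down completely by its poles, residues and large-$\ell$ fall-off. A rational function of $\ell$ is determined up to a polynomial by its pole data, and the polynomial ambiguity is killed once one knows the behaviour as $\ell\to\infty$; this is the multidimensional analogue of Liouville's theorem already invoked in the statement. The benchmark against which we compare is the Q-cut representation of \cite{Baadsgaard:2015twa}, which writes the genuine one-loop integrand of any theory as the sum over Q-cuts
\begin{equation}
 \cM^{(1)}\Big|_{\text{Q-cut}}=\sum_{I}\cM_I^{(0)}(\dots,\tilde\ell_I,\tilde\ell_I+k_I)\,\frac{1}{\ell^2\,(2\ell\cdot k_I+k_I^2)}\,\cM_{\bar I}^{(0)}(-\tilde\ell_I,-\tilde\ell_I-k_I,\dots)\,,
\end{equation}
with $\tilde\ell_I=\alpha(\ell+\eta)$ and $\alpha=-k_I^2/2\ell\cdot k_I$. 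The whole proof reduces to showing that our formulae share precisely this pole-and-residue structure and fall off at least as fast at infinity.

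First I would use \cref{thm:fact}. The manifest factor $1/\ell^2$ supplies one family of poles, and the factorisation analysis of case II (in which exactly one of $\sigma_{\ell^\pm}$ lies in $I$) shows that our integrand develops a pole at $\tilde s_I=2\ell\cdot k_I+k_I^2=0$ whose residue is exactly the Q-cut product $\cM_I^{(0)}\,\tfrac{1}{\ell^2}\tfrac{1}{\tilde s_I}\,\cM_{\bar I}^{(0)}$ with the on-shell loop momenta $\ell_I=\ell+\eta$. The separating degeneration of case I, giving poles at $k_I^2=0$, is the ordinary physical factorisation of the full amplitude and involves no loop-momentum denominator; it therefore does not obstruct the $\ell$-analysis and merely provides an independent consistency check. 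Because \cref{thm:fact} exhausts all ways in which marked points can collide, these are the only poles of $\hat\cM^{(1)}$ in $\ell$, and they coincide with those of the Q-cut expression term by term.

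Next I would form the difference between our integrand and the Q-cut integrand. By the previous step this difference is a rational function of $\ell$ with no poles, hence a polynomial in $\ell$. I would then invoke \cref{lemma:powercounting}: for the bi-adjoint scalar, Yang-Mills and gravity our formulae fall off as $\ell^{-N}$ with $N\geq4$ as $\ell\to\infty$, and the Q-cut representation shares this fall-off, so their difference is a polynomial that vanishes at infinity and is therefore identically zero. This establishes that $\hat\cM^{(1)}$ equals the Q-cut representation of the corresponding one-loop amplitude, proving the theorem for each theory in turn, the Parke-Taylor and Pfaffian structure of $\cI_0$ feeding into the residues of \cref{thm:fact} being what distinguishes the three cases. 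The same large-$\ell$, equivalently $\sigma_{\ell^+}\to\sigma_{\ell^-}$, analysis (case III) also shows that the degenerate solutions give only finite contributions supported on the configuration discarded in the Q-cut construction through the vanishing residues at $\alpha=0,\infty$; hence omitting them, as in the statement, is consistent and does not alter the integrated amplitude.

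The main obstacle I anticipate is establishing completeness and exactness of the residue matching in \cref{thm:fact} for the theories with non-trivial numerators. For the $n$-gons the pole structure is transparent, but for Yang-Mills one must track how the one-loop Parke-Taylor factor $\cI^{PT}$ degenerates into the product of two lower-point Parke-Taylor factors on the bubbled sphere, and for gravity one must show that the one-loop Pfaffian built from $M_{\text{NS}}$ and $M_2$ factorises into the product of tree-level (reduced) Pfaffians with exactly the right relative normalisation and trace over the loop polarisation states. Guaranteeing that no spurious poles in $\ell$ are generated by these objects, and that the degenerate-solution contributions really are confined to the discarded sector rather than leaving a finite remainder, is the delicate technical heart of the argument; these are precisely the points where the detailed power-counting of \cref{sec:powercounting} and the factorisation computations of \cref{sec:fact_SE,sec:fact_int} must be deployed.
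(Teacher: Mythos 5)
Your proposal is correct and follows essentially the same route as the paper: it identifies the poles in $\ell$ via the factorisation theorem, matches their residues to the Q-cut representation, forms the difference $\Delta=\cM^{(1)}-\cM\big|_{\text{Q-cut}}$, invokes the multidimensional Liouville argument together with the power-counting lemma to force $\Delta=0$, and disposes of the degenerate solutions by noting they arise only in the case-III configuration and so never feed into the Q-cut poles. The technical obstacles you flag (Parke--Taylor and Pfaffian factorisation with the correct normalisation, absence of spurious poles) are exactly the content the paper supplies in its factorisation and power-counting sections, so nothing is missing.
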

\begin{proof}
We use the fact that our formula must be rational in the external data and $\ell$, and that the only poles in our formulae arise at the boundary of the moduli space when the $\sigma_i$ come together.  This theorem  is then an immediate consequence of the correct factorisation on Q-cuts and the scaling behaviour in $\ell$: 
 \begin{equation}
    \Delta=\cM^{(1)}-\cM\Big|_{\text{Q-cut}}\,,
 \end{equation}
cannot have any further poles in $\ell$ by \cref{thm:fact}, so by a multidimensional analogue of Liouville's theorem, $\Delta$ has to be a constant. However, the non-trivial fall-off of $\cM^{(1)}$ and $\cM\Big|_{\text{Q-cut}}$ in $\ell$ implies that this constant has to vanish, and thus
 \begin{equation}
    \cM^{(1)}=\cM\Big|_{\text{Q-cut}}\,.
 \end{equation}
 In particular, the degenerate solutions to the scattering equations do not contribute to the  singularities that give rise to the Q-cuts, since these all arise from case II of figure \ref{fig:degen}, whereas the degenerate solutions are all case III of figure \ref{fig:degen}.   
\end{proof}

\subsection{Factorisation I - scattering equations and measure}\label{sec:fact_SE}
As discussed above,\footnote{See also the tree-level discussion in \cref{sec2:review_SE}.} poles of $\cM^{(1)}$ only occur  when a subset of the marked points
(possibly including $\sigma_{\ell^+}$ or $\sigma_{\ell^-}$) approach the same point, giving rise to a degeneration of the Riemann sphere into a pair of Riemann spheres connected at a double point. The scattering equations then imply that this pole is associated with a partial sum of the momenta becoming null. This is an extension of the discussion at tree-level in \cref{sec2:review_SE}, relating the boundary of the moduli space of marked Riemann surfaces to factorisation channels of the amplitude.\smallskip

Let $I$ be a subset of $\{\ell^+,\ell^-, 1,\ldots ,n\}$ that contains just one of the fixed points that we shall denote $\sigma_I$.   Moreover, let $k_I= \sum_{i\in I}k_i$, $k_{\bar I} =\sum_{i\in \bar I} k_i$ with $k_0=\ell_I$, $k_{n+1}=-\ell_I$.   Consider now a solution to the scattering equations implying a degeneration of the Riemann surface\footnote{Two of the three original fixed points must be in $\bar I$ and just one in $I$ to obtain a stable degeneration  as we cannot make two of the fixed points  approach each other but $\bar I$ cannot contain three as, after factorisation, it will also have the fixed point $\sigma_I$ which would be too many.} so that for $i\in I$ 
\begin{equation}
\sigma_i \rightarrow \sigma_I+\varepsilon x_i +\cO(\varepsilon^2)\,,
\end{equation} 
for some small parameter $\varepsilon$, with $x_I=0$, $x_m=1$ for some $m\in I$ and $x_i =O(1)$ for all other $i\in I$.

 We first wish to see that with these assumptions, the scattering equations imply that $\tilde s_I =O(\epsilon)$.  Firstly we have
\begin{subequations}
\begin{align}
 &P(\sigma)=\frac{1}{\varepsilon}P_I(x)+\tilde{P}_{\bar I}(\sigma_I)+\cO(\varepsilon) && \sigma=\sigma_I+\varepsilon x+\cO(\varepsilon^2)\,,\\
 &P(\sigma)=P_{\bar I}(\sigma)+\cO(\varepsilon) && \sigma\neq\sigma_I+\varepsilon x+\cO(\varepsilon^2)\,,
\end{align}
\end{subequations}
where 
\begin{equation}
P_I(x)=\sum_{i\in I}\frac{k_i}{x-x_i} 
\, , \qquad \tilde{P}_{\bar I}(\sigma)=\sum_{i\in \bar I}\frac{k_i}{\sigma-\sigma_i}\,,\qquad P_{\bar I}(\sigma)=\sum_{p\in \bar I}\frac{k_p}{\sigma-\sigma_p}-\frac{k_I}{\sigma-\sigma_I}\,.
\end{equation}
Thus the scattering equations give\footnote{Setting $k_0=\ell_I$, $k_{n+1}=-\ell_I$, the scattering equations for the marked points $\sigma_{\ell^\pm}$ are
\begin{equation*}
 0=\mathrm{Res}_{\sigma_{\ell^\pm}}=\pm\sum_i \frac{\ell_I\cdot k_i}{\sigma_{\ell^\pm i}}=\pm\ell_I\cdot P(\sigma_{\ell^\pm})\,,
\end{equation*}
so the conclusions hold for $\sigma_{\ell^\pm}$ also.}
\begin{subequations}\label{eq:SE-degen}
    \begin{align}
      & 0=k_i\cdot P(\sigma_i)=\frac{1}{\varepsilon} k_i\cdot P_I( x_i) + k_i\cdot \tilde{P}_{\bar I}(\sigma_I) + O(\varepsilon)\, , && i\in I\,,\\
      & 0=k_p\cdot P(\sigma_p)=k_p\cdot P_{\bar I}(\sigma_p)\,, &&
      p\in \bar I\,.
    \end{align}
\end{subequations}
In particular, for $i\in I$, this implies
$k_i\cdot P_I(x_i)=O(\varepsilon)$ as $\varepsilon\rightarrow 0$ since
\begin{equation}
 \label{scale}
 k_i\cdot P_I( x_i) =-\varepsilon k_i\cdot \tilde{P}_{\bar I}(\sigma_I) + O(\varepsilon^2)\,.
\end{equation}  
By summing we obtain as an algebraic identity
\begin{equation}
\label{factorize}
\tilde{s}_I:= \frac12 \sum_{i\neq j\in I} k_i\cdot k_j=  \sum_{i,j\in I} \frac{x_ik_i\cdot  k_j}{ x_i-x_j}= \sum_{i\in I} x_i k_i \cdot P_I(x_i)=\cO(\varepsilon)
 \, ,
\end{equation}
so $\tilde{s}_I$ vanishes to order $\varepsilon$, and any (potential) pole is
associated with the vanishing of $\tilde s_I$ where 
\begin{equation*}
  \tilde{s}_I=\begin{cases}
               k_I^2 & \sigma_{\ell^+},\sigma_{\ell^-}\in\bar I\,,\\
               2\ell\cdot k_I + k_I^2 & \sigma_{\ell^+}\in I, \sigma_{\ell^-}\in \bar I\,.
              \end{cases}
 \end{equation*}.

We now focus on the measure of the amplitude expression with a generic integrand
\begin{equation}
\cM=\int \cI \prod_{i=2}^n \bar{\delta}(k_i\cdot P(\sigma_i))d\sigma_i\, .
\end{equation}
First, let us determine the weight of the measure in $\varepsilon$ as $\varepsilon\rightarrow 0$ (the integrand $\cI$ will have some weight also which we discuss later).
For each $i\in I$, the scattering equations contribute
\begin{subequations}
  \begin{align}
    &\bar{\delta}(k_i\cdot P(\sigma_i))\,d \sigma_i=\varepsilon^2\,\bar{\delta}(k_i\cdot P_I(x_i))\,d x_i && i\in I\,,\\
    &\bar{\delta}(k_p\cdot P(\sigma_p))\,d \sigma_p=\bar{\delta}(k_p\cdot P_{\bar I}(\sigma_p))\,d \sigma_p && p\in \bar I\,.
  \end{align}
\end{subequations}

Thus we obtain  scattering equations on the factorised Riemann surface, multiplied by a factor of $\varepsilon^2$ for each $i\in I$. Note however that there is a subtlety; we expect three fixed marked points on each Riemann surface. On the Riemann surface $\Sigma_{\bar I}$, this is trivially true since there are two fixed points and the degeneration point $\sigma_I$. On $\Sigma_I$, the fixed points are given by the degeneration point $x_{\bar I}=\infty$ and $x_I=0$, and our choice of parametrisation for the degeneration $x_m=1$.
This gives the required independent $n_I-3$ scattering equations, but we still have to consider the integration over $\sigma_m=\sigma_I+\varepsilon$ and its associated delta function imposing its scattering equation associated to $\sigma_m$.  Using 
\begin{equation}
 s_I=\sum_{i\neq m \in I}x_i k_i\cdot P_I(x_i) + k_m \cdot P_I(x_m)\,,
\end{equation}
and the support of the remaining scattering equations \cref{eq:SE-degen}, we find
\begin{align}
 k_m\cdot P(\sigma_m)&=\frac{1}{\varepsilon}\left(k_m\cdot P_I(x_m)+\varepsilon\sum_{p\in \bar I}k_m\cdot \tilde{P}_{\bar I}(\sigma_I)+\cO(\varepsilon^2)\right)\nonumber\\
 &=\frac{1}{\varepsilon}\left(\tilde{s}_I+\varepsilon\sum_{i\in I,\, p\in \bar I}x_i k_i\cdot \tilde{P}_{\bar I}(\sigma_I)+\cO(\varepsilon^2)\right)\nonumber\\
 &\equiv\frac{1}{\varepsilon}\left(\tilde{s}_I+\varepsilon\mathcal{F}+\cO(\varepsilon^2)\right)\,.
\end{align}
Thus the last equation becomes
\begin{equation}
 \bar{\delta}(k_m\cdot P(\sigma_m))\,d \sigma_m=\varepsilon\,\bar{\delta}(\tilde{s}_I+\varepsilon\mathcal{F})\,d \varepsilon  \,,
\end{equation}
and  the measure factorises as
 \begin{equation}
  d \mu\equiv\prod_{i=2}^n \bar{\delta}(k_i\cdot P(\sigma_i))d\sigma_i=\varepsilon^{2(|\cI|-1)}\,\frac{d\varepsilon}{\varepsilon} \,\bar{\delta}(\tilde{s}_I+\varepsilon\mathcal{F})\,d\mu_I\, d\mu_{\bar I}\,.
\end{equation}

We  now distinguish three cases according to whether $\sigma_{\ell^\pm}$ are  in $I$ as in  \cref{fig:degen}.
\begin{enumerate}[\text{Case} I]
 \item \label{Case1} If $\sigma_{\ell^+}$ and $\sigma_{\ell^-}$ are not in $I$, and $I$ is a strict subset of $1,\ldots ,n$,  this is the standard factorisation channel with  $s_I=k_I^2\rightarrow 0$. The relevant boundary of the moduli space describes a Riemann sphere connected to a nodal sphere, corresponding to a tree-level amplitude factorising from a one-loop amplitude. The measure is
 \begin{equation}
  d \mu^{(1)}=\varepsilon^{2(|\cI|-1)}\,\frac{d\varepsilon}{\varepsilon} \,\bar{\delta}(s_I+\varepsilon\mathcal{F})\,d\mu_I^{(0)}\, d\mu_{\bar I}^{(1)}\,.
\end{equation}
 \item \label{Case2} If without loss of generality $\sigma_{\ell^+} \in I$ but $\sigma_{\ell^-}\notin I$ the scattering equations imply 
\begin{equation}
\tilde{s}_I=k_I\cdot\ell + \frac12 k_I^2=\cO(\varepsilon)\, .
\end{equation}
as $\varepsilon\rightarrow 0$. This non-separating degeneration describes two Riemann spheres, connected at {\it two} double points, see \cref{fig:degen}. The corresponding measure is given by
\begin{equation}
  d \mu^{(1)}=\varepsilon^{2(|\cI|-1)}\,\frac{d\varepsilon}{\varepsilon} \,\bar{\delta}(\tilde{s}_I+\varepsilon\mathcal{F})\,d\mu_I^{(0)}\, d\mu_{\bar I}^{(0)}\,,
\end{equation}
leading to the expected poles from the Q-cut factorisation channels.
 \item \label{Case3} The case $I=\{\sigma_{\ell^+},\sigma_{\ell^-}\}$ is of particular interest since this configuration arises for large $\ell$, $\ell =\cO(\lambda^{-1})$ (see \eqref{SE2n}), and for the singular solutions in non-supersymmetric theories. It is discussed in \cref{lemma:powercounting} and \cref{sec:powercounting}, and determines the UV behaviour of our one-loop amplitudes.
\end{enumerate}

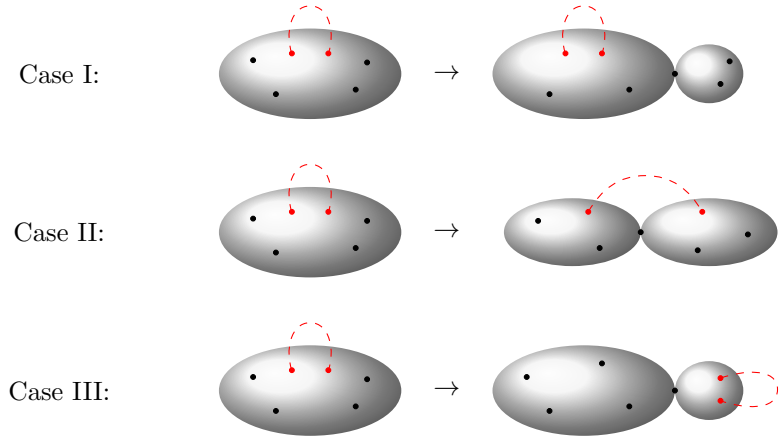
\begin{figure}[ht]
\begin{center} \vspace{-5pt}\begin{tikzpicture}[scale=3]
 \shade[shading=ball, ball color=light-gray] (3.6,2.4) circle [x radius=0.4, y radius=0.2];
  \draw [dashed,red] (3.52,2.49) to [out=110, in=180] (3.6,2.7) to [out=0, in=70] (3.68,2.49);
  \draw [fill] (3.35,2.46) circle [radius=.3pt];
  \draw [fill] (3.45,2.31) circle [radius=.3pt];
  \draw [fill] (3.8,2.33) circle [radius=.3pt];
  \draw [fill] (3.85,2.45) circle [radius=.3pt];
  \draw [fill,red] (3.52,2.49) circle [radius=.3pt];
  \draw [fill,red] (3.68,2.49) circle [radius=.3pt];
  \node at (4.2,2.4) {$\rightarrow$};
  \node at (2.5,2.4) {Case I:};
 \shade[shading=ball, ball color=light-gray] (4.8,2.4) circle [x radius=0.4, y radius=0.2];
 \shade[shading=ball, ball color=light-gray] (5.35,2.4) circle [x radius=0.15, y radius=0.13];
  \draw [dashed,red] (4.72,2.49) to [out=110, in=180] (4.8,2.7) to [out=0, in=70] (4.88,2.49);
  \draw [fill] (4.65,2.31) circle [radius=.3pt];
  \draw [fill] (5,2.33) circle [radius=.3pt];
  \draw [fill] (5.44,2.455) circle [radius=.3pt];
  \draw [fill] (5.4,2.355) circle [radius=.3pt];
  \draw [fill,red] (4.72,2.49) circle [radius=.3pt];
  \draw [fill,red] (4.88,2.49) circle [radius=.3pt];
  \draw [fill] (5.2,2.4) circle [radius=.3pt];
  \shade[shading=ball, ball color=light-gray] (3.6,1) circle [x radius=0.4, y radius=0.2];
  \draw [dashed,red] (3.52,1.09) to [out=110, in=180] (3.6,1.3) to [out=0, in=70] (3.68,1.09);
  \draw [fill] (3.35,1.06) circle [radius=.3pt];
  \draw [fill] (3.45,0.91) circle [radius=.3pt];
  \draw [fill] (3.8,0.93) circle [radius=.3pt];
  \draw [fill] (3.85,1.05) circle [radius=.3pt];
  \draw [fill,red] (3.52,1.09) circle [radius=.3pt];
  \draw [fill,red] (3.68,1.09) circle [radius=.3pt];
  \node at (4.2,1) {$\rightarrow$};
  \node at (2.5,1) {Case III:};
 \shade[shading=ball, ball color=light-gray] (4.8,1) circle [x radius=0.4, y radius=0.2];
 \shade[shading=ball, ball color=light-gray] (5.35,1) circle [x radius=0.15, y radius=0.13];
  \draw [dashed,red] (5.4,1.055) to [out=20, in=90] (5.65,1.005) to [out=270, in=340] (5.4,0.955);
  \draw [fill] (4.55,1.06) circle [radius=.3pt];
  \draw [fill] (4.65,0.91) circle [radius=.3pt];
  \draw [fill] (5,0.93) circle [radius=.3pt];
  \draw [fill] (4.88,1.12) circle [radius=.3pt];
  \draw [fill,red] (5.4,1.055) circle [radius=.3pt];
  \draw [fill,red] (5.4,0.955) circle [radius=.3pt];
  \draw [fill] (5.2,1) circle [radius=.3pt];
 \shade[shading=ball, ball color=light-gray] (3.6,1.7) circle [x radius=0.4, y radius=0.2];
  \draw [dashed,red] (3.52,1.79) to [out=110, in=180] (3.6,2) to [out=0, in=70] (3.68,1.79);
  \draw [fill] (3.35,1.76) circle [radius=.3pt];
  \draw [fill] (3.45,1.61) circle [radius=.3pt];
  \draw [fill] (3.8,1.63) circle [radius=.3pt];
  \draw [fill] (3.85,1.75) circle [radius=.3pt];
  \draw [fill,red] (3.52,1.79) circle [radius=.3pt];
  \draw [fill,red] (3.68,1.79) circle [radius=.3pt];
  \node at (4.2,1.7) {$\rightarrow$};
  \node at (2.5,1.7) {Case II:};
 \shade[shading=ball, ball color=light-gray] (4.75,1.7) circle [x radius=0.3, y radius=0.15];
 \shade[shading=ball, ball color=light-gray] (5.35,1.7) circle [x radius=0.3, y radius=0.15];
  \draw [dashed,red] (4.82,1.79) to [out=60, in=180] (5.07,1.95) to [out=0, in=120] (5.32,1.79);
  \draw [fill] (4.6,1.75) circle [radius=.3pt];
  \draw [fill] (4.87,1.63) circle [radius=.3pt];
  \draw [fill] (5.3,1.62) circle [radius=.3pt];
  \draw [fill] (5.52,1.69) circle [radius=.3pt];
  \draw [fill,red] (4.82,1.79) circle [radius=.3pt];
  \draw [fill,red] (5.32,1.79) circle [radius=.3pt];
  \draw [fill] (5.05,1.7) circle [radius=.3pt];
\end{tikzpicture}\end{center}
\caption{Different possible worldsheet degenerations.}
\label{fig:degen}
\end{figure}

\subsection{Factorisation II - integrands} \label{sec:fact_int}
Whether we actually have a pole or not in the factorisation limit depends on the weight of the integrand $\cI$ in $\varepsilon$ as $\varepsilon\rightarrow 0$.
In this section, we consider  the integrands  for the $n$-gons, Yang-Mills, gravity and the bi-adjoint scalar in more detail. In particular, we will find that all these integrands behave as
\begin{equation}
 \cI^{(1)}=\varepsilon^{-2(|\cI|-1)}\cI_I^{(0)}\cI_{\bar I}^{(1)}\,,
\end{equation}
in case I and 
\begin{equation} \label{eq:fact_intcase2}
 \cI^{(1)}=\frac{1}{\ell^2}\varepsilon^{-2(|\cI|-1)}\cI_I^{(0)}\cI_{\bar I}^{(0)}\,,
\end{equation}
in case II, where $\cI_I$ ($\cI_{\bar I}$) depends only on the on-shell momenta $k_I$ ($k_{\bar I}$) and $\ell_I=\ell+\eta$, $\eta^2=-\ell^2$. With the measure contributing a factor of $\varepsilon^{2(|\cI|-1)} \,\bar{\delta}(\tilde{s}_I+\varepsilon\mathcal{F})\,\frac{\d\varepsilon}{\varepsilon}$, the overall amplitude scales as $\varepsilon^{-1}$, and we can perform the integral against the $\delta$-function explicitly, leading to a pole in $\tilde{s}_I$. Therefore, the full amplitude factorises on the expected poles, with residues given by the corresponding subamplitudes. Moreover, as evident from above,  for case II this factorisation channel corresponds to a Q-cut;
\begin{equation}
\begin{split}
 \cM^{(1)}&=\int \cI^{(1)} d\mu=\frac{1}{\ell^2}\int  \frac{\d\varepsilon}{\varepsilon} \,\bar{\delta}(\tilde{s}_I+\varepsilon\mathcal{F})\,\cI_I^{(0)}\cI_{\bar I}^{(0)}\,d\mu_I\, d\mu_{\bar I}\\
 &=\cM_I^{(0)}(\dots,\ell_I,\ell_I+k_I)\frac{1}{\ell^2}\frac{1}{\tilde{s}_I}\cM_{\bar I}^{(0)}(-\ell_I,-\ell_I-k_I,\dots)\,.
 \end{split}
\end{equation}
The analysis of the integrand can be further simplified by focussing on the `left' and `right' contributions $\cI^{L,R}$ to the integrand individually, where $\cI=\cI^L \cI^R$. From the discussion of the preceding sections, we identify $\cI^L=\cI^R=\sum_r \pf(M_{\text{NS}}^r)$ for pure gravity in $d$ dimensions, $\cI^L=\sum_r \pf(M_{\text{NS}}^r)$ and $\cI^R=\cI^{cPT}$ for pure Yang-Mills, and $\cI^L=\cI^R=\cI^{cPT}$ for the bi-adjoint scalar theory. Therefore, it is sufficient to prove that these building blocks have weight $-(|I|-1)$ in $\varepsilon$,
\begin{equation}
 {\cI^{(1)}}^{L,R}=\varepsilon^{-(|\cI|-1)}{\cI^{(0)}}^{L,R}_{I}\,{\cI^{(0)}}^{L,R}_{\bar I}\,.
\end{equation}

\subsubsection{The \texorpdfstring{$n$}{n}-gon integrand}
Let us first consider the $n$-gon integrand
\begin{equation}
 \hat{\cI}^{n-gon}=\frac{1}{\sigma_{\ell^+\,\ell^-}^4}\prod_{i=1}^n \left(\frac{\sigma_{\ell^+\,\ell^-}}{\sigma_{i\,\ell^+}\sigma_{i\,\ell^-}}\right)^2\,.
\end{equation}
It is straightforward to see that case I cannot contribute since the integrand scales as $\varepsilon^0$, and thus the amplitude behaves as $\varepsilon^{2|I|-3}=\cO(\varepsilon)$. Therefore only case II contributes; and the integrand factorises as
\begin{equation}
 \hat{\cI}^{n-gon}=\varepsilon^{-2(|\cI|-1)}\frac{1}{\sigma_{I\,\ell^-}^4}\prod_{i\in I}\frac{1}{x_i^2}\,\prod_{p\in\bar I}\left(\frac{\sigma_{I\,\ell^-}}{\sigma_{p\,I}\sigma_{p\,\ell^-}}\right)^2=\varepsilon^{-2(|\cI|-1)}\,\cI_{I}^{n-gon}\cI_{\bar I}^{n-gon}\,.
\end{equation}
Note that we have used explicitly the chosen gauge fixing $x_{\bar I}=\infty$ for the second equality. In particular, since $\cI^{n-gon}_{I, \bar I}$ do not depend on $\ell$, this gives the correct residues for the respective Q-cuts of the $n$-gon. 

\subsubsection{The Parke-Taylor factor}
Consider next the Parke-Taylor-like integrands
\begin{align}
 \cI^{cPT}(\sigma_{\ell^+},\,\rho,\,\sigma_{\ell^-})= \sum_{\rho\in S_n}\cI^{PT}(\sigma_{\ell^+},\,\rho,\,\sigma_{\ell^-}) =\sum_{\rho\in S_n}\frac{\sigma_{\ell^+\,\ell^-}}{\sigma_{\ell^+\,\rho(1)}\sigma_{\rho(1)\,\rho(2)}\dots\sigma_{\rho(n)\,\ell^-}}\,.
\end{align}
If the set $I$ is not consecutive in any of the orderings of the Parke-Taylor factors in the cyclic sum above, the amplitude scales as $\cO(\varepsilon)$ and thus vanishes. Therefore the only non-vanishing contributions come from terms where all $\sigma_i$, $i\in I$ are consecutive with respect to the ordering defined by the Parke-Taylor factors.\\

In case II, with $\sigma_{\ell^+}\in I$, the only term contributing is $\cI^{PT}(\sigma_{\ell^+},\,I,\,\bar I,\,\sigma_{\ell^-})$,
and we find the correct scaling behaviour to reproduce the pole,
\begin{equation} \cI^{cPT}(\sigma_{\ell^+},\,\rho,\,\sigma_{\ell^-})=\varepsilon^{-(|I|-1)}\,\cI^{PT}_I(x_{\ell^+},\,I,\,x_{\bar I})\,\cI^{PT}_{\bar I}(\sigma_{I},\,\bar I,\,\sigma_{\ell^-})\,.
\end{equation}
In particular, the integrands are again independent of the loop momentum $\ell$, and are straightforwardly identified as the tree-level Parke-Taylor factors, $\cI^{PT}=\cI^{PT,(0)}$.
Note furthermore that the reduction of the integrands from a sum over cyclic Parke-Taylor factors to single terms can be understood directly in terms of diagrams, as only a single diagram will contribute to a given pole. However, another nice interpretation can be given for the bi-adjoint scalar theory discussed in \cite{He:2015yua}: Here, the cyclic sum is understood as a tool to remove unwanted tadpole contributions to the amplitude. The factorising Riemann surface however separates the insertions of the loop momenta, and thereby automatically removes these tadpole diagrams.\\

In case I, the same argument as above can be used to deduce that the only terms contributing on the factorised Riemann sphere are those where all $i\in I$ appear in a consecutive ordering, and we find
\begin{equation}
 \cI^{cPT}(\sigma_{\ell^+},\,\rho,\,\sigma_{\ell^-})=\varepsilon^{-(|I|-1)}\cI^{PT}_I(I\cup\{x_{\bar I}\})\, \cI^{cPT}_{\bar I}(\sigma_{\ell^+},\,\rho(\bar I\cup\{\sigma_I\}),\,\sigma_{\ell^-})\,,
\end{equation}
which again leads to the expected poles and residues, with $\cI^{cPT}_{\bar I}=\cI^{(1)}_{\bar I}$.

\subsubsection{Non-supersymmetric theories}
\label{sec:non-supersymm-theor-1}
In both the case of the $n$-gon and the Parke-Taylor factors, the integrand was independent of $\ell$, and thus Q-cuts were easily identified. For non-supersymmetric theories, with Pfaffians in the integrands, this identification becomes more involved. We will focus first on the NS sector,\footnote{In the case of Yang-Mills, this is identical to the pure case, see \cref{eq:pure-YM}.}
\begin{equation}
 \hat{\cI}^{\text{NS}}= \frac1{\sigma_{\ell^+\ell^-}}\sum_r\pf'(M_{\text{NS}}^r)\,,
\end{equation}
where $\pf'(M_{\text{NS}})\equiv \frac{1}{\sigma_{\ell^+\,\ell^-}}\,\pf({M_{\text{NS}}}_{(\ell^+,\ell^-)})$, and the matrix $M_{\text{NS}}$ was defined in \cref{eq:defMNS_details}. As above, we have used the subscript $(ij)$ to denote that both the rows and the columns $i$ and $j$ have been removed from the matrix. Consider first again the case II where the Riemann sphere degenerates as $\sigma_i \rightarrow \sigma_I+\varepsilon x_i +\cO(\varepsilon^2)$ with $\sigma_{\ell^+}\in I$. Then the entries in $M_{\text{NS}}$ behave to leading order in $\varepsilon$ as
\begin{equation}
 \frac{1}{\sigma_{ij}}=\begin{cases} \varepsilon^{-1}\frac{1}{x_{ij}} & i,j\in I\\
                                     \frac{1}{\sigma_{Ij}} & i\in I,\, j\in \bar I\\
                                     \frac{1}{\sigma_{ij}} & i,j\in \bar I\\
                       \end{cases} ,\qquad P^\mu(\sigma_i)=\begin{cases} \varepsilon^{-1}P^\mu_I(x_i)&i\in I\\ P^\mu_{\bar I}(\sigma_i)&i\in\bar I \end{cases}\,.
\end{equation}
Using antisymmetry of the Pfaffian, we can rearrange the rows and columns such that $M_{\text{NS}}^r$ takes the following form:
\begin{equation}
\begingroup \renewcommand*{\arraystretch}{1.4}
 M_{\text{NS}}^r=\begin{pmatrix}\varepsilon^{-1} {M^{(0)}_I}_{(\bar I\bar I')} & N\\ -N^T & {M^{(0)}_{\bar I}}_{(II')}\end{pmatrix}\,,
 \endgroup
\end{equation}
where $M_{I}^{(0)}$ is the tree-level matrix, depending only on higher-dimensional on-shell deformations of the loop momentum $\ell_I=\ell+\eta$ with polarisation $\epsilon^r$, and the momenta $k_i$, $i\in I$. In particular, the diagonal entries $C_{ii}=\epsilon_i\cdot P(\sigma_i)$ respect this decomposition due to the one-form $P_\mu$ factorising appropriately. The matrices $N$ are defined (to leading order in $\varepsilon)$ by $N_{ij}=\mu_i\cdot \nu_i$, with 
\begin{align*}
 \mu=\left(\ell_I,k_i,\epsilon^r,\epsilon_i\right)\,,\qquad \nu=\left(\frac{-\ell_I}{\sigma_{I\ell^-}},\frac{k_j}{\sigma_{Ij}},\frac{\epsilon^r}{\sigma_{I\ell^-}},\frac{\epsilon_j}{\sigma_{Ij}}\right)\,,
\end{align*}
for $i\in I$, $j\in \bar I$, where $\ell_I=\ell+\eta$, with $\eta^2=-\ell^2$. Note in particular that this ensures that $N_{\ell^+\ell^-}=0$. To identify the scaling of the integrand $\cI_{\text{NS}}$, we have to consider the Pfaffian of the reduced matrix ${M_{\text{NS}}^r}_{(\ell^+\ell^-)}$;
\begin{equation}
\begingroup \renewcommand*{\arraystretch}{1.5}
 {M_{\text{NS}}^r}_{(\ell^+\ell^-)}=\begin{pmatrix}\varepsilon^{-1} {M^{(0)}_I}_{(\ell^+\bar I\bar I')} & N_{[\ell^+\ell^-]}\\ -N^T_{[\ell^+\ell^-]} & {M^{(0)}_{\bar I}}_{(\ell^- II')}\end{pmatrix}\,,
 \endgroup
\end{equation}
where in $N$ only the row (column) associated to $\ell^+$ ($\ell^-$) has been removed.
Note in particular that the matrices ${M^{(0)}_I}_{(\ell^+\bar I\bar I')}$ and ${M^{(0)}_{\bar I}}_{(\ell^- II')}$ have odd dimensions, so the scaling in $\varepsilon$ is non-trivial. To identify the leading behaviour of the bosonic integrand in $\varepsilon$, we will use the following lemma:
\begin{lemma}[Factorisation Lemma \cite{Dolan:2013isa}] \label{lemma:fact}
 Let $M_I$ and $M_{\bar I}$ be antisymmetric matrices of dimensions $m_I\times m_I$ and $m_{\bar I}\times m_{\bar I}$ respectively; and $N=\mu_i\nu_j$, with $d$-dimensional vectors $\mu_i=\mu_i^\mu$, $\nu_j=\nu_j^\mu$ for $i\in I$, $j\in \bar I$. Then the leading behaviour of the Pfaffian of
 \begin{equation}
  M=\begin{pmatrix} \varepsilon^{-1}M_I & N \\ -N^T & M_{\bar I} \end{pmatrix}\,,
 \end{equation}
as $\varepsilon\rightarrow 0$ is given by
\begin{itemize}
 \item $m_I +m_{\bar I}$ is odd: $\pf(M)=0$.
 \item $m_I +m_{\bar I}$ is even, $m_I$ and $m_{\bar I}$ are even: $\pf(M)\sim\varepsilon^{-\frac{m_I}{2}}\pf(M_I)\pf(M_{\bar I})$
 \item $m_I +m_{\bar I}$ is even, $m_I$ and $m_{\bar I}$ are odd: 
 \begin{equation}
  \pf(M)\sim \varepsilon^{-\frac{m_I-1}{2}}\sum_s\pf(\widetilde{M}_I^s)\,\pf(\widetilde{M}_{\bar I}^s)\,,
 \end{equation}
where $s$ runs over a basis $\epsilon^s$, and 
\begin{equation}
 \widetilde{M}_I^s=\begin{pmatrix} M_I & (\mu\cdot\epsilon^s)^T\\ -\mu\cdot\epsilon^s & 0 \end{pmatrix}\,, \qquad \widetilde{M}_{\bar I}^s=\begin{pmatrix} 0 & \nu\cdot\epsilon^s\\ -(\nu\cdot\epsilon^s)^T & M_{\bar I} \end{pmatrix}\,.
\end{equation}
\end{itemize}
\end{lemma}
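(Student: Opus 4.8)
The plan is to work directly from the combinatorial definition of the Pfaffian as a signed sum over perfect matchings (pair partitions) of the combined index set $\{1,\dots,m_I+m_{\bar I}\}$, and to organise this sum according to the number of \emph{cross pairs}, i.e.\ matched pairs with one leg in $I$ and one in $\bar I$ (each contributing an entry of $N$). Every intra-$I$ pair carries a factor $\varepsilon^{-1}$ from the block $\varepsilon^{-1}M_I$, so the leading behaviour as $\varepsilon\to0$ is governed by the matchings that use the greatest number of intra-$I$ pairs, equivalently the fewest cross pairs. The odd case is immediate: when $m_I+m_{\bar I}$ is odd the matrix $M$ is odd-dimensional and antisymmetric, so $\pf(M)=0$ identically and there is nothing further to prove.

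For the even--even case I would note that when $m_I$ and $m_{\bar I}$ are both even one can match all of $I$ within $I$ and all of $\bar I$ within $\bar I$, using zero cross pairs and $m_I/2$ factors of $\varepsilon^{-1}$. Any matching with cross pairs must use at least two of them, since deleting a single $I$-leg leaves an odd number of $I$-indices that cannot be matched internally; each such matching therefore has at least one fewer intra-$I$ pair and contributes at subleading order in $\varepsilon$. The block structure makes the sign bookkeeping transparent: the leading matchings factorise as the product of a matching of $I$ and a matching of $\bar I$, and summing the signed contributions reproduces $\varepsilon^{-m_I/2}\,\pf(M_I)\,\pf(M_{\bar I})$.

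The odd--odd case is the heart of the argument. Now $I$ cannot be matched entirely within itself, so every matching has at least one cross pair and the leading matchings have exactly one. Fixing the single cross pair to be $(i,j)$ with $i\in I$, $j\in\bar I$, the remaining indices contribute $\varepsilon^{-(m_I-1)/2}\,\pf(M_I^{\hat\imath})\,\pf(M_{\bar I}^{\hat\jmath})$, where a hat denotes deletion of the indicated row and column. Here I would invoke the defining rank structure $N_{ij}=\mu_i\cdot\nu_j=\sum_s(\mu_i\cdot\epsilon^s)(\nu_j\cdot\epsilon^s)$ over a basis $\{\epsilon^s\}$ to split the cross-pair entry, so that the double sum over $(i,j)$ factorises, for each fixed $s$, into a sum over $i$ involving only $I$-data times a sum over $j$ involving only $\bar I$-data. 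The concluding step is to recognise each single sum as a bordered (cofactor) expansion of a Pfaffian, namely $\sum_{i\in I}\pm(\mu_i\cdot\epsilon^s)\,\pf(M_I^{\hat\imath})=\pf(\widetilde M_I^s)$ and likewise for $\bar I$, which assembles the claimed leading term $\varepsilon^{-(m_I-1)/2}\sum_s\pf(\widetilde M_I^s)\,\pf(\widetilde M_{\bar I}^s)$.

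The main obstacle throughout is sign control. The delicate point is to check that the signs produced by the matching expansion conspire so that the double sum over the cross pair $(i,j)$ genuinely separates into a product of two independent single sums, each of which is exactly the signed cofactor expansion of the bordered Pfaffians $\widetilde M_I^s$ and $\widetilde M_{\bar I}^s$. I would dispose of this by appealing to the standard bordered-Pfaffian identity (the Pfaffian analogue of Laplace expansion along a row), thereby reducing the entire computation to a single careful sign lemma; this is precisely the content of the factorisation result of \cite{Dolan:2013isa}, on which I would rely.
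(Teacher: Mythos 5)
Your proposal is sound, but note that the paper itself does not actually prove this lemma: it states it and refers the reader to \cite{Dolan:2013isa}, remarking only that the proof relies on ``basic properties of the Pfaffian''. Your perfect-matching argument supplies exactly the kind of proof that remark points to, and all three cases are handled correctly: the parity constraint $c\equiv m_I \pmod 2$ on the number $c$ of cross pairs is what forces $c=0$ to dominate in the even--even case (with the block-diagonal identity $\pf(A\oplus B)=\pf(A)\,\pf(B)$ fixing the signs) and $c=1$ to dominate in the odd--odd case; the completeness relation $\mu_i\cdot\nu_j=\sum_s(\mu_i\cdot\epsilon^s)(\nu_j\cdot\epsilon^s)$ then splits the single cross entry so that the double sum over $(i,j)$ factorises; and each single sum is recognised as the Laplace expansion of the bordered Pfaffian $\pf(\widetilde{M}_I^s)$ or $\pf(\widetilde{M}_{\bar I}^s)$ along its added row and column. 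The one point you should tighten is the closing sentence: deferring the final sign bookkeeping to ``the factorisation result of \cite{Dolan:2013isa}'' is circular if taken literally, since that result \emph{is} the lemma being proved. What you actually need there is only the standard Pfaffian expansion
\begin{equation}
 \pf(A)=\sum_{j\neq i}(-1)^{i+j+1+\theta(i-j)}\,a_{ij}\,\pf(A_{(ij)})\,,
\end{equation}
(the identity the paper itself quotes in its appendix on $M_{\mathrm{NS}}$), applied once to the full matrix to extract the cross pair and once to each bordered matrix; checking that the two sign factors multiply consistently is an elementary, if tedious, computation and requires no external input. With that substitution your argument is a complete, self-contained proof, and in fact slightly more than the paper offers, since the paper outsources the entire verification.
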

The interested reader is referred to \cite{Dolan:2013isa} for the proof of this lemma relying on basic properties of the Pfaffian. Applying \cref{lemma:fact} to the integrand $\cI_{\text{NS}}$, we can identify ${\widetilde{M}^{(0)}_I}$\hspace{-4pt}$_{(\ell^+\bar I\bar I')}$ with ${M^{(0)}_I}_{(\ell^+\bar I)}^{rs}$ by identifying\footnote{and similarly for ${\widetilde{M}^{(0)}_{\bar I}}$\hspace{-4pt}$_{(\ell^- I I')}$ and ${M^{(0)}_{\bar I}}_{(\ell^-I)}^{rs}$.} the additional `$s$' row and column with the ones associated to the interchanged particle $\bar I'$. To leading order in $\varepsilon$, the integrand therefore becomes
\begin{equation}
 \hat{\cI}^{\text{NS}}=
 \varepsilon^{-(|I|-1)} \frac{1}{\sigma_{I\ell^-}}\sum_{r,s} \pf\left({M^{(0)}_I}_{(\ell^+\bar I)}^{rs}\right)\pf\left({M^{(0)}_{\bar I}}_{(\ell^- I)}^{rs}\right)\,.
\end{equation}
Recalling furthermore the gauge fixing choice $x_{\bar I}=\infty$ in degenerating the worldsheet, this can be identified with a product of reduced Pfaffians,
\begin{equation}
 \hat{\cI}^{\text{NS}}=
 \varepsilon^{-(|I|-1)} \sum_{r,s} \pf'\left({M^{(0)}_I}^{rs}\right)\pf'\left({M^{(0)}_{\bar I}}^{rs}\right)\,.
\end{equation}
As seen from the discussion above, this provides both the correct weight in the degeneration parameter $\varepsilon$ and the correct residues for the Q-cut factorisation.\\

The discussion for case I proceeds along similar lines: For convenience, we choose to remove rows and columns associated to one particle on each side of the degeneration from $M_{\text{NS}}$. Following through the same steps as for case II, the integrand then factorises as 
\begin{equation}
 \hat{\cI}^{\text{NS}}=
 \varepsilon^{-(|I|-1)} \sum_{r,s} \pf'\left({M^{(0)}_I}^{s}\right)\pf'\left({M^{(1)}_{\bar I}}^{rs}\right)\,.
\end{equation}
This correctly reproduces the poles and residues for the bubbling of a Riemann sphere: as a partial sum of the external momenta goes null, the residue is a product of a one-loop amplitude and a tree-level amplitude.\\

\paragraph{Factorisation for Pure Yang Mills and gravity amplitudes.} At this point it is easy to see how this analysis extends to pure Yang-Mills and gravity. Note first of all that for Yang-Mills, the NS and the pure sector are identical, see \cref{eq:pure-YM-d},
\begin{equation}
  \mathcal{I}_{\mathrm{pure-YM}}=
  ( \pf(M_{3})\big|_{\sqrt{q}}+ (d-2)\,\pf(M_{3})\big|_{q^{0}})\,\mathcal{I}^{cPT}\,.
\end{equation}
For pure gravity, \cref{eq:pure-gravity-d}, we have
\begin{equation}
  \mathcal{I}_{\mathrm{pure-grav}}=( \pf(M_{3})\big|_{\sqrt{q}}+ (d-2)\,\pf(M_{3})\big|_{q^{0}})^{2}-\alpha\,(\pf(M_{3})\big|_{q^{0}})^{2}\,,
\end{equation}
where $\alpha=\frac{1}{2}(d-2)(d-3)+1$ is given by the degrees of freedom of the B-field and the dilaton. This factorises,
\begin{equation}
\begin{aligned}
  &\mathcal{I}_{\mathrm{pure-grav}}=\cI_L\,\cI_R\,,\qquad\text{with} &&\cI_L=\pf(M_{3})\big|_{\sqrt{q}}+ \alpha_d\,\pf(M_{3})\big|_{q^{0}}\,,\\
  & && \cI_R=\pf(M_{3})\big|_{\sqrt{q}}+ \tilde{\alpha}_d\,\pf(M_{3})\big|_{q^{0}}\,,
\end{aligned}
\end{equation}
where $\alpha_d=d-2+\sqrt{\alpha}$ and $\tilde{\alpha}_d=d-2-\sqrt{\alpha}$. An analogous calculation to the one in \cref{sec:MNS} then straightforwardly leads to
\begin{equation}
 \hat{\cI}_L=\sum_r \pf'(M_{\alpha_d}^r)\,, \qquad \hat{\cI}_R=\sum_r\pf'(M_{\tilde{\alpha}_d}^r)\,,
\end{equation}
where $M_{\alpha_d}^r$ and $M_{\tilde{\alpha}_d}^r$ have been defined as $M_{\text{NS}}^r$, but with the element $B_{\ell^+\ell^-}^{\text{NS}}=\frac{d-2}{\sigma_{\ell^+\ell^-}}$ replaced by $B_{\ell^+\ell^-}^{\alpha_d}=\frac{\alpha_d}{\sigma_{\ell^+\ell^-}}$ and $B_{\ell^+\ell^-}^{\tilde{\alpha}_d}=\frac{\tilde{\alpha}_d}{\sigma_{\ell^+\ell^-}}$ respectively. Then the discussion given above for the NS sector generalises straightforwardly, and the factorisation lemma, in conjunction with the same identification of the matrices, yields again
\begin{equation}
 \hat{\cI}^{\text{pure}}=
 \varepsilon^{-(|I|-1)} \sum_{r,s} \pf'\left({M^{(0)}_I}^{rs}\right)\pf'\left({M^{(0)}_{\bar I}}^{rs}\right)\,.
\end{equation}
Note in particular that since only the matrix $N$ is affected by the change $d-2\rightarrow \alpha_d$, the residues are unchanged, and thus still correspond to the expected tree-level amplitudes for pure Yang-Mills and gravity. Again, case I proceeds in close analogy to the NS sector discussion above.

\subsection{UV behaviour of the one-loop amplitudes} \label{sec:powercounting}
Consider now the UV behaviour of the one-loop amplitudes; $\ell\rightarrow\lambda^{-1}\ell$, with $\lambda\rightarrow 0$. In this set-up, the scattering equations only yield solutions if the two insertion points of the loop momentum coincide, $\sigma_{\ell^-}\rightarrow\sigma_{\ell^+}$. The factorisation of the scattering equations and the measure will be closely related to \cref{sec:fact_SE}, so we will restrict the discussion here to highlight only the differences due to the factor of $\lambda^{-1}$. As above, we will blow up the concentration point into a bubbled-off Riemann sphere,
\begin{equation}\label{eq:degen_UV}
 \sigma_{\ell^-}=\sigma_{\ell^+}+\varepsilon +\varepsilon^2 y_{\ell}+\cO(\varepsilon^3)\,,
\end{equation}
where we have used the M\"obius invariance on the sphere to fix $x_{\ell^-}=1$. 
We thus find the scattering equations
\begin{subequations}
  \begin{align}
    &0=\text{Res}_{\ell^+}P^2=\lambda^{-1} \sum_i \frac{\ell\cdot k_i}{\sigma_{\ell^+i}} \,,\\
    &0=\text{Res}_{\ell^-}P^2=-\lambda^{-1} \sum_i \frac{\ell\cdot k_i}{\sigma_{\ell^+i}}+\varepsilon \lambda^{-1}\sum_i \frac{\ell\cdot k_i}{\sigma_{\ell^+i}^2}+\cO(\varepsilon^2)  \label{eq:SE_UV}\,,\\
    &0=k_i\cdot P(\sigma_i)=-\varepsilon \lambda^{-1} \frac{\ell\cdot k_i}{\sigma_{\ell^+i}^2} +\sum_j \frac{k_i\cdot k_j}{\sigma_{ij}} +\cO(\varepsilon^2)\,.
  \end{align}
\end{subequations}
On the support of the  scattering equations at $\sigma_{\ell^{+}}$ and $\sigma_i$ for $i\neq i_1,i_2,i_3$, \cref{eq:SE_UV} simplifies to
\begin{align}
 0=\text{Res}_{\ell^-}P^2=\varepsilon \lambda^{-1}\sum_{i=i_1,i_2,i_3} \frac{\ell\cdot k_i}{\sigma_{\ell^+i}^2}+\sum_{\substack{i,j\\i\neq i_1,i_2,i_3}}\frac{k_i\cdot k_j}{\sigma_{ij}}+\cO(\varepsilon^2)
 \equiv \varepsilon \lambda^{-1} \mathcal{F}_1-\mathcal{F}_2\,,
\end{align}
where the explicit form of $\mathcal{F}_{1,2}$ will be irrelevant for the following discussion. Including the factor of $\ell^{-2}$, the measure therefore factorises (to leading order) as
\begin{equation}
\begin{split}
 d\mu&\equiv\frac{\lambda^2}{\ell^2}\bar{\delta}(\text{Res}_{\ell^+}P^2)\bar{\delta}(\text{Res}_{\ell^-}P^2)\prod_{i\neq i_1,i_2,i_3}\bar{\delta}(k_i\cdot P(\sigma_i))d\sigma_id\sigma_{\ell^+}d\sigma_{\ell^-}\\
 &=\lambda^4 \,\bar{\delta}\left(\varepsilon -\lambda \frac{\mathcal{F}_2}{\mathcal{F}_1}\right) \,d\varepsilon\, d\tilde{\mu}\,,
 \end{split}
\end{equation}
where $d\tilde{\mu}$ is independent of $\lambda$ and $\varepsilon$. The remaining delta-function thus fixes the worldsheet degeneration $\varepsilon$ to be proportional to the UV scaling $\lambda$ of the loop momentum $\ell$.\\

Again, this factorisation behaviour of the measure is universal for all theories, and only the specific form of the integrand will dictate the UV scaling of the theory. Denoting the weight of $\mathcal{I}_{L,R}$ in $\varepsilon$ by $N_{L,R}$, the scattering equation fixing $\varepsilon$ implies that the one-loop amplitudes scale as
\begin{equation}
 \cM\rightarrow\lambda^{4+N_L+N_R}\cM\,.
\end{equation}
Let us now consider the different supersymmetric and non-supersymmetric theories discussed above.\\

\subsubsection{The \texorpdfstring{$n$}{n}-gon} The integrand of the $n$-gon,
\begin{equation}
 \hat{\cI}^{n-gon}=\frac{1}{\sigma_{\ell^+\ell^-}^4}\prod_{i=1}^n \left(\frac{\sigma_{\ell^+\ell^-}}{\sigma_{i\ell^+}\sigma_{i\ell^-}}\right)^2\,,
\end{equation}
manifestly has weight $\varepsilon^{2n-4}$ under the worldsheet degeneration \ref{eq:degen_UV}. The leading behaviour of the amplitudes is thus given by $\lambda^{4+N}=\lambda^4$ for $\lambda\rightarrow 0$, and therefore the $n$-gons scale as $\ell^{-2n}$ in the UV limit.\\

\subsubsection{The Parke-Taylor factor}
The Parke-Taylor integrand \cref{eq:PTloop-def} contributing in Yang-Mills and the bi-adjoint scalar theory is given by\footnote{Note that we have chosen to include a factor of $\sigma_{\ell^+\ell^-}$ symmetrically in each integrand $\cI^{L,R}$.}
\begin{equation}
 \hat{\cI}^{PT}=\frac{1}{\sigma_{\ell^+\ell^-}}\sum_{i=1}^n \frac{1}{\sigma_{\ell^+\,i}\sigma_{i+1\, i}\sigma_{i+2\, i+1}\ldots \sigma_{i+n\,\ell^-}}\, .
\end{equation}
While naively this has weight $-1$ in $\varepsilon$, the leading order
cancels due to the photon decoupling identity --- a special case of
the KK relations ---
\begin{equation}
 0=\sum_{i=1}^n \frac{1}{\sigma_{\ell\,i}\sigma_{i+1\, i}\sigma_{i+2\, i+1}\ldots \sigma_{i+n\,\ell}}\, ,
\end{equation}
and the integrand thus scales as $\varepsilon^0=1$. In particular, this allows us to identify immediately the UV behaviour of the bi-adjoint scalar theory as $\ell^{-4}$. This result can be given an intuitive interpretation in terms of Feynman diagrams; the UV behaviour of the theory is determined by the diagrams involving bubbles, which scale as $\ell^{-4}$.

\subsubsection{Supersymmetric theories}
For supersymmetric theories, the UV behaviour is governed by the scaling of the integrand \cref{eq:IL0-sugra}
\begin{equation}
 \hat{\cI}_0^{L,R}=\frac{1}{\sigma_{\ell^+\ell^-}^2}\cI_0^{L,R}\,,\qquad \cI_0^{L,R}=\pf(M_{3}) \big|_{\sqrt{q}}+
  8\left(\pf(M_{3}) \big|_{q^0}-\pf(M_{2} )\big|_{q^0}\right)\,,
\end{equation}
under the worldsheet degeneration described above. Note first that the Szeg\H{o} kernels become (see \cref{eq:szegolimitsl2c})
\begin{align*}
 S_2(\sigma_{ij}) &= \frac{1}{2\sigma_{i\,j}} \left(\sqrt{\frac{\sigma_{i\,\ell^+}\,\sigma_{j\,\ell^-}}{\sigma_{j\,\ell^+}\,\sigma_{i\,\ell^-}}}+ \sqrt{\frac{\sigma_{j\,\ell^+}\,\sigma_{i\,\ell^-}}{\sigma_{i\,\ell^+}\,\sigma_{j\,\ell^-}}} \right)  \sqrt{d\sigma_ i} \sqrt{d\sigma_ j} &\to\sum_{m=0}^\infty  \varepsilon^m S_2^{(m)}\sqrt{d\sigma_ i} \sqrt{d\sigma_ j}\\
S_3(\sigma_{ij}) &=\frac{1}{\sigma_{i\,j}} \left(1 +\sqrt{q} \;\frac{(\sigma_{i\,j}\,\sigma_{\ell^+\,\ell^-})^2}{\sigma_{i\,\ell^+}\,\sigma_{i\,\ell^-}\,\sigma_{j\,\ell^+}\,\sigma_{j\,\ell^-}}\right) \sqrt{d\sigma_ i} \sqrt{d\sigma_ j} &\to \sum_{m=0}^\infty \varepsilon^m S_3^{(m)}\sqrt{d\sigma_ i} \sqrt{d\sigma_ j}\,,
\end{align*}
where
\begin{subequations}
  \begin{align}
    S_2^{(0)}&= \frac{1}{\sigma_{ij}} &  S_3^{(0)}&= \frac{1}{\sigma_{ij}}\\
    S_2^{(1)}&= 0 &  S_3^{(1)}&= 0\\
    S_2^{(2)}&= \frac{1}{8}\,\frac{\sigma_{ij}}{\sigma_{i\ell^+}^2\sigma_{j\ell^+}^2} &  S_3^{(2)}&= \frac{\sigma_{ij}}{\sigma_{i\ell^+}^2\sigma_{j\ell^+}^2}\\
    S_2^{(3)}&= \frac{1}{8}\left(\frac{\sigma_{ij}(\sigma_{i\ell^+}+\sigma_{j\ell^+})}{\sigma_{i\ell^+}^3\sigma_{j\ell^+}^3}+\frac{2\,y_\ell \,\sigma_{ij}}{\sigma_{i\ell^+}^2\sigma_{j\ell^+}^2}\right) &  S_3^{(3)}&=\frac{\sigma_{ij}(\sigma_{i\ell^+}+\sigma_{j\ell^+})}{\sigma_{i\ell^+}^3\sigma_{j\ell^+}^3}+\frac{2\,y_\ell \,\sigma_{ij}}{\sigma_{i\ell^+}^2\sigma_{j\ell^+}^2}\,.
  \end{align}
\end{subequations}
Expanding the integrand $\cI_0^{L,R}$ in powers of $\varepsilon$, $\pf(M_{3}) \big|_{q^0}$ and $\pf(M_{2}) \big|_{q^0}$ potentially contribute at order $\varepsilon^0$, whereas $\pf(M_{3}) \big|_{\sqrt{q}}$ can only contribute to $\varepsilon^2$. However, the leading contribution cancels among $\pf(M_{3}) \big|_{q^0}$ and $\pf(M_{2}) \big|_{q^0}$, as well as all higher order contribution (starting at order $\varepsilon^1$) coming from the diagonal entries of $C$. The weight in $\varepsilon$ is thus governed by the higher order behaviour of the Szeg\H{o} kernels. Moreover, due to the factor of $1/8$ between $S_2^{(2,3)}$ and $S_3^{(2,3)}$, the terms of order $\cO(\varepsilon^2)$ and $\cO(\varepsilon^3)$ originating from $\pf(M_{2}) \big|_{q^0}$ cancel\footnote{A bit more care is needed at $\cO(\varepsilon^3)$: While there are no contributions to $\cO(\varepsilon^2)$ from products of terms of order $\varepsilon$, these have to be taken into account at order $\cO(\varepsilon^3)$. However, the same reasoning as above guarantees their cancellation: The only possible origin for terms of order $\varepsilon$ are the diagonal entries of $C$, which coincide for $M_2$ and $M_3$. The cancellations to second order thus carry forwards to ensure that there will be no contributions from products of lower order terms up to $\cO(\varepsilon^3)$.} against the contributions from $\pf(M_{3}) \big|_{\sqrt{q}}$. \\

A short investigation confirms that there are no further cancellations, and thus $\hat{\cI}_0^{L,R}$ has weight 2 in $\varepsilon$. In particular, using $N=4+N_L+N_R$, this implies that our one-loop supergravity amplitudes scale as $\ell^{-8}$ in the UV limit, and super Yang-Mills as $\ell^{-6}$ (using $N_R=0$ for the Parke-Taylor integrand derived above). Naively, this seems to be a lower UV behaviour for super Yang-Mills than expected from the Feynman diagram expansion, for which the UV limit is given by the contribution from the boxes. However, a more detailed investigation of the Q-cut representation of the integrand  demonstrates that this is indeed the expected scaling, and that only the (higher) symmetry properties of the gravity integrand guarantee the same scaling for both the conventional Feynman diagram expression and the sum over Q-cuts. 

\subsubsection{Non-supersymmetric theories}
In the supersymmetric case discussed above, cancellations between the NS and the R sector ensured the correct scaling of the integrand. However, these cancellations are absent in the purely bosonic case,
\begin{equation}
 \hat{\cI}^{\text{NS}}= \frac{1}{\sigma_{\ell^+\ell^-}}\sum_r\pf({M_{\text{NS}}^r}_{(\ell^+\ell^-)})\,,
\end{equation}
so naively the integrand seems to scale as $\varepsilon^{-2}$. However, the leading contribution is given by the Pfaffian of the full tree-level matrix $M^{(0)}_{\text{NS}}=M^{(0)}$, which vanishes on the support of the scattering equations. More explicitly, let us expand the reduced matrix ${M_{\text{NS}}^r}_{(\ell^+\ell^-)}$ in $\varepsilon$;
\begin{equation}
 \pf({M_{\text{NS}}}_{(\ell^+\ell^-)})=\pf(M^{(0)})+\varepsilon \,\pf(M_{\text{NS}}^{(1)}) +\cO(\varepsilon)\,.
\end{equation}
The vanishing of the leading term can then be seen from the existence of two vecors, 
\begin{equation}\label{eq:v_0}
 v_0 = (\sigma_1,\dots,\sigma_n,0\dots,0) \quad \text{and } \quad \tilde{v}_0 = (1,\dots,1,0\dots,0)\,,
\end{equation}
in the kernel of $M_{\text{NS}}^{(0)}=M^{(0)}$. This argument can in fact be extended to subleading order: expanding both the matrix $M_{\text{NS}}$ and a potential vecor in the kernel to subleading order, 
\begin{equation}
 {M_{\text{NS}}}_{(\ell^+\ell^-)}=M^{(0)} +\varepsilon M^{(1)}_{\text{NS}}\,,\qquad v=v_0 +\varepsilon \,v_1\,,
\end{equation}
we note that the condition for $\cI^{\text{NS}}$ to scale as $\cO(\varepsilon^0)$ is that $ {M_{\text{NS}}}_{(\ell^+\ell^-)}$ has co-rank two to order $\varepsilon$, and thus two vectors spanning its kernel. Finding one of these is enough, as it guarantees the existence of the second. But this on the other hand is equivalent to 
\begin{equation}
 \left(M^{(0)} +\varepsilon M^{(1)}_{\text{NS}}\right)\left(v_0 +\varepsilon\, v_1\right)=\cO(\varepsilon^2)\,,
\end{equation}
vanishing to order $\cO(\varepsilon^2)$. Expanding this out, we obtain the conditions
\begin{align}\label{eq:cond_UV}
 M^{(0)}v_0=0\,,\qquad M^{(0)}\,v_1=-M^{(1)}_{\text{NS}}\, v_0\,.
\end{align}
As commented above, the first condition is satisfied with $v_0$ given in \cref{eq:v_0}. Note furthermore that the second condition cannot be straightforwardly inverted, since det($M^{(0)})=0$. The constraint for a solution to exist is thus the vanishing of both sides of the equation under a contraction with a vector in the kernel of the matrix. Since the only contribution to $M^{(1)}_{\text{NS}}$ to order $\cO(\varepsilon)$ comes from the diagonal entries $C_{ii}$, we get
\begin{equation}
 M^{(1)}_{\text{NS}}\, v_0= (0,\dots,0,\lambda^{-1}\frac{\epsilon_i\cdot\ell}{\sigma_{i\ell^+}},\dots)\,.
\end{equation}
This vanishes trivially when contracted with $v_0$ and $\tilde{v}_0$, and thus there exists a solution $v_1$ to \cref{eq:cond_UV}. The contributions of order $\cO(\varepsilon^{-1})$ to $\cI^{\text{NS}}$ therefore vanish, and the integrand is of order one. In particular, this implies that both pure Yang-Mills and pure gravity one-loop amplitudes scale as $\ell^{-4}$ in the UV limit, which is the expected behaviour from both the Q-cut and the Feynman diagram expansion. Moreover, the analysis above is equally applicable to the NS and the pure theories, similarly to the discussion given in \cref{sec:fact_int}.\\

To conclude the proof of \cref{lemma:powercounting}, let us summarise these results for the UV scaling of our one-loop amplitudes:
\begin{center}
\begin{tabular}{l|c}
 theory & scaling $\lambda^N\sim\ell^{-N}$\\ \hline 
 $n$-gon & $N=2n$\\
 supergravity & $N=8$\\
 super Yang-Mills & $N=6$\\
 pure gravity & $N=4$\\
 pure Yang-Mills & $N=4$\\
 bi-adjoint scalar & $N=4$
\end{tabular}
\end{center}
Note that the scaling of the non-supersymmetric theories (pure gravity and Yang-Mills, as well as the bi-adjoint scalar) corresponds to Feynman diagrams involving bubbles, whereas the higher scaling of the supersymmetric theories ensures that only boxes contribute. As observed above, Yang-Mills exhibits a lower scaling than expected from the Feynman diagram expansion, but which coincides with the expected scaling in the Q-cut representation. \\

Let us comment briefly on the closely related discussions regarding the contribution of the singular solutions $\sigma_{\ell^-}=\sigma_{\ell^+} +\varepsilon +\cO(\varepsilon^2)$. The same arguments as above, without the rescaled loop momenta, ensure that the measure scales to leading order as
\begin{equation}
 d\mu=\bar{\delta}\left(\varepsilon - \frac{\mathcal{F}_2}{\mathcal{F}_1}\right) \,d\varepsilon\, d\tilde{\mu}\,,
\end{equation}
where the measure $d\tilde{\mu}$ is again independent of $\varepsilon$. Then the same powercounting argument in the degeneration parameter $\varepsilon$ gives the following scaling for the different theories:
\begin{center}
\begin{tabular}{l|c}
 theory & weight $N$ in $\varepsilon$\\ \hline 
 $n$-gon & $N=2n-4$\\
 supergravity & $N=4$\\
 super Yang-Mills & $N=2$\\
 pure gravity & $N=0$\\
 pure Yang-Mills & $N=0$\\
 bi-adjoint scalar & $N=0$
\end{tabular}
\end{center}
The contribution from the singular solutions $\sigma_{\ell^-}=\sigma_{\ell^+} +\varepsilon +\cO(\varepsilon^2)$ to the $n$-gon and the supersymmetric theories thus vanishes, whereas they can clearly be seen to contribute for the bi-adjoint scalar theory and Yang-Mills and gravity in the absence of supersymmetry. Moreover, since the integrands scale as $\varepsilon^0$, the contributions from the singular solutions are clearly finite. However, they evidently do not contribute to the Q-cuts \cite{Baadsgaard:2015twa}, and thus do not contribute to the integrated amplitudes and can be discarded. This complements the discussion given in \cref{sec:general-form-one}.

\section{All-loop integrands}\label{sec:Loops_all-loop}
The ACS proposals have natural extensions to Riemann surfaces $\Sigma^g$ of arbitrary genus $g$ for $g$-loop amplitudes \cite{Adamo:2013tsa,Ohmori:2015sha, Adamo:2015hoa}.  We can again attempt to use residue theorems to localise on a preferred boundary component of the moduli space. Here we choose a basis of $g$ $a$-cycles to contract in $g$ non-separating degenerations, to obtain Riemann spheres  $\Sigma^g_0$ with $g$ nodes, i.e  pairs of double points $(\sigma_r,\sigma_{r'})$, $r=1,\ldots,g$. (We still expect separating degenerations to be suppressed by the remaining scattering equations for generic momenta.) This fixes $g$ of the moduli, and the remaining $2g-3$ moduli are now associated with the $2g$ new marked points modulo M\"obius transformations.  On nodal curves, 1-forms are allowed to have simple poles at the nodes so that the nodal Riemann sphere $\Sigma^g_0$ is endowed with a basis of $g$ global holomorphic 1-forms 
\begin{equation}
 \omega_r=\frac{(\sigma_r-\sigma_{r'})d\sigma}{(\sigma-\sigma_r)(\sigma-\sigma_{r'})}\, .
\end{equation}

For $P$ with poles at $n$ further marked points $\sigma_i$ and residues $k_i$, we have 
\begin{equation}
P=\sum_{r=1}^g \ell_r\omega_r+\sum_i k_i \frac{d\sigma}{\sigma-\sigma_i} \, ,
\end{equation}
where $\ell_r\in \R^d$ are the zero modes in $P$ representing the loop momenta. Setting
\begin{equation}
S(\sigma):=P^2-\sum_{r=1}^g \ell_r^2 \omega_r^2+\sum_{r,s}^g a_{rs}(\ell_r^2+\ell_s^2)\omega_r\omega_s\, ,
\end{equation}
 a quadratic differential with simple poles at all the marked points including $\sigma_r,\sigma_{r'}$,  the multiloop off-shell scattering 
 equations are
\begin{equation}
\mathrm{Res}_{\sigma_i} S=0\, , \quad i=1,\ldots ,n+2g \,,
\end{equation}
where $i$ now ranges over all the marked points. The coefficients $a_{rs}$ in the definition of $S(\sigma)$ are fixed uniquely by requiring the amplitude to factorise correctly.  We have as before three relations between the scattering equations arising from the vanishing of the sum of the residues of $\sum\sigma_\alpha \sigma_\beta S=0$.  Thus  if we impose $n+2g-3$ of them, the remaining ones must also be satisfied, so that $S$ is holomorphic and, being of negative weight, vanishes.

\begin{figure}[ht]
\begin{center}
  \begin{tikzpicture} [scale=2.5]
  \draw (0,2.6) to [out=90, in=180] (0.15,2.75) to [out=0, in=180] (0.35,2.67) to [out=0, in=180] (0.8,2.78) to [out=0,in=90] (1.1,2.49) to [out=270,in=0] (0.8,2.2) to [out=180,in=0] (0.3,2.5) to [out=180,in=0] (0.15,2.45) to [out=180, in=270] (0,2.6);
  \draw (0.09,2.61) arc [radius=0.17, start angle=240, end angle=300];
  \draw (0.23,2.6) arc [radius=0.17, start angle=70, end angle=110];
  \draw (0.65,2.65) arc [radius=0.17, start angle=240, end angle=300];
  \draw (0.79,2.64) arc [radius=0.17, start angle=70, end angle=110];
  \draw (0.75,2.36) arc [radius=0.17, start angle=240, end angle=300];
  \draw (0.89,2.35) arc [radius=0.17, start angle=70, end angle=110];
 \draw [fill] (0.11,2.53) circle [radius=.3pt];
 \draw [fill] (0.42,2.61) circle [radius=.3pt];
 \draw [fill] (0.75,2.47) circle [radius=.3pt];
 \draw [fill] (0.94,2.3) circle [radius=.3pt];
 \draw [fill] (0.93,2.68) circle [radius=.3pt];
  \node at (1.35,2.5) {$\rightarrow$};
  \draw (2.0,2.5) circle [x radius=0.4, y radius=0.3];
  \draw [fill,red3] (1.73,2.67) circle [radius=.3pt];
  \draw [fill,red3] (1.69,2.62) circle [radius=.3pt];
  \draw [dashed, red3] (1.73,2.67) to [out=125, in=45] (1.57,2.82) to [out=225, in=155] (1.69,2.62);
  \draw [fill] (1.85,2.32) circle [radius=.3pt];
  \draw [fill] (2.0,2.5) circle [radius=.3pt];
  \draw [fill] (1.93,2.72) circle [radius=.3pt];
  \draw [fill,red3] (2.08,2.75) circle [radius=.3pt];
  \draw [fill,red3] (2.14,2.73) circle [radius=.3pt];
  \draw [dashed,red3] (2.08,2.75) to [out=100,in=160] (2.16,2.96) to [out=340,in=70](2.14,2.73);
  \draw [fill,red3] (2.24,2.33) circle [radius=.3pt]; 
  \draw [fill,red3] (2.19,2.29) circle [radius=.3pt]; 
  \draw [dashed,red3] (2.19,2.29) to [out=300,in=225] (2.35,2.15) to [out=45,in=330] (2.24,2.33);
  \draw [fill] (2.27,2.55) circle [radius=.3pt];
  \draw [fill] (1.67,2.45) circle [radius=.3pt];
 \end{tikzpicture}
 \end{center}
 \caption{Residue theorem at higher genus.}
\end{figure}
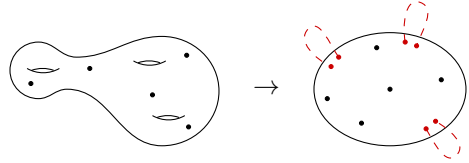

This leads to the following proposal for the all-loop supergravity integrand
\begin{equation}
\hat{\cM}^{(g)}_{SG}=\int_{(\CP^1)^{n+2g}} \frac{\cI^L_0\cI^R_0}{\mathrm{Vol}\, G} \prod_{r=1}^{g} \frac{1}{\ell_r^2} \prod_{i=1}^{n+2g} \bar{\delta}(\mathrm{Res}_{\sigma_i} S(\sigma_i))\, ,
\end{equation}
where $G=\SL(2,\C)^2$ is the residual gauge symmetry of the ambitwistor string. It is fixed in standard Faddeev-Popov fashion by fixing three points to $(0,1,\infty)$ and removing their corresponding delta functions. In this formula, the integrand factors $\cI^R$ and $\cI^L$ depend on the marked points, momenta and polarisation data, and take values in 1-forms in each integration variable. They are most simply defined to be  the sum over spin structures of the worldsheet correlator \cite{Adamo:2013tsa,Adamo:2015hoa} of $n$ type-II supergravity vertex operators on a genus $g$ Riemann surface $\Sigma^g$, then taken to the $g$-fold nodal limit $\Sigma_0^g$.  There is of course much work to be done to make such correlators explicit, but they strongly resemble those that arise in conventional string theory, which have been the object of intensive research. This is done in our context at four points and two loops in \cite{Adamo:2015hoa}.

Similar conjectures can be made for planar super Yang-Mills and for the analogue of bi-adjoint scalar amplitudes at all loops.  We should respectively replace one or both of the $\cI^R$ and $\cI^L$ by  the sum of all Parke-Taylor factors that are compatible with some given ordering of the external particles, but which also  run through all the loops, generalizing the one-loop case. Indeed, one can conceive of conjectures based on the ingredients of \cite{Cachazo:2014xea,Casali:2015vta} for the theories described there.

\section{Discussion}\label{sec:discussion}
\paragraph{Summary.} In giving the theory that underlies the CHY formulae for tree amplitudes, ambitwistor strings gave a route to conjectures for the extension of those formulae to loop amplitudes.  Being chiral string theories, ambitwistor strings potentially have more anomalies than conventional strings, but nevertheless the version appropriate to type II supergravity led to consistent proposals for amplitude formulae at one and two loops 
\cite{Adamo:2013tsa,Casali:2014hfa,Adamo:2015hoa}.  However, the other main ambitwistor string models would seem to have problems on the torus, either due to anomalies, or because the full ambitwistor string theories have unphysical modes associated with their gravity sectors that would propagate in the loops and corrupt for example a  pure Yang-Mills loop amplitude.  Furthermore, once on the torus, it is a moot point as to how much can be done with the formulae, requiring as they do, the full machinery of theta functions.  Issues such as the Schottky problem will make higher genus formulae difficult to write down explicitly.

In this chapter, we have seen that the conjectures of \cite{Adamo:2013tsa,Casali:2014hfa}, with the adjustment to the scattering equations as described in \cref{sec6:general}, are equivalent to much simpler conjectures on the Riemann sphere.  These formulae are now of the same complexity as the CHY tree-level scattering formulae on the Riemann sphere with the addition of two marked points, corresponding to loop momenta insertions.  It is therefore possible to apply methods that have been developed at tree-level on the Riemann sphere here also at one-loop to both extend and prove the conjectures.

In particular, we were able to give strong evidence for 1-loop integrands in super Yang-Mills theory and supergravity, and prove the conjectures for integrands in non-supersymmetric Yang-Mills, gravity and bi-adjoint scalar theory. All these amplitudes share as a fundamental building block the {\it off-shell scattering equations}, and can be written in manifestly SL$(2,\C)$ invariant form (c.f. \cref{1-loopgeneral}),
\begin{equation}
 \cM^{(1)}= -\int d^d \ell \, \frac{1}{\ell^2} \int 
\frac{d\sigma_{\ell^+} d\sigma_{\ell^-} d^n\sigma}{\text{vol}\,G} \;\; \hat{\cI}
\;\bar\delta(\text{Res}_{\sigma_{\ell^+}} S) \bar\delta(\text{Res}_{\sigma_{\ell^-}} S) \prod_i{} \;\bar\delta(\mathrm{Res}_{\sigma_i}S)\,, 
\end{equation}
In \cref{table6:summary}, we have listed the integrands for the most relevant theories described here.

\begin{table}[ht]
\begin{center}
 \begin{tabular}{l!{\color{light-gray-table}\vrule}c!{\color{light-gray-table}\vrule}c}\hline
  Theory & Integrand $\cI^L$ & Integrand $\cI^R$\\ \hline
  supergravity & $\hat{\cI}_{0}$ & $\hat{\cI}_{0}$\\
  super Yang-Mills & $\hat{\cI}^{cPT}$ & $\hat{\cI}_{0}$\\ \hline
  NS-NS gravity & $\hat{\cI}_{d}^{\text{NS}}$ & $\hat{\cI}_{d}^{\text{NS}}$\\
  Einstein gravity & $\hat{\cI}_{\alpha_d}^{\text{NS}}$ & $\hat{\cI}_{\tilde{\alpha}_d}^{\text{NS}}$\\
  Yang-Mills & $\hat{\cI}^{cPT}$ & $\hat{\cI}_{d}^{\text{NS}}$\\
  Bi-adjoint scalar &  $\hat{\cI}^{cPT}$ &  $\hat{\cI}^{cPT}$\\ \hline
 \end{tabular}
 \caption{Integrands for theories at one loop. \label{table6:summary}}
 \end{center}
\end{table}

\noindent
Here, $\cI^{cPT}$ denotes a Parke-Taylor factor running through the loop, and the NS integrand and its supersymmetric counterpart are given by
\begin{subequations}
\begin{align}
&\hat{\cI}_{0}=\hat{\cI}_{d}^{\text{NS}}-\frac{c_d}{\sigma_{\ell^+\ell^-}^2}\pf(M_2)\,,\\
&\hat{\cI}_{d}^{\text{NS}}=\frac{1}{\sigma_{\ell^+\ell^-}}\sum_r\pf'(M_{d}^r)\,,\\
& \hat{\cI}^{cPT}=\sum_{\rho\in S_n}\mathcal{C}_{n+2}(\sigma_{\ell^+},\rho,\sigma_{\ell^-})\,.
\end{align}
\end{subequations}

\paragraph{Discussion and Outlook.} Fixing the marked points associated to the loop momenta, $\sigma_{\ell^\pm}$, gives rise to $(n-1)!-2(n-2)!$ solutions to the scattering equations for an $n$ particle amplitude at one-loop.   This counting was more clearly understood in \cite{He:2015yua}:  the $(n-1)!$ is the number of solutions that one obtains for $n+2$ points on the sphere with arbitrary null momenta at $n$ points, and off-shell momenta at the remaining two points (all summing to zero).  If one takes the forward limit in which  the two off-shell momenta become equal and opposite, one finds that there are two classes of  $(n-2)!$ degenerate solutions, in which the  two loop insertion points  come together (or alternatively all the other points come together); the two classes are distinguished by the rate at which the points come together as the forward limit is taken.   In the forward limit which corresponds to the one-loop amplitude, the most degenerate  class no longer applies but, in general, we should consider the other, leaving $(n-1)-(n-2)!$ solutions.  For amplitudes in supersymmetric Yang-Mills and gravity, these degenerate solutions give a vanishing contribution to the loop integrand. However, they do contribute in the case of the bi-adjoint scalar theory, as shown in \cite{He:2015yua}, and they also contribute in the cases of non-supersymmetric Yang-Mills and gravity presented in this chapter. 
 
As seen in \cref{sec:factorization}, the degenerate solutions do not contribute to the Q-cuts.  So,  to arrive at a loop integrand that computes the correct amplitude under dimensional regularisation, we  simply discard them in our proposed formulae also.  Having discarded these terms, our formulae will not necessarily give the integrand itself as a sum of Feynman diagrams.  In the bi-adjoint scalar theory for example, there will be terms that look like tree amplitudes with bubbles on each external leg that vanish under dimensional regularisation.  These terms are correctly computed if the degenerate solutions are included as shown by \cite{He:2015yua}.  It would be interesting to see if this persists for all our formulae as we have seen that they make sense on the degenerate solutions. 

It should also be possible to prove our one-loop formulae for supersymmetric theories via factorisation.  The  gap in our argument is that we do not have a good closed-form formula for the Ramond sector contributions at tree level, as would be required to prove factorisation.  Our  representation of the Ramond sector in the loop as the Pfaffian of $M_2$ should provide some hint as to how to do this, and further insights could possibly come from the pure spinor formalism.

Ideally, there should be no need to solve the scattering equations explicitly. The main result of this chapter, which relies on the use of a residue theorem to localise the modular parameter, was inspired by \cite{Dolan:2013isa}, where the tree-level CHY integrals were computed by successive applications of residue theorems, rather than by solving the scattering equations. The way forward is to use the map between integrals over the moduli space of the Riemann sphere and rational functions of the kinematic invariants, which is implicit in the scattering equations. Recently, there has been intense work on making this map more practical \cite{Cachazo:2015nwa,Baadsgaard:2015voa,Baadsgaard:2015ifa,Baadsgaard:2015hia,Kalousios:2013eca,Kalousios:2015fya,Cardona:2015eba,Cardona:2015ouc,Sogaard:2015dba,Huang:2015yka,Lam:2014tga,Lam:2015sqb,Dolan:2015iln}. We expect that this will make the use of our formulae much more efficient.\\

It was argued in \cref{sec:Loops_all-loop} that the scheme explored in detail at one-loop has a natural extension to all loops.  Similarly, the Q-cut formalism of \cite{Baadsgaard:2015twa} also has a natural extension to all loops.  It will be interesting to see  whether the factorisation strategy presented in this chapter can be extended to give a correspondence with the Q-cut formalism at higher loop order.  Obtaining better control of higher loop Pfaffians will be crucial for  using these ideas to understand gauge theories and gravity. A formulation as correlators on the Riemann sphere, as suggested by our introduction of $M_{\text{NS}}$, may play a key role.  

Proving this all-loop proposal would be of utmost interest, since it would effectively solve the theories in question, albeit perturbatively, by reformulating an $n$ particle $g$ loop amplitude as an object of the same complexity as an $n+2g$ particle tree amplitude. However, new features emerge already at two loops, and an extension to three loops becomes highly non-trivial. In particular, the two-loop scattering equations are manifestly off-shell, with coefficients $a_{rs}=1$. A resolution to these difficulties would be of high impact, with the potential to resolve the long-standing pressing question regarding the UV-behaviour of maximal supergravity.

\chapter{Conclusion}
I have demonstrated in this thesis both the wide-ranging impact of ambitwistor strings on the study of tree-level amplitudes and its new insights at loop-level. With the scattering equations as the backbone, there exists now a wide variety of formulae for scattering amplitudes derived from underlying ambitwistor string theories. The great number of formulae is due to the flexibility of the ambitwistor string approach, allowing both for different representations (see \cref{chapter4,chapter5} and \cite{Geyer:2014lca,Adamo:2014yya,Adamo:2015fwa,Geyer:2014fka,Berkovits:2013xba,Gomez:2013wza,Adamo:2015hoa}) and a variety of different theories (see \cref{chapter3} and \cite{Cachazo:2014nsa,Cachazo:2014xea,Ohmori:2015sha}). In four dimensions, the ambitwistor string not only leads to the simplest known expression of amplitudes for any degree of supersymmetry, but also makes contact with twistor strings for Yang-Mills and gravity, framing the ambitwistor string in arbitrary dimensions as the natural generalisation of the twistorial ideas of $d=4$.  Moreover, I have demonstrated the strength of the target space geometric structure in studying the asymptotic symmetries of the S-matrix and their relation to the low-energy behaviour of the theory, both in arbitrary dimensions and the twistorial representation in four-dimensional space-time.

At loop level, I have demonstrated that further simplifications occur due to the localisation on the scattering equations. In particular, a contour integration argument in the fundamental domain reduces the computationally challenging ambitwistor higher-genus expressions to simpler formulae on nodal Riemann spheres. These inherit both the flexibility and simplicity of tree-level amplitudes. I have shown this explicitly by constructing integrands for supersymmetric and non-supersymmetric  Einstein gravity and Yang-Mills theory, the former giving compelling evidence for the validity of the ambitwistor string on the elliptic curve, while the latter was proven systematically using factorisation properties. The proposal of the all-loop integrand concluding this chapter gives a small insight into where this research might be headed - if proven, it would extend this widely applicable framework to reduce higher-loop integrals to formulae on nodal Riemann spheres, of a complexity comparable to tree-amplitudes with two more particles for each loop.
\\

This thesis represents a small step towards a new understanding of quantum field theories. In this last paragraph, I will aim to set this into a wider context by giving a short overview over some recent spectacular advances originating in ambitwistor string theory, 
with a focus on long-term goals of research in this field.

One of the most intriguing features of the ambitwistor worldsheet models is their relation to string theory. Despite the similarity in structure, string theories depend - via the string length $\sqrt{\alpha'}$ - on an additional parameter, and the ambitwistor worldsheet models are more readily understood as holomorphic complexification of worldline formulations. In particular, they describe perturbative general relativity, and thus fail to make sense at high energies, whereas string theory as a theory of quantum gravity is well-defined. Another piece of this puzzle are the scattering equations - originally discovered in string theory, they feature most prominently in the ambitwistor models. 
Many recent insights into the mathematical structure of both string theory and ambitwistor strings provide an angle on this puzzle. Most noteworthy in this context is the work on the contours used to localize on the scattering equations \cite{Ohmori:2015sha,Witten:2010cx}, and recent progress on the twistorial origin of the pure spinor string \cite{Berkovits:2014aia,Berkovits:2015yra}. Moreover, ambitwistor strings have been shown \cite{Adamo:2014wea} to encode the full non-linear geometry of supergravity in the ambitwistor string current algebra, providing a key step towards understanding non-perturbative aspects from an ambitwistor string approach.
Research on the relation between ambitwistor models and string theory, as well as non-perturbative aspects, has the scope to not only transform our understanding of quantum field theories, but could also provide new insights into the geometric properties of string theory and its relation to physics. 

Ambitwistor strings are therefore a very active area of research, and I am looking forward to the insights, surprises and discoveries of the coming years.
\begin{acknowledgements}
First and foremost, I would like to thank my supervisor Lionel Mason, for everything I have learnt over the last few years and the joy of working with him, his mentorship, guidance and continuous encouragement and support. For slowly guiding me from theoretical physics into mathematical physics by talking about bundles and cohomology classes from day one. For dealing with my emails in the middle of the night without showing too much concern the next morning, and for providing a point of sanity in all our discussions with his reliable coffee breaks. Most of all though, for his endless enthusiasm, and sharing the ups and downs of research with me.\smallskip

I would also like to thank my collaborators, Ricardo Monteiro, Piotr Tourkine, Arthur Lipstein, Eduardo Casali and Kai Roehrig -- without whom this thesis would not be what it is -- for our long discussions sorting out both ideas and technical details, and for their patience with my over-excitement as well as my concerns. I have learnt a lot from and with them, and their insightful ideas, enthusiasm  and practical advice were an essential part of my time in Oxford.
\smallskip

Freddy Cachazo  introduced me to this area of research, and his enthusiasm and intuition are unparalleled. I am also thankful for spending four months of my DPhil working with him, many interesting discussions, his encouragement, and, maybe most widely applicable, learning to always `give it a shot'.
\smallskip

David Skinner, Tim Adamo and Tomasz Lukowski have always been there to listen and give advice, and I am grateful for many insightful comments, discussions and constructive criticisms. \smallskip

Finally, I would like to thank my parents and family, for their unconditional support and tireless proofreading.\smallskip

My DPhil in Oxford has been supported by an EPSRC DTA award and the Mathematical Prizes Fund.

 
 


\end{acknowledgements}


\appendix
\chapter{New Models}
\section{Correlators for \texorpdfstring{$S_{YM,\Psi}$}{S-YM,Psi}}\label{sec:ecomb-proof}
In this appendix, we provide the  proof of  theorem \ref{ym-correlator}. While several versions for Einstein-Yang-Mills theory are realised in \cref{chapter3}, we will demonstrate and prove the mechanism in the simplest setting containing all necessary ingredients, and comment on adaptations and restrictions afterwards. Concretely we use the action $S_{YM,\Psi}$describing a single free fermion $\rho^a $ and a generic level zero current $j^a$. The fields have the same OPEs as above, that is $j^a$ form a current algebra and $\rho^a$ are in the adjoint presentation of the $j$-algebra, i.e.
\begin{equation}\label{eqn:appendixA-OPEs}
\rho^a(\sigma) \rho^b(0) \sim \frac{1}{\sigma} \delta^{ab}  ~, \qquad j^a(\sigma) j^b(0) \sim \frac{1}{\sigma} f^{abc} j^c  ~, \qquad j^a(\sigma) \rho^b(0) \sim \frac{1}{\sigma} f^{abc} \rho^c  ~.
\end{equation}

The strategy for proving the theorem is as follows: both the tree-level CHY amplitude $\cM_{\text{CHY}}(g,h)$ and the world-sheet correlator $\cM(g,h)$ derived from the ambitwistor string are sums of simple terms. The sum in $\cM_{\text{CHY}}(g,h)$ is over trace sectors as well as a choice of gluon labels, while the sum in $\cM(g,h)$ is simply the Wick expansion of the expectation value, schematically expressed by
\begin{equation}
\mathcal{A} = \sum_{x \in X} \mathcal{A}(x) \qquad \qquad \text{and} \qquad \qquad A = \sum_{y\in Y} A(y)\,.
\end{equation}
Demonstrating that $x \in X \Rightarrow x \in Y$ and $y \in Y \Rightarrow y \in X$, as well as uniqueness of each element leads to $X=Y$. Along the way we will see that $\cM(x) = \cM_{\text{CHY}}(x)$, hence establishing $\cM = \cM_{\text{CHY}}$.

To clarify the structure of the discussion we firstly only insert integrated vertex operators on the world-sheet -- which corresponds to considering the full Pfaffian in the CHY formula -- keeping in mind that to get a non-vanishing result we need to go over to the reduced Pfaffian. That step will be taken at the end.

Let us examine the correlation function of two types of operators,
\begin{equation}\label{eqn:appendixA-one}
\mathcal{O}^{gl} = k \cdot \Psi ~ t \cdot \rho + t \cdot j \qquad \text{and} \qquad  \mathcal{O}^{gr} = k \cdot \Psi ~ \epsilon \cdot \Psi + \epsilon \cdot P  ~,
\end{equation}
for (one half of) the gluon and graviton integrated vertex operators respectively. The claim is that the ambitwistor string string worldsheet correlator
\begin{equation}\label{eqn:appendixA-two}
\cM( g , h) := \left\langle ~ \prod_{a \in g} \mathcal{O}^{gl}_a  ~ ~ \prod_{b \in h}  \mathcal{O}^{gr}_b  ~ \right\rangle
\end{equation}
where $g$ and $h$ are the sets containing the gluon and graviton labels respectively, is equal to (one part of the CHY representation of) the tree-level amplitude
\begin{equation}\label{eqn:appendixA-three}
\cM_{\text{CHY}} = \sum_{\text{trace sectors}} \mathcal{C}_1 \cdots \mathcal{C}_m ~ \pf \Pi ~,
\end{equation}
where the sum ranges over all possible trace sectors, including a sum over the number of traces $m = 1, \cdots , [ |g| /2 ]$. The matrix $\Pi$, defined in \cite{Cachazo:2014xea}, of course depends on this trace sector.

The main step in going between the representations \cref{eqn:appendixA-two} and \cref{eqn:appendixA-three} is the identity \cref{eqn:Comb-Trace-relation}, which we repeat here for the readers convenience
\begin{equation}\label{eqn:appendixA-four}
\sigma_{ab} ~ \mathcal{C} (T) = \mathcal{K}( b,a | T) ~,
\end{equation}
where the `comb structure' $\mathcal{K}$ was defined in the main text. Using the anti-symmetry and multi-linearity of the Pfaffian, \cref{eqn:appendixA-three} can be recast as
\begin{equation}\label{eqn:appendixA-five}
\cM_{\text{CHY}} = \sum_{\substack{\text{trace}\\ \text{sectors}}}  \sum_{\substack{a_1 < b_1 \in T_1 \\ \cdots \\ a_m < b_m \in T_m }}  \mathcal{K}(a_1 ,b_1 | T_1) \cdots  \mathcal{K}(a_m ,b_m | T_m)  ~~ \pf M( h , \{a_i \} , \{b_i\} | h ) ~.
\end{equation}
This is the representation of the amplitude which the world-sheet correlator \cref{eqn:appendixA-two} will land us on.

Let us now evaluate the correlator $\cM(g,h)$. We will see that it gives rise to a multiple sum over terms, which turn out to be the same that \cref{eqn:appendixA-five} sums over. The first step is to expand the product of all vertex operators $\mathcal{O}^{gl}_a$ into a sum over $2^{|g|}$ terms corresponding to factors of $k\Psi \rho$ or a $j$ for each gluon. Labelling the set of gluons with $k\Psi \rho$ insertions by $e$, the path integral over the $\Psi $ field can be performed for each term individually, leading to $\pf M( h , e | h )$. Since $\Psi$ is fermionic, the path integral vanishes unless $|e|=2m$ is even. The correlator $\cM$ is now a sum over ways of partitioning $g$ into $e$ and $g-e$, with the condition that $|e|$ be even, and each term is given by\footnote{From now onwards we omit the colour structure and abbreviate $t_a \cdot \rho (\sigma_a) = \rho_a$ and $t_a \cdot j (\sigma_a) = j_a$.}
\begin{equation}\label{eqn:appendixA-seven}
\left\langle ~ \prod _{a \in e} \rho_a  ~ \prod_{a \in g-e} j_a ~ \right\rangle ~~ \pf M( h , e | h ) ~.
\end{equation}
The remaining worldsheet correlator evidently gives rise to the product of $\mathcal{K}$s and the remaining sum over partitions. Care is needed due to the fermionic nature of the $\rho$ insertions, the resulting factors of $(-1)$ are absorbed into a reordering of the rows and columns of $M$:
\begin{equation}
\sum ~  \mathcal{K}(a_1 , b_1 | T_1) \cdots \mathcal{K}(a_m , b_m | T_m) ~~ \pf M( h , \{ a_1 ,b_1, \cdots , a_m,b_m\} | h) ~,
\end{equation}
which is precisely the summand appearing in the full space-time amplitude. We repeat that Wick expansion ensures both that every possible configuration is summed over exactly once. 


We have shown that the expressions $\cM$ and $\cM_{\text{CHY}}$ are sums over the same simple terms, involving $\mathcal{K}$s and the corresponding $\pf M$. To clarify the differences, the sum in the correlator $\cM$ ranges over different ways of choosing $m$ pairs  from $g$ and different ways of forming $m$ unordered sets $T_i$ from the labels left over, as well as the sum over $m$ - indicated above by the set $X$. On the other hand, the sum in $\cM_{\text{CHY}}$ ranges over ways of splitting the labels $g$ into $m$ unordered subsets $T_i$ and picking a pair from each subset, as well as the sum over $m$ - corresponding above to $Y$. The set of these choices is $Y$.  Indeed, these sums are actually identical: each term in $\cM$ has a counterpart in $\cM_{\text{CHY}}$ and vice versa. Moreover, Wick's theorem and the construction of terms in the CHY representation guarantees that each element is unique, and thus
\begin{equation}
 \cM=\cM_{\text{CHY}}\,.
\end{equation}

\subsection{The reduced Pfaffian}

 The Pfaffian we discussed so far actually vanishes for physical systems, i.e. when momentum conservation, gauge invariance and the scattering equations hold. Hence it is replaced by the reduced Pfaffian  $\pfr  \Pi$ defined in either of the following equivalent ways
 \begin{equation}
\label{eq:redpf}
 \pfr \Pi := \pf \Pi_{i,j^\prime} = \frac{(-)^a}{\sigma_a} \pf \Pi_{a,i}  = - \frac{(-)^a}{\sigma_a} \pf \Pi_{a,j^\prime} = \frac{(-)^{a+b}}{\sigma_{ab}} \pf \Pi_{a,b}
 \end{equation}
where $a,b$ label gravitons, with the restriction to not remove any row/column of the matrix $B$, and the $i, j^\prime$ label traces. In the ambitwistor string, this corresponds to the observation that BRST invariance ensures invariance of the amplitude under the choice of fixed vertex operators.  Hence, if there are at least two gravitons and arbitrarily many gluons, the amplitude must be equal to the CHY formula. The validity of \cref{eq:redpf} can also be shown explicitly in the case of fixed vertex operators for two gluons or one gluon and one graviton.

\paragraph{Two gluons fixed.}
Denote the labels of the fixed gluon operators as $c,d$. With the reduced Pfaffian defined as
\begin{equation}
\pfr \Pi = \pf \Pi_{i,j^\prime}
\end{equation}
there are two cases, $j^\prime = i$ or $j^\prime \neq i$. In the first case the trace $T_i$ is totally removed from the Pfaffian and we can write
\begin{equation}
\cdots \, \mathcal{C}_i \, \cdots ~ \pf \Pi_{i,i^\prime} = \frac{1}{(d \, c)} \, \cdots  \,  \mathcal{K}(c,d | T_i) ~ \pf \Pi_{i,i^\prime}  ~,
\end{equation}
with the gluons $c,d$ being members of the trace $T_i$. The factor $\frac{1}{(d \, c)}$ fits into the interpretation of \cite{Mason:2013sva} as ghost field correlator. Note that there is no sum over choices of pairs in $T_i$, instead the comb $\mathcal{K}$ appears with fixed start/end points, corresponding to the insertion of fixed vertex operators for the gluons $c,d$.

In the second case ($j^\prime \neq i$), name the traces such that $c \in T_1$ and $d \in T_2$. Now each term in the expansion of the worldsheet correlator will look like (omitting all irrelevant factors)
\begin{equation}
\begin{aligned}
  \frac{1}{\sigma_{cd}} \sum_{\substack{a\in T_1 \\ b \in T_2}} & \mathcal{K}(c,a | T_1)      \mathcal{K}(d,b | T_2) ~ \pf ( a , b , \cdots  )   =  \mathcal{C}(T_1) \, \mathcal{C}(T_2) \, \sum_{\substack{a\in T_1 \\ b \in T_2}}  \frac{\sigma_{ac} \sigma_{bd} }{\sigma_{cd} } ~ \pf ( a , b , \cdots  ) \\
  &    =  \mathcal{C}(T_1) \, \mathcal{C}(T_2) \, \sum_{a \in T_1 }  \frac{\sigma_{ac} }{\sigma_{cd} } ~ \pf ( a , \sum _{b \in T_2 } \sigma_{bd} \,  b , \cdots  )    \\ 
  &  =  \mathcal{C}(T_1) \, \mathcal{C}(T_2) \, \sum_{a \in T_1 }  \frac{\sigma_{ac} }{\sigma_{cd} } ~ \pf ( a , - \sum _{b \in T_1 } \sigma_{bd} \,  b , \cdots  )  \\  
& =  \mathcal{C}(T_1) \, \mathcal{C}(T_2) \, \sum_{ a< b \in T_1 }  \sigma_{ba}  ~ \pf ( a , b , \cdots  )    \equiv  \mathcal{C}(T_1) \, \mathcal{C}(T_2) ~ \pf \Pi_{2,2^\prime} ~.
\end{aligned}
\end{equation}
Note that we had to use the scattering equations and the antisymmetry of the Pfaffian to arrive at the final result.

\paragraph{One gluon, one graviton fixed}
The computation for fixing one gluon and one graviton vertex operator is largely analogous to the previous one. Moreover, BRST invariance guarantees that the final result will be as desired. Let us nevertheless demonstrate the necessary manipulations. Denote the fixed gluon by $c$, with $c \in T_1$, and the fixed graviton by $m$
\begin{equation}
\begin{aligned}
 \frac{1}{\sigma_{mc}} \sum_{a\in T_1} \mathcal{K}(a,c |T_1) &~ \pf ( a , \cdots , \check{m} , \cdots )  = \mathcal{C}(T_1) \sum_{a\in T_1 } \frac{\sigma_{ca}}{\sigma_{mc}} ~ \pf ( a , \cdots , \check{m} , \cdots ) \\
& = \mathcal{C}(T_1) \frac{1}{\sigma_{mc}}   ~ \pf (\sum_{a\in T_1 }  \sigma_{ca} \, a , \cdots , \check{m} , \cdots ) \\
&  = \mathcal{C}(T_1) \frac{1}{\sigma_{mc}}   ~ \pf ( - \sigma_{cm} \, m , \cdots , \check{m} , \cdots )   \\
& = \mathcal{C}(T_1)   ~ \pf (   m , \cdots , \check{m} , \cdots )   \equiv \mathcal{C}(T_1)   ~ \pf \Pi _{1,1^\prime} ~.
\end{aligned}
\end{equation}
Again we had to make use of the scattering equations.

\subsection{Adaption and restriction}
As mentioned in the text, it seems not to be possible to find a level zero current via descent from $\rho $ in a consistent way. Hence, the main text contains an adaption of the system discussed above, using two fermions $\rho^a , \tilde{\rho}^a$, conjugate to each other. Via the descent, $\rho^a$ gives rise to $j^a$ while $\tilde{\rho}^a$ gives rise to $\tilde{j}^a$. The OPEs between the currents and the fields are
\begin{equation}
\begin{aligned}
& \rho^a(z) j^b(0) \sim \frac{1}{z} f^{abc} \rho^c ~, ~~ \tilde{\rho}^a(z) j^b(0) \sim \frac{1}{z} f^{abc} \tilde{\rho}^c ~,~~  \\
&\rho^a(z) \tilde{j}^b(0) \sim \frac{1}{z} f^{abc} \tilde{\rho}^c ~, ~~ \tilde{\rho}^a(z) \tilde{j}^c(0) \sim 0 ~.
\end{aligned}
\end{equation}
We shall now examine the correlators of this system.

First, note that by taking the fixed vertex operators to be $(\rho + \tilde{\rho} )$, the discussion above would carry over verbatim. There is a crucial difference however: the current appearing in the associated integrated vertex operator does not lead to $\mathcal{K}(a , b | T)$, but instead gives
\begin{equation}
 \left\langle ~ (\rho_{a} + \tilde{\rho}_a ) \,(\rho_{b} + \tilde{\rho}_b )  ~ \prod_{c \in T} ( j_{c} + \tilde{j}_c) ~\right\rangle  = ~ |T| \,  ~ \mathcal{K}(a , b | T) ~.
\end{equation}
So each contribution from a different trace sector will come with a different prefactor $\prod_i^m |T_i|$, spoiling the relative coefficient between partial amplitudes. As the prefactor depends on the given partition of particles into traces, it cannot be removed by a field rescaling. The origin of this factor can be understood by simply counting the ways in which a full comb can be generated. Observe that each contraction must have exactly one insertion of $\tilde{j}$ or $\tilde{\rho}$  independent of the length $n$ of the chain, while there are $n-1$ insertions of $j$ or $\rho$. Summing over the possible positions of the tilded operator in the chain gives rise to the over-counting by $|T|$. Note that each contraction contributes exactly the same analytical and colour structure.

Having understood the (non--trivial) origin of the factor $|T|$, the remainder of the discussion, showing how to remove it, follows trivially. Denote $v$ the vertex operator containing $\rho$ and $j$ and  $\tilde{v} $ the one containing $\tilde{\rho}$ and $\tilde{f}$, either integrated or fixed. It is now clear that choosing to insert $\tilde{v}$ at $m$ of the gluon punctures and $v$ at the others will give rise (following the general discussion above) to the complete color ordered partial amplitude with $m$ traces
\begin{equation}
\mathcal{C}(T_1) \cdots \mathcal{C}(T_m) ~ \pfr \Pi ~,
\end{equation}
which concludes the discussion.
\chapter{Details of the correlators with soft limits}

Integrated graviton vertex operators implement symplectic diffeomorphisms of $T^*\scri$ in the worldsheet ambitwistor string theory. We have seen explicitly how these vertex operators can be expanded in powers of the soft momentum, and have identified the leading and subleading terms as generators of supertranslations and superrotations on $\scri$. Analogous results hold for Yang-Mills theory. In this appendix, we deduce the associated Ward identities, both in the $d$ dimensional model and the four dimensional twistorial model, from the worldsheet CFT of the ambitwistor string.  In particular, we compute correlators with insertions of supertranslation/superrotation generators and obtain the leading/subleading terms in the soft theorems for Yang-Mills and gravity. 

\section{Yang-Mills soft limits in \texorpdfstring{$d$}{d}-dimensional model} \label{ymd}

\subsection*{Leading terms}

Let $\epsilon$ and $s$ be the polarization and momentum
of a soft gluon. If we expand the vertex operator in $s$, the leading
term corresponds to the generator of a singular gauge transformation that only
depends on $p$.  This is the gauge analogue of a  supertranslation and
we denote it by $\mathcal{V}_s^{ym,0}$:
\[
\mathcal{V}_s^{ym,0}=\frac{1}{2 \pi i}\oint d\sigma_{s}\frac{\epsilon\cdot
  p(\sigma_{s})}{s\cdot p\left(\sigma_{s}\right)}j(\sigma_{s})\,,
\]
where $j(\sigma_{s})$ is the worldsheet current algebra contracted
with an
element of the corresponding Lie algebra. Since we are dealing with
color-stripped amplitudes, we will leave out generators of the Lie
algebra and simply take the single trace term when we take the
correlation function. 

Consider the correlator of a soft gluon with $n$ other gluons. This is given by
\begin{equation*}
\left\langle \mathcal{V}_{1}...\mathcal{V}_{n} \mathcal{V}_s^{ym,0} \right\rangle =\frac{1}{2 \pi i}\sum_{j=1}^{n}\left\langle \mathcal{V}_{1}...\mathcal{V}_{n}\oint_{\left|\sigma_{s}-\sigma_{j}\right|<\epsilon}d\sigma_{s}\frac{\left(\sigma_{n}-\sigma_{1}\right)}{\left(\sigma_{s}-\sigma_{1}\right)\left(\sigma_{n}-\sigma_{s}\right)}\frac{\epsilon\cdot p(\sigma_{s})}{s\cdot p\left(\sigma_{s}\right)}\right\rangle\,, \label{eq:ym}
\end{equation*}
where $\epsilon \rightarrow 0$. We have used \eqref{p} and the single trace term in the current correlator to obtain a
Parke-Taylor denominator from which we have extracted the soft term.
As the soft gluon vertex operator approaches
one of the other vertex operators, we have 
\begin{equation}
\lim_{\sigma_{s}\rightarrow\sigma_{j}}\frac{\epsilon\cdot p(\sigma_{s})}{s\cdot p\left(\sigma_{s}\right)}=\frac{\epsilon\cdot k_{j}}{s\cdot k_{j}}\,.
\end{equation}
Plugging this into equation \eqref{eq:ym} and performing the contour
integral finally gives the leading order contribution to the soft limit \cref{eq4:Ward-YM},
\begin{equation}
\left\langle \mathcal{V}_{1}...\mathcal{V}_{n} \mathcal{V}_s^{ym,0}\right\rangle =\left(\frac{\epsilon\cdot k_{1}}{s\cdot k_{1}}-\frac{\epsilon\cdot k_{n}}{s\cdot k_{n}}\right)\left\langle \mathcal{V}_{1}...\mathcal{V}_{n}\right\rangle .
\end{equation}

\subsection*{Subleading terms}

Expanding the vertex operator further in $s$, the gauge analogue of
the superrotation generator corresponds the terms linear in $s$ 
\[
\mathcal{V}_s^{ym,1}=Q_R^{orbit}+Q_R^{spin}
\]
where 
\begin{subequations}
\begin{align}
Q_{R}^{orbit}&=\frac{1}{2\pi i}\oint d\sigma_{s}\frac{iq(\sigma_{s})\cdot s\,\epsilon\cdot p(\sigma_{s})}{s\cdot p(\sigma_{s})}j(\sigma_{s})\,,\\
Q_{R}^{spin}&=\frac{1}{2\pi i}\oint
d\sigma_{s}\frac{\epsilon\cdot\Psi(\sigma_{s})s\cdot\Psi(\sigma_{s})}{s\cdot
  p(\sigma_{s})}j(\sigma_{s})\, .
\end{align}
\end{subequations}
Let's compute the correlator
of $\mathcal{V}_s^{ym,1}$ with $n$ other vertex operators. If we focus only on
the delta functions in the other vertex operators, we can neglect
$Q_{R}^{spin}$, since the delta functions do not depend on fermionic
fields. Hence, we only need the following OPE: 
\begin{equation}
is\cdot v(\sigma_{s})\bar{\delta}\left(k_{j}\cdot P(\sigma_{j})\right)=\frac{k_{j}\cdot s}{\sigma_{s}-\sigma_{j}}\bar{\delta}^{(1)}\left(k_{j}\cdot P(\sigma_{j})\right)+...\label{eq:ope2}
\end{equation}
where $\bar{\delta}^{(n)}(x)=\left(\frac{\partial}{\partial x}\right)^{n}\left(x^{-1}\right)$, which follows from \eqref{eq:ope}. Focusing on the delta functions
of the vertex operators and using the above OPE, one easily finds
that 
\[
\left\langle \mathcal{V}_{1}...\mathcal{V}_{n}Q_{R}^{orbit}\right\rangle =\frac{1}{2\pi i}\int d^{2n}\sigma\oint d\sigma_{s}\frac{\epsilon\cdot P\left(\sigma_{s}\right)}{s\cdot P(\sigma_{s})}\frac{\sigma_{n1}}{\sigma_{s1}\sigma_{ns}}\sum_{j=1}^{n}\frac{s\cdot k_{j}}{\sigma_{sj}}\bar{\delta}_{j}^{(1)}\Pi_{a=1,a\neq j}^{n}\bar{\delta}_{a}I_{n}\,,
\]
where $\sigma_{ij}=\sigma_{i}-\sigma_{j}$, $\bar{\delta}_{j}=\bar{\delta}\left(k_{j}\cdot P(\sigma_{j})\right)$,
and $I_{n}$ indicates that the remainder of the integrand does not depend
on $\sigma_{s}$. Note that this integral is precisely equation (19)
of \cite{Schwab:2014xua}. Following the calculations of that paper, we can easily see that this will indeed correspond to the subleading soft limit terms $S^{(1)}$, with the derivatives taken to act exclusively on the scattering equations obtained from the momentum eigenstates in the vertex operators.\\

To obtain the full subleading soft factors, we will have to include the contributions from the correlation function $\left\langle \mathcal{V}_{1}...\mathcal{V}_{n}Q_{R}^{orbit}\right\rangle$, as well as additional contributions from $Q_R^{spin}$. In particular, we find that
\[
\begin{split}
 &\left\langle \mathcal{V}_{1}...\mathcal{V}_{n}Q_{R}^{orbit}\right\rangle \\
 &\qquad=\frac{1}{2\pi i}\int \frac{d^{n}\sigma}{\vol\, \SL(2,\C)}\oint d\sigma_{s}\frac{\epsilon\cdot P\left(\sigma_{s}\right)}{s\cdot P(\sigma_{s})}\frac{\sigma_{n1}}{\sigma_{s1}\sigma_{ns}}\\
 &\qquad\qquad\left(\sum_{j=1}^{n}\frac{s\cdot k_{j}}{\sigma_{sj}}\bar{\delta}_{j}^{(1)}\Pi_{a=1,a\neq j}^{n}\bar{\delta}_{a}\frac{\text{Pf}(M^{(n)})}{\prod_{b}\sigma_{b,b+1}}+\sum_{a=1}^n \frac{s\cdot \epsilon_a}{\sigma_{sa}}\Pi_b'\bar{\delta}_b\frac{\text{Pf}({M^{(n)}}^{a,a+n}_{a,a+n})}{\prod_{b}\sigma_{b,b+1}}\right)\,,
\end{split}
\]
where we denote the CHY matrix obtained from $n$ vertex operator insertions by $M^{(n)}$. Note especially that this does not contain any data of the soft gluon. As mentioned above, additional contributions to the orbital part of the subleading soft limit (in addition to the spin contribution) will originate from the correlation function involving $Q_R^{spin}$;
\[
 \begin{split}
  &\left\langle \mathcal{V}_{1}...\mathcal{V}_{n}Q_{R}^{spin}\right\rangle \\
 &\qquad=\frac{1}{2\pi i}\int \frac{d^{n}\sigma}{\vol \,\SL(2,\C)}\frac{1}{\prod_{b}\sigma_{b,b+1}}\oint d\sigma_{s}\frac1{s\cdot P(\sigma_{s})}\frac{\sigma_{n1}}{\sigma_{s1}\sigma_{ns}}\\
 &\qquad\qquad\sum_{a,b}(-1)^{a+b}\Bigg(\frac{\epsilon\cdot k_a}{\sigma_{sa}}\frac{s\cdot\epsilon_b}{\sigma_{sb}}\text{Pf}({M^{(n)}}^{a,b+n}_{a,b+n})
 -\frac{s\cdot k_a}{\sigma_{sa}}\frac{\epsilon\cdot \epsilon_b}{\sigma_{sb}}\text{Pf}({M^{(n)}}^{a,b+n}_{a,b+n})\\
 &\qquad\qquad+\frac{\epsilon\cdot k_a}{\sigma_{sa}}\frac{s\cdot k_b}{\sigma_{sb}}\text{Pf}({M^{(n)}}^{a,b}_{a,b})-\frac{\epsilon\cdot \epsilon_a}{\sigma_{sa}}\frac{s\cdot\epsilon_b}{\sigma_{sb}}\text{Pf}({M^{(n)}}^{a+n,b+n}_{a+n,b+n})\Bigg)\,.
 \end{split}
\]
A closer look at the structure and origin of these terms already indicates how to match them to the contributions to the subleading soft limits found in \cite{Schwab:2014xua}. Recall from the original ambitwistor string \cite{Mason:2013sva} that in the correlation functions, the fermionic fields $\Psi$ give rise to the Pfaffians, with the diagonal terms $C_{aa}$ coming from the contributions $\epsilon\cdot P(\sigma)$. An insertion of $Q_{R}^{orbit}$ will therefore contribute the subleading soft limits, where the derivative is taken to act on the scattering equations, as well as an additional term due to the appearance of the soft gluon in the diagonal terms of the matrix $C$. The charge $Q_{R}^{spin}$, on the other hand, will give the remaining contributions of the soft particle in the Pfaffian, as well as the spin contribution $J_{spin,a}^{\mu\nu}=\epsilon_{a}^{[\mu} k_{a}^{\nu]}$, stemming from the double contractions where both soft gluon $\Psi_s$ fields are contracted to the fields $\Psi_a$ of {\it one} external gluon $a$. Combining these terms and following the manipulations described in \cite{Schwab:2014xua}, one then finds the subleading soft limit \cref{eq4:Ward-YM-sub},
\begin{equation}
  \left\langle \mathcal{V}_{1}\dots\mathcal{V}_{n}\mathcal{V}_s^{ym,1}\right\rangle= \left(\frac{\epsilon_{\mu}s_{\nu}J_{1}^{\mu\nu}}{s\cdot k_{1}}-\frac{\epsilon_{\mu}s_{\nu}J_{n}^{\mu\nu}}{s\cdot k_{n}}\right)\left\langle \mathcal{V}_{1}...\mathcal{V}_{n}\right\rangle \,,
\end{equation}
where $J_a^{\mu\nu}=J_{orb,a}^{\mu\nu}+J_{spin,a}^{\mu\nu}$, with $J_{orb,a}^{\mu\nu}=k_{a}^{[\mu}\frac{\partial}{\partial k_{a,\nu]}}$ and $J_{spin,a}^{\mu\nu}=\epsilon_{a}^{[\mu} k_{a}^{\nu]}$.

\section{Gravity soft limits in \texorpdfstring{$d$}{d}-dimensional model} \label{gravd}

\subsection*{Leading terms}
For a soft graviton $s$, we are interested in computing the Ward identitiy associated to the leading order term $\cV_{s}^0$ in the soft expansion of the vertex operator. As we have seen above, this corresponds to a supertranslation on $\scri$. With one insertion of $\cV_{s}^0$, the correlator becomes
\begin{equation*}
\left\langle \mathcal{V}_{1}...\mathcal{V}_{n} \cV_{s}^0\right\rangle =\frac{1}{2\pi i}\sum_{j=1}^{n}\left\langle \mathcal{V}_{1}...\mathcal{V}_{n}\oint_{\left|\sigma_{s}-\sigma_{j}\right|<\epsilon}d\sigma_{s}\frac{(\epsilon\cdot p(\sigma_{s}))^2}{s\cdot p\left(\sigma_{s}\right)}\right\rangle\,, \label{eq:gr}
\end{equation*}
where $\epsilon \rightarrow 0$ and $p(\sigma)$ is given in equation \eqref{p}. When the soft graviton vertex operator approaches one of
the other vertex operators, from \eqref{p} we have 
\[
\lim_{\sigma_{s}\rightarrow\sigma_{j}}\frac{(\epsilon\cdot p(\sigma_{s}))^2}{s\cdot p\left(\sigma_{s}\right)}=\frac{(\epsilon\cdot k_{j})^2}{s\cdot k_{j}\left(\sigma_{s}-\sigma_{j}\right)}.
\]
Plugging this into equation \eqref{eq:gr} and performing the contour
integral yields the Weinberg soft graviton theorem,
\begin{equation}
\left\langle \mathcal{V}_{1}...\mathcal{V}_{n} \cV_{s}^0\right\rangle =\left(\sum_{j=1}^{n}\frac{(\epsilon\cdot k_j)^2}{s\cdot k_{j}}\right)\left\langle \mathcal{V}_{1}...\mathcal{V}_{n}\right\rangle. 
\end{equation}

\subsection*{Subleading terms}
Expanding the soft graviton vertex operator further in $s$, we obtain a term $\cV^1_s$ linear in $s$ which corresponds to the generator of a supertranslation on $\scri$. Note that $\cV^1_s$ is made out of $r_{\mu\nu}J^{\mu\nu}$ which breaks up into an orbital part $q^{[\mu}p^{\nu]}$ and spin part $\Psi_r^\mu\Psi_r^\nu$:
\[
\mathcal{V}_s^{1}=Q_R^{orbit}+Q_R^{spin},
\]
where the orbital and spin contributions are given by
\begin{subequations}
\begin{align}
Q_{R}^{orbit}&=\frac{1}{2\pi i}\oint d\sigma_{s}\frac{i \epsilon\cdot p(\sigma_s)^{[\mu}\epsilon^{\nu]} q(\sigma_{s})_\mu p(\sigma_{s})_\nu}{s\cdot p(\sigma_{s})}\,,\\
Q_{R}^{spin}&=\frac{1}{2\pi i}\oint d\sigma_{s}\frac{\epsilon\cdot p(\sigma_{s})s\cdot\Psi_{1}(\sigma_{s})\epsilon\cdot\Psi_{1}(\sigma_{s})+(1\leftrightarrow2)}{s\cdot p(\sigma_{s})}\,.
\end{align}
\end{subequations}
The correlation functions involving these vertex operators are computed using the OPE \eqref{eq:ope}. 
Related calculations have been performed in detail in \cite{Schwab:2014xua} to compute subleading soft
limits. There, the authors focus on the soft limits of the
delta functions in the CHY formulae, which contributes to
the orbital part of the subleading soft limit. The remainder of the orbital part and the spin part of the
subleading soft limit then comes from analysing the soft limits of
the Pfaffians. Similarly, when we compute the correlation functions
of $\mathcal{V}_s^{1}$ with other vertex operators, we will first focus on the contractions involving the delta
functions of the other vertex operators. This will allow us to make contact with the calculations in \cite{Schwab:2014xua} to demonstrate that $Q_{R}^{orbit}$ indeed generates the correct contributions to the orbital part
of the subleading soft limit. One can then show that $Q_{R}^{spin}$ generates the spin part of the subleading soft limit, as well as the missing contributions to the orbital part. $\mathcal{V}_s^{1}=Q_R^{orbit}+Q_R^{spin}$ will therefore generate the full subleading soft gluon or graviton contribution as discussed in \cite{Cachazo:2014fwa}.

To compute the correlator of $Q_{R}$ with $n$ other vertex operators, we will focus first only on the delta functions in the other vertex operators,
and neglect $Q_{R}^{spin}$. Furthermore, using equation \ref{eq:ope2},
one finds that
\[
\left\langle \mathcal{V}_{1}...\mathcal{V}_{n}Q_{R}^{orbit}\right\rangle =\frac{1}{2\pi i}\int d^{2n}\sigma\oint d\sigma_{s}\frac{\epsilon_{1}\cdot P(\sigma_{s})\epsilon_{2}\cdot P(\sigma_{s})}{s\cdot P(\sigma_{s})}\sum_{j=1}^{n}\frac{s\cdot k_{j}}{\sigma_{sj}}\bar{\delta}_{j}^{(1)}\Pi_{a=1,a\neq j}^{n}\bar{\delta}_{a}I_{n}\,,
\]
where we use the notation defined in the previous subsection. Note that this integral is precisely
equation 23 of \cite{Schwab:2014xua}. Again, the remaining correlation function,
\[
\left\langle \mathcal{V}_{1}...\mathcal{V}_{n}Q_{R}^{spin}\right\rangle\,,
\]
can be calculated along similar lines as in Yang-Mills, described in appendix \ref{ymd}. Following the manipulations outlined in \cite{Schwab:2014xua}, we find indeed the subleading soft graviton limit derived in \cite{Cachazo:2014fwa}
\begin{equation}
\left\langle \mathcal{V}_{1}...\mathcal{V}_{n}\mathcal{V}_s^{1}\right\rangle =\sum_{a=1}^{n}\frac{\epsilon_{\mu\nu}k_{a}^{\mu}s_{\lambda}J_{a}^{\lambda\nu}}{s\cdot k_{a}}\left\langle \mathcal{V}_{1}...\mathcal{V}_{n}\right\rangle \,,
\end{equation}
where $\epsilon^{\mu\nu}=\epsilon_{1}^{(\mu}\epsilon_{2}^{\nu)}$
and $J_{a}^{\mu\nu}$ was defined in appendix \ref{ymd}.

\section{Yang-Mills soft limits in the twistorial model} \label{ym4}

\subsection*{Leading terms}

The action of the worldsheet model for the ambitwistor string is based on the symplectic potential of $\A$, and the singular parts of the OPE of operators in the ambitwistor string is thus given by the Poisson structure on $\A=T^*\scri$. In calculating the soft limits in the twistorial model, the following OPE's of fields in the ambitwistor string will be useful:
\begin{equation}
\lambda_{\alpha}(z)\tilde{\mu}^{\beta}(w)=\frac{\delta_{\alpha}^{\beta}}{z-w}+...,\,\,\,\tilde{\lambda}_{\dot{\alpha}}(z)\mu^{\dot{\beta}}(w)=\frac{\delta_{\dot{\alpha}}^{\dot{\beta}}}{z-w}+...\label{eq:sope}
\end{equation}

Expanding an integrated gluon vertex operator in the soft momentum,
the leading term is given by

\[
\mathcal{V}^{ym,0}_s=\frac{1}{2\pi i}\oint d\sigma_{s}\frac{\left\langle \xi\lambda\left(\sigma_{s}\right)\right\rangle }{\left\langle \xi\,s\right\rangle \left\langle s\,\lambda\left(\sigma_{s}\right)\right\rangle }j\left(\sigma_{s}\right)\,,
\]
where $l=\lambda_{s}\tilde{\lambda}_{s}$ is the soft momentum,
$\xi_{\alpha}$ is a reference spinor, and 
\begin{equation}
\lambda(\sigma)=\sum_{i=1}^{k}\frac{s_{i}\lambda_{i}}{\sigma-\sigma_{i}}\,.\label{eq:lambdas}
\end{equation}
Let us compute the correlator of $\mathcal{V}^{ym,0}_s$ with $k$ negative helicity
vertex operator $\wt{\mathcal{V}}$ and $n-k$ positive helicity
vertex operators $\mathcal{V}$:
\begin{equation}
\left\langle \wt{\mathcal{V}}_{1}...\wt{\mathcal{V}}_{k}\mathcal{V}_{k+1}...\mathcal{V}_{n}\mathcal{V}^{ym,0}_s\right\rangle =\frac{1}{2\pi i}\frac{1}{\left\langle \xi\,s\right\rangle }\left\langle \wt{\mathcal{V}}_{1}...\wt{\mathcal{V}}_{k}\mathcal{V}_{k+1}...\mathcal{V}_{n}\oint d\sigma_{s}\frac{\sigma_{n1}}{\sigma_{ns}\sigma_{s1}}\frac{\left\langle \xi\,\lambda\left(\sigma_{s}\right)\right\rangle }{\left\langle s\,\lambda\left(\sigma_{s}\right)\right\rangle }\right\rangle \,.\label{eq:ymqt}
\end{equation}
Note from equation \ref{eq:lambdas} that
\[
\lim_{\sigma_{s}\rightarrow\sigma_{1}}\frac{\left\langle \xi\,\lambda\left(\sigma_{s}\right)\right\rangle }{\left\langle s\,\lambda\left(\sigma_{s}\right)\right\rangle }=\frac{\left\langle \xi\,1\right\rangle }{\left\langle s\,1\right\rangle }\,.
\]
Furthermore, on the support of the delta functions in $\mathcal{V}_{n}$,
we have 
\[
\lim_{\sigma_{s}\rightarrow\sigma_{n}}\frac{\left\langle \xi\,\lambda\left(\sigma_{s}\right)\right\rangle }{\left\langle s\,\lambda\left(\sigma_{s}\right)\right\rangle }=\frac{\left\langle \xi\,n\right\rangle }{\left\langle s\,n\right\rangle }\,.
\]
Hence, when we evaluate the contour integral in \eqref{eq:ymqt},
the residues at $\sigma_{s}=\sigma_{1}$ and $\sigma_{s}=\sigma_{n}$
give the soft graviton contribution to leading order
\[
\left\langle \wt{\mathcal{V}}_{1}...\wt{\mathcal{V}}_{k}\mathcal{V}_{k+1}...\mathcal{V}_{n}\mathcal{V}^{ym,0}_s\right\rangle =\frac{\left\langle 1\,n\right\rangle }{\left\langle s\,1\right\rangle \left\langle s\,n\right\rangle }\left\langle \wt{\mathcal{V}}_{1}...\wt{\mathcal{V}}_{k}\mathcal{V}_{k+1}...\mathcal{V}_{n}\right\rangle \,,
\]
where we have used the Schouten identity.

\subsection*{Subleading terms}

Expanding the gluon vertex operator further to first order in the soft momentum
gives
\[
\mathcal{V}^{ym,1}_s=\frac{1}{2\pi i}\oint d\sigma_{s}\frac{\left[\mu\left(\sigma_{s}\right)s\right]}{\left\langle s\,\lambda(\sigma_{s})\right\rangle }J\left(\sigma_{s}\right)\,.
\]
Note that there is subtlety in defining this operator, since the equations
of motion for the $\tilde{\lambda}$ field imply that $\mu=0$. On
the other hand, $\mu$ will have non-zero contractions with the $\tilde{\lambda}$
fields which appear in the delta functions of other vertex operators,
so correlation functions of $\mathcal{V}^{ym,1}_s$ will be non-vanishing. In particular,
from \eqref{eq:sope}, we see that 
\[
\left[\mu\left(\sigma_{s}\right)s\right]\bar{\delta}^{2}\left(\tilde{\lambda}_{i}-t_{i}\tilde{\lambda}\left(\sigma_{i}\right)\right)=\frac{1}{\sigma_{s}-\sigma_{i}}\tilde{\lambda}_{s}\cdot\frac{\partial}{\partial\tilde{\lambda}\left(\sigma_{i}\right)}\bar{\delta}^{2}\left(\tilde{\lambda}_{i}-t_{i}\tilde{\lambda}\left(\sigma_{i}\right)\right)+...
\]
where we omitted non-singular terms. 
The subleading contribution to the soft gluon will arise from the correlator of $\mathcal{V}^{ym,1}_s$ with $k$ negative helicity
vertex operator $\wt{\mathcal{V}}$ and $n-k$ positive helicity
vertex operators $\mathcal{V}$:
\begin{equation*}
\left\langle \tilde{\mathcal{V}}_{1}...\tilde{\mathcal{V}}_{k}\mathcal{V}_{k+1}...\mathcal{V}_{n}\mathcal{V}^{ym,1}_s\right\rangle =\frac{1}{2\pi i}\int d^{2n}\sigma\oint d\sigma_{s}\frac{1}{\left\langle s\,\lambda(\sigma_{s})\right\rangle }\frac{\sigma_{n1}}{\sigma_{ns}\sigma_{s1}}\sum_{i=1}^{k}\frac{1}{\sigma_{si}}\tilde{\lambda}_{s}\cdot\frac{\partial}{\partial\tilde{\lambda}\left(\sigma_{i}\right)}I_{n}\,,\label{eq:ymr}
\end{equation*}
where $I_{n}$ indicates that the rest of the integrand does not depend
on $\sigma_{s}$. Noting that
\[
\lim_{\sigma_{s}\rightarrow\sigma_{1}}\frac{1}{\left\langle s\,\lambda(\sigma_{s})\right\rangle }=\frac{\sigma_{s1}}{s_{1}\left\langle s\,1\right\rangle }\,,
\]
the residue at $\sigma_{s}=\sigma_{1}$ gives to
\[
\frac{1}{\left\langle s\,1\right\rangle }\tilde{\lambda}_{s}\cdot\frac{\partial}{\partial\tilde{\lambda}_{1}}\left\langle \wt{\mathcal{V}}_{1}...\wt{\mathcal{V}}_{k}\mathcal{V}_{k+1}...\mathcal{V}_{n}\right\rangle\, .
\]
Furthermore, the residue at $\sigma_{s}=\sigma_{n}$ corresponds to
\[
\int d^{2n}\sigma\frac{1}{\left\langle \lambda(\sigma_{n})\,s\right\rangle }\sum_{i=1}^{k}\frac{1}{\sigma_{ni}}\tilde{\lambda}_{s}\cdot\frac{\partial}{\partial\tilde{\lambda}\left(\sigma_{i}\right)}I_{n}=\frac{1}{\left\langle n\,s\right\rangle }\tilde{\lambda}_{s}\cdot\frac{\partial}{\partial\tilde{\lambda}_{n}}\left\langle \wt{\mathcal{V}}_{1}...\wt{\mathcal{V}}_{k}\mathcal{V}_{k+1}...\mathcal{V}_{n}\right\rangle \,,
\]
where we noted that on the suppport of the delta functions in $\mathcal{V}_{n}$, we have
$\left\langle \lambda(\sigma_{n})s\right\rangle =\left\langle n\,s\right\rangle /s_{n}$.
Hence, we find that the correlator in equation \ref{eq:ymr} reduces
to the subleading soft gluon contribution from \cite{Casali:2014xpa}
\[
\left\langle \tilde{\mathcal{V}}_{1}...\tilde{\mathcal{V}}_{k}\mathcal{V}_{k+1}...\mathcal{V}_{n}\mathcal{V}^{ym,1}_s\right\rangle =\left(\frac{1}{\left\langle s\,1\right\rangle }\tilde{\lambda}_{s}\cdot\frac{\partial}{\partial\tilde{\lambda}_{1}}+\frac{1}{\left\langle n\,s\right\rangle }\tilde{\lambda}_{s}\cdot\frac{\partial}{\partial\tilde{\lambda}_{n}}\right)\left\langle \tilde{\mathcal{V}}_{1}...\tilde{\mathcal{V}}_{k}\mathcal{V}_{k+1}...\mathcal{V}_{n}\right\rangle .
\]

\section{Gravity soft limits in the twistorial model} \label{grav4}

As we have seen above, the terms in the soft limit expansion of the integrated vertex operators for gravity correspond to generators for the symmetries of $\scri$. In particular we find generators of translations $\cV_s^0$ at leading order, and generators of superrotations $\cV_s^1$ at subleading order. By imposing the constraints \eqref{constraints}, the equations for the generators \eqref{4dgravV} can be simplified drastically. Moreover, in contraction with the vertex operators introduced above, the pieces $\rho_\alpha\frac{\p}{\p\lambda_\alpha}+\rho_A\frac{\p}{\p\chi_A}$ can be ignored, as there remains always at least one $\tilde\rho$ in one of the vertex operators, which causes the path integral to vanish. Keeping this in mind, the symmetry generators due to a soft graviton are given by
\begin{subequations}
\begin{align}
\cV_s^0&=\frac1{2\pi i}\oint    \frac{\la \xi \, \lambda(\sigma_s)\ra^2 [\tilde\lambda(\sigma_s) \, \tilde \lambda_s]}{\la \xi \,  \lambda_s\ra^2 \la \lambda_s\, \lambda (\sigma_s)\ra} \,, \\
 \cV_s^1&= \frac{1}{2\pi}\oint  \left( \frac{\la \xi \, \lambda(\sigma_s)\ra [\tilde\lambda(\sigma_s) \, \tilde \lambda_s]  [\mu(\sigma_s) \tilde\lambda_s]}{\la \xi \,  \lambda_s\ra \la \lambda_s\, \lambda (\sigma_s)\ra}+\frac{\la \xi\,\lambda(\sigma_s)\ra[\rho\,\tilde\lambda_s][\tilde\rho\,\tilde\lambda_s]}{\la\xi\,\lambda_s\ra\la\lambda_s\,\lambda(\sigma_s)\ra}\right) \,,\\
 \cV_s^2&= \frac1{2\pi i}\oint  \left(  \frac1{2}\frac{ [\tilde\lambda(\sigma_s) \, \tilde \lambda_s]  [\mu(\sigma_s) \tilde\lambda_s]^2}{ \la \lambda_s\, \lambda (\sigma_s)\ra}+\frac{[\rho\,\tilde\lambda_s][\tilde\rho\,\tilde\lambda_s][\mu(\sigma_s)\,\tilde\lambda_s]}{\la\lambda_s\,\lambda(\sigma_s)\ra}\right) \, .
\end{align}
\end{subequations}

\subsection*{Leading terms}

In particular, we can investigate the Ward identity of the first order contribution of an integrated vertex operator in the soft limit, which we have identified with a charge associated to superrotations. For a soft graviton $s$, the superrotation generator is then given by
\begin{equation}
\cV_s^0=\frac1{2\pi i}\oint \rd \sigma_s\frac{\la \xi \, \lambda(\sigma_s)\ra^2 [s\,\tilde\lambda(\sigma_s)]}{\la \xi \,  s\ra^2 \la s\, \lambda (\sigma_s)\ra}.
\end{equation}
We are interested in the Ward identity of
\begin{equation*}
 \left\langle\cV_1\dots \cV_n\cV_s^0\right\rangle\,,
\end{equation*}
for momentum eigenstates, where the equations of motion determine $\lambda(\sigma)$ and $\tilde\lambda(\sigma)$ to be
\begin{equation*}
 \lambda(\sigma)=\sum_{i=1}^k \frac{s_i\lambda_i}{\sigma-\sigma_i}\,,\qquad \tilde\lambda(\sigma)=\sum_{p=k+1}^n \frac{s_p\tilde\lambda_p}{\sigma-\sigma_p}\,.
\end{equation*}
Recall that, from the form of $\lambda(\sigma_s)$ and on the support of the delta-functions, which will eventually be interpreted as the scattering equations, the limit 
\begin{equation*}
 \lim_{\sigma_s\rightarrow\sigma_a}\frac{\la \xi\,\lambda(\sigma_s)\ra}{\la s\,\lambda(\sigma_s)\ra}=\frac{\la\xi\,a\ra}{\la s\,a\ra}\,,\qquad a\in\{1,\dots,n\}\,.
\end{equation*}
Using the residue theorem and the support of the remaining scattering equations, the soft graviton Ward identity then takes the form
\begin{equation}
 \left\langle\cV_1\dots \cV_n \cV_s^0\right\rangle=\sum_{a=1}^n\frac{[as]\la \xi\,a\ra^2}{\la a\,s\ra\la \xi\, s\ra^2}\left\langle\cV_1\dots \cV_n\right\rangle\,,
\end{equation}
which can be identified straightforwardly as the soft graviton contribution. The soft graviton term thus arises from a specific charge generating supertranslation, which can be manifestly identified with the leading order expansion of an insertion of a soft graviton.\\

\subsection*{Subleading terms}

Expanding the integrated graviton vertex operator to first order in the soft momentum $s$ defines a superrotation,
\begin{equation}
\cV_s^1=\frac{1}{2\pi}\oint  \left( \frac{\la \xi \, \lambda(\sigma_s)\ra [\tilde\lambda(\sigma_s) \, \tilde \lambda_s]  [\mu(\sigma_s) \tilde\lambda_s]}{\la \xi \,  \lambda_s\ra \la \lambda_s\, \lambda (\sigma_s)\ra}+\frac{\la \xi\,\lambda(\sigma_s)\ra[\rho\,\tilde\lambda_s][\tilde\rho\,\tilde\lambda_s]}{\la\xi\,\lambda_s\ra\la\lambda_s\,\lambda(\sigma_s)\ra}\right) \,.
\end{equation}
Again, we can investigate the `Ward identity' associated to this superrotation, where we insert $\cV_s^1$ in a correlation function of graviton vertex operators,
\begin{equation}
 \left\langle \tilde{\mathcal{V}}_{1}...\tilde{\mathcal{V}}_{k}\mathcal{V}_{k+1}...\mathcal{V}_{n}\cV_s^1\right\rangle\,.
\end{equation}
Using
\[
\left[\mu\left(\sigma_{s}\right)s\right]\bar{\delta}^{2}\left(\tilde{\lambda}_{i}-s_{i}\tilde{\lambda}\left(\sigma_{i}\right)\right)=\frac{1}{\sigma_{s}-\sigma_{i}}\tilde{\lambda}_{s}\cdot\frac{\partial}{\partial\tilde{\lambda}\left(\sigma_{i}\right)}\bar{\delta}^{2}\left(\tilde{\lambda}_{i}-s_{i}\tilde{\lambda}\left(\sigma_{i}\right)\right)+...
\]
we can calculate the correlation functions easily,
\begin{align*}
 &\left\langle \tilde{\mathcal{V}}_{1}...\tilde{\mathcal{V}}_{k}\mathcal{V}_{k+1}...\mathcal{V}_{n}\cV_s^1\right\rangle\\\
 &\qquad=\frac{1}{2\pi}\left\langle \oint \rd \sigma_s\sum_{i=1}^k \frac{\la \xi\,\lambda(\sigma_s)\ra [\tilde\lambda(\sigma_s)\,s]}{\la \xi\,s\ra \la \lambda(\sigma_s)\,s\ra}\frac1{\sigma_s-\sigma_i}\tilde\lambda_s\cdot\frac{\p}{\p\tilde\lambda(\sigma_i)} \tilde{\mathcal{V}}_{1}...\tilde{\mathcal{V}}_{k}\mathcal{V}_{k+1}...\mathcal{V}_{n}\right\rangle\\
 &\qquad\qquad + \frac{1}{2\pi}\Bigg\langle \oint \rd \sigma_s \sum_{p=k+1}^n \frac{\la \xi\,\lambda(\sigma_s)\ra }{\la \xi\,s\ra \la \lambda(\sigma_s)\,s\ra} I_R \;\tilde{\mathcal{V}}_{1}...\tilde{\mathcal{V}}_{k}\mathcal{V}_{k+1}...\widehat{\mathcal{V}}_{p}...\mathcal{V}_{n} \Bigg\rangle\,,
\end{align*}
where the notation $\widehat{\mathcal{V}}_{p}$ indicates that the integrand of the corresponding vertex operator is omitted from the correlation function, still leaving the integration over the variable $\rd s_p/s_p^3$ and the scattering equation $\bd^2(\lambda_p-s_p\lambda(\sigma_p))$, and where
\begin{align*}
 I_R=[\tilde\lambda(\sigma_s)\,s]\frac{[s\,p]}{\sigma_s-\sigma_p} +(-1)^p\left([\tilde\rho(\sigma_p)\,p][\rho(\sigma_s)\,s]+[\tilde\rho(\sigma_s)\,s][\rho(\sigma_p)\,p]\right)\frac{[s\,p]}{\sigma_s-\sigma_p}.
\end{align*}
Trivially, the derivative $\tilde\lambda_s\cdot\frac{\p}{\p\tilde\lambda(\sigma_i)}$ can be taken to act on all vertex operators, as the only occurence of $\tilde\lambda(\sigma_i)$ is in the scattering equations. Note furthermore that the terms in $I_R$ can be obtained alternatively by acting with $\frac{[s\,p]}{\sigma_s-\sigma_p}\tilde\lambda_s\cdot\frac{\p}{\p\tilde\lambda_p}$ on the vertex operators in the correlation function, with the first term arising from the diagonal elements of $\HH_{pp}$, and the remaining terms from the off-diagonal contributions $\HH_{pq}$. We can thus rewrite the correlation function as 
\begin{align*}
  &\left\langle \tilde{\mathcal{V}}_{1}...\tilde{\mathcal{V}}_{k}\mathcal{V}_{k+1}...\mathcal{V}_{n}\cV_s^1\right\rangle\\\
 &\qquad=\frac{1}{2\pi}\left\langle \oint \rd \sigma_s\sum_{i=1}^k \frac{\la \xi\,\lambda(\sigma_s)\ra [\tilde\lambda(\sigma_s)\,s]}{\la \xi\,s\ra \la \lambda(\sigma_s)\,s\ra}\frac1{\sigma_s-\sigma_i}\tilde\lambda_s\cdot\frac{\p}{\p\tilde\lambda(\sigma_i)} \tilde{\mathcal{V}}_{1}...\tilde{\mathcal{V}}_{k}\mathcal{V}_{k+1}...\mathcal{V}_{n}\right\rangle\\
 &\qquad\qquad + \frac{1}{2\pi}\Bigg\langle \oint \rd \sigma_s \sum_{p=k+1}^n \frac{\la \xi\,\lambda(\sigma_s)\ra }{\la \xi\,s\ra \la \lambda(\sigma_s)\,s\ra} \frac{[s\,p]}{\sigma_s-\sigma_p}\tilde\lambda_s\cdot\frac{\p}{\p\tilde\lambda_p} \;\tilde{\mathcal{V}}_{1}...\tilde{\mathcal{V}}_{k}\mathcal{V}_{k+1}...\widehat{\mathcal{V}}_{p}...\mathcal{V}_{n} \Bigg\rangle\,.
\end{align*}
Now the integral can be calculated straightforwardly, using the explicit expressions for $\lambda(\sigma)$ and $\tilde{\lambda}(\sigma)$, as well as the support of the delta-functions of the vertex operators. Thus the Ward identity for the superrotation charge obtained from the soft expansion of the graviton vertex operator gives the subleading terms of the soft limit,
\begin{equation}
 \left\langle \tilde{\mathcal{V}}_{1}...\tilde{\mathcal{V}}_{k}\mathcal{V}_{k+1}...\mathcal{V}_{n}\cV_s^1\right\rangle=\sum_{a=1}^n\frac{[a\,s]\la\xi\,a\ra}{\la a\,s\ra\la\xi\,s\ra}\tilde\lambda_s\cdot\frac{\p}{\p \tilde\lambda_a}\left\langle \tilde{\mathcal{V}}_{1}...\tilde{\mathcal{V}}_{k}\mathcal{V}_{k+1}...\mathcal{V}_{n}\right\rangle\,.
\end{equation}

\subsubsection*{Sub-subleading terms}
Although there is no symmetry principle to protect the subsubleading terms in the soft expansion of the graviton vertex operators, we can still calculate the corresponding tree-level soft limits. In particular, the Ward identity for the diffeomorphism on ambitwistor space induced by the soft vertex operator to subsubleading order is given by
\begin{align*}
 &\left\langle \tilde{\mathcal{V}}_{1}...\tilde{\mathcal{V}}_{k}\mathcal{V}_{k+1}...\mathcal{V}_{n}\cV_s^{2}\right\rangle\\
 &\qquad= \frac1{2\pi i}\left\langle \oint \rd \sigma_s \frac{[\rho\,s][\tilde\rho\,s]}{\la s\,\lambda(\sigma_s)\ra}\;I_1\;\tilde{\mathcal{V}}_{1}...\tilde{\mathcal{V}}_{k}\mathcal{V}_{k+1}...\mathcal{V}_{n}\right\rangle\\
 &\qquad \qquad + \frac1{2\pi i}\left\langle \oint \rd \sigma_s \frac1{2}\frac{[\tilde\lambda(\sigma_s)\,s]}{\la s\,\lambda(\sigma_s)\ra}\;I_2\;\tilde{\mathcal{V}}_{1}...\tilde{\mathcal{V}}_{k}\mathcal{V}_{k+1}...\mathcal{V}_{n}\right\rangle\,,
\end{align*}
where we have chosen to abbreviate the integrands by
\begin{align*}
 I_1&=\sum_{i=1}^k \frac1{\sigma_{si}}\tilde\lambda_s\cdot\frac{\p}{\p\tilde\lambda(\sigma_i)}+\sum_{p=k+1}^n \frac{[s\,p]}{\sigma_s-\sigma_p}\widehat\cV_p\,,\\
 I_2&=\sum_{i,j=1}^k  \frac1{\sigma_{si}}\tilde\lambda_s\cdot\frac{\p}{\p\tilde\lambda(\sigma_i)} \frac1{\sigma_{sj}}\tilde\lambda_s\cdot\frac{\p}{\p\tilde\lambda(\sigma_j)} + \sum_{p,q=k+1; p\neq q}^n \frac{[s\,p]}{\sigma_{sp}}\frac{[s\,q]}{\sigma_{sq}}\widehat\cV_p\widehat\cV_q\\
 &\quad+ \sum_{p=k+1}^n\sum_{i=1}^k \frac{[s\,p]}{\sigma_{sp}}\frac1{\sigma_{si}}\tilde\lambda_s\cdot\frac{\p}{\p\tilde\lambda(\sigma_i)}\widehat\cV_p\,.
\end{align*}
Again, we have indicated by $\widehat\cV_p$ that the corresponding integrand of the vertex opertor is omitted from the correlation function. Calculating the residues and comparing the results to the derivatives obtained by acting with $\tilde{\lambda}_s\cdot\frac{\p}{\p\tilde\lambda_p}$ for $p\in\{k+1,\dots,n\}$, all unwanted residues cancel and the only contributions are coming from 
\begin{equation}
\begin{split}
 &\left\langle \tilde{\mathcal{V}}_{1}...\tilde{\mathcal{V}}_{k}\mathcal{V}_{k+1}...\mathcal{V}_{n}\cV_s^{2}\right\rangle\\
 &\qquad=\frac1{2}\sum_{a=1}^n \frac{[a\,s]}{\la a\,s\ra}\tilde{\lambda}_s\cdot\frac{\p}{\p\tilde\lambda_a}\tilde{\lambda}_s\cdot\frac{\p}{\p\tilde\lambda_a}\left\langle \tilde{\mathcal{V}}_{1}...\tilde{\mathcal{V}}_{k}\mathcal{V}_{k+1}...\mathcal{V}_{n}\right\rangle\,,
 \end{split}
\end{equation}
which is the subsubleading soft graviton contribution discovered in \cite{Cachazo:2014fwa}.
\chapter{Review of scattering amplitudes in four dimensions}\label{sec5:review}
In this appendix, we will discuss briefly twistor theory and its impact on scattering amplitudes in $\cN=4$ super Yang-Mills and $\cN=8$ supergravity in four dimensions, see \cite{Elvang:2013cua} for a review and references below for more details.

\section{Spinor-helicity formalism and twistor space}\label{sec5:rev-twistor}
\paragraph{The spinor-helicity formalism.}
In four dimensions, the spinor-helicity formalism was used very successfully in the study of scattering amplitudes in $\cN=4$ super Yang-Mills and $\cN=8$ supergravity. It relies on the isomorphism between the restricted Lorentz group SO$(4,\C)$ on complexified space-time and special linear group PSL$(2,\C)$, established by $p_{\alpha\dot{\alpha}}= \sigma^\mu_{\alpha\dot{\alpha}}p_\mu$, where $\sigma_\mu$ denote the Pauli matrices. 
For massless particles, det$(p_{\alpha\dot{\alpha}})=p^\mu p_\mu=0$, and thus the isomorphism relates null vectors on space-time to hermitian matrices of rank one, which may always be decomposed into an outer product of two complex two-dimensional Weyl spinors, 
\begin{equation}
 p_{\alpha\dot{\alpha}} =\lambda_\alpha \tilde{\lambda}_{\dot{\alpha}}\,.
\end{equation}
The isomorphism therefore extends to the double cover of the Lorentz group $(\text{SL}(2,\C)\times\text{SL}(2\C))/\mathbb{Z}_2$. The tangent bundle of complexified Minkowski space $TM\cong S\otimes\tilde{S}$ can therefore be seen as a tensor product\footnote{with the isomorphism constructed from the map $\partial_\mu \stackrel{\sigma}{\longleftrightarrow} \p_{\alpha\dot{\alpha}}$} of the (self-dual and anti self-dual) spin bundles $S$ and $\tilde{S}$, with the two copies of SL$(2,\mathbb{C})$ acting independently on the spin bundles. Since the complexified Lorentz group is locally isomorphic to $\text{SL}(2,\C)\times\text{SL}(2\C)$, all finite-dimensional irreducible representations of the spinor algebra can be classified by a pair of integers or half-integers $(p, q)$. Spinors transforming in $(1/2,0)$ ($(0,1/2)$) are referred to as `negative (positive) chirality' spinors.\footnote{Note that vectors on space-time lie within the $(1/2,1/2)$ representation, hence they are indeed associated with a pair of spinor indices $p^{\alpha\dot{\alpha}}$, consistent with the decomposition for a null vector given above.} Moreover, the spin spaces $S$ and $\tilde{S}$ are equipped with symplectic forms $\epsilon_{\alpha\beta}$ and $\epsilon_{\dot{\alpha}\dot{\beta}}$, which can be used to raise and lower the spinor indices of $\lambda_\alpha\in S$ and $\tilde{\lambda}_{\dot{\alpha}} \in\tilde{S}$. These can be used to construct the SL$(2,\C)$-invariant inner products between spinors of each chirality,
\begin{align}
 \langle1,2\rangle\equiv\langle\lambda_1,\lambda_2\rangle=\epsilon^{\alpha\beta}\lambda_{1,\alpha}\lambda_{2,\beta}\,\quad\text{and}\quad
 [1,2]\equiv[\tilde{\lambda}_1,\tilde{\lambda}_2]=\epsilon^{\dot{\alpha}\dot{\beta}}\tilde{\lambda}_{1,\dot{\alpha}}\tilde{\lambda}_{2,\dot{\beta}}\,.
\end{align}

\paragraph{Twistor space.}
The spinor formalism introduced above provides the fundamental language in which twistor theory is formulated. Twistors space is a complex manifold $P\T$ endowed with holomorphic and algebraic structures encoding space-time points and fields. Physical data on Minkowski space $M$ is represented by complex geometry in twistor space; a correspondence established via the Penrose transform, the Ward correspondence and the non-linear graviton. We will only be able to give a very brief introduction here, for more detail see \cite{Huggett,WardWells} and \cite{Wolf} for lecture notes highlighting the implications for scattering amplitudes.\\

Recall from the previous section that the tangent bundle of complexified Minkowski space is isomorphic to the tensor product of two complex spin bundles, $TM\cong S\otimes\tilde{S}$. Consider now the `correspondence space' give by the projective dual spin bundle, $P\mathbb{\tilde{S}}\cong \mathbb{CP}^1\times\mathbb{C}^4$ with coordinates $(\lambda_{\alpha},x^{\alpha\dot{\alpha}})$. Twistor space is constructed from $P\mathbb{\tilde{S}}$ as the quotient space of the foliation induced by the vector fields $V_{\dot{\alpha}}=\lambda^\alpha \p_{\alpha\dot{\alpha}}$. By construction, the $P\T$,  $P\mathbb{\tilde{S}}$ and $M$ satisfy the double fibration
\begin{equation}\label{eq:doublefibration}
 P\T\stackrel{\pi_{P\T}}{\longleftarrow}P\mathbb{\tilde{S}}\stackrel{\pi_{M}}{\longrightarrow}M\,,
\end{equation}
where $\pi_{M}$ is the trivial projection and $\pi_{P\T}$ is defined by
\begin{align}
 \pi_{P\T}:&(\lambda_{\alpha},x^{\alpha\dot{\alpha}})\mapsto (\lambda_{\alpha},\mu^{\dot{\alpha}})=(\lambda_{\alpha},x^{\alpha\dot{\alpha}}\lambda_{\alpha})\\
 \pi_{M}:&(\lambda_{\alpha},x^{\alpha\dot{\alpha}})\mapsto x^{\alpha\dot{\alpha}}.
\end{align}
The correspondence space is therefore the space linking Minkowski space $M$ with twistor space $P\T$. The relation 
\begin{equation}\label{eq:incidencerelation}
 \mu^{\dot{\alpha}}=ix^{\alpha\dot{\alpha}}\lambda_\alpha,
\end{equation}
defining the projection $\pi_{P\T}$ from $P\mathbb{\tilde{S}}$ to twistor space is known as the {\it incidence relation}, and connects the space-time variables $x_{\alpha\dot{\alpha}}$ to twistor space and allows us to identift $P\T\cong \CP^3$.

The double fibration (\ref{eq:doublefibration}) establishes a geometric correspondence between twistor space and space-time: by considering the pullback and subsequent push-forward, a point in space-time $x\in M$ corresponds geometrically to a line in twistor space. Conversely, a point $Z\in P\T$ is dual to a line in spacetime\footnote{or more accurately, an $\alpha$ plane, defined as a self-dual null 2-plane.} via the incidence relation (\ref{eq:incidencerelation}). This two-way correspondence highlights the fact that two points in space-time are null separated if and only if the dual lines in twistor space intersect. Twistor space is therefore by construction the complexification of the space of null rays. Specifying the light cones on spacetime, which determines the conformal structure, is therefore equivalent to specifying the complex structure on twistor space.  Moreover, it can be seen\footnote{by an extension of the argument given above for the Lorentz group in the spinor formalism} that the conformal group $\text{SO}(4,\C)\cong\text{SL}(4\C)$ acts linearly on the homogeneous coordinates of twistor space, and thus twistors transform in the fundamental representation of the conformal group.\\

There is a natural extension of the twistor space introduced above to a supersymmetric manifold, $P\T\cong \mathbb{CP}^{3|\mathcal{N}}$, with homogeneous coordinated given by
\begin{equation}
 Z=Z_I=(\lambda_\alpha,\mu^{\dot{\alpha}},\eta^A)\in\mathbb{T}=\mathbb{C}^2\times\mathbb{C}^2\times\mathbb{C}^{0|\mathcal{N}}\quad Z\sim tZ,\:t\in\mathbb{C}^*
\end{equation}
The conformal algebra on twistor space becomes enhanced to a superconformal algebra $\mathfrak{sl}(4,\mathcal{N})$, and the incidence relation, incorporating supersymmetry, gets extended to
\begin{equation}
 \mu^{\dot{\alpha}}=ix^{\alpha\dot{\alpha}}\lambda_\alpha\quad\quad \eta^A=\theta^{A\alpha}\lambda_\alpha.
\end{equation}

\paragraph{The Penrose transform.}
The correspondence between physical data on spacetime and complex geometry on twistor space is founded on three theorems: the Penrose transform \cite{Penrose:1969ae}, relating zero rest-mass fields on $M$ to cohomology classes in twistor space, the Ward correspondence \cite{Ward:1977ta} between Yang-Mills instantons on spacetime and holomorphic vector bundles on $P\T$, and the non-linear graviton construction \cite{Penrose:1976js,Ward:1980am}, realising self-dual 4-manifolds as integrable complex structures on twistor space. In the context of this thesis, we will only review the Penrose transform.\\

Zero-rest mass fields of helicity $\pm h$ on spacetime are spinor fields $\phi_{\alpha_1\dots \alpha_n}$, $\phi_{\dot{\alpha}_1\dots \dot{\alpha}_n}$, $\Phi$, with $n=2|h|$ symmetric spinor indices, satisfying the partial differential equations
\begin{align} \label{eq:zrm}
 \p^{\alpha_1 \dot{\alpha}}\phi_{\alpha_1\dots \alpha_n}(x)&=0\,,\qquad\quad\p^{\alpha \dot{\alpha}_1}\phi_{\dot{\alpha}_1\dots \dot{\alpha}_n}(x)=0\,,\qquad\quad \Box \Phi=\p^{\alpha \dot{\alpha}}\p_{\alpha \dot{\alpha}}\Phi=0\,.
\end{align}
The Penrose transform relates these massless fields to the first cohomology class on twistor space:

\begin{thm} \label{thm:Penrose}{\bf (Penrose transform \cite{Penrose:1969ae})} Let $P\T'$ be an open subset of $P\T$, and denote by $\mathbb{M}'$ the corresponding open subset of $\mathbb{M}$, $\mathbb{M}'= \pi_{\mathbb{M}}\circ \pi_{P\T}^{-1}(P\T')$. Then the first cohomology group $H^1(P\T',\mathcal{O}(2h-2))$ is isomorphic to the set of massless fields on $\mathbb{M}'$;
\begin{equation}
 H^1(P\T',\mathcal{O}(2h-2))\cong\{\text{On-shell zero rest mass fields on $\mathbb{M}'$ of helicity $h$}\}.
\end{equation}
\end{thm}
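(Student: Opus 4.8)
The plan is to realise the isomorphism constructively through the double fibration \eqref{eq:doublefibration}, by pulling a cohomology class back to the correspondence space $P\mathbb{\tilde S}'$ and then integrating over the fibres of $\pi_{\mathbb{M}}$. First I would represent a class in $H^1(P\T',\mathcal{O}(2h-2))$ by a $\dbar$-closed Dolbeault representative $f\in\Omega^{0,1}(P\T',\mathcal{O}(2h-2))$ (or equivalently by \v{C}ech data on a two-set cover), pull it back by $\pi_{P\T}$, and restrict to the line $L_x=\pi_{\mathbb{M}}^{-1}(x)\cong\CP^1$ over a point $x\in\mathbb{M}'$. Using the incidence relation \eqref{eq:incidencerelation}, $\mu^{\dot\alpha}=ix^{\alpha\dot\alpha}\lambda_\alpha$, to tie the twistor fibre coordinates to $x$, the space-time field (for $h\geqslant 1$) is then defined by the fibre integral
\[
\phi_{\dot\alpha_1\cdots\dot\alpha_{2h}}(x)=\int_{L_x}\frac{\p}{\p\mu^{\dot\alpha_1}}\cdots\frac{\p}{\p\mu^{\dot\alpha_{2h}}}\,\big(\pi_{P\T}^*f\big)\big|_{L_x}\wedge\la\lambda\,\rd\lambda\ra\,,
\]
the weights matching because each $\mu$-derivative lowers the homogeneity by one. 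For $h=0$ one drops the derivatives and recovers the scalar case, while for the opposite sign of $h$ one brings down factors of $\lambda_\alpha$ instead of differentiating, producing a field with undotted indices. Well-definedness requires that $\dbar$-exact $f$ integrate to zero, which follows from Stokes' theorem on $L_x$.

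The field equations \eqref{eq:zrm} would then be verified by differentiating under the integral sign. Since the incidence relation gives $\p_{x^{\alpha\dot\alpha}}=i\lambda_\alpha\,\p/\p\mu^{\dot\alpha}$ when acting on the pulled-back representative, the divergence $\p^{\alpha\dot\alpha_1}\phi_{\dot\alpha_1\cdots\dot\alpha_{2h}}$ produces a contraction of two $\mu$-derivatives through the skew form $\epsilon^{\dot\alpha_1\dot\gamma}$, which vanishes by the symmetry of mixed partials; in the opposite-helicity (undotted) case the same mechanism produces the factor $\la\lambda\,\lambda\ra=0$. Either way the zero-rest-mass equation follows immediately, and for $h=0$ the wave equation $\Box\Phi=0$ arises identically. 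This establishes a well-defined linear map from $H^1(P\T',\mathcal{O}(2h-2))$ to on-shell fields.

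The substantive part is to prove this map is an isomorphism, for which I would pass to the cohomological (spectral sequence) description rather than the integral formula. Because the fibres of $\pi_{P\T}$ are the $\alpha$-planes $\cong\C^2$, which are contractible and Stein, the higher direct images $R^{q}\pi_{P\T*}$ of $\mathcal{O}(2h-2)$ vanish for $q>0$ and $\pi_{P\T}^*$ is an isomorphism onto $H^1\big(P\mathbb{\tilde S}',\,\pi_{P\T}^*\mathcal{O}(2h-2)\big)$. I would then analyse the push-down along $\pi_{\mathbb{M}}$, whose fibres are the lines $L_x\cong\CP^1$, by resolving $\pi_{P\T}^*\mathcal{O}(2h-2)$ through a relative (BGG-type) resolution built from the tautological sequences on the $\CP^1$ fibres, and computing the fibrewise cohomology of the line bundles $\mathcal{O}_{\CP^1}(m)$. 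The key input is the Bott vanishing $H^0(\CP^1,\mathcal{O}(m))=0$ for $m<0$ and $H^1(\CP^1,\mathcal{O}(m))=0$ for $m\geqslant-1$, which collapses the Leray spectral sequence so that the only surviving differential is a first-order operator over $\mathbb{M}'$. Identifying that differential with the zero-rest-mass operator of \eqref{eq:zrm} then shows that $H^1(P\T',\mathcal{O}(2h-2))$ computes precisely the kernel of the field equations, completing the isomorphism. This mirrors, in a more refined form, the short/long exact sequence argument used for the ambitwistor Penrose transform in \cref{sec2:review_ambi}.

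The hard part is this last, cohomological step: setting up the relative resolution along the $\CP^1$ fibres and showing cleanly that the induced spectral sequence differential coincides with $\p^{\alpha\dot\alpha}$, which is exactly where surjectivity (every field is hit) and injectivity (only the $\dbar$-exact data map to zero) reside. A secondary but genuine subtlety is analytic: the open sets $P\T'$ and $\mathbb{M}'=\pi_{\mathbb{M}}\circ\pi_{P\T}^{-1}(P\T')$ must be chosen so that the restricted fibration still has connected $\CP^1$ fibres with the expected cohomology (for instance taking $\mathbb{M}'$ elementary or convex), since for general regions the transform can pick up extra global cohomology and fail to be an isomorphism. I would handle this by restricting to such elementary regions, as is standard, and only afterwards discuss how momentum eigenstates and elemental states fit into the picture.
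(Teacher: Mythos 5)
Your proposal is correct in outline, but there is no proof in the paper to measure it against: the paper states \cref{thm:Penrose} purely as a review result, citing Penrose's original work, and never proves it. What you have reconstructed is essentially the standard literature proof — Penrose's fibre-integral formulas for one direction (your homogeneity counting is right: $2h$ derivatives in $\mu$ reduce weight $2h-2$ to $-2$, compensated by $\la\lambda\,\rd\lambda\ra$, and the field equations follow from the skew contraction of symmetric $\mu$-derivatives, respectively from $\la\lambda\,\lambda\ra=0$), together with the Eastwood--Penrose--Wells direct-image/relative-BGG spectral-sequence argument for the isomorphism itself. The closest thing the paper offers is the proof outline for the \emph{ambitwistor} Penrose transform in \cref{sec2:review_ambi}, which runs along a different mechanism: a short exact sequence of sheaves induced by the Hamiltonian vector field $D_0$, the associated long exact sequence in cohomology (sparse because the projective light-cone has cohomology only in degrees $0$ and $d-2$), and the first isomorphism theorem. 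Your Leray/BGG route is the one genuinely adapted to the twistor double fibration with $\CP^1$ fibres, and it is where, as you correctly say, both injectivity and surjectivity actually live; the integral formula alone only produces a map. One point worth stressing in your favour: your caveat about elementary/convex regions is not a technicality. As stated in the paper, for an \emph{arbitrary} open $P\T'$, the isomorphism can fail; it holds when the fibres of the restricted correspondence (the intersections of $\alpha$-planes with $\M'$, and the lines $L_x\cap P\T'$) are topologically trivial, which is exactly the hypothesis you impose. So your reconstruction is not only correct but is, in that respect, more careful than the statement it proves.
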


Here, $\mathcal{O}(k)$ denotes the sheaf of holomorphic functions homogeneous of degree $k$. Therefore, gluon wave functions are represented on $P\T$ by holomorphic functions of homogeneity $0$ and $-4$, while gravitons correspond to functions of homogeneity $2$ and $-6$.

\paragraph{Gravity on twistor space.}
As mentioned above, twistors transform in the fundamental representation of the space-time superconformal group SL$(4|\mathcal{N})$, and thus the symmetries of $\mathcal{N}=4$ super Yang-Mills are perfectly encoded in the twistor space geometry. General relativity however is not conformally invariant, so we need to introduce a structure on twistor space to break superconformal invariance. This is achieved by the skew-symmetric infinity twistor $I^{IJ}$ \cite{Mason:2008jy, Cachazo:2012kg, Skinner:2013xp}, determining a metric on spacetime \cite{Penrose:1976jq}.\footnote{To be a bit more specific, a simple skew-symmetric bi-twistor $X^{ab}=Z_1^{[a}Z_2^{b]}$ parametrizes a line in twistor space $P\T$. As a consequence of the incidence relation, this corresponds to a point $x$ in asymptotically flat space-time ($\Lambda=0$). The metric determined by the infinity twistor is then given by
\begin{equation}
 ds^2=\frac{\epsilon_{abcd}dX^{ab}dX^{cd}}{(I_{ef}X^{ef})^2},
\end{equation}
and infinity $\mathscr{I}$ is determined by the surface $I^{ab}X_{ab}=0$. The name `infinity twistor' is derived from the relation $I_{ab}I^{ab}=0$ in flat space-time, which defines itself a point corresponding to $\iota^0$ in conformally compactified space-time.}
Likewise, the fermionic components of the infinity twistor define a metric on the R-symmetry group, which corresponds to a gauge choice of the R-symmetry \cite{Wolf:2007tx}. Choosing $I$ to be block-diagonal in its bosonic and fermionic entries, the bosonic part is required to satisfy
\begin{equation}
 I_{ab}=\frac{1}{2}\epsilon_{abcd}I^{ab}, \qquad I_{ab}I^{bc}=\Lambda\delta_a^c\,.
\end{equation}
In terms of the spinor decomposition of a twistor $Z=(\lambda_\alpha, \mu^{\dot{\alpha}}, \eta_A)$, we have $I_{AB}=\sqrt{\Lambda}\delta_{AB}$, and the bosonic components are given by
\begin{align}
 I^{ab}=\begin{pmatrix}\Lambda \epsilon_{\alpha\beta}&0\\0&\epsilon^{\dot{\alpha}\dot{\beta}}\end{pmatrix}\,, \qquad I_{ab}=\epsilon_{abcd}I^{ab}\begin{pmatrix}\epsilon^{\alpha\beta}&0\\0&\Lambda\epsilon_{\dot{\alpha}\dot{\beta}}\end{pmatrix}\,.
\end{align}
Note that with this choice, the infinity twistor $I$ becomes degenerate of rank two in flat space-time ($\Lambda=0$), whereas it has rank four otherwise. Geometrically, the infinity twistor $I^{IJ}$ and its inverse $I_{IJ}$ define a holomorphic Poisson structure and $\{\:,\:\}$ and a contact structure $\tau$ on twistor space by
\begin{equation}
 \{h_1,h_2\}= I^{IJ}\frac{\p h_1}{\p Z_1^I}\frac{\p h_2}{\p Z_2^J} \qquad \tau=I_{IJ}Z^I\text{d}Z^J\,.
\end{equation}
In a flat Minkowski space-time, the degenerate infinity twistor for $\Lambda=0$ simplifies to
\begin{equation}
 I^{ab}\frac{\p}{\p Z_1^a}\frac{\p}{\p Z_2^b}=\left[\frac{\p}{\p \mu_1}, \frac{\p}{\p \mu_2}\right], \qquad I_{ab}Z_1^aZ_2^b=\langle1\,,2\rangle\,.
\end{equation}

\section{Scattering amplitudes}\label{sec5:review_gravity}
The ideas developed above were immensely successful in the study of scattering amplitudes in $\cN=4$ super Yang-Mills and $\cN=8$ supergravity.  Some key concepts are briefly reviewed here.

\paragraph{$\cN=4$ super Yang-Mills.}
The first step towards a new approach to scattering amplitudes in $\mathcal{N}=4 $ super Yang-Mills theory was the impressive calculation of Parke and Taylor\footnote{proven in \cite{Berends:1987me}, see also \cite{Nair:1988bq} for the first twistor string-like expression inspiring \cite{Witten:2003nn}.} \cite{ParkeTaylor:1986}  demonstrating that the Feynman diagram expression for tree-level colour-ordered scattering of two gluons of negative helicity and $n-2$ of positive helicity exhibits a remarkable simplicity;
\begin{equation}
 \mathcal{A}_{\text{MHV}}(1,2,\dots,n)=\frac{\delta^{4|8}\left(\sum_{i=1}^n\lambda_i\tilde{\lambda}_i\right)}{\prod_{i=1}^n\langle i\,i+1\rangle}\,.
\end{equation}
This provided the base for rapid progress in a multitude of directions, both computational and conceptual. Most importantly, it inspired Witten's twistor string \cite{Witten:2003nn}, leading to a representation of the $n$ particle N$^{d-1}$MHV amplitude\footnote{where we denote the amplitude involving exactly $k$ particles of negative helicity by N${k-2}$MHV.} as an integral over the space of rational curves of degree $d$ in twistor space \cite{Roiban:2004yf,Witten:2003nn}
\begin{equation}\label{eq3a:RSVW}
 \mathcal{A}_{n,d}=\int\frac{\prod_{a=0}^d Z_a}{\text{vol}(\text{GL}(2;\mathbb{C}))}\frac{1}{(12)(23)\dots(n1)}\prod_{i=1}^n a_i(Z(\sigma_i))(\sigma_i d\sigma_i)\,.
\end{equation}
Here, the integral is taken over the moduli space of holomorphic maps $Z:\Sigma\rightarrow P\T$ of degree $d$ from the Riemann sphere $\Sigma$ to supertwistor space, with $Z^I(\sigma)= \sum_{a=0}^dZ_a^I(\sigma^{\underline{1}})^a(\sigma^{\underline{2}})^{d-a}$ for homogeneous coordinates $\sigma^{\underline{\alpha}}=(\sigma^{\underline{1}},\sigma^{\underline{2}})$ on the Riemann sphere. Moreover, $a_i$ are twistor representatives of gluon wave functions, see \cref{thm:Penrose}.

\paragraph{$\cN=8$ supergravity.}
Hodges proved in \cite{Hodges:2012ym, Hodges:2011wm} that MHV tree amplitudes in $\mathcal{N}=8$ supergravity are given by the strikingly simple and elegant generalized determinant formula
\begin{equation}\label{eq:Hodges}
 \mathcal{M}_{\text{MHV}}= \frac{1}{(\langle i\,j\rangle \langle j\,k\rangle\langle k\,i\rangle)^2}\big|\wt{\phi}\big|^{ijk}_{ijk}\:\delta^{4|16} \left(\sum_{i=1}^n\lambda_i\tilde{\lambda}_i\right) \equiv \text{det}'(\wt{\phi})\:\delta^{4|16}\left(\sum_{i=1}^n\lambda_i\tilde{\lambda}_i\right).
\end{equation}
where the `Hodges' matrix $\Phi$ is determined by its entries (with $i\neq j$)
\begin{equation}\label{eq:HodgesMatrix}
 \wt{\phi}_{ij}=\frac{[ij]}{\braket{ij}}\,, \qquad
 \wt{\phi}_{ii}=-\sum_{j\neq i} \phi_{i}\frac{\langle j\,\xi_1\rangle\langle j\,\xi_2\rangle}{\langle i\,\xi_1\rangle\langle i\,\xi_2\rangle} \,.
\end{equation}
Momentum conservation $\sum_i\lambda_i\tilde{\lambda}_i=0$ ensures that the $\wt{\phi}_{ii}$ are well-defined and hence independent of the reference spinors $\xi_{1,2}$. Furthermore, the $n\times n$ symmetric matrix $\wt{\phi}$ has co-rank three due to $\sum_{j}\phi_{ij}\lambda_{j, \alpha}\lambda_{j,\beta}=0$, and thus $\text{det}'(\wt{\phi})$ is independent of the rows and columns removed. This relation further ensures that the gravitational MHV amplitude is fully permutations symmetric. \\

Shortly after Hodges' acclaimed representation of gravity MHV amplitudes, Cachazo and Skinner \cite{Cachazo:2012kg,Cachazo:2012pz} suggested a related formula describing all classical amplitudes of $\mathcal{N}=8$ supergravity. In analogy to Witten's representation for tree amplitudes in $\mathcal{N}=4$ super Yang-Mills, the $n$ particle N$^{d-1}$MHV gravity amplitude is written as an integral over the space of rational maps $Z:\Sigma \rightarrow P\T$ to supertwistor space;
\begin{equation}\label{eq:CS}
 \mathcal{M}_{n,d}=\int\frac{\prod_{a=0}^d D^{4|8}Z_a}{\text{vol}(\text{GL}(2;\mathbb{C}))}\text{det}'(\wt{\Phi})\text{det}'(\Phi)\prod_{i=1}^n h_i(Z(\sigma_i))(\sigma_i d\sigma_i).
\end{equation}
The integrand, involving two generalized determinants, is reminiscent of Hodges' formula for MHV amplitudes (\ref{eq:Hodges}) and the Cachazo-Skinner (CS) formula makes both the $\mathcal{N}=8$ supersymmetry and permutation invariance of gravity amplitudes manifest. Furthermore, the matrices $\wt{\Phi}$ and $\Phi$ explicitly depend on the  infinity twistor, and thereby break the conformal symmetry. Moreover, the $h_i$ are Penrose representatives of graviton wave functions, see \cref{thm:Penrose}. In \cite{Skinner:2013xp}, Skinner demonstrated that this representation of the scattering amplitude is reproduced by a twistor string theory, constructed in analogy to \cite{Witten:2003nn}, but on a split worldsheet supermanifold.\\

Let us examine more carefully the ingredients entering in the gravity amplitudes (\ref{eq:CS}): the $n\times n$ matrix $\wt{\Phi}$ is defined in close analogy to Hodges' matrix $\wt{\phi}$;
\begin{equation}
 \wt{\Phi}_{ij}=\frac{1}{(ij)}\left[\frac{\p}{\p\mu_i},\frac{\p}{\p\mu_j}\right] \,,\qquad
 \wt{\Phi}_{ii}=-\sum_{j\neq i}\wt{\Phi}_{ij} \prod_{r=0}^d\frac{(w_r j)}{(w_r i)}\,.
\end{equation}
Note however that $\wt{\Phi}$ has co-rank $d+2$, therefore the generalized determinant det$'(\wt{\Phi})$ is defined to be any $(n-d-2)\times(n-d-2)$ minor of $\wt{\Phi}$ divided by the Vandermonde determinant of the worldsheet coordinates corresponding to the removed rows and columns,
\begin{equation}
 \text{det}'(\wt{\Phi})=\frac{|\wt{\Phi}_{(n-d-2)}|^{r_1\dots r_{d+2}}_{c_1\dots c_{d+2}}}{|\sigma_{r_1}\dots\sigma_{r_{d+2}}|\:|\sigma_{c_1}\dots\sigma_{c_{d+2}}|}.
\end{equation}

The other crucial ingredient of (\ref{eq:CS}) is the dual $n\times n$ matrix $\Phi$ of rank $d$, which balances the counting of angular and square bracket factors exactly to what is required for gravity amplitudes from the BCFW recursion relation \cite{Cachazo:2012kg, Cachazo:2012pz, Skinner:2013xp}. $\Phi$ is defined to be an $n\times n$ matrix of rank $d$ with entries 
\begin{align}
 \Phi_{ij}=\frac{\langle\lambda(\sigma_i),\lambda(\sigma_j)\rangle}{(ij)}\,,\qquad
 \Phi_{ii}=-\sum_{j\neq i}\Phi_{ij} \prod_{r=0}^{n-d-2}\frac{(u_r j)}{(u_r i)}\prod_{k\neq i,j}\frac{(ki)}{(kj)}\,.
\end{align}
Again, det$'(\Phi)$ denotes any $d\times d$ minor of $\Phi$, divided by the corresponding Vandermonde determinants for the rows and columns remaining in the minor
\begin{equation}
 \text{det}'(\Phi)=\frac{|\Phi_{(d)}|^{r_1\dots r_{d}}_{c_1\dots c_{d}}}{|\sigma_{r_1}\dots\sigma_{r_{d}}|\:|\sigma_{c_1}\dots\sigma_{c_{d}}|}\,.
\end{equation}

The full formula was proven to satisfy the BCFW recursion relation \cite{Cachazo:2012pz}. As it furthermore reproduces the 3-point seed amplitudes and exhibits the correct behaviour at infinity, this is the correct formulation for all tree-level scattering amplitudes in $\mathcal{N}=8$ supergravity. In \cite{Skinner:2013xp}, it has been shown to arise from a twistor string theory, which is closely related to the one discussed in \cref{sec5:gravity}.

\chapter{Loops}
\section{Moduli space of Riemann surfaces} \label{sec6:moduli_space}
Since it will feature frequently, let us briefly discuss the moduli space $\mathscr{M}_{g,n}$ of Riemann surfaces.\footnote{For a review in the context of superstring theory, see for example \cite{Witten:2012bh,Dijkgraaf:1997ip}} In general, there are three complementary ways of describing Riemann surfaces:
\begin{itemize}
 \item As complex curves, hence as topological surfaces with a complex structure\footnote{A linear map satisfying $J^2=-1$ and the integrability condition $\nabla J=0$; and that thus splits the complexified tangent space in holomorphic and antiholomorphic vectors.} $J$, or equivalently a $\dbar$ operator.
 \item As algebraic curves, and thus as solutions to homogeneous polynomial equations $f(x,y,z)=0$ in $\CP^2$.
 \item As surfaces with (a conformal class of) Riemann metrics. This directly makes contact with the first approach since a metric relates to the complex structure via $J_a^b=\sqrt{g}\epsilon_{ac}g^{cb}$, where $g$ is the determinant of the metric. Since the complex structure $J$ is invariant under Weyl rescalings $g_{ab}\rightarrow \rho(x)g_{ab}$, the complex structure is equivalent to the conformal class of the metric.
\end{itemize}
The moduli space of a Riemann surface is defined as the parameter space of all Riemann surfaces, and has dimension
\begin{equation}
 \text{dim}_\C\mathscr{M}_g=\begin{cases}
                     0 & g=0\,,\\ 1 & g=1\,,\\ 3g-3 & g\geqslant 2\,.
                    \end{cases}
\end{equation}
In worldsheet models, we will generally be interested in the moduli space of a Riemann surface with marked points. For $n$ points the moduli space has the dimension
\begin{equation}
 \text{dim}_\C\mathscr{M}_{g,n}=\begin{cases}
                     n-3 & g=0\,,\\ n & g=1\,,\\ n+3g-3 & g\geqslant 2\,.
                    \end{cases}
\end{equation}
Note that, at genus zero, the moduli space is not given by $(\CP^1)^n$ without diagonal elements ($\sigma_{ij}\neq 0$), as one could guessed naively, since $\CP^1$ has a non-trivial group of automorphisms Aut$(\CP^1)=PGL(2,\C)$. These act as M\"{o}bius transformations on the coordinates, and the quotient can be eliminated explicitly by fixing three points - as seen in discussion of scattering amplitudes at genus zero in \cref{chapter2}.\\

The moduli space $\mathscr{M}_{g,n}$ described above is not compact, since it does not contain points on boundary corresponding to degenerations of the Riemann surface. However it allows for the so-called  Deligne-Mumford compactification $\overline{\mathscr{M}}_{g,n}$ \cite{DeligneMumford,Witten:2012bh} that adjoins singular Riemann surfaces at the boundary of the moduli space. To be more specific, this compactification adds singular surfaces where a non-trivial homology cycle is pinched, corresponding to a surface of one genus less and two more marked points, as well as contributions from pinching dividing cycles, separating the Riemann surface into two components with one additional node each. In addition to these non-separating and separating degenerations, the Deligne-Mumford compactification is constructed such that marked points never collide, $\sigma_{ij}\neq 0$. The configuration $\sigma_i- \sigma_j=\varepsilon$ is instead understood, more in the spirit of conformal field theory, as a separating degeneration, with one component containing the marked points $\sigma_i$ and $\sigma_j$ at a finite separation.\footnote{The two descriptions are, of course, equivalent up to coordinate transformations $\sigma\rightarrow \sigma/\varepsilon$, causing the surface to develop a long, thin neck of length $\log \varepsilon$, which in turn is equivalent to the factorisation described in the main text, see e.g. \cite{Polchinski:1998}.}\\

Since the focus of \cref{chapter6} is on the elliptic curve, we will describe briefly the genus one case in more detail. Using language of complex curves, we can represent the elliptic curve as $\Sigma_\tau =\C/\Lambda_{\tau}$, where we quotient by the lattice
\begin{equation}
 \Lambda_{\tau}=\mathbb{Z}\oplus\tau\mathbb{Z}\,,
\end{equation}
and thus identify $z\sim z+1\sim z+\tau$ for Im$(\tau)>0$.\footnote{We have chosen to set one generator of the torus to one since the only invariant is the quotient of the two generators.} Moreover, SL$(2,\mathbb{Z})$ transformations relate different choices of basis generating the same elliptic curve, so we have to further identify by the action of PSL$(2,\mathbb{Z})$. This is acting on the the modulus $\tau$ as
\begin{equation}
 \tau\rightarrow \frac{a\tau +b}{c\tau+d}\,,\quad \text{for }\begin{pmatrix} a&b\\ c&d\end{pmatrix}\in\text{PSL}(2,\mathbb{Z})\,.
\end{equation}
PSL$(2,\mathbb{Z})$ is generated by $T:\tau\rightarrow\tau+1$ and $S:\tau\rightarrow -\frac{1}{\tau}$. The moduli space is thus obtianed as a quotient of the upper half plane by the modular group, which yields the fundamental domain $\mathcal{F}=\{\tau|\,|\tau|\geqslant 1\text{ and }|\Re(\tau)|\leqslant\frac{1}{2}\}$, see \cref{fig6:fund-dom0}.

\begin{figure}[ht]
\begin{center}
 \begin{tikzpicture} [scale=2]
  \filldraw [fill=light-grayII, draw=white] (0.5,0.866) arc [radius=1, start angle=60, end angle= 120] -- (-0.5,2.7) -- (0.5,2.7) -- (0.5,0.866);
  \draw [black] (1.1,0) -- (-1.1,0);
  \draw [black] (0,0) -- (0,1);
  \draw [black,dashed] (0,1) -- (0,1.6);
  \draw [black,dashed] (0,2) -- (0,2.5);
  \node at (0,1.8) {$\mathcal{F}$};
  \node at (0.5,-0.15) {$\frac{1}{2}$};
  \node at (-0.5,-0.15) {-$\frac{1}{2}$};
  \node at (1.195,2.63) {$\tau$};
  \draw (1.1,2.7) -- (1.1,2.55) -- (1.25,2.55);
  \draw [gray] (0,0) arc [radius=1, start angle=0, end angle= 90];
  \draw [black] (1,0) arc [radius=1, start angle=0, end angle= 180];
  \draw [gray] (1,1) arc [radius=1, start angle=90, end angle= 180];
  \draw [gray] (0.5,0) -- (0.5,0.866);
  \draw [gray] (-0.5,0) -- (-0.5,0.866);
  \draw (0.5,0.866) -- (0.5,2);
  \draw (-0.5,0.866) -- (-0.5,2);
  \draw [dashed] (0.5,2) -- (0.5,2.5);
  \draw [dashed] (-0.5,2) -- (-0.5,2.5);
  \draw (0.5,2.5) -- (0.5,2.7);
  \draw (-0.5,2.5) -- (-0.5,2.7);
  \draw [gray] (0,0) arc [radius=0.333, start angle=0, end angle= 180];
  \draw [gray] (-0.334,0) arc [radius=0.333, start angle=0, end angle= 180];
  \draw [gray] (0.666,0) arc [radius=0.333, start angle=0, end angle= 180];
  \draw [gray] (1,0) arc [radius=0.333, start angle=0, end angle= 180];
  \draw [gray] (0,0) arc [radius=0.196, start angle=0, end angle= 180];
  \draw [gray] (-0.608,0) arc [radius=0.196, start angle=0, end angle= 180];
  \draw [gray] (0.392,0) arc [radius=0.196, start angle=0, end angle= 180];
  \draw [gray] (1,0) arc [radius=0.196, start angle=0, end angle= 180];
  \draw [gray] (0,0) arc [radius=0.142, start angle=0, end angle= 180];
  \draw [gray] (-0.716,0) arc [radius=0.142, start angle=0, end angle= 180];
  \draw [gray] (0.284,0) arc [radius=0.142, start angle=0, end angle= 180];
  \draw [gray] (1,0) arc [radius=0.142, start angle=0, end angle= 180];
  \draw [gray] (0,0) arc [radius=0.108, start angle=0, end angle= 180];
  \draw [gray] (-0.784,0) arc [radius=0.108, start angle=0, end angle= 180];
  \draw [gray] (0.216,0) arc [radius=0.108, start angle=0, end angle= 180];
  \draw [gray] (1,0) arc [radius=0.108, start angle=0, end angle= 180];
  \draw [gray] (0.5,0) arc [radius=0.122, start angle=0, end angle= 180];
  \draw [gray] (-0.5,0) arc [radius=0.122, start angle=0, end angle= 180];
  \draw [gray] (0.744,0) arc [radius=0.122, start angle=0, end angle= 180];
  \draw [gray] (-0.256,0) arc [radius=0.122, start angle=0, end angle= 180];
  \draw [gray] (0.5,0) arc [radius=0.060, start angle=0, end angle= 180];
  \draw [gray] (-0.5,0) arc [radius=0.060, start angle=0, end angle= 180];
  \draw [gray] (0.620,0) arc [radius=0.060, start angle=0, end angle= 180];
  \draw [gray] (-0.380,0) arc [radius=0.060, start angle=0, end angle= 180];
  \draw [gray] (0.666,0) arc [radius=0.039, start angle=0, end angle= 180];
  \draw [gray] (0.412,0) arc [radius=0.039, start angle=0, end angle= 180];
  \draw [gray] (-0.588,0) arc [radius=0.039, start angle=0, end angle= 180];
  \draw [gray] (-0.334,0) arc [radius=0.039, start angle=0, end angle= 180];
  \draw [gray] (0.334,0) arc [radius=0.062, start angle=0, end angle= 180];
  \draw [gray] (0.790,0) arc [radius=0.062, start angle=0, end angle= 180];
  \draw [gray] (-0.666,0) arc [radius=0.062, start angle=0, end angle= 180];
  \draw [gray] (-0.210,0) arc [radius=0.062, start angle=0, end angle= 180];
 \end{tikzpicture}
 
\caption{The fundamental domain.}
\label{fig6:fund-dom0}
\end{center}
\end{figure}
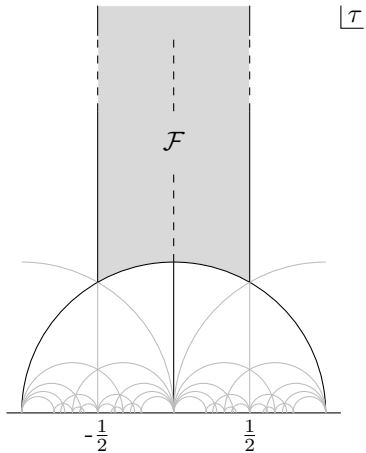  

\section{Checks}
\subsection{Solutions to the four-point 1-loop scattering equations}\label{4pt-soln}

In this appendix, we briefly discuss the solutions to the one-loop scattering equations for $n=4$. There are two solutions to \eqref{SE2n}, in agreement with the counting $(n-1)!-2(n-2)!$. Fixing $\sigma_1=1$, along with the choices $\sigma_{\ell^+}=0$ and $\sigma_{\ell^-}=\infty$ understood in \eqref{SE2n}, these two solutions are given by
\begin{align}
\sigma_2 &= \frac{\ell\cdot k_2 \,(W+1) (k_2\cdot k_3+ \ell\cdot k_4 W)}{W (\ell\cdot k_4 (W
   (\ell+k_1)\cdot k_2+\ell\cdot k_2)+ k_2\cdot k_3\,\ell\cdot k_2)} \nonumber \\
\sigma_3 &= -\frac{(W+1) (\ell\cdot k_4 W-\ell\cdot k_2) (W k_1\cdot k_2+\ell\cdot k_2) (k_2\cdot k_3+\ell\cdot k_4 W)}{W (W
   (k_1\cdot k_2-\ell\cdot k_4)-k_2\cdot k_3+\ell\cdot k_2) (\ell\cdot k_4 (W (\ell+k_1)\cdot k_2+\ell\cdot k_2)+ k_2\cdot k_3\,\ell\cdot k_2)} \nonumber \\
\sigma_4 &= \frac{\ell\cdot k_4 \,(W+1) (W k_1\cdot k_2+\ell\cdot k_2)}{\ell\cdot k_4 (W
   (\ell+k_1)\cdot k_2+\ell\cdot k_2)+ k_2\cdot k_3\,\ell\cdot k_2}  
\end{align}
where $W$ can take the two values
\begin{equation}
W^{(\pm)} = \frac{k_1\cdot k_2\,\ell\cdot k_2-k_2\cdot k_4\,\ell\cdot k_3+k_2\cdot k_3\,\ell\cdot k_4-2 \ell\cdot k_2\,\ell\cdot k_4 \pm \sqrt{4\prod_{i=1}^4 \ell\cdot k_i + \det U}}{2 \ell\cdot k_4
   (k_1\cdot k_2+\ell\cdot (k_1+k_2))},
\end{equation}
\begin{equation}
U = \left( \begin{matrix}
k_i\cdot k_j \;\;& \ell\cdot k_i\\
\ell\cdot k_j  \;\;&  0
\end{matrix} \right) \quad i,j=1,2,3.
\end{equation}
The expressions for $\sigma_ i$ solve $f_2=f_4=0$ for any $W$. The expression for $W$ is then determined by solving $f_3=0$, which takes a quadratic form.

In the case of more general theories, as discussed in section \ref{sec:general-form-one}, there are two additional solutions contributing at four points, in agreement with $(n-2)!$, so that the total number of solutions is $(n-1)!-(n-2)!$. The `regular' solutions are the ones described above, but we should now express them in a different SL$(2,\mathbb{C})$ gauge, where we don't fix both $\sigma_{\ell^+}$ and $\sigma_{\ell^-}$. Let us use coordinates $\sigma'$ such that $(\sigma'_1,\sigma'_2,\sigma'_3)=(0,1,\infty)$. Then we obtain the two `regular' solutions from the expressions above by the change of coordinates
$$
\sigma' = \frac{\sigma_{23}}{\sigma_{21}} \, \frac{\sigma-\sigma_1}{\sigma-\sigma_3}.
$$
For the `singular' solutions, we have $\sigma'_{\ell^+}=\sigma'_{\ell^-}$. The remaining $\sigma'_i$ must satisfy the tree-level scattering equations, so that in our choice $\sigma'_4=-k_1\cdot k_4 / k_1\cdot k_2$. The two solutions for $\sigma'_\ell$ are then determined by
$$
\ell\cdot k_3\,{\sigma'_\ell}^2 + \left( \ell\cdot (k_1+k_4) + \sigma'_4\, \ell\cdot (k_1+k_2) \right) \,\sigma'_\ell  - \sigma'_4\, \ell\cdot k_1 =0.
$$

\subsection{Checks for supersymmetric theories}
\label{sec:checks}

The conjectures for supersymmetric theories can be verified explicitly at low multiplicity. In \cref{sec:factorization}, we will provide further evidence for
these conjectures at any multiplicity, based on the factorisation
properties of the formulae. In this section, however, we will simply
provide details of the numerical checks. These were performed in four dimensions, where
there exist simple known expressions for $N=4$ super Yang-Mills theory
and $N=8$ supergravity. These expressions should match our $D=10$
formulae when we restrict the external data to four dimensions, as we
argue in \cref{sec:dimens-reduct}. We make use of the spinor-helicity
formalism; see \cref{sec5:rev-twistor} and e.g.~\cite{Elvang:2013cua} for a review. The
polarisation vectors for positive and negative helicities are
represented as
\begin{equation}
\epsilon_i^{(+)}= \frac{|\xi\rangle [i|}{\langle i \xi\rangle} \ , \qquad 
\epsilon_i^{(-)}= \frac{|i\rangle [\xi|}{[ i \xi]} \ ,
\label{treenum5}
\end{equation}
where $\xi=|\xi\rangle [\xi|$ is a reference vector. The four-point checks were performed analytically, using the solutions to the scattering equations presented in \cref{4pt-soln}, whereas the five-point checks were performed numerically.

For the theories at hand, due to supersymmetry, the only external helicity configurations which lead to a non-vanishing amplitude have at least two particles of each helicity. We verified that our formulae for both super Yang-Mills theory and supergravity vanish if that condition is not satisfied.

For $n=4$, non-vanishing amplitudes must have two particles of each helicity type. Let us label the negative-helicity particles as $r$ and $s$. The loop integrands for these super Yang-Mills and supergravity amplitudes are well known \cite{Green:1982sw,Bern:2010tq}. After the application of our shift procedure, they are given by
\begin{align}
\hat{\cM}^{(1)}_{4,\,\text{SYM}}=  \frac1{\ell^2} \sum_{\rho\in \text{cyc}(1234)}
 \frac{N_4}{{\prod_{i=1}^{3} }\left(2\ell\cdot\sum_{j=1}^i k_{\rho(i)} +\Big(\sum_{j=1}^i
k_{\rho(i)}\Big)^2\right) }
\end{align}
and
\begin{align}
\hat{\cM}^{(1)}_{4,\,\text{SG}}=  \frac1{\ell^2} \sum_{\rho\in S_4}
 \frac{N_4^2}{{\prod_{i=1}^{3} }\left(2\ell\cdot\sum_{j=1}^i k_{\rho(i)} +\Big(\sum_{j=1}^i
k_{\rho(i)}\Big)^2\right) },
\end{align}
where we sum over cyclic permutations for gauge theory and over all permutations for gravity. The numerator 
\begin{equation}
N_4= \langle rs \rangle^4 \,\frac{[12][34]}{\langle 12 \rangle\langle 34 \rangle}
\end{equation}
is given by a permutation-invariant kinematic function, times the factor $\langle rs \rangle^4$ involving the negative-helicity particles. The fact that this numerator appears squared in gravity with respect to gauge theory is the simplest one-loop example of the BCJ double copy.\footnote{In the supergravity case, we could have distinguished the choice of $r,s$ in $\epsilon_i^\mu$ and $\tilde r,\tilde s$ in $\tilde \epsilon_i^\mu$, with the obvious outcome of substituting $\langle rs \rangle^8$ by $\langle rs \rangle^4\langle \tilde r\tilde s \rangle^4$ in the final result.} We verified that these expressions match our formulae. The amplitude for supergravity follows from the $n$-gon conjecture \eqref{formulangon}. This is due to the fact that, at four points, the quantities $\mathcal{I}^L$ and $\mathcal{I}^R$ are constant \cite{Casali:2014hfa}, as discussed above, each coinciding with the numerator $N_4$.

For $n=5$, we will consider the case of two negative-helicity particles (for two positive helicities, we should simply exchange the chirality of the spinors in the formulae). The complete integrands involve both pentagon and box integrals. In their shifted form, they are given by
\begin{align}
\hat{\cM}^{(1)}_{5,\,\text{SYM}}=\frac{1}{\ell^2} \sum_{\rho\in \text{cyc}(12345)}
\frac{N_{\rho(1)\rho(2)\rho(3)\rho(4)\rho(5)}}{\prod_{i=1}^4 \big(2\ell\cdot\sum_{j=1}^i k_{\rho(i)} +(\sum_{j=1}^i k_{\rho(i)})^2\big)} \,,
\label{5ptsym}
\end{align}
and
\begin{align}
\hat{\cM}^{(1)}_{5,\,\text{SG}}=\frac{1}{\ell^2}& \sum_{\rho\in S_5}
\frac{N_{\rho(1)\rho(2)\rho(3)\rho(4)\rho(5)}}{\prod_{i=1}^4 \big(2\ell\cdot\sum_{j=1}^i k_{\rho(i)} +(\sum_{j=1}^i k_{\rho(i)})^2\big)}\,,
\label{5ptsugra}
\end{align}
where
\begin{equation}
N= N^{\text{pent}}_{\rho(1)\rho(2)\rho(3)\rho(4)\rho(5)} \,+ 
\frac{1}{2} \sum_{i=1}^4 N^{\text{box}}_{[\rho(i),\rho(i+1)]} \,\frac{2\ell\cdot\sum_{j=1}^i k_{\rho(i)} +(\sum_{j=1}^i k_{\rho(i)})^2}{k_{\rho(i)}\cdot k_{\rho(i+1)}} \,.
\end{equation}
A valid choice for the pentagon and box numerators was presented in \cite{Carrasco:2011mn},
\begin{equation}
N^{\text{pent}}_{12345} = \langle rs \rangle^4 \,
\frac{[12][23][34][45][51]}{[12]\langle 23\rangle [34]\langle 41\rangle
-\langle 12\rangle [23] \langle 34\rangle [41]}\,,
\end{equation}
and
\begin{equation}
N^{\text{box}}_{[1,2]} = N^{\text{box}}_{[1,2]345} = N^{\text{pent}}_{12345}-N^{\text{pent}}_{21345}\,.
\end{equation}
The numerator $N^{\text{box}}_{[1,2]}$ corresponds to a box with one massive corner, $K=k_1+k_2$, and is independent of the ordering of 3,4,5. We verified that our expressions match these integrands. There are other choices of numerators leading to the same integrands, such as the choice in \cite{He:2015wgf}, which extends to MHV amplitudes of any multiplicity, and arises as the dimensional reduction of the superstring-derived numerators of \cite{Mafra:2014gja}. In that case, the pentagon numerators depend on the loop momentum, but \eqref{5ptsym} and \eqref{5ptsugra} take the same form, because the relevant shifts are of the type
\begin{equation}
N^{\text{pent}}_{12345}(\ell-k_1)=N^{\text{pent}}_{23451}(\ell).
\end{equation}
Here, we define the loop momentum as flowing between the first and last leg of the numerator, and this behaviour with respect to shifts follows from cyclic symmetry. It is therefore trivial to translate between the shifted representation of the integrand and the standard one.

\subsection{Checks on all-plus amplitudes}
\label{sec:gso-projection}

In \cref{sec:non-supersymm-theor}, we have presented proposals for the integrands of four-dimensional $n$-particle amplitudes in non-supersymmetric gauge theory and gravity. In the gravity case, we distinguished between the cases of pure gravity and the NS-NS sector of supergravity, containing a graviton, a dilaton and a B-field. While we focussed on four dimensions for the sake of being explicit, it is clear that analogous constructions can be made of theories with different matter couplings in various dimensions, including different degrees of supersymmetry if we also introduce fermions.

We checked our conjectures against known expressions for the simplest class of non-supersymmetric four-dimensional amplitudes. These are the amplitudes for which all external legs have the same helicity, which we will choose to be positive. The supersymmetric Ward identities \cite{Grisaru:1979re} lead to the following relations for these non-supersymmetric amplitudes:
\begin{subequations}\label{allplusscalars}
\begin{align}
\cM^{(1)}_{\text{pure-YM}}(\text{all-plus}) =&\, 2\, \cM^{(1,\,\text{scalar})}_{\text{pure-YM}}(\text{all-plus}) \\
\cM^{(1)}_{\text{NS-NS-grav}}(\text{all-plus}) =& \,2\, \cM^{(1)}_{\text{pure-grav}}(\text{all-plus}) =4\, \cM^{(1,\,\text{scalar})}_{\text{pure-grav}}(\text{all-plus}) \ .
\end{align}
\end{subequations}
The superscript on the right-hand side indicates an amplitude where
only one real minimally-coupled scalar is running in the loop. For
gauge theory and for pure gravity, the two helicity states running in
the loop are effectively equivalent to two real scalars, hence the
factor of two, whereas for NS-NS gravity there are two extra states
(dilaton and axion), leading to four real scalars. We checked at four and five points that
\begin{equation}
\pf(M_{3})\big|_{\sqrt{q}}(\text{all-plus})=0 \ .
\label{derallpluszero}
\end{equation}
From this simple fact, it is easy to see that our conjectured expressions satisfy the relations \eqref{allplusscalars}. We believe this to hold for any multiplicity. These statements also apply to amplitudes with one helicity distinct from all others (say one minus, rest plus), which also satisfy the relations \eqref{allplusscalars}.

We have explicitly checked that our conjectures for pure gauge theory and gravity match the (shifted) integrands for all-plus amplitudes in the case of $n=4$. For concreteness, we will write down the integrands explicitly. The Feynman rules for the all-plus amplitudes take a particularly simple form in light-cone gauge, because such amplitudes correspond to the self-dual sector of the theory \cite{Chalmers:1996rq,Cangemi:1996rx}. The rules for the vertices and external factors in all-plus amplitudes in gauge theory can be taken to be \cite{Boels:2013bi}
\begin{equation}
(i^+,j^+,k^-)=X_{i,j} \, f^{a_i a_j a_k}, \qquad e_i^{(+)}=\frac{1}{\langle \xi i\rangle^2},
\label{extstates}
\end{equation}
whereas in gravity they are
\begin{equation}
(i^{++},j^{++},k^{--})=X_{i,j}^2, \qquad e_i^{(++)}=\frac{1}{\langle \xi i\rangle^4} \ .
\end{equation}
We are again making use of the spinor helicity formalism, and taking $\xi=|\xi\rangle [\xi|$ to be a reference vector. Gauge invariance implies that the amplitudes are independent of the choice of $\xi$. The object $X_{i,j}$ is defined with respect to the spinors $|\hat{i}]=K_i |\xi\rangle$, which can be defined for any (generically off-shell) momentum $K_i$ using the reference spinor $|\xi\rangle$,
\begin{equation}
X_{i,j}=-[\hat{i}\hat{j}] = - X_{j,i}\ .
\end{equation}
The direct ``square'' relation between the rules in gauge theory and in gravity makes the BCJ double copy manifest for these amplitudes \cite{Monteiro:2011pc}.

Using the diagrammatic rules above, we can write the (shifted) integrand for the single-trace contribution to gluon scattering as
\begin{align}
\hat{\cM}^{(1,\,\text{scalar})}_{\text{pure-YM}}(1^+2^+3^+4^+) =  \frac1{ \ell^2 \prod_{i=1}^n \langle \xi i\rangle^2}  \sum_{\rho\in \text{cyc}(1234)} I_{\rho(1)\rho(2)\rho(3)\rho(4)} \, ,
\end{align}
with
\begin{equation}
 I_{\rho(1)\rho(2)\rho(3)\rho(4)}^{\text{YM}}=I^{\text{box-YM}}_{\rho(1)\rho(2)\rho(3)\rho(4)} +I^{\text{tri-YM}}_{[\rho(1),\rho(2)]\rho(3)\rho(4)} + \frac1{2}\,I^{\text{bub-YM}}_{[\rho(1),\rho(2)][\rho(3),\rho(4)]}\,,
\end{equation}
and 
\begin{align}
I^{\text{box-YM}}_{1234} =&  \frac{X_{\ell,1}X_{\ell+1,2}X_{\ell-4,3}X_{\ell,4}}{(2\ell\cdot k_1)(2\ell\cdot (k_1+k_2)+2k_1\cdot k_2)(-2\ell\cdot k_4)} 
\ , \nonumber \\
I^{\text{tri-YM}}_{[1,2]34} =& \frac{X_{1,2}}{2k_1\cdot k_2} \Bigg(
 \frac{X_{\ell,1+2}X_{\ell,3}X_{\ell+3,4}}{(2\ell\cdot k_3)(2\ell\cdot (k_3+k_4)+2k_3\cdot k_4)}
\nonumber \\
& \quad + \frac{X_{\ell,1+2}X_{\ell-4,3}X_{\ell,4}}{(-2\ell\cdot (k_3+k_4)+2k_3\cdot k_4)(-2\ell\cdot k_4)}   + \frac{X_{\ell+4,1+2}X_{\ell,3}X_{\ell,4}}{(-2\ell\cdot k_3)(2\ell\cdot k_4)}
 \Bigg) \ ,\nonumber \\
I^{\text{bub-YM}}_{[1,2][3,4]} =& \frac{X_{1,2}X_{3,4}X_{\ell,1+2}X_{\ell,3+4}}{(2k_1\cdot k_2)^2} 
\Bigg( \frac1{2\ell\cdot (k_1+k_2)+2k_1\cdot k_2} + \frac1{-2\ell\cdot (k_1+k_2)+2k_1\cdot k_2}  \Bigg)\nonumber\ .
\end{align}
Notice that there is no contribution from bubbles in the external legs. As discussed in \cite{He:2015yua} in the context of the bi-adjoint scalar theory, such contribution must be proportional to the tree-level amplitude, which vanishes for the all-plus helicity sector. We should mention that the `singular' solutions of the scattering equations, for which $\sigma_{\ell^+}=\sigma_{\ell^-}$, give a directly vanishing contribution to the all-plus loop integrand.

For the scattering of gravitons, we have
\begin{align}
\hat{\cM}^{(1,\,\text{scalar})}_{\text{pure-grav}}(1^+2^+3^+4^+)=  \frac1{ \ell^2 \prod_{i=1}^n \langle \xi i\rangle^4} \sum_{\rho\in S_4}
I_{\rho(1)\rho(2)\rho(3)\rho(4)}^{\text{grav}} \ ,
\end{align}
where 
\begin{equation}
 I_{\rho(1)\rho(2)\rho(3)\rho(4)}^{\text{grav}}=I^{\text{box-grav}}_{\rho(1)\rho(2)\rho(3)\rho(4)} +\frac{1}{2}I^{\text{tri-grav}}_{[\rho(1),\rho(2)]\rho(3)\rho(4)} + \frac1{4}\,I^{\text{bub-grav}}_{[\rho(1),\rho(2)][\rho(3),\rho(4)]}\,,
\end{equation}
and $I^{\text{box-grav}}$, $I^{\text{tri-grav}}$ and $I^{\text{bub-grav}}$ are respectively obtained from $I^{\text{box-YM}}$, $I^{\text{tri-YM}}$ and $I^{\text{bub-YM}}$ via the substitution $X_{.,.}\to (X_{.,.})^2$ in the numerators.

\section{Motivation from ambitwistor heterotic models}
\label{sec:motiv-form-ambitw}

The single trace sector of the heterotic ambitwistor model was used to
derive the CHY formulae for gluon amplitudes in
\cite{Mason:2013sva}, see also \cref{chapter2}. It was however noted that generically these
amplitudes contain unphysical gravitational degrees of
freedom, leading in particular to multi-trace interactions, absent
from Yang-Mills theories. At one loop, the presence of these would-be
gravitational interactions leads to a double pole at the boundary of
the moduli space $dq/q^{2}$, coming from the bosonic sector of the
theory. In string theory, the level matching prevents these tachyonic
modes from propagating, and heterotic models were used to write down a
set of rules to compute gluon amplitudes in the 90's
\cite{Bern:1991aq}. Here this double pole simply renders the theory
ill-defined.

One could hope that a subsector of the theory may be well defined at
one-loop, just like at tree-level, but this is not the case.
Even by restricting to the single-trace sector, these additional states do
not automatically decouple, so we have to be more careful when attempting to extract a portion of
the heterotic amplitude. Let us start by writing it out;

\begin{equation}
  \label{eq:het-one-loop}
  \mathcal{M}^{\mathrm{n-gluon}\,(!)}_{\mathrm{1-loop}}= \int
  \frac{dq}q \prod_{i}d z_{i} 
 \frac12 \sum_{\pmb\alpha}
  Z^{\mathrm{Het}}_{\pmb{\alpha}}
  \mathrm{Pf}{(M^{\mathrm{col}}_{\pmb{\beta}})}\times 
  \frac12\sum_{\pmb{\beta}}
  Z_{\pmb\beta}\mathrm{Pf}({M}_{\pmb\beta} )\,,
\end{equation}
where the symbol $(!)$  emphasizes that this is not a well
defined amplitude in a well defined theory. Nevertheless, we shall try to
extract physically menaingful parts of it below.
The matrix $M$ is the ``kinematical'' one of
\cref{eq:Malpha}, while the partition functions $Z_{\pmb\alpha}$
were defined in \cref{eq6:pt-funs}
The new ingredient here is the colour part which contains a partition
function for the $32$ Majorana-Weyl fermions that realise the current
algebra and the colour Pfaffian coming from Wick contractions between
them. The partition functions are given by
\begin{equation}
  \label{eq:het-partfun}
 Z^{\mathrm{Het}}_{\pmb\alpha}=\frac{\theta_{\alpha}(0|\tau)^{16}}{\eta^{16}(\tau)}\frac{1}{\eta(\tau)^{8}}\,,
\end{equation}
The ``colour'' Pfaffian is built by applying Wick's theorem
to the gauge currents
\begin{equation}
  \label{eq:gauge-currents}
  J^{a}=T^{a}_{ij}\psi^{i}\psi^{j}, \quad i,j=1,\ldots,16\,,
\end{equation}
using the fermion propagator
\begin{equation}
  \langle \psi^{i}(z)\psi^{j}(0)\rangle_{\pmb\alpha} = S_{\pmb\alpha}(z|\tau)\,,
\end{equation}
in the spin structure $\pmb\alpha$. In all, we obtain the Pfaffian of a
matrix with elements $T^{a}_{ij}\,S_{\pmb\alpha}(z_{i})-z_{j}|\tau)$.
  However, contrary to the case of kinematics where the compact Pfaffian structure is useful despite obscuring the supersymmetry of the amplitude, the colour structure is most conveniently expressed by the colour ordering decomposition. To mnaifest this and decouple as many gravity states as possible,
  we shall from now on restrict our attention to one particular single trace term
  in this Pfaffian, $\mathrm{Tr}(T^{1}T^{2}T^{3}T^{4})$, such that
\begin{equation}
  \label{eq:pftosingletrace}
  \mathrm{Pf}_{\alpha}\to \prod_{i=1}^{4}
  S_{\alpha}(z_{i}-z_{i+1}|\tau) \mathrm{Tr}(T^{1}T^{2}T^{3}T^{4})\,,\quad z_{5}\equiv z_{1}\,.
\end{equation}

To apply the integration by parts procedure and obtain the integrands on the nodal Riemann
sphere, we need the following $q$-expansions;
\begin{equation}
  \label{eq:het-partfun-qexp}
  \begin{aligned}
    \frac{\theta_{2}(0|\tau)^{16}}{\eta^{24}(\tau)} &= 2^{16}
    q+O(q^{2}),\\
    \frac{\theta_{3}(0|\tau)^{16}}{\eta^{24}(\tau)} &= \frac 1q +\frac{32}{\sqrt{q}}+504+  O(q),\\
    \frac{\theta_{4}(0|\tau)^{16}}{\eta^{24}(\tau)} &= \frac 1q -\frac{32}{\sqrt{q}}+504+O(q),\\
  \end{aligned}
\end{equation}
Since the Pfaffians constructed from Szeg\H{o} kernels are holomorphic in
$q$, the spin structure $\pmb\alpha=2$ does not contribute to the colour decomposition after applying the residue theorem. Using the $q$-expansions of
the Szeg\H{o} kernels given in \cref{eq:szegolimitsl2c} in their torus parametrization,
\begin{subequations}\label{eq:szego-exp}
\begin{align}
  &S_{3} = \frac{\pi }{\sin (\pi  z)}+4 \pi  \left(-\sqrt{q}+q\right)
  \sin (\pi  z)\,,\\
&S_{4}=\frac{\pi }{\sin (\pi  z)}+4 \pi  \left(\sqrt{q}+q\right) \sin (\pi  z)\,,
\end{align}
\end{subequations}
we have
$Z_{3}^{\mathrm{Het}}\mathrm{Pf}(M^{\mathrm{col}})_{3}\big|_{q^{-1/2}}=
-Z_{4}^{\mathrm{Het}}\mathrm{Pf}(M^{\mathrm{col}})_{4}\big|_{q^{-1/2}}$.

Thus, the integrand of \eqref{eq:het-one-loop} has a double pole in $q$
with coefficient $2\mathrm{Pf}_{3}\big|_{q^{0}}$ corresponding to the unphysical gravity degrees of freedom,
and a single pole in $q$, given by the leading order contribution. 

The terms contributing at leading order are
\begin{subequations}
\begin{align}
  \frac{32}{\sqrt{q}}&\to 32 \frac{\sin(\pi
                      z_{12})} {\sin(\pi z_{23}) \sin(\pi
                      z_{34}) \sin(\pi z_{41})} + \mathrm{
                      cyclic} \,,\\
  \frac 1q & \to  \frac{\sin(\pi
             z_{12})} {\sin(\pi z_{23}) \sin(\pi
             z_{34}) \sin(\pi z_{41})}  +\frac{\sin^{2}}{\sin^{2}} + \mathrm{
             perms}\,,\\
  504 & \to \frac{504}{\sin(\pi
                      z_{12})\sin(\pi z_{23}) \sin(\pi
                      z_{34}) \sin(\pi z_{41})}\,,
\end{align}
\end{subequations}
where $\frac{\sin^{2}}{\sin^{2}}$ indicate terms of the form
$\frac{\sin(\pi z_{12}) \sin(\pi z_{34})}{\sin(\pi z_{13}) \sin(\pi
  z_{24}) } $.
After the usual change of variables, the $"\sin/\sin^{3}"$ terms give
rise to the Parke-Taylor part of the Yang-Mills integrands given in
\eqref{eq:PTloop-def} (including the reversed ones);
\begin{equation}
  \label{eq:sin-sincube}
  \frac{1}{\sigma_{1}\sigma_{2}\sigma_{3}\sigma_{4}}\frac{\sin(\pi
    z_{12})} {\sin(\pi z_{23}) \sin(\pi
    z_{34}) \sin(\pi z_{41})} =
  \frac{\sigma_{41}}{\sigma_{1}\sigma_{12} \sigma_{23}\sigma_{34}\sigma_{4}}\,,
\end{equation}
where an additional factor of
$(\sigma_{1}\sigma_{2}\sigma_{3}\sigma_{4})^{-1}$ has been included from
the measure $\prod d\sigma_{i}/\sigma_{i}^{2}$. Note also that the
counting for these terms produces a numerical factor of $(32-1)=31$,
which, after suitable counting of the powers of 2, builds up 
\begin{equation}
  496 = 2^{4}\times 31
\end{equation}
which is the dimension of the adjoint\footnote{Note that the gluons are in the adjoint representation of the gauge
group in these models.} of $SO(32)$. The fact that loops
in gauge theories come with a factor of $N$ at leading order is
well-known.

However, we find additional terms. We haven't been able to determine
their origin with precision, but we suspect that they could originate
from bi-adjoint scalars running in the loop, if they are not simple
artefacts of the inconsistency of the model.

\section{The NS part of the integrand}\label{sec:MNS}
This appendix provides the proof for the equivalence of the two expressions for the NS sector of the integrand;
\begin{equation}
 \cI^{\text{NS}}\equiv(d-2)\pf(M_3)\,\big|_{q^0}+\pf(M_3)\,\big|_{\sqrt{q}} =\sum_r\pf'(M_{\text{NS}}^r)\,,
\end{equation}
with $\pf'(M_{\text{NS}})= \frac{1}{\sigma_{\ell^+\,\ell^-}}\,\pf({M_{\text{NS}}}_{(\ell^+,\ell^-)})$ and $d$ denoting the dimension of the space-time. Recall the definition of the matrix $M_{\text{NS}}$ given in \cref{sec:analys-indiv-gso};
\begin{equation}
 M_{\text{NS}}^r=\begin{pmatrix}A & -C^T\\ C & B \end{pmatrix}
\end{equation}
where the elements of $M_{\text{NS}}^r$ were defined by
\begin{align*}
 &A_{\ell^+\ell^+}=A_{\ell^-\ell^-}=0 && A_{\ell^+\ell^-}=0 && A_{\ell^+ i}=\frac{\ell\cdot k_i}{\sigma_{\ell^+i}} && A_{\ell^- i}=-\frac{\ell\cdot k_i}{\sigma_{\ell^-i}}\\
 & && && A_{ij}=\frac{k_i\cdot k_j}{\sigma_{ij}} && A_{ii}=0\\
 &B_{\ell^+\ell^+}=B_{\ell^-\ell^-}=0 && B_{\ell^+\ell^-}=\frac{d-2}{\sigma_{\ell^+\ell^-}} && B_{\ell^+ i} =\frac{\epsilon^r\cdot\epsilon_i}{\sigma_{\ell^+i}} && B_{\ell^- i}=\frac{\epsilon^r\cdot \epsilon_i}{\sigma_{\ell^-i}}\\
 & && && B_{ij}=\frac{\epsilon_i\cdot \epsilon_j}{\sigma_{ij}} && B_{ii}=0\\
 &C_{\ell^+\ell^+}=-\epsilon^r\cdot P(\sigma_{\ell^+}) && C_{\ell^+\ell^-}=-\frac{\epsilon^r\cdot\ell}{\sigma_{\ell^+\ell^-}} && C_{\ell^+ i}=\frac{\epsilon^r\cdot k_i}{\sigma_{\ell^+i}} && C_{i\ell^+}=\frac{\epsilon_i\cdot \ell}{\sigma_{i\ell^+}}\\
 &C_{\ell^-\ell^-}=-\epsilon^r\cdot P(\sigma_{\ell^-}) && C_{\ell^-\ell^+}=\frac{\epsilon^r\cdot\ell}{\sigma_{\ell^-\ell^+}} && C_{\ell^- i}=-\frac{\epsilon^r\cdot k_i}{\sigma_{\ell^-i}} && C_{i\ell^-}=-\frac{\epsilon_i\cdot \ell}{\sigma_{i\ell^-}}\\
 & && && C_{ij}=\frac{\epsilon_i\cdot k_j}{\sigma_{ij}} && C_{ii}=-\epsilon_i\cdot P(\sigma_i)\,.
\end{align*}
Using the recursive definition of the Pfaffian,
\begin{equation}
 \pf(M)=\sum_{j\neq i}(-1)^{i+j+1+\theta(i-j)}m_{ij}\,\pf(M_{(ij)})\,,
\end{equation}
we can expand the reduced\footnote{For convenience, we have chosen to remove the rows and columns associated to $\ell^+$ and $\ell^-$ in the reduced matrix.}  Pfaffian $\pf'(M_{\text{NS}}^r)= \frac{1}{\sigma_{\ell^+\,\ell^-}}\,\pf({M_{\text{NS}}^r}_{(\ell^+,\ell^-)})$, in the remaining rows associated to $\ell^+$ and $\ell^-$.
\begin{align*}
 \pf'(M_{\text{NS}}^r)=&\frac{1}{\sigma_{\ell^+\ell^-}} B_{\ell^+\ell^-} \,\pf(M_{(\ell^+\ell^-{\ell^+}'{\ell^-}')}) \\
 & + \frac{1}{\sigma_{\ell^+\ell^-}} \hspace{-7pt}\sum_{i,j\neq {\ell^+}',{\ell^-}'} \hspace{-12pt}(-1)^{1+i+j+\theta(n+1-i)+\theta(n+2-j)+\theta(i-j)} m_{{\ell^+}'i} m_{{\ell^-}'j}\, \pf(M_{(\ell^+\ell^-{\ell^+}'{\ell^-}'ij)})\,.
\end{align*}
To briefly comment on the notation used for $M_{\text{NS}}$, rows and columns in $\{1,\dots,n+2\}$ are denoted by indices $i$, whereas we use the conventional $i'=n+2+i$ for rows in $\{n+3,\dots,2(n+2)\}$. Now note that after summing over $r$, the entries and prefactors simplify,
\begin{align*}
 (-1)^{1+i+j+\theta(n+1-i)+\theta(n+2-j)+\theta(i-j)}&=\begin{cases} (-1)^{1+i+j+\theta(i-j)} & i,j\in I\,,\\
								    (-1)^{1+i+j+\theta(i-j)} & i,j\in I'\,,\\
								    (-1)^{i+j} & i\in I',\, j\in I\,,
                                                      \end{cases}\\
\sum_r m_{{\ell^+}'i} m_{{\ell^-}'j}&=\begin{cases} \frac{k_i\cdot k_j}{\sigma_{\ell^+i}\sigma_{\ell^-j}} & i,j\in I\,,\\
				       \frac{\epsilon_i\cdot \epsilon_j}{\sigma_{\ell^+i}\sigma_{\ell^-j}} & i,j\in I'\,,\\
				       \frac{\epsilon_i\cdot k_j}{\sigma_{\ell^+i}\sigma_{\ell^-j}} & i\in I',\, j\in I\,,
                         \end{cases}
\end{align*}
where $I=\{1,\dots,n\}$, $I'=\{1',\dots,n'\}$. Moreover, using $M_{(\ell^+\ell^-{\ell^+}'{\ell^-}')}=M_3$ and choosing for simplicity $\sigma_{\ell^+}=0$, $\sigma_{\ell^-}=\infty$, the result further simplifies to
\begin{equation}
 \begin{split}
 \sum_r \pf'(M_\text{NS}^r)=(d-2)\,\pf(M_3)& + \sum_{i,j\in I}(-1)^{1+i+j+\theta(i-j)} \frac{k_i\cdot k_j}{\sigma_i} \,\pf({M_3}^{ij}_{ij})\\
  &+ \sum_{i,j\in I'}(-1)^{1+i+j+\theta(i-j)} \frac{\epsilon_i\cdot \epsilon_j}{\sigma_i} \,\pf({M_3}^{i'j'}_{i'j'})\\
  &+\sum_{i\in I',j\in I}(-1)^{i+j} \left(\frac{\epsilon_i\cdot k_j}{\sigma_i}-\frac{\epsilon_i\cdot k_j}{\sigma_j}\right) \,\pf({M_3}^{i'j}_{i'j})\,.
 \end{split}
\end{equation}
To see that, as claimed above,
\begin{equation}
  \sum_r \pf'(M_\text{NS}^r)=(d-2)\, \pf(M_3) \big|_{q^0}+ \pf(M_3) \,\big|_{\sqrt{q}}\,,
\end{equation}
expand the Pfaffian on the RHS to order $\sqrt{q}$, using
\begin{equation}
 \pf(M)=\sum_{\alpha\in\Pi}\text{sgn}(\alpha) m_{i_1j_1}\dots m_{i_nj_n}=\frac{1}{2}\sum_{i_k,j_k}(-1)^{1+i_k+j_k+\theta(i_k-j_k)}m_{i_kj_k}\pf(M_{i_kj_k}^{i_kj_k})\,.
\end{equation}
This leads to
\begin{align*}
 \pf(M_3) \,\big|_{q^{1/2}}&=\frac{1}{2}\sum_{i_k,j_k}(-1)^{1+i_k+j_k+\theta(i_k-j_k)}m_{i_kj_k} \,\big|_{\sqrt{q}}\,\,\pf({M_3}_{i_kj_k}^{i_kj_k}) \,\big|_{q^{0}}\\
 &= \sum_{i,j\in I} (-1)^{1+i+j+\theta(i-j)} \frac{k_i\cdot k_j}{\sigma_i} \,\pf({M_3}^{ij}_{ij})\\
  &\qquad+ \sum_{i,j\in I'}(-1)^{1+i+j+\theta(i-j)} \frac{\epsilon_i\cdot \epsilon_j}{\sigma_i} \,\pf({M_3}^{i'j'}_{i'j'})\\
  &\qquad +\sum_{i\in I',j\in I}(-1)^{i+j} \left(\frac{\epsilon_i\cdot k_j}{\sigma_i}-\frac{\epsilon_i\cdot k_j}{\sigma_j}\right) \,\pf({M_3}^{i'j}_{i'j})\,.
\end{align*}
The diagonal terms $C_{ii}=\epsilon_i\cdot P(\sigma_i)$ do not contribute since $P(\sigma)$ exclusively contains terms of the form $k_i \tilde{S}_1\sim k_i \frac{\theta'_1}{\theta_1}$, which only contribute at higher order in $q$. This concludes the proof of \cref{eq:PfNs-def}.

\section{Dimensional reduction}
\label{sec:dimens-reduct}

In this section we discuss considerations that are general and
somewhat out of the scope of the main article.  The point is to
discuss how one can dimensionally reduce the ambitwistor string to $d$
dimensions. At the core of these considerations is the work of
ACS~\cite{Adamo:2014wea} where the ambitwistor string was be
formulated in generic (on-shell) curved spaces; toroidal
compactifications are just a subcategory of the latter spaces.

In the usual string compactified on a circle of radius $R$, wrapping
modes or worldsheet instantons are solutions that obey the periodicity
conditions $X=X+2\pi m R$. Their classical values are given by
$X_{class}=2\pi R(n \xi_{1}+m \xi_{2})$ where $\xi_{1,2}$ are the
worldsheet coordinates, such that $z=\xi_{1}-i\xi_{2}$. This cannot be
made holomorphic, except if $n=m=0$, therefore none of these can
contribute in the ambitwistor string, which is quintessentially holomorphic.
This may not exclude the possibility of having other type of more
exotic instantons, as mentioned in the final section of
\cite{Adamo:2014wea}, but we shall proceed here under the assumption that
none of these are generated.
In total, the Kaluza-Klein reduction of the amplitude
above~\eqref{e:graveven} is simply obtained by replacing the
$10$-dimensional loop momentum integral by a $d$-dimensional one and a
$(10-d)$-dimensional discrete sum
\begin{align}
\label{eq:KKreduction}
  \int d^{10} \ell \to \frac{1}{R^{10-d}} \sum_{\pmb{n}\in \mathbb{Z}^{{10-d}}}
  \int d^{d}\ell_{(d)}
\end{align}
where $\pmb{n}$ is an integer valued $(10-d)$-dim vector. A $10-d$ torus
with $10-d$ different radii $R_{1},\ldots,R_{10-d}$ is dealt with at
the cost of minor obvious modifications of the previous expression.

The loop momentum square is then given by
\begin{align}
\label{eq:lsqKK}
\ell^{2}= \ell_{(d)}^{2}+\frac{ {\pmb{n}}^{2}}{R^{2}}
\end{align}

In this way, the transformation rule $\ell\to\tau\ell$ of the loop
momentum after a modular transformation is generalized to the compact
dimensions by demanding $R\to R/\tau$ and the integral is still
modular invariant, in the sense of~\cite{Adamo:2013tsa}.

Ultimately, we take the radius $R$ of the torus to zero in order to
decouple the KK states. In this limit, $\ell$ simply becomes
$\ell_{d}$  wherever it appears, therefore this process
is achieved by, loosely speaking, restricting the loop momentum integral
by hand.

In conclusion, standard compactification techniques of string theory
on tori and orbifolds thereof apply straightforwardly.

\singlespacing
\addcontentsline{toc}{chapter}{References}
\renewcommand{\bibname}{References}
\bibliography{twistorbib} 
\bibliographystyle{utphys}

\end{document}